\newtheorem{theorem}{Theorem}
\newtheorem{lemma}[theorem]{Lemma}
\newtheorem{corollary}[theorem]{Corollary}
\newtheorem{definition}[theorem]{Definition}
\newtheorem{remark}[theorem]{Remark}
\let\mto\to
\let\to\relax
\newcommand{\to}{\rightarrow}
\newcommand{\interp}[1]{\llbracket #1 \rrbracket}
\date{}
\let\t\relax
\let\r\relax
\let\c\relax
\let\j\relax
\let\wn\relax
\let\H\relax
\newcommand{\cat}[1]{\mathcal{#1}}
\newcommand{\func}[1]{\mathsf{#1}}
\newcommand{\iso}[0]{\mathsf{iso}}
\newcommand{\H}[0]{\func{H}}
\newcommand{\J}[0]{\func{J}}
\newcommand{\catop}[1]{\cat{#1}^{\mathsf{op}}}
\newcommand{\Hom}[3]{\mathsf{Hom}_{\cat{#1}}(#2,#3)}
\newcommand{\limp}[0]{\multimap}
\newcommand{\colimp}[0]{\multimapdotinv}
\newcommand{\id}[0]{\mathsf{id}}
\newcommand{\m}[1]{\mathsf{m}_{#1}}
\newcommand{\n}[1]{\mathsf{n}_{#1}}
\newcommand{\h}[1]{\mathsf{h}_{#1}}
\newcommand{\t}[0]{\mathsf{t}}
\newcommand{\r}[1]{\mathsf{r}_{#1}}
\newcommand{\s}[1]{\mathsf{s}_{#1}}
\newcommand{\w}[1]{\mathsf{w}_{#1}}
\newcommand{\c}[1]{\mathsf{c}_{#1}}
\newcommand{\j}[1]{\mathsf{j}_{#1}}
\newcommand{\jinv}[1]{\mathsf{j}^{-1}_{#1}}
\newcommand{\wn}[0]{\mathop{?}}
\newcommand{\codiag}[1]{\bigtriangledown_{#1}}
\newenvironment{diagram}{
  \begin{center}
    \begin{math}
      \bfig
}{
      \efig
    \end{math}
  \end{center}
}
\def\BPmessage{Proof Tree (bussproofs) style macros. Version 0.6c.}
\def\EnableBpAbbreviations{%
	\let\AX\Axiom
	\let\AXC\AxiomC
	\let\UI\UnaryInf
	\let\UIC\UnaryInfC
	\let\BI\BinaryInf
	\let\BIC\BinaryInfC
	\let\TI\TrinaryInf
	\let\TIC\TrinaryInfC
	\let\LL\LeftLabel
	\let\RL\RightLabel
	\let\DP\DisplayProof
}
\def\ScoreOverhang{4pt}			
\def\ScoreOverhangLeft{\ScoreOverhang}
\def\ScoreOverhangRight{\ScoreOverhang}
\def\extraVskip{2pt}			
\def\ruleScoreFiller{\hrule}		
\def\defaultScoreFiller{\ruleScoreFiller}  
\def\defaultBuildScore{\buildSingleScore}  
\def\defaultHypSeparation{\hskip.2in}   
\def\labelSpacing{3pt}		
\def\proofSkipAmount{\vskip.8ex plus.8ex minus.4ex}
\def\theHypSeparation{\defaultHypSeparation}
\def\alwaysScoreFiller{\defaultScoreFiller}	
\def\alwaysBuildScore{\defaultBuildScore}
\def\theScoreFiller{\alwaysScoreFiller}	
\def\buildScore{\alwaysBuildScore}   
\def\hypKernAmt{0pt}	
\def\defaultLeftLabel{}
\def\defaultRightLabel{}
\def\myTrue{Y}
\def\bottomAlignFlag{N}
\def\centerAlignFlag{N}
\def\makeatletter{\catcode`\@=11\relax}
\def\makeatother{\catcode`\@=12\relax}
\def\newcount{\alloc@0\count\countdef\insc@unt}
\def\newdimen{\alloc@1\dimen\dimendef\insc@unt}
\def\newskip{\alloc@2\skip\skipdef\insc@unt}
\def\newbox{\alloc@4\box\chardef\insc@unt}
\proofSkipAmount \end{center} }
\def\thecur#1{\csname#1\number\theLevel\endcsname}
\newbox\myBoxA		
\newbox\myBoxB
\newbox\myBoxC
\newbox\myBoxD
\newbox\myBoxLL		
\newbox\myBoxRL
\newdimen\thisAboveSkip		
\newdimen\thisBelowSkip		
\newdimen\newScoreStart		
\newdimen\newScoreEnd
\newdimen\newCenter
\newdimen\displace
\newdimen\leftLowerAmt
\newdimen\rightLowerAmt
\newdimen\scoreHeight
\newdimen\scoreDepth
\def\allocatemore{%
	\ifnum\theLevel>\myMaxLevel%
		\expandafter\newbox\curBox%
		\expandafter\newdimen\curScoreStart%
		\expandafter\newdimen\curCenter%
		\expandafter\newdimen\curScoreEnd%
		\global\advance\myMaxLevel by1%
	\fi%
}
\def\prepAxiom{%
	\advance\theLevel by1%
	\edef\curBox{\thecur{myBox}}%
	\edef\curScoreStart{\thecur{myScoreStart}}%
	\edef\curCenter{\thecur{myCenter}}%
	\edef\curScoreEnd{\thecur{myScoreEnd}}%
	\allocatemore%
}
\def\Axiom$#1\fCenter#2${%
	\prepAxiom%
	\setbox\myBoxA=\hbox{$\mathord{#1}\fCenter\mathord{\relax}$}%
	\setbox\myBoxB=\hbox{$#2$}%
	\global\setbox\curBox=%
	     \hbox{\hskip\ScoreOverhangLeft\relax%
		\unhcopy\myBoxA\unhcopy\myBoxB\hskip\ScoreOverhangRight\relax}%
	\global\curScoreStart=0pt \relax
	\global\curScoreEnd=\wd\curBox \relax
	\global\curCenter=\wd\myBoxA \relax
	\global\advance \curCenter by \ScoreOverhangLeft%
	\ignorespaces
}
\def\AxiomC#1{		
	\prepAxiom%
	\setbox\myBoxA=\hbox{#1}%
	\global\setbox\curBox =%
		\hbox{\hskip\ScoreOverhangLeft\relax%
                        \unhcopy\myBoxA\hskip\ScoreOverhangRight\relax}%
        \global\curScoreStart=0pt \relax
        \global\curScoreEnd=\wd\curBox \relax
        \global\curCenter=.5\wd\curBox \relax
        \global\advance \curCenter by \ScoreOverhangLeft%
	\ignorespaces
}
\def\prepUnary{%
	\ifnum \theLevel<1 
		\errmessage{Hypotheses missing!}
	\fi%
	\edef\curBox{\thecur{myBox}}%
	\edef\curScoreStart{\thecur{myScoreStart}}%
	\edef\curCenter{\thecur{myCenter}}%
	\edef\curScoreEnd{\thecur{myScoreEnd}}%
}
\def\UnaryInf$#1\fCenter#2${%
	\prepUnary%
	\buildConclusion{#1}{#2}%
	\joinUnary%
	\resetInferenceDefaults%
	\ignorespaces%
}
\def\UnaryInfC#1{
	\prepUnary%
	\buildConclusionC{#1}%
	\joinUnary%
	\resetInferenceDefaults%
	\ignorespaces%
}
\def\prepBinary{%
	\ifnum\theLevel<2 
		\errmessage{Hypotheses missing!}
	\fi%
	\edef\rcurBox{\thecur{myBox}}
	\edef\rcurScoreStart{\thecur{myScoreStart}}%
	\edef\rcurCenter{\thecur{myCenter}}%
	\edef\rcurScoreEnd{\thecur{myScoreEnd}}%
	\advance\theLevel by-1
	\edef\lcurBox{\thecur{myBox}}
	\edef\lcurScoreStart{\thecur{myScoreStart}}%
	\edef\lcurCenter{\thecur{myCenter}}%
	\edef\lcurScoreEnd{\thecur{myScoreEnd}}%
}
\def\BinaryInf$#1\fCenter#2${%
	\prepBinary%
	\buildConclusion{#1}{#2}%
	\joinBinary%
	\resetInferenceDefaults%
	\ignorespaces%
}
\def\BinaryInfC#1{%
	\prepBinary%
	\buildConclusionC{#1}%
	\joinBinary%
	\resetInferenceDefaults%
	\ignorespaces%
}
\def\prepTrinary{%
	\ifnum\theLevel<3 
		\errmessage{Hypotheses missing!}
	\fi%
	\edef\rcurBox{\thecur{myBox}}
	\edef\rcurScoreStart{\thecur{myScoreStart}}%
	\edef\rcurCenter{\thecur{myCenter}}%
	\edef\rcurScoreEnd{\thecur{myScoreEnd}}%
	\advance\theLevel by-1
	\edef\ccurBox{\thecur{myBox}}
	\edef\ccurScoreStart{\thecur{myScoreStart}}%
	\edef\ccurCenter{\thecur{myCenter}}%
	\edef\ccurScoreEnd{\thecur{myScoreEnd}}%
	\advance\theLevel by-1
	\edef\lcurBox{\thecur{myBox}}
	\edef\lcurScoreStart{\thecur{myScoreStart}}%
	\edef\lcurCenter{\thecur{myCenter}}%
	\edef\lcurScoreEnd{\thecur{myScoreEnd}}%
}
\def\TrinaryInf$#1\fCenter#2${%
	\prepTrinary%
	\buildConclusion{#1}{#2}%
	\joinTrinary%
	\resetInferenceDefaults%
	\ignorespaces%
}
\def\TrinaryInfC#1{%
	\prepTrinary%
	\buildConclusionC{#1}%
	\joinTrinary%
	\resetInferenceDefaults%
	\ignorespaces%
}
\def\buildConclusion#1#2{
        \setbox\myBoxA=\hbox{$\mathord{#1}\fCenter\mathord{\relax}$}%
        \setbox\myBoxB=\hbox{$#2$}%
	\setbox\myBoxC =%
	      \hbox{\hskip\ScoreOverhangLeft\relax%
		\unhcopy\myBoxA\unhcopy\myBoxB\hskip\ScoreOverhangRight\relax}%
	\newScoreStart=0pt \relax%
	\newCenter=\wd\myBoxA \relax%
	\advance \newCenter by \ScoreOverhangLeft%
	\newScoreEnd=\wd\myBoxC%
}
\def\buildConclusionC#1{
	\setbox\myBoxA=\hbox{#1}%
	\setbox\myBoxC =%
		\hbox{\hbox{\hskip\ScoreOverhangLeft\relax%
                        \unhcopy\myBoxA\hskip\ScoreOverhangRight\relax}}%
	\newScoreStart=0pt \relax%
	\newCenter=.5\wd\myBoxC \relax%
	\newScoreEnd=\wd\myBoxC%
        \advance \newCenter by \ScoreOverhangLeft%
}
\def\joinUnary{
	\global\advance\curCenter by -\hypKernAmt%
	\ifnum\curCenter<\newCenter%
		\displace=\newCenter%
		\advance \displace by -\curCenter%
		\kernUpperBox%
	\else%
		\displace=\curCenter%
		\advance \displace by -\newCenter%
		\kernLowerBox%
	\fi%
        \ifnum \newScoreStart < \curScoreStart %
		\global \curScoreStart = \newScoreStart \fi%
	\ifnum \curScoreEnd < \newScoreEnd %
		\global \curScoreEnd = \newScoreEnd \fi%
	\ifnum \curScoreStart<\wd\myBoxLL%
		\global\displace = \wd\myBoxLL%
		\global\advance\displace by -\curScoreStart%
		\kernUpperBox%
		\kernLowerBox%
	\fi%
	\buildScore%
	\buildScoreLabels%
	\global \setbox \curBox =%
		\vbox{\box\curBox%
			\vskip\thisAboveSkip \relax%
			\nointerlineskip\box\myBoxD%
			\vskip\thisBelowSkip \relax%
			\nointerlineskip\box\myBoxC}%
	\global \curScoreStart=\newScoreStart%
	\global \curScoreEnd=\newScoreEnd%
	\global \curCenter=\newCenter%
}
\def\kernUpperBox{%
		\global\setbox\curBox =%
			\hbox{\hskip\displace\box\curBox}%
		\global\advance \curScoreStart by \displace%
		\global\advance \curScoreEnd by \displace%
		\global\advance\curCenter by \displace%
}
\def\kernLowerBox{%
		\global\setbox\myBoxC =%
			\hbox{\hskip\displace\unhbox\myBoxC}%
		\global\advance \newScoreStart by \displace%
		\global\advance \newScoreEnd by \displace%
		\global\advance\newCenter by \displace%
}
\def\joinBinary{
	\setbox\myBoxA=\hbox{\theHypSeparation}
	\lcurScoreEnd=\rcurScoreEnd%
	\advance\lcurScoreEnd by\wd\lcurBox%
	\advance\lcurScoreEnd by\wd\myBoxA%
	\displace=\lcurScoreEnd%
	\advance\displace by -\lcurScoreStart%
	\lcurCenter=.5\displace%
	\advance\lcurCenter by\lcurScoreStart%
	\setbox\lcurBox=%
		\hbox{\box\lcurBox\unhcopy\myBoxA\box\rcurBox}%
	\displace=\newCenter%
	\advance\displace by -.5\newScoreStart%
	\advance\displace by -.5\newScoreEnd%
	\advance\lcurCenter by \displace%
	\edef\curBox{\lcurBox}%
	\edef\curScoreStart{\lcurScoreStart}%
	\edef\curScoreEnd{\lcurScoreEnd}%
	\edef\curCenter{\lcurCenter}%
	\joinUnary%
}
\def\joinTrinary{
	\setbox\myBoxA=\hbox{\theHypSeparation}
	\lcurScoreEnd=\rcurScoreEnd%
	\advance\lcurScoreEnd by\wd\lcurBox%
	\advance\lcurScoreEnd by\wd\ccurBox%
	\advance\lcurScoreEnd by2\wd\myBoxA%
	\displace=\lcurScoreEnd%
	\advance\displace by -\lcurScoreStart%
	\lcurCenter=.5\displace%
	\advance\lcurCenter by\lcurScoreStart%
	\setbox\lcurBox=%
		\hbox{\box\lcurBox\unhcopy\myBoxA\box\ccurBox%
				  \unhcopy\myBoxA\box\rcurBox}%
	\displace=\newCenter%
	\advance\displace by -.5\newScoreStart%
	\advance\displace by -.5\newScoreEnd%
	\advance\lcurCenter by \displace%
	\edef\curBox{\lcurBox}%
	\edef\curScoreStart{\lcurScoreStart}%
	\edef\curScoreEnd{\lcurScoreEnd}%
	\edef\curCenter{\lcurCenter}%
	\joinUnary%
}
\def\DisplayProof{%
	\ifnum \theLevel=1 \relax \else
		\errmessage{Proof tree badly specified.}%
	\fi%
	\edef\curBox{\thecur{myBox}}%
	\ifx\bottomAlignFlag\myTrue%
		\displace=0pt%
	\else%
		\displace=.5\ht\curBox%
		\ifx\centerAlignFlag\myTrue\relax
		\else%
		      	\advance\displace by -3pt%
		\fi%
	\fi%
	\leavevmode%
	\lower\displace\hbox{\copy\curBox}%
	\global\theLevel=0%
	\global\def\alwaysBuildScore{\defaultBuildScore}
	\global\def\alwaysScoreFiller{\defaultScoreFiller}
	\def\bottomAlignFlag{N}
	\def\centerAlignFlag{N}
	\resetInferenceDefaults%
	\ignorespaces
}
\def\buildSingleScore{
	\displace=\curScoreEnd%
	\advance \displace by -\curScoreStart%
	\global\setbox \myBoxD =%
		\hbox to \displace{\expandafter\xleaders\theScoreFiller\hfill}%
}
\def\buildDoubleScore{
	\buildSingleScore%
	\global\setbox\myBoxD=%
		\hbox{\hbox to0pt{\copy\myBoxD\hss}\raise2pt\copy\myBoxD}%
}
\def\buildNoScore{
	\global\setbox\myBoxD=\hbox{\vbox{\vskip1pt}}%
}
\def\doubleLine{%
	\gdef\buildScore{\buildDoubleScore}
	\ignorespaces
}
\def\noLine{%
	\gdef\buildScore{\buildNoScore}
	\ignorespaces
}
\def\LeftLabel#1{%
	\global\setbox\myBoxLL=\hbox{{#1}\hskip\labelSpacing}%
	\ignorespaces
}
\def\RightLabel#1{%
	\global\setbox\myBoxRL=\hbox{\hskip\labelSpacing #1}%
	\ignorespaces
}
\def\buildScoreLabels{%
	\scoreHeight = \ht\myBoxD%
	\scoreDepth = \dp\myBoxD%
	\leftLowerAmt=\ht\myBoxLL%
	\advance \leftLowerAmt by -\dp\myBoxLL%
	\advance \leftLowerAmt by -\scoreHeight%
	\advance \leftLowerAmt by \scoreDepth%
	\leftLowerAmt=.5\leftLowerAmt%
	\rightLowerAmt=\ht\myBoxRL%
	\advance \rightLowerAmt by -\dp\myBoxRL%
	\advance \rightLowerAmt by -\scoreHeight%
	\advance \rightLowerAmt by \scoreDepth%
	\rightLowerAmt=.5\rightLowerAmt%
	\displace = \curScoreStart%
	\advance\displace by -\wd\myBoxLL%
	\global\setbox\myBoxD =%
		\hbox{\hskip\displace%
			\lower\leftLowerAmt\copy\myBoxLL%
			\box\myBoxD%
			\lower\rightLowerAmt\copy\myBoxRL}%
	\global\thisAboveSkip = \ht\myBoxLL%
	\global\advance \thisAboveSkip by -\leftLowerAmt%
	\global\advance \thisAboveSkip by -\scoreHeight%
	\ifnum \thisAboveSkip<0 %
		\global\thisAboveSkip=0pt%
	\fi%
	\displace = \ht\myBoxRL%
	\advance \displace by -\rightLowerAmt%
	\advance \displace by -\scoreHeight%
	\ifnum \displace<0 %
		\displace=0pt%
	\fi%
	\ifnum \displace>\thisAboveSkip %
		\global\thisAboveSkip=\displace%
	\fi%
	\global\thisBelowSkip = \dp\myBoxLL%
	\global\advance\thisBelowSkip by \leftLowerAmt%
	\global\advance\thisBelowSkip by -\scoreDepth%
	\ifnum\thisBelowSkip<0 %
		\global\thisBelowSkip = 0pt%
	\fi%
	\displace = \dp\myBoxLL%
	\advance\displace by \rightLowerAmt%
	\advance\displace by -\scoreDepth%
	\ifnum\displace<0 %
		\displace = 0pt%
	\fi%
	\ifnum\displace>\thisBelowSkip%
		\global\thisBelowSkip = \displace%
	\fi
	\global\thisAboveSkip = -\thisAboveSkip%
	\global\thisBelowSkip = -\thisBelowSkip%
	\global\advance\thisAboveSkip by\extraVskip
	\global\advance\thisBelowSkip by\extraVskip
}
\def\resetInferenceDefaults{%
	\global\def\theHypSeparation{\defaultHypSeparation}%
	\global\setbox\myBoxLL=\hbox{\defaultLeftLabel}%
	\global\setbox\myBoxRL=\hbox{\defaultRightLabel}%
	\global\def\buildScore{\alwaysBuildScore}%
	\global\def\theScoreFiller{\alwaysScoreFiller}%
	\gdef\hypKernAmt{0pt}
}
\def\limp{\mathrel{-\!\circ}}
\def\lsub{\mathrel{\bullet\!-}}
\newcommand{\DualLNLLogicdrule}[4][]{{\displaystyle\frac{\begin{array}{l}#2\end{array}}{#3}\quad\DualLNLLogicdrulename{#4}}}
\newcommand{\DualLNLLogicpremise}[1]{ #1 \\}
\newenvironment{DualLNLLogicdefnblock}[3][]{ \framebox{\mbox{#2}} \quad #3 \\[0pt]}{}
\newcommand{\DualLNLLogicnt}[1]{\mathit{#1}}
\newcommand{\DualLNLLogicmv}[1]{\mathit{#1}}
\newcommand{\DualLNLLogicsym}[1]{#1}
\newcommand{\DualLNLLogicdrulename}[1]{\textsc{#1}}
\newcommand{\DualLNLLogicdruleCXXidName}[0]{\DualLNLLogicdrulename{C\_id}}
\newcommand{\DualLNLLogicdruleCXXid}[1]{\DualLNLLogicdrule[#1]{%
}{
 \DualLNLLogicnt{S}  \vdash_{\mathsf{C} }  \DualLNLLogicnt{S} }{%
{\DualLNLLogicdruleCXXidName}{}%
}}
\newcommand{\DualLNLLogicdruleCXXwkName}[0]{\DualLNLLogicdrulename{C\_wk}}
\newcommand{\DualLNLLogicdruleCXXwk}[1]{\DualLNLLogicdrule[#1]{%
\DualLNLLogicpremise{ \DualLNLLogicnt{S}  \vdash_{\mathsf{C} }  \Psi }%
}{
 \DualLNLLogicnt{S}  \vdash_{\mathsf{C} }  \DualLNLLogicnt{T}  \DualLNLLogicsym{,}  \Psi }{%
{\DualLNLLogicdruleCXXwkName}{}%
}}
\newcommand{\DualLNLLogicdruleCXXcrName}[0]{\DualLNLLogicdrulename{C\_cr}}
\newcommand{\DualLNLLogicdruleCXXcr}[1]{\DualLNLLogicdrule[#1]{%
\DualLNLLogicpremise{ \DualLNLLogicnt{S}  \vdash_{\mathsf{C} }  \DualLNLLogicnt{T}  \DualLNLLogicsym{,}  \DualLNLLogicnt{T}  \DualLNLLogicsym{,}  \Psi }%
}{
 \DualLNLLogicnt{S}  \vdash_{\mathsf{C} }  \DualLNLLogicnt{T}  \DualLNLLogicsym{,}  \Psi }{%
{\DualLNLLogicdruleCXXcrName}{}%
}}
\newcommand{\DualLNLLogicdruleCXXexName}[0]{\DualLNLLogicdrulename{C\_ex}}
\newcommand{\DualLNLLogicdruleCXXex}[1]{\DualLNLLogicdrule[#1]{%
\DualLNLLogicpremise{ \DualLNLLogicnt{R}  \vdash_{\mathsf{C} }  \Psi_{{\mathrm{1}}}  \DualLNLLogicsym{,}  \DualLNLLogicnt{S}  \DualLNLLogicsym{,}  \DualLNLLogicnt{T}  \DualLNLLogicsym{,}  \Psi_{{\mathrm{2}}} }%
}{
 \DualLNLLogicnt{R}  \vdash_{\mathsf{C} }  \Psi_{{\mathrm{1}}}  \DualLNLLogicsym{,}  \DualLNLLogicnt{T}  \DualLNLLogicsym{,}  \DualLNLLogicnt{S}  \DualLNLLogicsym{,}  \Psi_{{\mathrm{2}}} }{%
{\DualLNLLogicdruleCXXexName}{}%
}}
\newcommand{\DualLNLLogicdruleCXXfLName}[0]{\DualLNLLogicdrulename{C\_fL}}
\newcommand{\DualLNLLogicdruleCXXfL}[1]{\DualLNLLogicdrule[#1]{%
}{
 \DualLNLLogicsym{0}  \vdash_{\mathsf{C} }  \Psi }{%
{\DualLNLLogicdruleCXXfLName}{}%
}}
\newcommand{\DualLNLLogicdruleCXXdLName}[0]{\DualLNLLogicdrulename{C\_dL}}
\newcommand{\DualLNLLogicdruleCXXdL}[1]{\DualLNLLogicdrule[#1]{%
\DualLNLLogicpremise{  \DualLNLLogicnt{T_{{\mathrm{1}}}}  \vdash_{\mathsf{C} }  \Psi_{{\mathrm{1}}}   \quad   \DualLNLLogicnt{T_{{\mathrm{2}}}}  \vdash_{\mathsf{C} }  \Psi_{{\mathrm{2}}}  }%
}{
  \DualLNLLogicnt{T_{{\mathrm{1}}}}  +  \DualLNLLogicnt{T_{{\mathrm{2}}}}   \vdash_{\mathsf{C} }  \Psi_{{\mathrm{1}}}  \DualLNLLogicsym{,}  \Psi_{{\mathrm{2}}} }{%
{\DualLNLLogicdruleCXXdLName}{}%
}}
\newcommand{\DualLNLLogicdruleCXXdROneName}[0]{\DualLNLLogicdrulename{C\_dR1}}
\newcommand{\DualLNLLogicdruleCXXdROne}[1]{\DualLNLLogicdrule[#1]{%
\DualLNLLogicpremise{ \DualLNLLogicnt{R}  \vdash_{\mathsf{C} }  \Psi  \DualLNLLogicsym{,}  \DualLNLLogicnt{T_{{\mathrm{1}}}} }%
}{
 \DualLNLLogicnt{R}  \vdash_{\mathsf{C} }  \Psi  \DualLNLLogicsym{,}   \DualLNLLogicnt{T_{{\mathrm{1}}}}  +  \DualLNLLogicnt{T_{{\mathrm{2}}}}  }{%
{\DualLNLLogicdruleCXXdROneName}{}%
}}
\newcommand{\DualLNLLogicdruleCXXdRTwoName}[0]{\DualLNLLogicdrulename{C\_dR2}}
\newcommand{\DualLNLLogicdruleCXXdRTwo}[1]{\DualLNLLogicdrule[#1]{%
\DualLNLLogicpremise{ \DualLNLLogicnt{R}  \vdash_{\mathsf{C} }  \Psi  \DualLNLLogicsym{,}  \DualLNLLogicnt{T_{{\mathrm{2}}}} }%
}{
 \DualLNLLogicnt{R}  \vdash_{\mathsf{C} }  \Psi  \DualLNLLogicsym{,}   \DualLNLLogicnt{T_{{\mathrm{1}}}}  +  \DualLNLLogicnt{T_{{\mathrm{2}}}}  }{%
{\DualLNLLogicdruleCXXdRTwoName}{}%
}}
\newcommand{\DualLNLLogicdruleCXXsLName}[0]{\DualLNLLogicdrulename{C\_sL}}
\newcommand{\DualLNLLogicdruleCXXsL}[1]{\DualLNLLogicdrule[#1]{%
\DualLNLLogicpremise{ \DualLNLLogicnt{T_{{\mathrm{1}}}}  \vdash_{\mathsf{C} }  \DualLNLLogicnt{T_{{\mathrm{2}}}}  \DualLNLLogicsym{,}  \Psi }%
}{
  \DualLNLLogicnt{T_{{\mathrm{1}}}}  -  \DualLNLLogicnt{T_{{\mathrm{2}}}}   \vdash_{\mathsf{C} }  \Psi }{%
{\DualLNLLogicdruleCXXsLName}{}%
}}
\newcommand{\DualLNLLogicdruleCXXsRName}[0]{\DualLNLLogicdrulename{C\_sR}}
\newcommand{\DualLNLLogicdruleCXXsR}[1]{\DualLNLLogicdrule[#1]{%
\DualLNLLogicpremise{  \DualLNLLogicnt{S}  \vdash_{\mathsf{C} }  \Psi_{{\mathrm{1}}}  \DualLNLLogicsym{,}  \DualLNLLogicnt{T_{{\mathrm{1}}}}   \quad   \DualLNLLogicnt{T_{{\mathrm{2}}}}  \vdash_{\mathsf{C} }  \Psi_{{\mathrm{2}}}  }%
}{
 \DualLNLLogicnt{S}  \vdash_{\mathsf{C} }  \Psi_{{\mathrm{1}}}  \DualLNLLogicsym{,}  \Psi_{{\mathrm{2}}}  \DualLNLLogicsym{,}   \DualLNLLogicnt{T_{{\mathrm{1}}}}  -  \DualLNLLogicnt{T_{{\mathrm{2}}}}  }{%
{\DualLNLLogicdruleCXXsRName}{}%
}}
\newcommand{\DualLNLLogicdruleCXXcutName}[0]{\DualLNLLogicdrulename{C\_cut}}
\newcommand{\DualLNLLogicdruleCXXcut}[1]{\DualLNLLogicdrule[#1]{%
\DualLNLLogicpremise{  \DualLNLLogicnt{S}  \vdash_{\mathsf{C} }  \Psi_{{\mathrm{1}}}  \DualLNLLogicsym{,}  \DualLNLLogicnt{T}   \quad   \DualLNLLogicnt{T}  \vdash_{\mathsf{C} }  \Psi_{{\mathrm{2}}}  }%
}{
 \DualLNLLogicnt{S}  \vdash_{\mathsf{C} }  \Psi_{{\mathrm{1}}}  \DualLNLLogicsym{,}  \Psi_{{\mathrm{2}}} }{%
{\DualLNLLogicdruleCXXcutName}{}%
}}
\newcommand{\DualLNLLogicdruleCXXmcutName}[0]{\DualLNLLogicdrulename{C\_mcut}}
\newcommand{\DualLNLLogicdruleCXXmcut}[1]{\DualLNLLogicdrule[#1]{%
\DualLNLLogicpremise{  \DualLNLLogicnt{S}  \vdash_{\mathsf{C} }  \Psi  \DualLNLLogicsym{,}   \DualLNLLogicnt{S} ^{\, \DualLNLLogicmv{n} }    \quad   \DualLNLLogicnt{S}  \vdash_{\mathsf{C} }  \Psi'  }%
}{
 \DualLNLLogicnt{S}  \vdash_{\mathsf{C} }  \Psi  \DualLNLLogicsym{,}  \Psi' }{%
{\DualLNLLogicdruleCXXmcutName}{}%
}}
\newcommand{\DualLNLLogicdruleCXXhLName}[0]{\DualLNLLogicdrulename{C\_hL}}
\newcommand{\DualLNLLogicdruleCXXhL}[1]{\DualLNLLogicdrule[#1]{%
\DualLNLLogicpremise{ \DualLNLLogicnt{A}  \vdash_{\mathsf{L} }   \cdot   ;  \Psi }%
}{
  \mathsf{H}\, \DualLNLLogicnt{A}   \vdash_{\mathsf{C} }  \Psi }{%
{\DualLNLLogicdruleCXXhLName}{}%
}}
\newcommand{\DualLNLLogicdruleLXXidName}[0]{\DualLNLLogicdrulename{L\_id}}
\newcommand{\DualLNLLogicdruleLXXid}[1]{\DualLNLLogicdrule[#1]{%
}{
 \DualLNLLogicnt{A}  \vdash_{\mathsf{L} }  \DualLNLLogicnt{A}  ;   \cdot  }{%
{\DualLNLLogicdruleLXXidName}{}%
}}
\newcommand{\DualLNLLogicdruleLXXwkName}[0]{\DualLNLLogicdrulename{L\_wk}}
\newcommand{\DualLNLLogicdruleLXXwk}[1]{\DualLNLLogicdrule[#1]{%
\DualLNLLogicpremise{ \DualLNLLogicnt{A}  \vdash_{\mathsf{L} }  \Delta  ;  \Psi }%
}{
 \DualLNLLogicnt{A}  \vdash_{\mathsf{L} }  \Delta  ;  \DualLNLLogicnt{T}  \DualLNLLogicsym{,}  \Psi }{%
{\DualLNLLogicdruleLXXwkName}{}%
}}
\newcommand{\DualLNLLogicdruleLXXctrName}[0]{\DualLNLLogicdrulename{L\_ctr}}
\newcommand{\DualLNLLogicdruleLXXctr}[1]{\DualLNLLogicdrule[#1]{%
\DualLNLLogicpremise{ \DualLNLLogicnt{A}  \vdash_{\mathsf{L} }  \Delta  ;  \DualLNLLogicnt{T}  \DualLNLLogicsym{,}  \DualLNLLogicnt{T}  \DualLNLLogicsym{,}  \Psi }%
}{
 \DualLNLLogicnt{A}  \vdash_{\mathsf{L} }  \Delta  ;  \DualLNLLogicnt{T}  \DualLNLLogicsym{,}  \Psi }{%
{\DualLNLLogicdruleLXXctrName}{}%
}}
\newcommand{\DualLNLLogicdruleLXXexName}[0]{\DualLNLLogicdrulename{L\_ex}}
\newcommand{\DualLNLLogicdruleLXXex}[1]{\DualLNLLogicdrule[#1]{%
\DualLNLLogicpremise{ \DualLNLLogicnt{A}  \vdash_{\mathsf{L} }  \Delta_{{\mathrm{1}}}  \DualLNLLogicsym{,}  \DualLNLLogicnt{A}  \DualLNLLogicsym{,}  \DualLNLLogicnt{B}  \DualLNLLogicsym{,}  \Delta_{{\mathrm{2}}}  ;  \Psi }%
}{
 \DualLNLLogicnt{A}  \vdash_{\mathsf{L} }  \Delta_{{\mathrm{1}}}  \DualLNLLogicsym{,}  \DualLNLLogicnt{B}  \DualLNLLogicsym{,}  \DualLNLLogicnt{A}  \DualLNLLogicsym{,}  \Delta_{{\mathrm{2}}}  ;  \Psi }{%
{\DualLNLLogicdruleLXXexName}{}%
}}
\newcommand{\DualLNLLogicdruleLXXCexName}[0]{\DualLNLLogicdrulename{L\_Cex}}
\newcommand{\DualLNLLogicdruleLXXCex}[1]{\DualLNLLogicdrule[#1]{%
\DualLNLLogicpremise{ \DualLNLLogicnt{A}  \vdash_{\mathsf{L} }  \Delta  ;  \Psi_{{\mathrm{1}}}  \DualLNLLogicsym{,}  \DualLNLLogicnt{S}  \DualLNLLogicsym{,}  \DualLNLLogicnt{T}  \DualLNLLogicsym{,}  \Psi_{{\mathrm{2}}} }%
}{
 \DualLNLLogicnt{A}  \vdash_{\mathsf{L} }  \Delta  ;  \Psi_{{\mathrm{1}}}  \DualLNLLogicsym{,}  \DualLNLLogicnt{T}  \DualLNLLogicsym{,}  \DualLNLLogicnt{S}  \DualLNLLogicsym{,}  \Psi_{{\mathrm{2}}} }{%
{\DualLNLLogicdruleLXXCexName}{}%
}}
\newcommand{\DualLNLLogicdruleLXXcutName}[0]{\DualLNLLogicdrulename{L\_cut}}
\newcommand{\DualLNLLogicdruleLXXcut}[1]{\DualLNLLogicdrule[#1]{%
\DualLNLLogicpremise{  \DualLNLLogicnt{A}  \vdash_{\mathsf{L} }  \Delta_{{\mathrm{1}}}  \DualLNLLogicsym{,}  \DualLNLLogicnt{B}  ;  \Psi_{{\mathrm{1}}}   \quad   \DualLNLLogicnt{B}  \vdash_{\mathsf{L} }  \Delta_{{\mathrm{2}}}  ;  \Psi_{{\mathrm{2}}}  }%
}{
 \DualLNLLogicnt{A}  \vdash_{\mathsf{L} }  \Delta_{{\mathrm{1}}}  \DualLNLLogicsym{,}  \Delta_{{\mathrm{2}}}  ;  \Psi_{{\mathrm{1}}}  \DualLNLLogicsym{,}  \Psi_{{\mathrm{2}}} }{%
{\DualLNLLogicdruleLXXcutName}{}%
}}
\newcommand{\DualLNLLogicdruleLXXCcutName}[0]{\DualLNLLogicdrulename{L\_Ccut}}
\newcommand{\DualLNLLogicdruleLXXCcut}[1]{\DualLNLLogicdrule[#1]{%
\DualLNLLogicpremise{  \DualLNLLogicnt{A}  \vdash_{\mathsf{L} }  \Delta  ;  \Psi_{{\mathrm{1}}}  \DualLNLLogicsym{,}  \DualLNLLogicnt{T}   \quad   \DualLNLLogicnt{T}  \vdash_{\mathsf{C} }  \Psi_{{\mathrm{2}}}  }%
}{
 \DualLNLLogicnt{A}  \vdash_{\mathsf{L} }  \Delta  ;  \Psi_{{\mathrm{1}}}  \DualLNLLogicsym{,}  \Psi_{{\mathrm{2}}} }{%
{\DualLNLLogicdruleLXXCcutName}{}%
}}
\newcommand{\DualLNLLogicdruleLXXflLName}[0]{\DualLNLLogicdrulename{L\_flL}}
\newcommand{\DualLNLLogicdruleLXXflL}[1]{\DualLNLLogicdrule[#1]{%
}{
  \perp   \vdash_{\mathsf{L} }   \cdot   ;   \cdot  }{%
{\DualLNLLogicdruleLXXflLName}{}%
}}
\newcommand{\DualLNLLogicdruleLXXflRName}[0]{\DualLNLLogicdrulename{L\_flR}}
\newcommand{\DualLNLLogicdruleLXXflR}[1]{\DualLNLLogicdrule[#1]{%
\DualLNLLogicpremise{ \DualLNLLogicnt{A}  \vdash_{\mathsf{L} }  \Delta  ;  \Psi }%
}{
 \DualLNLLogicnt{A}  \vdash_{\mathsf{L} }   \perp   \DualLNLLogicsym{,}  \Delta  ;  \Psi }{%
{\DualLNLLogicdruleLXXflRName}{}%
}}
\newcommand{\DualLNLLogicdruleLXXdROneName}[0]{\DualLNLLogicdrulename{L\_dR1}}
\newcommand{\DualLNLLogicdruleLXXdROne}[1]{\DualLNLLogicdrule[#1]{%
\DualLNLLogicpremise{ \DualLNLLogicnt{A}  \vdash_{\mathsf{L} }  \Delta  ;  \Psi  \DualLNLLogicsym{,}  \DualLNLLogicnt{T_{{\mathrm{1}}}} }%
}{
 \DualLNLLogicnt{A}  \vdash_{\mathsf{L} }  \Delta  ;  \Psi  \DualLNLLogicsym{,}   \DualLNLLogicnt{T_{{\mathrm{1}}}}  +  \DualLNLLogicnt{T_{{\mathrm{2}}}}  }{%
{\DualLNLLogicdruleLXXdROneName}{}%
}}
\newcommand{\DualLNLLogicdruleLXXdRTwoName}[0]{\DualLNLLogicdrulename{L\_dR2}}
\newcommand{\DualLNLLogicdruleLXXdRTwo}[1]{\DualLNLLogicdrule[#1]{%
\DualLNLLogicpremise{ \DualLNLLogicnt{A}  \vdash_{\mathsf{L} }  \Delta  ;  \Psi  \DualLNLLogicsym{,}  \DualLNLLogicnt{T_{{\mathrm{2}}}} }%
}{
 \DualLNLLogicnt{A}  \vdash_{\mathsf{L} }  \Delta  ;  \Psi  \DualLNLLogicsym{,}   \DualLNLLogicnt{T_{{\mathrm{1}}}}  +  \DualLNLLogicnt{T_{{\mathrm{2}}}}  }{%
{\DualLNLLogicdruleLXXdRTwoName}{}%
}}
\newcommand{\DualLNLLogicdruleLXXpLName}[0]{\DualLNLLogicdrulename{L\_pL}}
\newcommand{\DualLNLLogicdruleLXXpL}[1]{\DualLNLLogicdrule[#1]{%
\DualLNLLogicpremise{  \DualLNLLogicnt{B_{{\mathrm{1}}}}  \vdash_{\mathsf{L} }  \Delta_{{\mathrm{1}}}  ;  \Psi_{{\mathrm{1}}}   \quad   \DualLNLLogicnt{B_{{\mathrm{2}}}}  \vdash_{\mathsf{L} }  \Delta_{{\mathrm{2}}}  ;  \Psi_{{\mathrm{2}}}  }%
}{
  \DualLNLLogicnt{B_{{\mathrm{1}}}}  \oplus  \DualLNLLogicnt{B_{{\mathrm{2}}}}   \vdash_{\mathsf{L} }  \Delta_{{\mathrm{1}}}  \DualLNLLogicsym{,}  \Delta_{{\mathrm{2}}}  ;  \Psi_{{\mathrm{1}}}  \DualLNLLogicsym{,}  \Psi_{{\mathrm{2}}} }{%
{\DualLNLLogicdruleLXXpLName}{}%
}}
\newcommand{\DualLNLLogicdruleLXXpRName}[0]{\DualLNLLogicdrulename{L\_pR}}
\newcommand{\DualLNLLogicdruleLXXpR}[1]{\DualLNLLogicdrule[#1]{%
\DualLNLLogicpremise{ \DualLNLLogicnt{A}  \vdash_{\mathsf{L} }  \Delta  \DualLNLLogicsym{,}  \DualLNLLogicnt{B}  \DualLNLLogicsym{,}  \DualLNLLogicnt{C}  ;  \Psi }%
}{
 \DualLNLLogicnt{A}  \vdash_{\mathsf{L} }  \Delta  \DualLNLLogicsym{,}   \DualLNLLogicnt{B}  \oplus  \DualLNLLogicnt{C}   ;  \Psi }{%
{\DualLNLLogicdruleLXXpRName}{}%
}}
\newcommand{\DualLNLLogicdruleLXXsLName}[0]{\DualLNLLogicdrulename{L\_sL}}
\newcommand{\DualLNLLogicdruleLXXsL}[1]{\DualLNLLogicdrule[#1]{%
\DualLNLLogicpremise{ \DualLNLLogicnt{B_{{\mathrm{1}}}}  \vdash_{\mathsf{L} }  \DualLNLLogicnt{B_{{\mathrm{2}}}}  \DualLNLLogicsym{,}  \Delta  ;  \Psi }%
}{
  \DualLNLLogicnt{B_{{\mathrm{1}}}}  \colimp  \DualLNLLogicnt{B_{{\mathrm{2}}}}   \vdash_{\mathsf{L} }  \Delta  ;  \Psi }{%
{\DualLNLLogicdruleLXXsLName}{}%
}}
\newcommand{\DualLNLLogicdruleLXXsRName}[0]{\DualLNLLogicdrulename{L\_sR}}
\newcommand{\DualLNLLogicdruleLXXsR}[1]{\DualLNLLogicdrule[#1]{%
\DualLNLLogicpremise{  \DualLNLLogicnt{A}  \vdash_{\mathsf{L} }  \DualLNLLogicnt{B_{{\mathrm{1}}}}  \DualLNLLogicsym{,}  \Delta_{{\mathrm{1}}}  ;  \Psi_{{\mathrm{1}}}   \quad   \DualLNLLogicnt{B_{{\mathrm{2}}}}  \vdash_{\mathsf{L} }  \Delta_{{\mathrm{2}}}  ;  \Psi_{{\mathrm{2}}}  }%
}{
 \DualLNLLogicnt{A}  \vdash_{\mathsf{L} }   \DualLNLLogicnt{B}  \colimp  \DualLNLLogicnt{C}   \DualLNLLogicsym{,}  \Delta_{{\mathrm{1}}}  \DualLNLLogicsym{,}  \Delta_{{\mathrm{2}}}  ;  \Psi_{{\mathrm{1}}}  \DualLNLLogicsym{,}  \Psi_{{\mathrm{2}}} }{%
{\DualLNLLogicdruleLXXsRName}{}%
}}
\newcommand{\DualLNLLogicdruleLXXCsRName}[0]{\DualLNLLogicdrulename{L\_CsR}}
\newcommand{\DualLNLLogicdruleLXXCsR}[1]{\DualLNLLogicdrule[#1]{%
\DualLNLLogicpremise{  \DualLNLLogicnt{A}  \vdash_{\mathsf{L} }  \Delta  ;  \Psi_{{\mathrm{1}}}  \DualLNLLogicsym{,}  \DualLNLLogicnt{T_{{\mathrm{1}}}}   \quad   \DualLNLLogicnt{T_{{\mathrm{2}}}}  \vdash_{\mathsf{C} }  \Psi_{{\mathrm{2}}}  }%
}{
 \DualLNLLogicnt{A}  \vdash_{\mathsf{L} }  \Delta  ;  \Psi_{{\mathrm{1}}}  \DualLNLLogicsym{,}  \Psi_{{\mathrm{2}}}  \DualLNLLogicsym{,}   \DualLNLLogicnt{T_{{\mathrm{1}}}}  -  \DualLNLLogicnt{T_{{\mathrm{2}}}}  }{%
{\DualLNLLogicdruleLXXCsRName}{}%
}}
\newcommand{\DualLNLLogicdruleLXXjLName}[0]{\DualLNLLogicdrulename{L\_jL}}
\newcommand{\DualLNLLogicdruleLXXjL}[1]{\DualLNLLogicdrule[#1]{%
\DualLNLLogicpremise{ \DualLNLLogicnt{T}  \vdash_{\mathsf{C} }  \Psi }%
}{
  \mathsf{J}\, \DualLNLLogicnt{T}   \vdash_{\mathsf{L} }   \cdot   ;  \Psi }{%
{\DualLNLLogicdruleLXXjLName}{}%
}}
\newcommand{\DualLNLLogicdruleLXXjRName}[0]{\DualLNLLogicdrulename{L\_jR}}
\newcommand{\DualLNLLogicdruleLXXjR}[1]{\DualLNLLogicdrule[#1]{%
\DualLNLLogicpremise{ \DualLNLLogicnt{A}  \vdash_{\mathsf{L} }  \Delta  ;  \DualLNLLogicnt{T}  \DualLNLLogicsym{,}  \Psi }%
}{
 \DualLNLLogicnt{A}  \vdash_{\mathsf{L} }  \Delta  \DualLNLLogicsym{,}   \mathsf{J}\, \DualLNLLogicnt{T}   ;  \Psi }{%
{\DualLNLLogicdruleLXXjRName}{}%
}}
\newcommand{\DualLNLLogicdruleLXXhRName}[0]{\DualLNLLogicdrulename{L\_hR}}
\newcommand{\DualLNLLogicdruleLXXhR}[1]{\DualLNLLogicdrule[#1]{%
\DualLNLLogicpremise{ \DualLNLLogicnt{A}  \vdash_{\mathsf{L} }  \Delta  \DualLNLLogicsym{,}  \DualLNLLogicnt{B}  ;  \Psi }%
}{
 \DualLNLLogicnt{A}  \vdash_{\mathsf{L} }  \Delta  ;   \mathsf{H}\, \DualLNLLogicnt{B}   \DualLNLLogicsym{,}  \Psi }{%
{\DualLNLLogicdruleLXXhRName}{}%
}}
\newcommand{\DualLNLLogicdruleLXXCmcutName}[0]{\DualLNLLogicdrulename{L\_Cmcut}}
\newcommand{\DualLNLLogicdruleLXXCmcut}[1]{\DualLNLLogicdrule[#1]{%
\DualLNLLogicpremise{  \DualLNLLogicnt{A}  \vdash_{\mathsf{L} }  \Delta  ;  \Psi  \DualLNLLogicsym{,}   \DualLNLLogicnt{S} ^{\, \DualLNLLogicmv{n} }    \quad   \DualLNLLogicnt{S}  \vdash_{\mathsf{C} }  \Psi'  }%
}{
 \DualLNLLogicnt{A}  \vdash_{\mathsf{L} }  \Delta  ;  \Psi  \DualLNLLogicsym{,}  \Psi' }{%
{\DualLNLLogicdruleLXXCmcutName}{}%
}}
\newcommand{\DualLNLLogicdruleNCXXidName}[0]{\DualLNLLogicdrulename{NC\_id}}
\newcommand{\DualLNLLogicdruleNCXXid}[1]{\DualLNLLogicdrule[#1]{%
}{
 \DualLNLLogicnt{S}  \vdash_{\mathsf{C} }  \DualLNLLogicnt{S} }{%
{\DualLNLLogicdruleNCXXidName}{}%
}}
\newcommand{\DualLNLLogicdruleNCXXzEName}[0]{\DualLNLLogicdrulename{NC\_zE}}
\newcommand{\DualLNLLogicdruleNCXXzE}[1]{\DualLNLLogicdrule[#1]{%
\DualLNLLogicpremise{  \DualLNLLogicnt{S}  \vdash_{\mathsf{C} }  \DualLNLLogicsym{0}  \DualLNLLogicsym{,}  \Psi   \quad   \DualLNLLogicnt{S_{{\mathrm{1}}}}  \vdash_{\mathsf{C} }  \Psi_{{\mathrm{1}}}   \DualLNLLogicsym{,} \, ... \, \DualLNLLogicsym{,}   \DualLNLLogicnt{S_{\DualLNLLogicmv{n}}}  \vdash_{\mathsf{C} }  \Psi_{\DualLNLLogicmv{n}}  }%
}{
 \DualLNLLogicnt{S}  \vdash_{\mathsf{C} }  \Psi  \DualLNLLogicsym{,}  \Psi_{{\mathrm{1}}}  \DualLNLLogicsym{,} \, ... \, \DualLNLLogicsym{,}  \Psi_{\DualLNLLogicmv{n}} }{%
{\DualLNLLogicdruleNCXXzEName}{}%
}}
\newcommand{\DualLNLLogicdruleNCXXdIOneName}[0]{\DualLNLLogicdrulename{NC\_dI1}}
\newcommand{\DualLNLLogicdruleNCXXdIOne}[1]{\DualLNLLogicdrule[#1]{%
\DualLNLLogicpremise{ \DualLNLLogicnt{S}  \vdash_{\mathsf{C} }  \Psi  \DualLNLLogicsym{,}  \DualLNLLogicnt{T_{{\mathrm{1}}}} }%
}{
 \DualLNLLogicnt{S}  \vdash_{\mathsf{C} }  \Psi  \DualLNLLogicsym{,}   \DualLNLLogicnt{T_{{\mathrm{1}}}}  +  \DualLNLLogicnt{T_{{\mathrm{2}}}}  }{%
{\DualLNLLogicdruleNCXXdIOneName}{}%
}}
\newcommand{\DualLNLLogicdruleNCXXdITwoName}[0]{\DualLNLLogicdrulename{NC\_dI2}}
\newcommand{\DualLNLLogicdruleNCXXdITwo}[1]{\DualLNLLogicdrule[#1]{%
\DualLNLLogicpremise{ \DualLNLLogicnt{S}  \vdash_{\mathsf{C} }  \Psi  \DualLNLLogicsym{,}  \DualLNLLogicnt{T_{{\mathrm{2}}}} }%
}{
 \DualLNLLogicnt{S}  \vdash_{\mathsf{C} }  \Psi  \DualLNLLogicsym{,}   \DualLNLLogicnt{T_{{\mathrm{1}}}}  +  \DualLNLLogicnt{T_{{\mathrm{2}}}}  }{%
{\DualLNLLogicdruleNCXXdITwoName}{}%
}}
\newcommand{\DualLNLLogicdruleNCXXdEName}[0]{\DualLNLLogicdrulename{NC\_dE}}
\newcommand{\DualLNLLogicdruleNCXXdE}[1]{\DualLNLLogicdrule[#1]{%
\DualLNLLogicpremise{    \DualLNLLogicnt{S}  \vdash_{\mathsf{C} }  \Psi_{{\mathrm{1}}}  \DualLNLLogicsym{,}   \DualLNLLogicnt{T_{{\mathrm{1}}}}  +  \DualLNLLogicnt{T_{{\mathrm{2}}}}    \quad   \DualLNLLogicnt{T_{{\mathrm{1}}}}  \vdash_{\mathsf{C} }  \Psi_{{\mathrm{2}}}     \quad   \DualLNLLogicnt{T_{{\mathrm{2}}}}  \vdash_{\mathsf{C} }  \Psi_{{\mathrm{2}}}  }%
}{
 \DualLNLLogicnt{S}  \vdash_{\mathsf{C} }  \Psi_{{\mathrm{1}}}  \DualLNLLogicsym{,}  \Psi_{{\mathrm{2}}} }{%
{\DualLNLLogicdruleNCXXdEName}{}%
}}
\newcommand{\DualLNLLogicdruleNCXXsubIName}[0]{\DualLNLLogicdrulename{NC\_subI}}
\newcommand{\DualLNLLogicdruleNCXXsubI}[1]{\DualLNLLogicdrule[#1]{%
\DualLNLLogicpremise{  \DualLNLLogicnt{S}  \vdash_{\mathsf{C} }  \Psi_{{\mathrm{1}}}  \DualLNLLogicsym{,}  \DualLNLLogicnt{T_{{\mathrm{1}}}}   \quad   \DualLNLLogicnt{T_{{\mathrm{2}}}}  \vdash_{\mathsf{C} }  \Psi_{{\mathrm{2}}}  }%
}{
 \DualLNLLogicnt{S}  \vdash_{\mathsf{C} }  \Psi_{{\mathrm{1}}}  \DualLNLLogicsym{,}  \Psi_{{\mathrm{2}}}  \DualLNLLogicsym{,}   \DualLNLLogicnt{T_{{\mathrm{1}}}}  -  \DualLNLLogicnt{T_{{\mathrm{2}}}}  }{%
{\DualLNLLogicdruleNCXXsubIName}{}%
}}
\newcommand{\DualLNLLogicdruleNCXXsubEName}[0]{\DualLNLLogicdrulename{NC\_subE}}
\newcommand{\DualLNLLogicdruleNCXXsubE}[1]{\DualLNLLogicdrule[#1]{%
\DualLNLLogicpremise{  \DualLNLLogicnt{S}  \vdash_{\mathsf{C} }  \Psi_{{\mathrm{1}}}  \DualLNLLogicsym{,}   \DualLNLLogicnt{T_{{\mathrm{1}}}}  -  \DualLNLLogicnt{T_{{\mathrm{2}}}}    \quad   \DualLNLLogicnt{T_{{\mathrm{1}}}}  \vdash_{\mathsf{C} }  \DualLNLLogicnt{T_{{\mathrm{2}}}}  \DualLNLLogicsym{,}  \Psi_{{\mathrm{2}}}  }%
}{
 \DualLNLLogicnt{S}  \vdash_{\mathsf{C} }  \Psi_{{\mathrm{1}}}  \DualLNLLogicsym{,}  \Psi_{{\mathrm{2}}} }{%
{\DualLNLLogicdruleNCXXsubEName}{}%
}}
\newcommand{\DualLNLLogicdruleNCXXHEName}[0]{\DualLNLLogicdrulename{NC\_HE}}
\newcommand{\DualLNLLogicdruleNCXXHE}[1]{\DualLNLLogicdrule[#1]{%
\DualLNLLogicpremise{  \DualLNLLogicnt{S}  \vdash_{\mathsf{C} }  \Psi_{{\mathrm{1}}}  \DualLNLLogicsym{,}   \mathsf{H}\, \DualLNLLogicnt{A}    \quad   \DualLNLLogicnt{A}  \vdash_{\mathsf{L} }   \cdot  ; \Psi_{{\mathrm{2}}}  }%
}{
 \DualLNLLogicnt{S}  \vdash_{\mathsf{C} }  \Psi_{{\mathrm{1}}}  \DualLNLLogicsym{,}  \Psi_{{\mathrm{2}}} }{%
{\DualLNLLogicdruleNCXXHEName}{}%
}}
\newcommand{\DualLNLLogicdruleNCXXweakName}[0]{\DualLNLLogicdrulename{NC\_weak}}
\newcommand{\DualLNLLogicdruleNCXXweak}[1]{\DualLNLLogicdrule[#1]{%
\DualLNLLogicpremise{ \DualLNLLogicnt{S}  \vdash_{\mathsf{C} }  \Psi }%
}{
 \DualLNLLogicnt{S}  \vdash_{\mathsf{C} }  \DualLNLLogicnt{T}  \DualLNLLogicsym{,}  \Psi }{%
{\DualLNLLogicdruleNCXXweakName}{}%
}}
\newcommand{\DualLNLLogicdruleNCXXcontrName}[0]{\DualLNLLogicdrulename{NC\_contr}}
\newcommand{\DualLNLLogicdruleNCXXcontr}[1]{\DualLNLLogicdrule[#1]{%
\DualLNLLogicpremise{ \DualLNLLogicnt{S}  \vdash_{\mathsf{C} }  \DualLNLLogicnt{T}  \DualLNLLogicsym{,}  \DualLNLLogicnt{T}  \DualLNLLogicsym{,}  \Psi }%
}{
 \DualLNLLogicnt{S}  \vdash_{\mathsf{C} }  \DualLNLLogicnt{T}  \DualLNLLogicsym{,}  \Psi }{%
{\DualLNLLogicdruleNCXXcontrName}{}%
}}
\newcommand{\DualLNLLogicdruleNCXXcutName}[0]{\DualLNLLogicdrulename{NC\_cut}}
\newcommand{\DualLNLLogicdruleNCXXcut}[1]{\DualLNLLogicdrule[#1]{%
\DualLNLLogicpremise{  \DualLNLLogicnt{S}  \vdash_{\mathsf{C} }  \Psi_{{\mathrm{1}}}  \DualLNLLogicsym{,}  \DualLNLLogicnt{T}   \quad   \DualLNLLogicnt{T}  \vdash_{\mathsf{C} }  \Psi_{{\mathrm{2}}}  }%
}{
 \DualLNLLogicnt{S}  \vdash_{\mathsf{C} }  \Psi_{{\mathrm{1}}}  \DualLNLLogicsym{,}  \Psi_{{\mathrm{2}}} }{%
{\DualLNLLogicdruleNCXXcutName}{}%
}}
\newcommand{\DualLNLLogicdruleNLXXidName}[0]{\DualLNLLogicdrulename{NL\_id}}
\newcommand{\DualLNLLogicdruleNLXXid}[1]{\DualLNLLogicdrule[#1]{%
}{
 \DualLNLLogicnt{A}  \vdash_{\mathsf{L} }  \DualLNLLogicnt{A} ;  \cdot  }{%
{\DualLNLLogicdruleNLXXidName}{}%
}}
\newcommand{\DualLNLLogicdruleNLXXpIName}[0]{\DualLNLLogicdrulename{NL\_pI}}
\newcommand{\DualLNLLogicdruleNLXXpI}[1]{\DualLNLLogicdrule[#1]{%
\DualLNLLogicpremise{ \DualLNLLogicnt{A}  \vdash_{\mathsf{L} }  \Delta ; \Psi }%
}{
 \DualLNLLogicnt{A}  \vdash_{\mathsf{L} }  \Delta  \DualLNLLogicsym{,}   \perp  ; \Psi }{%
{\DualLNLLogicdruleNLXXpIName}{}%
}}
\newcommand{\DualLNLLogicdruleNLXXpEName}[0]{\DualLNLLogicdrulename{NL\_pE}}
\newcommand{\DualLNLLogicdruleNLXXpE}[1]{\DualLNLLogicdrule[#1]{%
\DualLNLLogicpremise{ \DualLNLLogicnt{A}  \vdash_{\mathsf{L} }   \perp   \DualLNLLogicsym{,}  \Delta ;  \cdot  }%
}{
 \DualLNLLogicnt{A}  \vdash_{\mathsf{L} }  \Delta ;  \cdot  }{%
{\DualLNLLogicdruleNLXXpEName}{}%
}}
\newcommand{\DualLNLLogicdruleNLXXparIName}[0]{\DualLNLLogicdrulename{NL\_parI}}
\newcommand{\DualLNLLogicdruleNLXXparI}[1]{\DualLNLLogicdrule[#1]{%
\DualLNLLogicpremise{ \DualLNLLogicnt{A}  \vdash_{\mathsf{L} }  \Delta  \DualLNLLogicsym{,}  \DualLNLLogicnt{B_{{\mathrm{1}}}}  \DualLNLLogicsym{,}  \DualLNLLogicnt{B_{{\mathrm{2}}}} ; \Psi }%
}{
 \DualLNLLogicnt{A}  \vdash_{\mathsf{L} }  \Delta  \DualLNLLogicsym{,}   \DualLNLLogicnt{B_{{\mathrm{1}}}}  \oplus  \DualLNLLogicnt{B_{{\mathrm{2}}}}  ; \Psi }{%
{\DualLNLLogicdruleNLXXparIName}{}%
}}
\newcommand{\DualLNLLogicdruleNLXXparEName}[0]{\DualLNLLogicdrulename{NL\_parE}}
\newcommand{\DualLNLLogicdruleNLXXparE}[1]{\DualLNLLogicdrule[#1]{%
\DualLNLLogicpremise{    \DualLNLLogicnt{A}  \vdash_{\mathsf{L} }  \Delta  \DualLNLLogicsym{,}   \DualLNLLogicnt{B_{{\mathrm{1}}}}  \oplus  \DualLNLLogicnt{B_{{\mathrm{2}}}}  ; \Psi   \quad   \DualLNLLogicnt{B_{{\mathrm{1}}}}  \vdash_{\mathsf{L} }  \Delta_{{\mathrm{1}}} ; \Psi_{{\mathrm{1}}}     \quad   \DualLNLLogicnt{B_{{\mathrm{2}}}}  \vdash_{\mathsf{L} }  \Delta_{{\mathrm{2}}} ; \Psi_{{\mathrm{2}}}  }%
}{
 \DualLNLLogicnt{A}  \vdash_{\mathsf{L} }  \Delta  \DualLNLLogicsym{,}  \Delta_{{\mathrm{1}}}  \DualLNLLogicsym{,}  \Delta_{{\mathrm{2}}} ; \Psi  \DualLNLLogicsym{,}  \Psi_{{\mathrm{1}}}  \DualLNLLogicsym{,}  \Psi_{{\mathrm{2}}} }{%
{\DualLNLLogicdruleNLXXparEName}{}%
}}
\newcommand{\DualLNLLogicdruleNLXXsubIName}[0]{\DualLNLLogicdrulename{NL\_subI}}
\newcommand{\DualLNLLogicdruleNLXXsubI}[1]{\DualLNLLogicdrule[#1]{%
\DualLNLLogicpremise{  \DualLNLLogicnt{A}  \vdash_{\mathsf{L} }  \Delta_{{\mathrm{1}}}  \DualLNLLogicsym{,}  \DualLNLLogicnt{B_{{\mathrm{1}}}} ; \Psi_{{\mathrm{1}}}   \quad   \DualLNLLogicnt{B_{{\mathrm{2}}}}  \vdash_{\mathsf{L} }  \Delta_{{\mathrm{2}}} ; \Psi_{{\mathrm{2}}}  }%
}{
 \DualLNLLogicnt{A}  \vdash_{\mathsf{L} }   \DualLNLLogicnt{B_{{\mathrm{1}}}}  \colimp  \DualLNLLogicnt{B_{{\mathrm{2}}}}   \DualLNLLogicsym{,}  \Delta_{{\mathrm{1}}}  \DualLNLLogicsym{,}  \Delta_{{\mathrm{2}}} ; \Psi_{{\mathrm{1}}}  \DualLNLLogicsym{,}  \Psi_{{\mathrm{2}}} }{%
{\DualLNLLogicdruleNLXXsubIName}{}%
}}
\newcommand{\DualLNLLogicdruleNLXXsubEName}[0]{\DualLNLLogicdrulename{NL\_subE}}
\newcommand{\DualLNLLogicdruleNLXXsubE}[1]{\DualLNLLogicdrule[#1]{%
\DualLNLLogicpremise{  \DualLNLLogicnt{A}  \vdash_{\mathsf{L} }  \Delta_{{\mathrm{1}}}  \DualLNLLogicsym{,}   \DualLNLLogicnt{B_{{\mathrm{1}}}}  \colimp  \DualLNLLogicnt{B_{{\mathrm{2}}}}  ; \Psi_{{\mathrm{1}}}   \quad   \DualLNLLogicnt{B_{{\mathrm{1}}}}  \vdash_{\mathsf{L} }  \DualLNLLogicnt{B_{{\mathrm{1}}}}  \DualLNLLogicsym{,}  \Delta_{{\mathrm{2}}} ; \Psi_{{\mathrm{2}}}  }%
}{
 \DualLNLLogicnt{A}  \vdash_{\mathsf{L} }  \Delta_{{\mathrm{1}}}  \DualLNLLogicsym{,}  \Delta_{{\mathrm{2}}} ; \Psi_{{\mathrm{1}}}  \DualLNLLogicsym{,}  \Psi_{{\mathrm{2}}} }{%
{\DualLNLLogicdruleNLXXsubEName}{}%
}}
\newcommand{\DualLNLLogicdruleNLXXJIName}[0]{\DualLNLLogicdrulename{NL\_JI}}
\newcommand{\DualLNLLogicdruleNLXXJI}[1]{\DualLNLLogicdrule[#1]{%
\DualLNLLogicpremise{ \DualLNLLogicnt{A}  \vdash_{\mathsf{L} }  \Delta ; \DualLNLLogicnt{T}  \DualLNLLogicsym{,}  \Psi }%
}{
 \DualLNLLogicnt{A}  \vdash_{\mathsf{L} }  \Delta  \DualLNLLogicsym{,}   \mathsf{J}\, \DualLNLLogicnt{T}  ; \Psi }{%
{\DualLNLLogicdruleNLXXJIName}{}%
}}
\newcommand{\DualLNLLogicdruleNLXXJEName}[0]{\DualLNLLogicdrulename{NL\_JE}}
\newcommand{\DualLNLLogicdruleNLXXJE}[1]{\DualLNLLogicdrule[#1]{%
\DualLNLLogicpremise{  \DualLNLLogicnt{A}  \vdash_{\mathsf{L} }  \Delta  \DualLNLLogicsym{,}   \mathsf{J}\, \DualLNLLogicnt{T}  ; \Psi_{{\mathrm{1}}}   \quad   \DualLNLLogicnt{T}  \vdash_{\mathsf{C} }  \Psi_{{\mathrm{2}}}  }%
}{
 \DualLNLLogicnt{A}  \vdash_{\mathsf{L} }  \Delta ; \Psi_{{\mathrm{1}}}  \DualLNLLogicsym{,}  \Psi_{{\mathrm{2}}} }{%
{\DualLNLLogicdruleNLXXJEName}{}%
}}
\newcommand{\DualLNLLogicdruleNLXXHIName}[0]{\DualLNLLogicdrulename{NL\_HI}}
\newcommand{\DualLNLLogicdruleNLXXHI}[1]{\DualLNLLogicdrule[#1]{%
\DualLNLLogicpremise{ \DualLNLLogicnt{A}  \vdash_{\mathsf{L} }  \Delta  \DualLNLLogicsym{,}  \DualLNLLogicnt{B} ; \Psi }%
}{
 \DualLNLLogicnt{A}  \vdash_{\mathsf{L} }  \Delta ;  \mathsf{H}\, \DualLNLLogicnt{B}   \DualLNLLogicsym{,}  \Psi }{%
{\DualLNLLogicdruleNLXXHIName}{}%
}}
\newcommand{\DualLNLLogicdruleNLXXHEName}[0]{\DualLNLLogicdrulename{NL\_HE}}
\newcommand{\DualLNLLogicdruleNLXXHE}[1]{\DualLNLLogicdrule[#1]{%
\DualLNLLogicpremise{  \DualLNLLogicnt{A}  \vdash_{\mathsf{L} }  \Delta ; \Psi_{{\mathrm{1}}}  \DualLNLLogicsym{,}   \mathsf{H}\, \DualLNLLogicnt{A}    \quad   \DualLNLLogicnt{A}  \vdash_{\mathsf{L} }   \cdot  ; \Psi_{{\mathrm{2}}}  }%
}{
 \DualLNLLogicnt{A}  \vdash_{\mathsf{L} }  \Delta ; \Psi_{{\mathrm{1}}}  \DualLNLLogicsym{,}  \Psi_{{\mathrm{2}}} }{%
{\DualLNLLogicdruleNLXXHEName}{}%
}}
\newcommand{\DualLNLLogicdruleNLXXweakName}[0]{\DualLNLLogicdrulename{NL\_weak}}
\newcommand{\DualLNLLogicdruleNLXXweak}[1]{\DualLNLLogicdrule[#1]{%
\DualLNLLogicpremise{ \DualLNLLogicnt{A}  \vdash_{\mathsf{L} }  \Delta ; \Psi }%
}{
 \DualLNLLogicnt{A}  \vdash_{\mathsf{L} }  \Delta ; \DualLNLLogicnt{T}  \DualLNLLogicsym{,}  \Psi }{%
{\DualLNLLogicdruleNLXXweakName}{}%
}}
\newcommand{\DualLNLLogicdruleNLXXcontrName}[0]{\DualLNLLogicdrulename{NL\_contr}}
\newcommand{\DualLNLLogicdruleNLXXcontr}[1]{\DualLNLLogicdrule[#1]{%
\DualLNLLogicpremise{ \DualLNLLogicnt{A}  \vdash_{\mathsf{L} }  \Delta ; \DualLNLLogicnt{T}  \DualLNLLogicsym{,}  \DualLNLLogicnt{T}  \DualLNLLogicsym{,}  \Psi }%
}{
 \DualLNLLogicnt{A}  \vdash_{\mathsf{L} }  \Delta ; \DualLNLLogicnt{T}  \DualLNLLogicsym{,}  \Psi }{%
{\DualLNLLogicdruleNLXXcontrName}{}%
}}
\newcommand{\DualLNLLogicdruleNLXXCcutName}[0]{\DualLNLLogicdrulename{NL\_Ccut}}
\newcommand{\DualLNLLogicdruleNLXXCcut}[1]{\DualLNLLogicdrule[#1]{%
\DualLNLLogicpremise{  \DualLNLLogicnt{A}  \vdash_{\mathsf{L} }  \Delta ; \Psi_{{\mathrm{1}}}  \DualLNLLogicsym{,}  \DualLNLLogicnt{T}   \quad   \DualLNLLogicnt{T}  \vdash_{\mathsf{C} }  \Psi_{{\mathrm{2}}}  }%
}{
 \DualLNLLogicnt{A}  \vdash_{\mathsf{L} }  \Delta ; \Psi_{{\mathrm{1}}}  \DualLNLLogicsym{,}  \Psi_{{\mathrm{2}}} }{%
{\DualLNLLogicdruleNLXXCcutName}{}%
}}
\newcommand{\DualLNLLogicdruleNLXXcutName}[0]{\DualLNLLogicdrulename{NL\_cut}}
\newcommand{\DualLNLLogicdruleNLXXcut}[1]{\DualLNLLogicdrule[#1]{%
\DualLNLLogicpremise{  \DualLNLLogicnt{A}  \vdash_{\mathsf{L} }  \Delta_{{\mathrm{1}}}  \DualLNLLogicsym{,}  \DualLNLLogicnt{B} ; \Psi_{{\mathrm{1}}}   \quad   \DualLNLLogicnt{B}  \vdash_{\mathsf{L} }  \Delta_{{\mathrm{2}}} ; \Psi_{{\mathrm{2}}}  }%
}{
 \DualLNLLogicnt{A}  \vdash_{\mathsf{L} }  \Delta_{{\mathrm{1}}}  \DualLNLLogicsym{,}  \Delta_{{\mathrm{2}}} ; \Psi_{{\mathrm{1}}}  \DualLNLLogicsym{,}  \Psi_{{\mathrm{2}}} }{%
{\DualLNLLogicdruleNLXXcutName}{}%
}}
\newcommand{\DualLNLLogicdruleTCXXidName}[0]{\DualLNLLogicdrulename{TC\_id}}
\newcommand{\DualLNLLogicdruleTCXXid}[1]{\DualLNLLogicdrule[#1]{%
}{
 \DualLNLLogicmv{x}  :  \DualLNLLogicnt{S}  \vdash_{\mathsf{C} }  \DualLNLLogicmv{x}  \DualLNLLogicsym{:}  \DualLNLLogicnt{S} }{%
{\DualLNLLogicdruleTCXXidName}{}%
}}
\newcommand{\DualLNLLogicdruleTCXXzIName}[0]{\DualLNLLogicdrulename{TC\_zI}}
\newcommand{\DualLNLLogicdruleTCXXzI}[1]{\DualLNLLogicdrule[#1]{%
\DualLNLLogicpremise{  \DualLNLLogicmv{x}  :  \DualLNLLogicnt{S}  \vdash_{\mathsf{C} }  \DualLNLLogicnt{t}  \DualLNLLogicsym{:}  \DualLNLLogicsym{0}  \DualLNLLogicsym{,}  \Psi   \quad   \DualLNLLogicmv{x_{{\mathrm{1}}}}  :  \DualLNLLogicnt{S_{{\mathrm{1}}}}  \vdash_{\mathsf{C} }  \Psi_{{\mathrm{1}}}   \DualLNLLogicsym{,} \, ... \, \DualLNLLogicsym{,}   \DualLNLLogicmv{x_{\DualLNLLogicmv{n}}}  :  \DualLNLLogicnt{S_{\DualLNLLogicmv{n}}}  \vdash_{\mathsf{C} }  \Psi_{\DualLNLLogicmv{n}}  }%
}{
 \DualLNLLogicmv{x}  :  \DualLNLLogicnt{S}  \vdash_{\mathsf{C} }  \Psi  \DualLNLLogicsym{,}  \DualLNLLogicsym{[}   \mathsf{false}\, \DualLNLLogicnt{t}   \DualLNLLogicsym{/}  \DualLNLLogicmv{x_{{\mathrm{1}}}}  \DualLNLLogicsym{]}  \Psi_{{\mathrm{1}}}  \DualLNLLogicsym{,} \, ... \, \DualLNLLogicsym{,}  \DualLNLLogicsym{[}   \mathsf{false}\, \DualLNLLogicnt{t}   \DualLNLLogicsym{/}  \DualLNLLogicmv{x_{\DualLNLLogicmv{n}}}  \DualLNLLogicsym{]}  \Psi_{\DualLNLLogicmv{n}} }{%
{\DualLNLLogicdruleTCXXzIName}{}%
}}
\newcommand{\DualLNLLogicdruleTCXXdIOneName}[0]{\DualLNLLogicdrulename{TC\_dI1}}
\newcommand{\DualLNLLogicdruleTCXXdIOne}[1]{\DualLNLLogicdrule[#1]{%
\DualLNLLogicpremise{ \DualLNLLogicmv{x}  :  \DualLNLLogicnt{S}  \vdash_{\mathsf{C} }  \Psi  \DualLNLLogicsym{,}  \DualLNLLogicnt{t}  \DualLNLLogicsym{:}  \DualLNLLogicnt{T_{{\mathrm{1}}}} }%
}{
 \DualLNLLogicmv{x}  :  \DualLNLLogicnt{S}  \vdash_{\mathsf{C} }  \Psi  \DualLNLLogicsym{,}   \mathsf{inl}\, \DualLNLLogicnt{t}   \DualLNLLogicsym{:}   \DualLNLLogicnt{T_{{\mathrm{1}}}}  +  \DualLNLLogicnt{T_{{\mathrm{2}}}}  }{%
{\DualLNLLogicdruleTCXXdIOneName}{}%
}}
\newcommand{\DualLNLLogicdruleTCXXdITwoName}[0]{\DualLNLLogicdrulename{TC\_dI2}}
\newcommand{\DualLNLLogicdruleTCXXdITwo}[1]{\DualLNLLogicdrule[#1]{%
\DualLNLLogicpremise{ \DualLNLLogicmv{x}  :  \DualLNLLogicnt{S}  \vdash_{\mathsf{C} }  \Psi  \DualLNLLogicsym{,}  \DualLNLLogicnt{t}  \DualLNLLogicsym{:}  \DualLNLLogicnt{T_{{\mathrm{2}}}} }%
}{
 \DualLNLLogicmv{x}  :  \DualLNLLogicnt{S}  \vdash_{\mathsf{C} }  \Psi  \DualLNLLogicsym{,}   \mathsf{inr}\, \DualLNLLogicnt{t}   \DualLNLLogicsym{:}   \DualLNLLogicnt{T_{{\mathrm{1}}}}  +  \DualLNLLogicnt{T_{{\mathrm{2}}}}  }{%
{\DualLNLLogicdruleTCXXdITwoName}{}%
}}
\newcommand{\DualLNLLogicdruleTCXXdEName}[0]{\DualLNLLogicdrulename{TC\_dE}}
\newcommand{\DualLNLLogicdruleTCXXdE}[1]{\DualLNLLogicdrule[#1]{%
\DualLNLLogicpremise{      \DualLNLLogicmv{x}  :  \DualLNLLogicnt{S}  \vdash_{\mathsf{C} }  \Psi_{{\mathrm{1}}}  \DualLNLLogicsym{,}  \DualLNLLogicnt{t}  \DualLNLLogicsym{:}   \DualLNLLogicnt{T_{{\mathrm{1}}}}  +  \DualLNLLogicnt{T_{{\mathrm{2}}}}    \quad   \DualLNLLogicmv{y}  :  \DualLNLLogicnt{T_{{\mathrm{1}}}}  \vdash_{\mathsf{C} }  \Psi_{{\mathrm{2}}}     \quad   \DualLNLLogicmv{z}  :  \DualLNLLogicnt{T_{{\mathrm{2}}}}  \vdash_{\mathsf{C} }  \Psi_{{\mathrm{3}}}     \quad  \DualLNLLogicsym{\mbox{$\mid$}}  \Psi_{{\mathrm{2}}}  \DualLNLLogicsym{\mbox{$\mid$}}  \DualLNLLogicsym{=}  \DualLNLLogicsym{\mbox{$\mid$}}  \Psi_{{\mathrm{3}}}  \DualLNLLogicsym{\mbox{$\mid$}} }%
}{
 \DualLNLLogicmv{x}  :  \DualLNLLogicnt{S}  \vdash_{\mathsf{C} }  \Psi_{{\mathrm{1}}}  \DualLNLLogicsym{,}   \mathsf{case}\, \DualLNLLogicnt{t} \,\mathsf{of}\, \DualLNLLogicmv{y} . \Psi_{{\mathrm{2}}} ,  \DualLNLLogicmv{z} . \Psi_{{\mathrm{3}}}  }{%
{\DualLNLLogicdruleTCXXdEName}{}%
}}
\newcommand{\DualLNLLogicdruleTCXXsubIName}[0]{\DualLNLLogicdrulename{TC\_subI}}
\newcommand{\DualLNLLogicdruleTCXXsubI}[1]{\DualLNLLogicdrule[#1]{%
\DualLNLLogicpremise{  \DualLNLLogicmv{x}  :  \DualLNLLogicnt{S}  \vdash_{\mathsf{C} }  \Psi_{{\mathrm{1}}}  \DualLNLLogicsym{,}  \DualLNLLogicnt{t}  \DualLNLLogicsym{:}  \DualLNLLogicnt{T_{{\mathrm{1}}}}   \quad   \DualLNLLogicmv{y}  :  \DualLNLLogicnt{T_{{\mathrm{2}}}}  \vdash_{\mathsf{C} }  \Psi_{{\mathrm{2}}}  }%
}{
 \DualLNLLogicmv{x}  :  \DualLNLLogicnt{S}  \vdash_{\mathsf{C} }  \Psi_{{\mathrm{1}}}  \DualLNLLogicsym{,}   \mathsf{mkc}( \DualLNLLogicnt{t} , \DualLNLLogicmv{y} )   \DualLNLLogicsym{:}   \DualLNLLogicnt{T_{{\mathrm{1}}}}  -  \DualLNLLogicnt{T_{{\mathrm{2}}}}   \DualLNLLogicsym{,}  \DualLNLLogicsym{[}  \DualLNLLogicmv{y}  \DualLNLLogicsym{(}  \DualLNLLogicnt{t}  \DualLNLLogicsym{)}  \DualLNLLogicsym{/}  \DualLNLLogicmv{y}  \DualLNLLogicsym{]}  \Psi_{{\mathrm{2}}} }{%
{\DualLNLLogicdruleTCXXsubIName}{}%
}}
\newcommand{\DualLNLLogicdruleTCXXsubEName}[0]{\DualLNLLogicdrulename{TC\_subE}}
\newcommand{\DualLNLLogicdruleTCXXsubE}[1]{\DualLNLLogicdrule[#1]{%
\DualLNLLogicpremise{  \DualLNLLogicmv{x}  :  \DualLNLLogicnt{S}  \vdash_{\mathsf{C} }  \Psi_{{\mathrm{1}}}  \DualLNLLogicsym{,}  \DualLNLLogicnt{s}  \DualLNLLogicsym{:}   \DualLNLLogicnt{T_{{\mathrm{1}}}}  -  \DualLNLLogicnt{T_{{\mathrm{2}}}}    \quad   \DualLNLLogicmv{y}  :  \DualLNLLogicnt{T_{{\mathrm{1}}}}  \vdash_{\mathsf{C} }  \DualLNLLogicnt{t}  \DualLNLLogicsym{:}  \DualLNLLogicnt{T_{{\mathrm{2}}}}  \DualLNLLogicsym{,}  \Psi_{{\mathrm{2}}}  }%
}{
 \DualLNLLogicmv{x}  :  \DualLNLLogicnt{S}  \vdash_{\mathsf{C} }  \Psi_{{\mathrm{1}}}  \DualLNLLogicsym{,}    \mathsf{postp}\,( \DualLNLLogicmv{y}  \mapsto  \DualLNLLogicnt{t} ,  \DualLNLLogicnt{s} )    \DualLNLLogicsym{,}  \DualLNLLogicsym{[}  \DualLNLLogicmv{y}  \DualLNLLogicsym{(}  \DualLNLLogicnt{s}  \DualLNLLogicsym{)}  \DualLNLLogicsym{/}  \DualLNLLogicmv{y}  \DualLNLLogicsym{]}  \Psi_{{\mathrm{2}}} }{%
{\DualLNLLogicdruleTCXXsubEName}{}%
}}
\newcommand{\DualLNLLogicdruleTCXXHEName}[0]{\DualLNLLogicdrulename{TC\_HE}}
\newcommand{\DualLNLLogicdruleTCXXHE}[1]{\DualLNLLogicdrule[#1]{%
\DualLNLLogicpremise{   \DualLNLLogicmv{x}  :  \DualLNLLogicnt{S}  \vdash_{\mathsf{C} }  \Psi_{{\mathrm{1}}}  \DualLNLLogicsym{,}  \DualLNLLogicnt{t}  \DualLNLLogicsym{:}   \mathsf{H}\, \DualLNLLogicnt{A}    \quad   \DualLNLLogicmv{y}  :  \DualLNLLogicnt{A}  \vdash_{\mathsf{L} }   \cdot  ; \Psi_{{\mathrm{2}}}   }%
}{
 \DualLNLLogicmv{x}  :  \DualLNLLogicnt{S}  \vdash_{\mathsf{C} }  \Psi_{{\mathrm{1}}}  \DualLNLLogicsym{,}   \mathsf{let}\,\mathsf{H}\, \DualLNLLogicmv{y}  =  \DualLNLLogicnt{t} \,\mathsf{in}\, \Psi_{{\mathrm{2}}}  }{%
{\DualLNLLogicdruleTCXXHEName}{}%
}}
\newcommand{\DualLNLLogicdruleTCXXweakName}[0]{\DualLNLLogicdrulename{TC\_weak}}
\newcommand{\DualLNLLogicdruleTCXXweak}[1]{\DualLNLLogicdrule[#1]{%
\DualLNLLogicpremise{ \DualLNLLogicmv{x}  :  \DualLNLLogicnt{S}  \vdash_{\mathsf{C} }  \Psi }%
}{
 \DualLNLLogicmv{x}  :  \DualLNLLogicnt{S}  \vdash_{\mathsf{C} }  \Psi  \DualLNLLogicsym{,}   \varepsilon   \DualLNLLogicsym{:}  \DualLNLLogicnt{T} }{%
{\DualLNLLogicdruleTCXXweakName}{}%
}}
\newcommand{\DualLNLLogicdruleTCXXcontrName}[0]{\DualLNLLogicdrulename{TC\_contr}}
\newcommand{\DualLNLLogicdruleTCXXcontr}[1]{\DualLNLLogicdrule[#1]{%
\DualLNLLogicpremise{ \DualLNLLogicmv{x}  :  \DualLNLLogicnt{S}  \vdash_{\mathsf{C} }  \DualLNLLogicnt{t_{{\mathrm{1}}}}  \DualLNLLogicsym{:}  \DualLNLLogicnt{T}  \DualLNLLogicsym{,}  \DualLNLLogicnt{t_{{\mathrm{2}}}}  \DualLNLLogicsym{:}  \DualLNLLogicnt{T}  \DualLNLLogicsym{,}  \Psi }%
}{
 \DualLNLLogicmv{x}  :  \DualLNLLogicnt{S}  \vdash_{\mathsf{C} }  \DualLNLLogicsym{(}   \DualLNLLogicnt{t_{{\mathrm{1}}}}  \cdot  \DualLNLLogicnt{t_{{\mathrm{2}}}}   \DualLNLLogicsym{)}  \DualLNLLogicsym{:}  \DualLNLLogicnt{T}  \DualLNLLogicsym{,}  \Psi }{%
{\DualLNLLogicdruleTCXXcontrName}{}%
}}
\newcommand{\DualLNLLogicdruleTCXXcutName}[0]{\DualLNLLogicdrulename{TC\_cut}}
\newcommand{\DualLNLLogicdruleTCXXcut}[1]{\DualLNLLogicdrule[#1]{%
\DualLNLLogicpremise{  \DualLNLLogicmv{x}  :  \DualLNLLogicnt{S}  \vdash_{\mathsf{C} }  \Psi_{{\mathrm{1}}}  \DualLNLLogicsym{,}  \DualLNLLogicnt{t}  \DualLNLLogicsym{:}  \DualLNLLogicnt{T}   \quad   \DualLNLLogicmv{y}  :  \DualLNLLogicnt{T}  \vdash_{\mathsf{C} }  \Psi_{{\mathrm{2}}}  }%
}{
 \DualLNLLogicmv{x}  :  \DualLNLLogicnt{S}  \vdash_{\mathsf{C} }  \Psi_{{\mathrm{1}}}  \DualLNLLogicsym{,}  \DualLNLLogicsym{[}  \DualLNLLogicnt{t}  \DualLNLLogicsym{/}  \DualLNLLogicmv{y}  \DualLNLLogicsym{]}  \Psi_{{\mathrm{2}}} }{%
{\DualLNLLogicdruleTCXXcutName}{}%
}}
\newcommand{\DualLNLLogicdruleTLXXidName}[0]{\DualLNLLogicdrulename{TL\_id}}
\newcommand{\DualLNLLogicdruleTLXXid}[1]{\DualLNLLogicdrule[#1]{%
}{
 \DualLNLLogicmv{x}  :  \DualLNLLogicnt{A}  \vdash_{\mathsf{L} }  \DualLNLLogicmv{x}  \DualLNLLogicsym{:}  \DualLNLLogicnt{A} ;  \cdot  }{%
{\DualLNLLogicdruleTLXXidName}{}%
}}
\newcommand{\DualLNLLogicdruleTLXXpIName}[0]{\DualLNLLogicdrulename{TL\_pI}}
\newcommand{\DualLNLLogicdruleTLXXpI}[1]{\DualLNLLogicdrule[#1]{%
\DualLNLLogicpremise{  \DualLNLLogicmv{x}  :  \DualLNLLogicnt{A}  \vdash_{\mathsf{L} }  \Delta ; \Psi   \quad   \DualLNLLogicnt{e} : \DualLNLLogicnt{B}  \in  \Delta  }%
}{
 \DualLNLLogicmv{x}  :  \DualLNLLogicnt{A}  \vdash_{\mathsf{L} }  \Delta  \DualLNLLogicsym{,}   \mathsf{connect}_\perp\,\mathsf{to}\, \DualLNLLogicnt{e}   \DualLNLLogicsym{:}   \perp  ; \Psi }{%
{\DualLNLLogicdruleTLXXpIName}{}%
}}
\newcommand{\DualLNLLogicdruleTLXXpEName}[0]{\DualLNLLogicdrulename{TL\_pE}}
\newcommand{\DualLNLLogicdruleTLXXpE}[1]{\DualLNLLogicdrule[#1]{%
\DualLNLLogicpremise{ \DualLNLLogicmv{x}  :  \DualLNLLogicnt{A}  \vdash_{\mathsf{L} }  \DualLNLLogicnt{e}  \DualLNLLogicsym{:}   \perp   \DualLNLLogicsym{,}  \Delta ;  \cdot  }%
}{
 \DualLNLLogicmv{x}  :  \DualLNLLogicnt{A}  \vdash_{\mathsf{L} }   \mathsf{postp}_\perp\, \DualLNLLogicnt{e}   \DualLNLLogicsym{,}  \Delta ;  \cdot  }{%
{\DualLNLLogicdruleTLXXpEName}{}%
}}
\newcommand{\DualLNLLogicdruleTLXXparIName}[0]{\DualLNLLogicdrulename{TL\_parI}}
\newcommand{\DualLNLLogicdruleTLXXparI}[1]{\DualLNLLogicdrule[#1]{%
\DualLNLLogicpremise{ \DualLNLLogicmv{x}  :  \DualLNLLogicnt{A}  \vdash_{\mathsf{L} }  \Delta  \DualLNLLogicsym{,}  \DualLNLLogicnt{e_{{\mathrm{1}}}}  \DualLNLLogicsym{:}  \DualLNLLogicnt{B_{{\mathrm{1}}}}  \DualLNLLogicsym{,}  \DualLNLLogicnt{e_{{\mathrm{2}}}}  \DualLNLLogicsym{:}  \DualLNLLogicnt{B_{{\mathrm{2}}}} ; \Psi }%
}{
 \DualLNLLogicmv{x}  :  \DualLNLLogicnt{A}  \vdash_{\mathsf{L} }  \Delta  \DualLNLLogicsym{,}   \DualLNLLogicnt{e_{{\mathrm{1}}}}  \oplus  \DualLNLLogicnt{e_{{\mathrm{2}}}}   \DualLNLLogicsym{:}   \DualLNLLogicnt{B_{{\mathrm{1}}}}  \oplus  \DualLNLLogicnt{B_{{\mathrm{2}}}}  ; \Psi }{%
{\DualLNLLogicdruleTLXXparIName}{}%
}}
\newcommand{\DualLNLLogicdruleTLXXparEName}[0]{\DualLNLLogicdrulename{TL\_parE}}
\newcommand{\DualLNLLogicdruleTLXXparE}[1]{\DualLNLLogicdrule[#1]{%
\DualLNLLogicpremise{    \DualLNLLogicmv{x}  :  \DualLNLLogicnt{A}  \vdash_{\mathsf{L} }  \Delta  \DualLNLLogicsym{,}  \DualLNLLogicnt{e}  \DualLNLLogicsym{:}   \DualLNLLogicnt{B_{{\mathrm{1}}}}  \oplus  \DualLNLLogicnt{B_{{\mathrm{2}}}}  ; \Psi   \quad   \DualLNLLogicmv{y}  :  \DualLNLLogicnt{B_{{\mathrm{1}}}}  \vdash_{\mathsf{L} }  \Delta_{{\mathrm{1}}} ; \Psi_{{\mathrm{1}}}     \quad   \DualLNLLogicmv{z}  :  \DualLNLLogicnt{B_{{\mathrm{2}}}}  \vdash_{\mathsf{L} }  \Delta_{{\mathrm{2}}} ; \Psi_{{\mathrm{2}}}  }%
}{
 \DualLNLLogicmv{x}  :  \DualLNLLogicnt{A}  \vdash_{\mathsf{L} }  \Delta  \DualLNLLogicsym{,}  \DualLNLLogicsym{[}   \mathsf{casel}\, \DualLNLLogicsym{(}  \DualLNLLogicnt{e}  \DualLNLLogicsym{)}   \DualLNLLogicsym{/}  \DualLNLLogicmv{y}  \DualLNLLogicsym{]}  \Delta_{{\mathrm{1}}}  \DualLNLLogicsym{,}  \DualLNLLogicsym{[}   \mathsf{caser}\, \DualLNLLogicsym{(}  \DualLNLLogicnt{e}  \DualLNLLogicsym{)}   \DualLNLLogicsym{/}  \DualLNLLogicmv{z}  \DualLNLLogicsym{]}  \Delta_{{\mathrm{2}}} ; \Psi  \DualLNLLogicsym{,}  \DualLNLLogicsym{[}   \mathsf{casel}\, \DualLNLLogicsym{(}  \DualLNLLogicnt{e}  \DualLNLLogicsym{)}   \DualLNLLogicsym{/}  \DualLNLLogicmv{y}  \DualLNLLogicsym{]}  \Psi_{{\mathrm{1}}}  \DualLNLLogicsym{,}  \DualLNLLogicsym{[}   \mathsf{caser}\, \DualLNLLogicsym{(}  \DualLNLLogicnt{e}  \DualLNLLogicsym{)}   \DualLNLLogicsym{/}  \DualLNLLogicmv{z}  \DualLNLLogicsym{]}  \Psi_{{\mathrm{2}}} }{%
{\DualLNLLogicdruleTLXXparEName}{}%
}}
\newcommand{\DualLNLLogicdruleTLXXsubIName}[0]{\DualLNLLogicdrulename{TL\_subI}}
\newcommand{\DualLNLLogicdruleTLXXsubI}[1]{\DualLNLLogicdrule[#1]{%
\DualLNLLogicpremise{  \DualLNLLogicmv{x}  :  \DualLNLLogicnt{A}  \vdash_{\mathsf{L} }  \Delta_{{\mathrm{1}}}  \DualLNLLogicsym{,}  \DualLNLLogicnt{e}  \DualLNLLogicsym{:}  \DualLNLLogicnt{B_{{\mathrm{1}}}} ; \Psi_{{\mathrm{1}}}   \quad   \DualLNLLogicmv{y}  :  \DualLNLLogicnt{B_{{\mathrm{2}}}}  \vdash_{\mathsf{L} }  \Delta_{{\mathrm{2}}} ; \Psi_{{\mathrm{2}}}  }%
}{
 \DualLNLLogicmv{x}  :  \DualLNLLogicnt{A}  \vdash_{\mathsf{L} }   \mathsf{mkc}( \DualLNLLogicnt{e} , \DualLNLLogicmv{y} )   \DualLNLLogicsym{:}   \DualLNLLogicnt{B_{{\mathrm{1}}}}  \colimp  \DualLNLLogicnt{B_{{\mathrm{2}}}}   \DualLNLLogicsym{,}  \Delta_{{\mathrm{1}}}  \DualLNLLogicsym{,}  \DualLNLLogicsym{[}  \DualLNLLogicmv{y}  \DualLNLLogicsym{(}  \DualLNLLogicnt{e}  \DualLNLLogicsym{)}  \DualLNLLogicsym{/}  \DualLNLLogicmv{y}  \DualLNLLogicsym{]}  \Delta_{{\mathrm{2}}} ; \Psi_{{\mathrm{1}}}  \DualLNLLogicsym{,}  \DualLNLLogicsym{[}  \DualLNLLogicmv{y}  \DualLNLLogicsym{(}  \DualLNLLogicnt{e}  \DualLNLLogicsym{)}  \DualLNLLogicsym{/}  \DualLNLLogicmv{y}  \DualLNLLogicsym{]}  \Psi_{{\mathrm{2}}} }{%
{\DualLNLLogicdruleTLXXsubIName}{}%
}}
\newcommand{\DualLNLLogicdruleTLXXsubEName}[0]{\DualLNLLogicdrulename{TL\_subE}}
\newcommand{\DualLNLLogicdruleTLXXsubE}[1]{\DualLNLLogicdrule[#1]{%
\DualLNLLogicpremise{  \DualLNLLogicmv{x}  :  \DualLNLLogicnt{A}  \vdash_{\mathsf{L} }  \Delta_{{\mathrm{1}}}  \DualLNLLogicsym{,}  \DualLNLLogicnt{e_{{\mathrm{1}}}}  \DualLNLLogicsym{:}   \DualLNLLogicnt{B_{{\mathrm{1}}}}  \colimp  \DualLNLLogicnt{B_{{\mathrm{2}}}}  ; \Psi_{{\mathrm{1}}}   \quad   \DualLNLLogicmv{y}  :  \DualLNLLogicnt{B_{{\mathrm{1}}}}  \vdash_{\mathsf{L} }  \DualLNLLogicnt{e_{{\mathrm{2}}}}  \DualLNLLogicsym{:}  \DualLNLLogicnt{B_{{\mathrm{1}}}}  \DualLNLLogicsym{,}  \Delta_{{\mathrm{2}}} ; \Psi_{{\mathrm{2}}}  }%
}{
 \DualLNLLogicmv{x}  :  \DualLNLLogicnt{A}  \vdash_{\mathsf{L} }  \Delta_{{\mathrm{1}}}  \DualLNLLogicsym{,}   \mathsf{postp}\,( \DualLNLLogicmv{y}  \mapsto  \DualLNLLogicnt{e_{{\mathrm{2}}}} ,  \DualLNLLogicnt{e_{{\mathrm{1}}}} )   \DualLNLLogicsym{,}  \Delta_{{\mathrm{2}}} ; \Psi_{{\mathrm{1}}}  \DualLNLLogicsym{,}  \Psi_{{\mathrm{2}}} }{%
{\DualLNLLogicdruleTLXXsubEName}{}%
}}
\newcommand{\DualLNLLogicdruleTLXXJIName}[0]{\DualLNLLogicdrulename{TL\_JI}}
\newcommand{\DualLNLLogicdruleTLXXJI}[1]{\DualLNLLogicdrule[#1]{%
\DualLNLLogicpremise{ \DualLNLLogicmv{x}  :  \DualLNLLogicnt{A}  \vdash_{\mathsf{L} }  \Delta ; \DualLNLLogicnt{t}  \DualLNLLogicsym{:}  \DualLNLLogicnt{T}  \DualLNLLogicsym{,}  \Psi }%
}{
 \DualLNLLogicmv{x}  :  \DualLNLLogicnt{A}  \vdash_{\mathsf{L} }  \Delta  \DualLNLLogicsym{,}   \mathsf{J}\, \DualLNLLogicnt{t}   \DualLNLLogicsym{:}   \mathsf{J}\, \DualLNLLogicnt{T}  ; \Psi }{%
{\DualLNLLogicdruleTLXXJIName}{}%
}}
\newcommand{\DualLNLLogicdruleTLXXJEName}[0]{\DualLNLLogicdrulename{TL\_JE}}
\newcommand{\DualLNLLogicdruleTLXXJE}[1]{\DualLNLLogicdrule[#1]{%
\DualLNLLogicpremise{  \DualLNLLogicmv{x}  :  \DualLNLLogicnt{A}  \vdash_{\mathsf{L} }  \Delta  \DualLNLLogicsym{,}  \DualLNLLogicnt{e}  \DualLNLLogicsym{:}   \mathsf{J}\, \DualLNLLogicnt{T}  ; \Psi_{{\mathrm{1}}}   \quad   \DualLNLLogicmv{y}  :  \DualLNLLogicnt{T}  \vdash_{\mathsf{C} }  \Psi_{{\mathrm{2}}}  }%
}{
 \DualLNLLogicmv{x}  :  \DualLNLLogicnt{A}  \vdash_{\mathsf{L} }  \Delta ; \Psi_{{\mathrm{1}}}  \DualLNLLogicsym{,}   \mathsf{let}\,\mathsf{J}\, \DualLNLLogicmv{y}  =  \DualLNLLogicnt{e} \,\mathsf{in}\, \Psi_{{\mathrm{2}}}  }{%
{\DualLNLLogicdruleTLXXJEName}{}%
}}
\newcommand{\DualLNLLogicdruleTLXXHIName}[0]{\DualLNLLogicdrulename{TL\_HI}}
\newcommand{\DualLNLLogicdruleTLXXHI}[1]{\DualLNLLogicdrule[#1]{%
\DualLNLLogicpremise{ \DualLNLLogicmv{x}  :  \DualLNLLogicnt{A}  \vdash_{\mathsf{L} }  \Delta  \DualLNLLogicsym{,}  \DualLNLLogicnt{e}  \DualLNLLogicsym{:}  \DualLNLLogicnt{B} ; \Psi }%
}{
 \DualLNLLogicmv{x}  :  \DualLNLLogicnt{A}  \vdash_{\mathsf{L} }  \Delta ;  \mathsf{H}\, \DualLNLLogicnt{e}   \DualLNLLogicsym{:}   \mathsf{H}\, \DualLNLLogicnt{B}   \DualLNLLogicsym{,}  \Psi }{%
{\DualLNLLogicdruleTLXXHIName}{}%
}}
\newcommand{\DualLNLLogicdruleTLXXHEName}[0]{\DualLNLLogicdrulename{TL\_HE}}
\newcommand{\DualLNLLogicdruleTLXXHE}[1]{\DualLNLLogicdrule[#1]{%
\DualLNLLogicpremise{  \DualLNLLogicmv{x}  :  \DualLNLLogicnt{A}  \vdash_{\mathsf{L} }  \Delta ; \Psi_{{\mathrm{1}}}  \DualLNLLogicsym{,}  \DualLNLLogicnt{t}  \DualLNLLogicsym{:}   \mathsf{H}\, \DualLNLLogicnt{A}    \quad   \DualLNLLogicmv{y}  :  \DualLNLLogicnt{A}  \vdash_{\mathsf{L} }   \cdot  ; \Psi_{{\mathrm{2}}}  }%
}{
 \DualLNLLogicmv{x}  :  \DualLNLLogicnt{A}  \vdash_{\mathsf{L} }  \Delta ; \Psi_{{\mathrm{1}}}  \DualLNLLogicsym{,}   \mathsf{let}\,\mathsf{H}\, \DualLNLLogicmv{y}  =  \DualLNLLogicnt{t} \,\mathsf{in}\, \Psi_{{\mathrm{2}}}  }{%
{\DualLNLLogicdruleTLXXHEName}{}%
}}
\newcommand{\DualLNLLogicdruleTLXXweakName}[0]{\DualLNLLogicdrulename{TL\_weak}}
\newcommand{\DualLNLLogicdruleTLXXweak}[1]{\DualLNLLogicdrule[#1]{%
\DualLNLLogicpremise{ \DualLNLLogicmv{x}  :  \DualLNLLogicnt{A}  \vdash_{\mathsf{L} }  \Delta ; \Psi }%
}{
 \DualLNLLogicmv{x}  :  \DualLNLLogicnt{A}  \vdash_{\mathsf{L} }  \Delta ; \Psi  \DualLNLLogicsym{,}   \varepsilon   \DualLNLLogicsym{:}  \DualLNLLogicnt{T} }{%
{\DualLNLLogicdruleTLXXweakName}{}%
}}
\newcommand{\DualLNLLogicdruleTLXXcontrName}[0]{\DualLNLLogicdrulename{TL\_contr}}
\newcommand{\DualLNLLogicdruleTLXXcontr}[1]{\DualLNLLogicdrule[#1]{%
\DualLNLLogicpremise{ \DualLNLLogicmv{x}  :  \DualLNLLogicnt{A}  \vdash_{\mathsf{L} }  \Delta ; \DualLNLLogicnt{t_{{\mathrm{1}}}}  \DualLNLLogicsym{:}  \DualLNLLogicnt{T}  \DualLNLLogicsym{,}  \DualLNLLogicnt{t_{{\mathrm{2}}}}  \DualLNLLogicsym{:}  \DualLNLLogicnt{T}  \DualLNLLogicsym{,}  \Psi }%
}{
 \DualLNLLogicmv{x}  :  \DualLNLLogicnt{A}  \vdash_{\mathsf{L} }  \Delta ;  \DualLNLLogicnt{t_{{\mathrm{1}}}}  \cdot  \DualLNLLogicnt{t_{{\mathrm{2}}}}   \DualLNLLogicsym{:}  \DualLNLLogicnt{T}  \DualLNLLogicsym{,}  \Psi }{%
{\DualLNLLogicdruleTLXXcontrName}{}%
}}
\newcommand{\DualLNLLogicdruleTLXXCcutName}[0]{\DualLNLLogicdrulename{TL\_Ccut}}
\newcommand{\DualLNLLogicdruleTLXXCcut}[1]{\DualLNLLogicdrule[#1]{%
\DualLNLLogicpremise{  \DualLNLLogicmv{x}  :  \DualLNLLogicnt{A}  \vdash_{\mathsf{L} }  \Delta ; \Psi_{{\mathrm{1}}}  \DualLNLLogicsym{,}  \DualLNLLogicnt{t}  \DualLNLLogicsym{:}  \DualLNLLogicnt{T}   \quad   \DualLNLLogicmv{y}  :  \DualLNLLogicnt{T}  \vdash_{\mathsf{C} }  \Psi_{{\mathrm{2}}}  }%
}{
 \DualLNLLogicmv{x}  :  \DualLNLLogicnt{A}  \vdash_{\mathsf{L} }  \Delta ; \Psi_{{\mathrm{1}}}  \DualLNLLogicsym{,}  \DualLNLLogicsym{[}  \DualLNLLogicnt{t}  \DualLNLLogicsym{/}  \DualLNLLogicmv{y}  \DualLNLLogicsym{]}  \Psi_{{\mathrm{2}}} }{%
{\DualLNLLogicdruleTLXXCcutName}{}%
}}
\newcommand{\DualLNLLogicdruleTLXXcutName}[0]{\DualLNLLogicdrulename{TL\_cut}}
\newcommand{\DualLNLLogicdruleTLXXcut}[1]{\DualLNLLogicdrule[#1]{%
\DualLNLLogicpremise{  \DualLNLLogicmv{x}  :  \DualLNLLogicnt{A}  \vdash_{\mathsf{L} }  \Delta_{{\mathrm{1}}}  \DualLNLLogicsym{,}  \DualLNLLogicnt{e}  \DualLNLLogicsym{:}  \DualLNLLogicnt{B} ; \Psi_{{\mathrm{1}}}   \quad   \DualLNLLogicmv{y}  :  \DualLNLLogicnt{B}  \vdash_{\mathsf{L} }  \Delta_{{\mathrm{2}}} ; \Psi_{{\mathrm{2}}}  }%
}{
 \DualLNLLogicmv{x}  :  \DualLNLLogicnt{A}  \vdash_{\mathsf{L} }  \Delta_{{\mathrm{1}}}  \DualLNLLogicsym{,}  \DualLNLLogicsym{[}  \DualLNLLogicnt{e}  \DualLNLLogicsym{/}  \DualLNLLogicmv{y}  \DualLNLLogicsym{]}  \Delta_{{\mathrm{2}}} ; \Psi_{{\mathrm{1}}}  \DualLNLLogicsym{,}  \DualLNLLogicsym{[}  \DualLNLLogicnt{e}  \DualLNLLogicsym{/}  \DualLNLLogicmv{y}  \DualLNLLogicsym{]}  \Psi_{{\mathrm{2}}} }{%
{\DualLNLLogicdruleTLXXcutName}{}%
}}
\renewcommand{\DualLNLLogicdrule}[4][]{{\displaystyle\frac{\begin{array}{l}#2\end{array}}{#3}\,\DualLNLLogicdrulename{#4}}}
\renewcommand{\DualLNLLogicdrulename}[1]{#1}
\renewcommand{\DualLNLLogicdruleCXXidName}{\text{C\_}\text{id}}
\renewcommand{\DualLNLLogicdruleCXXwkName}{\text{C\_}\text{weak}}
\renewcommand{\DualLNLLogicdruleCXXcrName}{\text{C\_}\text{contr}}
\renewcommand{\DualLNLLogicdruleCXXexName}{\text{C\_}\text{ex}}
\renewcommand{\DualLNLLogicdruleCXXfLName}{\text{C\_}0}
\renewcommand{\DualLNLLogicdruleCXXdLName}{\text{C\_}+_L}
\renewcommand{\DualLNLLogicdruleCXXdROneName}{\text{C\_}+_{R_1}}
\renewcommand{\DualLNLLogicdruleCXXdRTwoName}{\text{C\_}+_{R_2}}
\renewcommand{\DualLNLLogicdruleCXXsLName}{\text{C\_}-_L}
\renewcommand{\DualLNLLogicdruleCXXsRName}{\text{C\_}-_R}
\renewcommand{\DualLNLLogicdruleCXXcutName}{\text{C\_}\text{cut}}
\renewcommand{\DualLNLLogicdruleCXXhLName}{\mathsf{H}_L}
\renewcommand{\DualLNLLogicdruleCXXmcutName}{\text{C\_cut}_n}
\renewcommand{\DualLNLLogicdruleLXXidName}{\text{LL\_id}}
\renewcommand{\DualLNLLogicdruleLXXwkName}{\text{LC\_weak}}
\renewcommand{\DualLNLLogicdruleLXXctrName}{\text{LC\_contr}}
\renewcommand{\DualLNLLogicdruleLXXexName}{\text{LL\_ex}}
\renewcommand{\DualLNLLogicdruleLXXCexName}{\text{LC\_ex}}
\renewcommand{\DualLNLLogicdruleLXXcutName}{\text{LL\_cut}}
\renewcommand{\DualLNLLogicdruleLXXCcutName}{\text{LC\_cut}}
\renewcommand{\DualLNLLogicdruleLXXflLName}{\text{LL\_}\hspace{-3px}\perp_L}
\renewcommand{\DualLNLLogicdruleLXXflRName}{\text{LL\_}\hspace{-3px}\perp_R}
\renewcommand{\DualLNLLogicdruleLXXdROneName}{\text{LC\_}+_{R_1}}
\renewcommand{\DualLNLLogicdruleLXXdRTwoName}{\text{LC\_}+_{R_2}}
\renewcommand{\DualLNLLogicdruleLXXpLName}{\text{LL\_}\oplus_L}
\renewcommand{\DualLNLLogicdruleLXXpRName}{\text{LL\_}\oplus_R}
\renewcommand{\DualLNLLogicdruleLXXsLName}{\text{LL\_}\hspace{-3px}\colimp_L}
\renewcommand{\DualLNLLogicdruleLXXsRName}{\text{LL\_}\hspace{-3px}\colimp_R}
\renewcommand{\DualLNLLogicdruleLXXCsRName}{\text{LL\_}-_R}
\renewcommand{\DualLNLLogicdruleLXXjLName}{\func{J}_L}
\renewcommand{\DualLNLLogicdruleLXXjRName}{\func{J}_R}
\renewcommand{\DualLNLLogicdruleLXXhRName}{\func{H}_R}
\renewcommand{\DualLNLLogicdruleLXXCmcutName}{\text{LC\_cut}_n}
\renewcommand{\DualLNLLogicdruleNCXXidName}{\text{NC\_}\text{id}}
\renewcommand{\DualLNLLogicdruleNCXXzEName}{\text{NC\_}0_E}
\renewcommand{\DualLNLLogicdruleNCXXdIOneName}{\text{NC\_}+_{I_1}}
\renewcommand{\DualLNLLogicdruleNCXXdITwoName}{\text{NC\_}+_{I_2}}
\renewcommand{\DualLNLLogicdruleNCXXdEName}{\text{NC\_}+_E}
\renewcommand{\DualLNLLogicdruleNCXXsubIName}{\text{NC\_}-_I}
\renewcommand{\DualLNLLogicdruleNCXXsubEName}{\text{NC\_}-_E}
\renewcommand{\DualLNLLogicdruleNCXXHEName}{\text{NC\_}\func{H}_E}
\renewcommand{\DualLNLLogicdruleNCXXweakName}{\text{NC\_}\text{weak}}
\renewcommand{\DualLNLLogicdruleNCXXcontrName}{\text{NC\_}\text{contr}}
\renewcommand{\DualLNLLogicdruleNCXXcutName}{\text{NC\_}\text{cut}}
\renewcommand{\DualLNLLogicdruleNLXXidName}{\text{NLL\_}\text{id}}
\renewcommand{\DualLNLLogicdruleNLXXpIName}{\text{NLL\_}\hspace{-3px}\perp_I}
\renewcommand{\DualLNLLogicdruleNLXXpEName}{\text{NLL\_}\hspace{-3px}\perp_E}
\renewcommand{\DualLNLLogicdruleNLXXparIName}{\text{NLL\_}\oplus_I}
\renewcommand{\DualLNLLogicdruleNLXXparEName}{\text{NLL\_}\oplus_E}
\renewcommand{\DualLNLLogicdruleNLXXsubIName}{\text{NLL\_}\hspace{-3px}\colimp_I}
\renewcommand{\DualLNLLogicdruleNLXXsubEName}{\text{NLL\_}\hspace{-3px}\colimp_E}
\renewcommand{\DualLNLLogicdruleNLXXJIName}{\text{NLL\_}\func{J}_I}
\renewcommand{\DualLNLLogicdruleNLXXJEName}{\text{NLL\_}\func{J}_E}
\renewcommand{\DualLNLLogicdruleNLXXHIName}{\text{NLL\_}\func{H}_I}
\renewcommand{\DualLNLLogicdruleNLXXHEName}{\text{NLL\_}\func{H}_E}
\renewcommand{\DualLNLLogicdruleNLXXweakName}{\text{NLC\_}\text{weak}}
\renewcommand{\DualLNLLogicdruleNLXXcontrName}{\text{NLC\_}\text{contr}}
\renewcommand{\DualLNLLogicdruleNLXXCcutName}{\text{NLC\_}\text{cut}}
\renewcommand{\DualLNLLogicdruleNLXXcutName}{\text{NLL\_}\text{cut}}
\renewcommand{\DualLNLLogicdruleTCXXidName}{\text{TC\_}\text{id}}
\renewcommand{\DualLNLLogicdruleTCXXweakName}{\text{TC\_}\text{weak}}
\renewcommand{\DualLNLLogicdruleTLXXweakName}{\text{TL\_}\text{weak}}
\renewcommand{\DualLNLLogicdruleTCXXzIName}{\text{TC\_}0_E}
\renewcommand{\DualLNLLogicdruleTCXXdIOneName}{\text{TC\_}+_{I_1}}
\renewcommand{\DualLNLLogicdruleTCXXdITwoName}{\text{TC\_}+_{I_2}}
\renewcommand{\DualLNLLogicdruleTCXXdEName}{\text{TC\_}+_E}
\renewcommand{\DualLNLLogicdruleTCXXsubIName}{\text{TC\_}-_I}
\renewcommand{\DualLNLLogicdruleTCXXsubEName}{\text{TC\_}-_E}
\renewcommand{\DualLNLLogicdruleTCXXHEName}{\text{TC\_}\func{H}_E}
\renewcommand{\DualLNLLogicdruleTCXXcontrName}{\text{TC\_}\text{contr}}
\renewcommand{\DualLNLLogicdruleTCXXcutName}{\text{TC\_}\text{cut}}
\renewcommand{\DualLNLLogicdruleTLXXidName}{\text{TLL\_}\text{id}}
\renewcommand{\DualLNLLogicdruleTLXXpIName}{\text{TLL\_}\hspace{-3px}\perp_I}
\renewcommand{\DualLNLLogicdruleTLXXpEName}{\text{TLL\_}\hspace{-3px}\perp_E}
\renewcommand{\DualLNLLogicdruleTLXXparIName}{\text{TLL\_}\oplus_I}
\renewcommand{\DualLNLLogicdruleTLXXparEName}{\text{TLL\_}\oplus_E}
\renewcommand{\DualLNLLogicdruleTLXXsubIName}{\text{TLL\_}\hspace{-3px}\colimp_I}
\renewcommand{\DualLNLLogicdruleTLXXsubEName}{\text{TLL\_}\hspace{-3px}\colimp_E}
\renewcommand{\DualLNLLogicdruleTLXXJIName}{\text{TLL\_}\func{J}_I}
\renewcommand{\DualLNLLogicdruleTLXXJEName}{\text{TLL\_}\func{J}_E}
\renewcommand{\DualLNLLogicdruleTLXXHIName}{\text{TLL\_}\func{H}_I}
\renewcommand{\DualLNLLogicdruleTLXXHEName}{\text{TLL\_}\func{H}_E}
\renewcommand{\DualLNLLogicdruleTLXXcontrName}{\text{TLC\_}\text{contr}}
\renewcommand{\DualLNLLogicdruleTLXXCcutName}{\text{NLC\_}\text{cut}}
\renewcommand{\DualLNLLogicdruleTLXXcutName}{\text{TLL\_}\text{cut}}
\newcommand{\DLNLP}{\text{DLNL}^+}
\begin{document}

\title{A Cointuitionistic Adjoint Logic}
\author{Harley Eades III}
\email{heades@augusta.edu}
\address{Computer Science, Augusta University, Augusta, GA}

\author{Gianluigi Bellin}
\email{gianluigi.bellin@univr.it}
\address{Dipartimento di Informatica, Universit\`{a} di Verona, Strada Le Grazie, 37134 Verona, Italy}

\maketitle 

\begin{abstract}
Bi-intuitionistic logic (BINT) is a conservative extension of
intuitionistic logic to include the duals of each logical
connective. One leading question with respect to BINT is, what does
BINT look like across the three arcs -- logic, typed
$\lambda$-calculi, and category theory -- of the Curry-Howard-Lambek
correspondence?  Categorically, BINT can be seen as a mixing of two
worlds: the first being intuitionistic logic (IL), which is modeled by
a cartesian closed category, and the second being the dual to
intuitionistic logic called cointuitionistic logic (coIL), which is
modeled by a cocartesian coclosed category.  Crolard
\cite{Crolard:2001} showed that combining these two categories into
the same category results in it degenerating to a poset.  However,
this degeneration does not occur when both logics are linear.  We
propose that IL and coIL need to be separated, and then mixed in a
controlled way using the modalities from linear logic.  This
separation can be ultimately achieved by an adjoint formalization of
bi-intuitionistic logic.  This formalization consists of three worlds
instead of two: the first is intuitionistic logic, the second is
linear bi-intuitionistic (Bi-ILL), and the third is cointuitionistic
logic.  They are then related via two adjunctions.  The adjunction
between IL and ILL is known as a Linear/Non-linear model (LNL model)
of ILL, and is due to Benton \cite{Benton:1994}.  However, the dual to
LNL models which would amount to the adjunction between coILL and coIL
has yet to appear in the literature.  In this paper we fill this gap
by studying the dual to LNL models which we call dual LNL models.  We
conduct a similar analysis to that of Benton for dual LNL models by
showing that dual LNL models correspond to dual linear categories, the
dual to Bierman's \cite{Bierman:1994} linear categories proposed by
Bellin~\cite{Bellin:2012}.  Following this we give the definition of
bi-LNL models by combining our dual LNL models with Benton's LNL
models to obtain a categorical model of bi-intuitionistic logic, but
we leave its analysis and corresponding logic to a future paper.
Finally, we give a corresponding sequent calculus, natural deduction,
and term assignment for dual LNL models.
\end{abstract}

\section{Introduction}
\label{sec:introduction}
Bi-intuitionistic logic (BINT) is a conservative extension of
intuitionistic logic to include the duals of each logical connective.
That is, BINT contains the usual intuitionistic logical connectives
such as true, conjunction, and implication, but also their duals
false, disjunction, and coimplication.  One leading question with
respect to BINT is, what does BINT look like across the three arcs --
logic, typed $\lambda$-calculi, and category theory -- of the
Curry-Howard-Lambek correspondence?  A non-trivial (does not
degenerate to a poset) categorical model of BINT is currently an open
problem.  This paper directly contributes to the solution of this open
problem by giving a new categorical model based on adjunctions for
cointuitionistic logic, and then proposing a new categorical model for
BINT.

BINT can be seen as a mixing of two worlds: the first being
intuitionistic logic (IL), which is modeled categorically by a
cartesian closed category (CCC), and the second being the dual to
intuitionistic logic called cointuitionistic logic (coIL), which is
modeled by a cocartesian coclosed category (coCCC).  Crolard
\cite{Crolard:2001} showed that combining these two categories into
the same category results in it degenerating to a poset, i.e.
there is at most one morphism between any two objects; we review this
result in
Section~\ref{subsec:cartesian_closed_and_cocartesian_coclosed_categories}.
However, this degeneration does not occur when both logics are linear.

Notice that atoms are not dualized, at least in the main stream
tradition of BINT started by C. Rauszer \cite{Rauszer:1974,
  Rauszer:1980}. For this reason T. Crolard \cite{Crolard:2001}
p. 160, describes the relation between IL and coIL within BINT as
``pseudo duality''. A duality on atoms could be added and this has
been attempted with linguistic motivations \cite{Bellin:2014} (see the
section on Related Work).  This avoids the collapse but yields a
different framework.  Here we are concerned mainly with the main
stream tradition.

We propose that IL and coIL need to be separated, and then mixed in a
controlled way using the modalities from linear logic.  This
separation can be ultimately achieved by an adjoint formalization of
bi-intuitionistic logic.  This formalization consists of three worlds
instead of two: the first is intuitionistic logic, the second is
linear bi-intuitionistic (Bi-ILL), and the third is cointuitionistic
logic.  They are then related via two adjunctions as depicted by the
following diagram:
\begin{center}
  \begin{tikzpicture} 
    \node (img) {\includegraphics[scale=0.4]{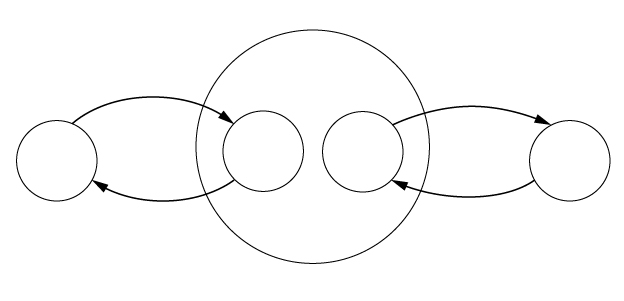}};
    \node (dv) at (-2.3, 0.0) {\huge $\dashv$};
    \node (IL) at (-3.58, -0.1) {IL};
    \node (vd) at (2.3, 0.0) {\huge $\dashv$};
    \node (coIL) at (3.65, -0.1) {coIL};

    \node (ILL) at (-0.67, 0.0) {ILL};
    \node (coILL) at (0.71, 0.0) {coILL};
    \node (BiILL) at (0, 1.0) {Bi-ILL};
  \end{tikzpicture}    
\end{center}
The adjunction between IL and ILL is known as a Linear/Non-linear model
(LNL model) of ILL, and is due to Benton \cite{Benton:1994}.  However,
the dual to LNL models which would amount to the adjunction between coILL
and coIL has yet to appear in the literature.

Suppose $(\cat{I}, 1, \times, \to)$ is a cartesian closed category,
and $(\cat{L}, \top, \otimes, \limp)$ is a symmetric monoidal closed
category.  Then relate these two categories with a symmetric monoidal
adjunction $\cat{I} : \func{F} \dashv \func{G} : \cat{L}$
(Definition~\ref{def:SMCADJ}), where $\func{F}$ and $\func{G}$ are
symmetric monoidal functors.  The later point implies that there are
natural transformations $\m{X,Y} : \func{F}X \otimes \func{F}Y \mto
\func{F}(X \times Y)$ and $\n{A,B} : \func{G}A \times \func{G}B \mto
\func{G}(A \otimes B)$, and maps $\m\top : \top \mto \func{F}1$ and
$\n1 : 1 \mto \func{G}\top$ subject to several coherence conditions;
see Definition~\ref{def:SMCFUN}.  Furthermore, the functor $\func{F}$
is strong which means that $\m{X,Y}$ and $\m{\top}$ are isomorphisms.
This setup turns out to be one of the most beautiful models of
intuitionistic linear logic called a LNL model due to Benton
\cite{Benton:1994}.  In fact, the linear modality of-course can be
defined by $!A = \func{F}(\func{G}(A))$ which defines a symmetric
monoidal comonad using the adjunction; see Section~2.2 of
\cite{Benton:1994}.  This model is much simpler than other known
models, and resulted in a logic called LNL logic which supports mixing
intuitionistic logic with linear logic.  The main contribution of this
paper is the definition and study of the dual to Benton's LNL models
as models of cointuitionistic logic.

Taking the dual of the previous model results in what we call dual LNL
models. They consist of a cocartesian coclosed category, $(\cat{C}, 0,
+, -)$ where $- : \cat{C} \times \cat{C} \mto \cat{C}$ is left adjoint
to the coproduct, a symmetric monoidal coclosed category
(Definition~\ref{def:SMCCC}), $(\cat{L}', \perp, \oplus, \colimp)$,
where $\colimp : \cat{L}' \times \cat{L}' \mto \cat{L}'$ is left
adjoint to cotensor (usually called \emph{par}), and a symmetric
comonoidal adjunction (Definition~\ref{def:coSMCADJ}) $\cat{L'} :
\func{H} \dashv \func{J} : \cat{C}$, where $\func{H}$ and $\func{J}$
are symmetric comonoidal functors. Dual to the above, this implies
that there are natural transformations $\m{X,Y} : \func{J}(X + Y) \mto
\func{J}X \oplus \func{J}Y$ and $\n{A,B} : \func{H}(A \oplus B) \mto
\func{H}A + \func{H}B$, and maps $\m0 : \func{J}0 \mto \perp$ and
$\n\perp : \func{H}\perp \mto 0$ subject to several coherence conditions;
see Definition~\ref{def:coSMCFUN}.  In fact, one can define Girard's
exponential why-not by $\wn A = \func{JH}A$, and hence, is the monad
induced by the adjunction.

Bellin \cite{Bellin:2012} was the first to propose the dual to
Bierman's \cite{Bierman:1994} linear categories which he names dual
linear categories as a model of cointuitionistic linear logic.  We
conduct a similar analysis to that of Benton for dual LNL models by
showing that dual LNL models are dual linear categories
(Section~\ref{subsec:dual_lnl_model_implies_dual_category}), and that
from a dual linear category we may obtain a dual LNL model
(Section~\ref{subsec:dual_category_implies_dual_lnl_model}).
Following this we give the definition of bi-LNL models by combining
our dual LNL models with Benton's LNL models to obtain a categorical
model of bi-intuitionistic logic
(Section~\ref{subsec:a_mixed_bi-linear_non-linear_model}), but we
leave its analysis and corresponding logic to a future paper.

Benton~\cite{Benton:1994} showed that, syntactically, LNL models have
a corresponding logic by first defining intuitionistic logic, whose
sequent is denoted, $\Theta \vdash_{\cat{C}} X$, and then
intuitionistic linear logic, $\Theta;\Gamma \vdash_{\cat{L}} A$, but
the key insight was that $\Theta$ contains non-linear assumptions
while $\Gamma$ contains linear assumptions, but one should view their
separation as merely cosmetic; all assumptions can consistently be
mixed within a single context.  The two logics are then connected by
syntactic versions of the functors $\func{F}$ and $\func{G}$ which
allow formulas to move between both fragments.

Following Benton's lead the design of dual LNL logic is similar.  We
have a non-linear cointuitionistic fragment, $T \vdash_{\cat{C}}
\Psi$, and a linear cointuitionistic fragment, $A \vdash_{\cat{C}}
\Delta;\Psi$, where $\Delta$ contains linear conclusions and $\Psi$
contains non-linear conclusions, but again the separation of contexts
is only cosmetic. The non-linear fragment has the following structural rules:
\begin{mathpar}
  \DualLNLLogicdruleCXXwk{}  
  \and
  \DualLNLLogicdruleCXXcr{} 
\end{mathpar}
Then we connect these two fragments together using the following rules
for the functors $\func{H}$ and $\func{J}$:
\begin{mathpar}
  \DualLNLLogicdruleCXXhL{}
  \and
  \DualLNLLogicdruleLXXhR{}
  \and
  \DualLNLLogicdruleLXXjL{}
  \and
  \DualLNLLogicdruleLXXjR{}
\end{mathpar}
These allow for linear and non-linear formulas to move from one
fragment to the other.  We will give a sequent calculus and natural
deduction formalization (Section~\ref{sec:sequent_calculus} and
Section~\ref{sec:sequent-style_natural_deduction}) as well as a term
assignment (Section~\ref{sec:term_assignment}).  The latter is
particularly interesting, because of the fact that cointuitionistic
logic has multiple conclusions, but only a single hypothesis.



\section{The Adjoint Model}
\label{sec:adjoint_model}
In this section we define dual LNL models (Definition~\ref{def:dual
  LNL-model}) and then relate them to Bellin's dual linear categories
(Definition~\ref{def:dual-linear-cat}), but first we introduce the
basic categorical machinery needed for the later sections and
summarize Crolard's result showing that the combination of cartesian
closed categories with cocartesian coclosed categories is degenerate.
Following these we conclude this section by introducing a categorical
model for full BINT called a mixed bilinear/non-linear model that
combines LNL models with dual LNL models
(Definition~\ref{subsec:a_mixed_bi-linear_non-linear_model}).

\subsection{Symmetric (co)Monoidal Categories}
\label{subsec:symmetric_monoidal_categories}
We now introduce the necessary definitions related to symmetric
monoidal categories that our model will depend on.  Most of these
definitions are equivalent to the ones given by Benton
\cite{Benton:1994}, but we give a lesser known definition of symmetric
comonoidal functors due to Bellin \cite{Bellin:2012}.  In this
section we also introduce distributive categories, the notion of
coclosure, and finally, the definition of bilinear categories.  The
reader may wish to simply skim this section, but refer back to it when
they encounter a definition or result they do not know.

\begin{definition}
  \label{def:monoidal-category}
  A \textbf{symmetric monoidal category (SMC)} is a category, $\cat{M}$,
  with the following data:
  \begin{itemize}
  \item An object $\top$ of $\cat{M}$,
  \item A bi-functor $\otimes : \cat{M} \times \cat{M} \mto \cat{M}$,
  \item The following natural isomorphisms:
    \[
    \begin{array}{lll}
      \lambda_A : \top \otimes A \mto A\\
      \rho_A : A \otimes \top \mto A\\      
      \alpha_{A,B,C} : (A \otimes B) \otimes C \mto A \otimes (B \otimes C)\\
    \end{array}
    \]
  \item A symmetry natural transformation:
    \[
    \beta_{A,B} : A \otimes B \mto B \otimes A
    \]
  \item Subject to the following coherence diagrams:
    \begin{mathpar}
      \bfig
      \vSquares|ammmmma|/->`->```->``<-/[
        ((A \otimes B) \otimes C) \otimes D`
        (A \otimes (B \otimes C)) \otimes D`
        (A \otimes B) \otimes (C \otimes D)``
        A \otimes (B \otimes (C \otimes D))`
        A \otimes ((B \otimes C) \otimes D);
        \alpha_{A,B,C} \otimes \id_D`
        \alpha_{A \otimes B,C,D}```
        \alpha_{A,B,C \otimes D}``
        \id_A \otimes \alpha_{B,C,D}]      
      
      \morphism(1206,1000)|m|<0,-1000>[
        (A \otimes (B \otimes C)) \otimes D`
        A \otimes ((B \otimes C) \otimes D);
        \alpha_{A,B \otimes C,D}]
      \efig
      \and
      \bfig
      \hSquares|aammmaa|/->`->`->``->`->`->/[
        (A \otimes B) \otimes C`
        A \otimes (B \otimes C)`
        (B \otimes C) \otimes A`
        (B \otimes A) \otimes C`
        B \otimes (A \otimes C)`
        B \otimes (C \otimes A);
        \alpha_{A,B,C}`
        \beta_{A,B \otimes C}`
        \beta_{A,B} \otimes \id_C``
        \alpha_{B,C,A}`
        \alpha_{B,A,C}`
        \id_B \otimes \beta_{A,C}]
      \efig      
    \end{mathpar}
    \begin{mathpar}
      \bfig
      \Vtriangle[
        (A \otimes \top) \otimes B`
        A \otimes (\top \otimes B)`
        A \otimes B;
        \alpha_{A,\top,B}`
        \rho_{A}`
        \lambda_{B}]
      \efig
      \and
      \bfig
      \btriangle[
        A \otimes B`
        B \otimes A`
        A \otimes B;
        \beta_{A,B}`
        \id_{A \otimes B}`
        \beta_{B,A}]
      \efig
      \and
      \bfig
      \Vtriangle[
        \top \otimes A`
        A \otimes \top`
        A;
        \beta_{\top,A}`
        \lambda_A`
        \rho_A]
      \efig
    \end{mathpar}    
  \end{itemize}
\end{definition}

Categorical modeling implication requires that the model be closed;
which can be seen as an internalization of the notion of a morphism.
\begin{definition}
  \label{def:SMCC}
  A \textbf{symmetric monoidal closed category (SMCC)} is a symmetric
  monoidal category, $(\cat{M},\top,\otimes)$, such that, for any object
  $B$ of $\cat{M}$, the functor $- \otimes B : \cat{M} \mto \cat{M}$
  has a specified right adjoint.  Hence, for any objects $A$ and $C$
  of $\cat{M}$ there is an object $B \limp C$ of $\cat{M}$ and a
  natural bijection:
  \[
  \Hom{\cat{M}}{A \otimes B}{C} \cong \Hom{\cat{M}}{A}{B \limp C}
  \]
  We call the functor $\limp : \cat{M} \times \cat{M} \mto \cat{M}$
  the internal hom of $\cat{M}$.
\end{definition}

Symmetric monoidal closed categories can be seen as a model of
intuitionistic linear logic with a tensor product and implication
\cite{Bierman:1994}.  What happens when we take the dual?  First, we
have the following result:
\begin{lemma}[Dual of Symmetric Monoidal Categories]
  \label{lemma:dual_of_symmetric_monoidal_categories}
  If $(\cat{M},\top,\otimes)$ is a symmetric monoidal category, then
  $\catop{M}$ is also a symmetric monoidal category.
\end{lemma}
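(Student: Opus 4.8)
The plan is to equip $\catop{M}$ with the \emph{same} unit object $\top$ and the \emph{same} tensor bifunctor $\otimes$, now viewed as a bifunctor $\catop{M} \times \catop{M} \mto \catop{M}$. This is legitimate because $\catop{M} \times \catop{M} = (\cat{M} \times \cat{M})^{\mathsf{op}}$ and every functor $F \colon \cat{C} \mto \cat{D}$ induces $F^{\mathsf{op}} \colon \catop{C} \mto \catop{D}$; functoriality of $\otimes$ in $\catop{M}$ is then automatic. The guiding observation I would record first is that a natural isomorphism $\eta \colon F \Rightarrow G$ between functors $\cat{C} \mto \cat{D}$ is exactly the same data as a natural isomorphism $\eta^{-1}$ (taken componentwise) $\colon G^{\mathsf{op}} \Rightarrow F^{\mathsf{op}}$ between the opposite functors; naturality of the inverse is immediate, and the opposite of an isomorphism is an isomorphism.

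Using this, the structural isomorphisms of $\cat{M}$ supply the ones required by Definition~\ref{def:monoidal-category} for $\catop{M}$, simply by inverting. Concretely, since $\alpha_{A,B,C} \colon (A \otimes B) \otimes C \mto A \otimes (B \otimes C)$ in $\cat{M}$, its inverse $\alpha_{A,B,C}^{-1}$ is, read in $\catop{M}$, an arrow $(A \otimes B) \otimes C \mto A \otimes (B \otimes C)$, so we set
\[
\lambda^{\mathsf{op}}_A := \lambda_A^{-1},\qquad \rho^{\mathsf{op}}_A := \rho_A^{-1},\qquad \alpha^{\mathsf{op}}_{A,B,C} := \alpha_{A,B,C}^{-1},\qquad \beta^{\mathsf{op}}_{A,B} := \beta_{B,A}\,(=\beta_{A,B}^{-1}\text{ by the involution triangle}).
\]
Each is a natural isomorphism in $\catop{M}$ with exactly the source and target demanded by the definition, by the observation above. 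Next I would check the coherence diagrams. The key general fact is that a diagram commutes in $\catop{M}$ if and only if the diagram obtained by reversing every arrow commutes in $\cat{M}$; under this reversal each $\alpha^{\mathsf{op}}, \beta^{\mathsf{op}}, \lambda^{\mathsf{op}}, \rho^{\mathsf{op}}$ turns back into the corresponding $\alpha, \beta, \lambda, \rho$. One then notes that the pentagon, the two hexagon-type symmetry diagrams, and the three triangle diagrams of Definition~\ref{def:monoidal-category} are, as shapes, invariant under reversing all edges (a commuting polygon traversed the other way is the same commuting condition). Hence reversing the $\catop{M}$-coherence diagrams yields precisely instances of the $\cat{M}$-coherence diagrams, which hold by hypothesis; so all coherence conditions for $\catop{M}$ hold.

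The only real work — and the sole point requiring care — is bookkeeping: verifying that after dualizing, the directions of $\lambda^{\mathsf{op}}, \rho^{\mathsf{op}}, \alpha^{\mathsf{op}}, \beta^{\mathsf{op}}$ match Definition~\ref{def:monoidal-category} on the nose, and that each coherence diagram, once all arrows are reversed and each structure map replaced by its inverse, is literally the same coherence diagram again. This is a finite, mechanical check with no genuine mathematical obstacle; I would either carry it out diagram-by-diagram or, more slickly, invoke the fact that "symmetric monoidal category" is definable purely in terms of data and equational diagrams preserved by the contravariant involution $(-)^{\mathsf{op}}$ on $\mathsf{Cat}$.
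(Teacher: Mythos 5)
Your proposal is correct and takes essentially the same route as the paper, which disposes of this lemma in a single remark: since all the structural data of a symmetric monoidal category are (natural) isomorphisms, they can be inverted to serve as the structure maps of $\catop{M}$, and the coherence diagrams dualize to (inverses of) instances of the original ones. Your write-up simply makes that one-line argument explicit, including the correct choices $\alpha^{-1}$, $\lambda^{-1}$, $\rho^{-1}$, $\beta_{B,A}$ and the observation that inverting a commuting diagram of isomorphisms preserves commutativity.
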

The previous result follows from the fact that the structures making
up symmetric monoidal categories are isomorphisms, and so naturally
taking their opposite will yield another symmetric monoidal category.
To emphasize when we are thinking about a symmetric monoidal category
in the opposite we use the notation $(\cat{M},\perp,\oplus)$ which gives
the suggestion of $\oplus$ corresponding to a disjunctive tensor
product which we call the \textit{cotensor} of $\cat{M}$. The next
definition describes when a symmetric monoidal category is coclosed.
\begin{definition}
  \label{def:SMCCC}
  A \textbf{symmetric monoidal coclosed category (SMCCC)} is a symmetric
  monoidal category, $(\cat{M},\perp,\oplus)$, such that, for any object
  $B$ of $\cat{M}$, the functor $- \oplus B : \cat{M} \mto \cat{M}$
  has a specified left adjoint.  Hence, for any objects $A$ and $C$
  of $\cat{M}$ there is an object $C \colimp B$ of $\cat{M}$ and a
  natural bijection:
  \[
  \Hom{\cat{M}}{C}{A \oplus B} \cong \Hom{\cat{M}}{C \colimp B}{A}
  \]
  We call the functor $\colimp : \cat{M} \times \cat{M} \mto \cat{M}$
  the internal cohom of $\cat{M}$.
\end{definition}

We combine a symmetric monoidal closed category with a symmetric
monoidal coclosed category in a single category.  First, we define the
notion of a distributive category due to Cockett and Seely \cite{Cockett:1997}.
\begin{definition}
  \label{def:dist-cat}
  We call a symmetric monoidal category, $(\cat{M}, \top, \otimes,
  \perp, \oplus)$ equipped with the structure of a cotensor $(\cat{M},
  \perp, \oplus)$, a \textbf{distributive category} if there are
  natural transformations:
  \[
  \begin{array}{lll}
    \delta^L_{A,B,C} : A \otimes (B \oplus C) \mto (A \otimes B) \oplus C\\
    \delta^R_{A,B,C} : (B \oplus C) \otimes A \mto B \oplus (C \otimes A)
  \end{array}
  \]
  subject to several coherence diagrams.  Due to the large number of
  coherence diagrams we do not list them here, but they all can be
  found in Cockett and Seely's paper \cite{Cockett:1997}.
\end{definition}
\noindent
Requiring that the tensor and cotensor products have the corresponding
right and left adjoints results in the following definition.
\begin{definition}
  \label{def:bilinear-cat}
  A \textbf{bilinear category} is a distributive category $(\cat{M},
  \top, \otimes, \perp, \oplus)$ such that $(\cat{M}, \top, \otimes)$
  is closed, and $(\cat{M}, \perp, \oplus)$ is coclosed.  We will
  denote bi-linear categories by $(\cat{M}, \top, \otimes, \limp, \perp,
  \oplus, \colimp)$.
\end{definition}
Originally, Lambek defined bilinear categories to be similar to the
previous definition, but the tensor and cotensor were non-commutative
\cite{Cockett:1997a}, however, the bilinear categories given here
are. We retain the name in homage to his original work.  As we will
see below bilinear categories form the core of a categorical model for
bi-intuitionism.

A symmetric monoidal category is a category with additional structure
subject to several coherence diagrams.  Thus, an ordinary functor is
not enough to capture this structure, and hence, the introduction of
symmetric monoidal functors.
\begin{definition}
  \label{def:SMCFUN}
  Suppose we are given two symmetric monoidal
  categories\\ $(\cat{M}_1,\top_1,\otimes_1,\alpha_1,\lambda_1,\rho_1,\beta_1)$
  and
  $(\cat{M}_2,\top_2,\otimes_2,\alpha_2,\lambda_2,\rho_2,\beta_2)$.
  Then a \textbf{symmetric monoidal functor} is a functor $F :
  \cat{M}_1 \mto \cat{M}_2$, a map $m_{\top_1} : \top_2 \mto F\top_1$
  and a natural transformation $m_{A,B} : FA \otimes_2 FB \mto F(A
  \otimes_1 B)$ subject to the following coherence conditions:
  \begin{mathpar}
    \bfig
    \vSquares|ammmmma|/->`->`->``->`->`->/[
      (FA \otimes_2 FB) \otimes_2 FC`
      FA \otimes_2 (FB \otimes_2 FC)`
      F(A \otimes_1 B) \otimes_2 FC`
      FA \otimes_2 F(B \otimes_1 C)`
      F((A \otimes_1 B) \otimes_1 C)`
      F(A \otimes_1 (B \otimes_1 C));
      {\alpha_2}_{FA,FB,FC}`
      m_{A,B} \otimes \id_{FC}`
      \id_{FA} \otimes m_{B,C}``
      m_{A \otimes_1 B,C}`
      m_{A,B \otimes_1 C}`
      F{\alpha_1}_{A,B,C}]
    \efig
    \end{mathpar}
\begin{mathpar}
    \bfig
    \square|amma|/->`->`<-`->/<1000,500>[
      \top_2 \otimes_2 FA`
      FA`
      F\top_1 \otimes_2 FA`
      F(\top_1 \otimes_1 A);
      {\lambda_2}_{FA}`
      m_{\top_1} \otimes \id_{FA}`
      F{\lambda_1}_{A}`
      m_{\top_1,A}]
    \efig
    \and
    \bfig
    \square|amma|/->`->`<-`->/<1000,500>[
      FA \otimes_2 \top_2`
      FA`
      FA \otimes_2 F\top_1`
      F(A \otimes_1 \top_1);
      {\rho_2}_{FA}`
      \id_{FA} \otimes m_{\top_1}`
      F{\rho_1}_{A}`
      m_{A,\top_1}]
    \efig
     \end{mathpar}
     
      \begin{mathpar}
    \bfig
    \square|amma|/->`->`->`->/<1000,500>[
      FA \otimes_2 FB`
      FB \otimes_2 FA`
      F(A \otimes_1 B)`
      F(B \otimes_1 A);
      {\beta_2}_{FA,FB}`
      m_{A,B}`
      m_{B,A}`
      F{\beta_1}_{A,B}]
    \efig
  \end{mathpar}
\end{definition}
\noindent
The following is dual to the previous definition.
\begin{definition}
  \label{def:coSMCFUN}
  Suppose we are given two symmetric monoidal
  categories\\ $(\cat{M}_1,\perp_1,\oplus_1,\alpha_1,\lambda_1,\rho_1,\beta_1)$
  and
  $(\cat{M}_2,\perp_2,\oplus_2,\alpha_2,\lambda_2,\rho_2,\beta_2)$.
  Then a \textbf{symmetric comonoidal functor} is a functor $F :
  \cat{M}_1 \mto \cat{M}_2$, a map $m_{\perp_1} : F\perp_1 \mto
  \perp_2$ and a natural transformation $m_{A,B} : F(A \oplus_1 B)
  \mto FA \oplus_2 FB$ subject to the following coherence conditions:
  \begin{mathpar}
    \bfig
    \vSquares|ammmmma|/->`->`->``->`->`->/[
      F((A \oplus_1 B) \oplus_1 C)`
      F(A \oplus_1 B) \oplus_2 FC`
      F(A \oplus_1 (B \oplus_1 C))`
      (FA \oplus_2 FB) \oplus_2 FC`
      FA \oplus_2 F(B \oplus_1 C))`
      FA \oplus_2 (FB \oplus_2 FC);
      m_{A \oplus_1 B,C}`
      F\alpha_{A,B,C}`
      m_{A,B} \oplus_2 \id_{FC}``
      m_{A,B \oplus_1 C}`
      \alpha_{FA,FB,FC}`
      \id_{FA} \oplus_2 m_{B,C}]    
    \efig
  \end{mathpar}
  \begin{mathpar}
    \bfig
    \square|amma|/->`->`->`->/<1000,500>[
      F(\perp_1 \oplus_1 A)`
      F\perp_1 \oplus_2 FA`
      FA`
      \perp_2 \oplus_2 FA;
      m_{\perp_1,A}`
      F{\lambda_1}_{A}`
      m_{\perp_1} \oplus \id_{FA}`
      {\lambda^{-1}_2}_{FA}]
    \efig
    \and
    \bfig
    \square|amma|/->`->`->`->/<1000,500>[
      F(A \oplus_1 \perp_1)`
      FA \oplus_2 F\perp_1`
      FA`
      FA \oplus_2 \perp_2;
      m_{A,\perp_1}`
      F{\rho_1}_{A}`
      \id_{FA} \oplus m_{\perp_1}`
      {\rho^{-1}_2}_{FA}]
    \efig
  \end{mathpar}
      
  \begin{diagram}
    \square|amma|/->`->`->`->/<1000,500>[
      F(A \oplus_1 B)`
      FA \oplus_2 FB`
      F(B \oplus_1 A)`
      FB \oplus_2 FA;
      m_{A,B}`
      F{\beta_1}_{A,B}`
      {\beta_2}_{FA,FB}`
      m_{B,A}]
  \end{diagram}
\end{definition}

Naturally, since functors are enhanced to handle the additional
structure found in a symmetric monoidal category we must also extend
natural transformations, and adjunctions.
\begin{definition}
  \label{def:SMCNAT}
  Suppose $(\cat{M}_1,\top_1,\otimes_1)$ and $(\cat{M}_2,\top_2,\otimes_2)$
  are SMCs, and $(F,m)$ and $(G,n)$ are a symmetric monoidal functors
  between $\cat{M}_1$ and $\cat{M}_2$.  Then a \textbf{symmetric
    monoidal natural transformation} is a natural transformation,
  $f : F \mto G$, subject to the following coherence diagrams:
  \begin{mathpar}
    \bfig
    \square<1000,500>[
      FA \otimes_2 FB`
      F(A \otimes_1 B)`
      GA \otimes_2 GB`
      G(A \otimes_1 B);
      m_{A,B}`
      f_A \otimes_2 f_B`
      f_{A \otimes_1 B}`
      n_{A,B}]
    \efig
    \and
    \bfig
    \Vtriangle/->`<-`<-/[
      F\top_1`
      G\top_1`
      \top_2;
      f_{\top_1}`
      m_{\top_1}`
      n_{\top_1}]
    \efig
  \end{mathpar}  
\end{definition}
\begin{definition}
  \label{def:coSMCNAT}
  Suppose $(\cat{M}_1,\perp_1,\oplus_1)$ and $(\cat{M}_2,\perp_2,\oplus_2)$
  are SMCs, and $(F,m)$ and $(G,n)$ are a symmetric comonoidal functors
  between $\cat{M}_1$ and $\cat{M}_2$.  Then a \textbf{symmetric
    comonoidal natural transformation} is a natural transformation,
  $f : F \mto G$, subject to the following coherence diagrams:
  \begin{mathpar}
    \bfig
    \square<1000,500>[
      F(A \oplus_1 B)`
      FA \oplus_2 FB`
      G(A \oplus_1 B)`
      GA \oplus_2 GB;
      m_{A,B}`
      f_{A \oplus_1 B}`
      f_A \oplus_2 f_B`
      n_{A,B}]
    \efig
    \and
    \bfig
    \Vtriangle/<-`<-`<-/[
      \perp_2`
      G\perp_1`
      F\perp_1;
      n_{\perp_1}`
      m_{\perp_1}`
      f_{\perp_1}]
    \efig
  \end{mathpar}  
\end{definition}  
\begin{definition}
  \label{def:SMCADJ}
  Suppose $(\cat{M}_1,\top_1,\otimes_1)$ and $(\cat{M}_2,\top_2,\otimes_2)$
  are SMCs, and $(F,m)$ is a symmetric monoidal functor between
  $\cat{M}_1$ and $\cat{M}_2$ and $(G,n)$ is a symmetric monoidal
  functor between $\cat{M}_2$ and $\cat{M}_1$.  Then a
  \textbf{symmetric monoidal adjunction} is an ordinary adjunction
  $\cat{M}_1 : F \dashv G : \cat{M}_2$ such that the unit,
  $\eta_A : A \to GFA$, and the counit, $\varepsilon_A : FGA \to A$, are
  symmetric monoidal natural transformations.  Thus, the following
  diagrams must commute:
  \begin{mathpar}
    \bfig
    \square|amma|/->`->`->`<-/<1000,500>[
      FGA \otimes_2 FGB`
      F(GA \otimes_1 GB)`
      A \otimes_2 B`
      FGA \otimes_2 FGB;
      m_{GA,GB}`
      \varepsilon_A \otimes_1 \varepsilon_B`
      Fn_{A,B}`
      \varepsilon_{A \otimes_1 B}]
    \efig
    \and
    \bfig
    \square|amma|/->`<-`->`=/<1000,500>[
      F\top_1`
      FG\top_2`
      \top_2`
      \top_2;
      Fn_{\top_2}`
      m_{\top_1}`
      \varepsilon_{\top_1}`]    
    \efig
    \and
    \bfig
    \square|amma|/<-`->`->`->/<1000,500>[
      GFA \otimes_1 GFB`
      A \otimes_1 B`
      G(FA \otimes_2 FB)`
      GF(A \otimes_1 B);
      \eta_A \otimes_1 \eta_B`
      n_{FA,FB}`
      \eta_{A \otimes_1 B}`
      m_{A,B}]
    \efig
    \and
    \bfig
    \square|amma|/->`<-`<-`=/<1000,500>[
      G\top_2`
      GF\top_1`
      \top_1`
      \top_1;
      Gm_{\top_1}`
      n_{\top_2}`
      \eta_{\top_1}`]      
    \efig
  \end{mathpar} 
\end{definition}
\begin{definition}
  \label{def:coSMCADJ}
  Suppose $(\cat{M}_1,\perp_1,\oplus_1)$ and $(\cat{M}_2,\perp_2,\oplus_2)$
  are SMCs, and $(F,m)$ is a symmetric comonoidal functor between
  $\cat{M}_1$ and $\cat{M}_2$ and $(G,n)$ is a symmetric comonoidal
  functor between $\cat{M}_2$ and $\cat{M}_1$.  Then a
  \textbf{symmetric comonoidal adjunction} is an ordinary adjunction
  $\cat{M}_1 : F \dashv G : \cat{M}_2$ such that the unit,
  $\eta_A : A \to GFA$, and the counit, $\varepsilon_A : FGA \to A$, are
  symmetric comonoidal natural transformations.  Thus, the following
  diagrams must commute:
  \begin{mathpar}
    \bfig
    \square|amma|/->`->`->`<-/<1000,500>[
      A \oplus_1 B`
      GF(A \oplus_1 B)`
      GFA \oplus_1 GFB`
      G(FA \oplus_2 FB);
      \eta_{A \oplus_1 B}`
      \eta_A \oplus_1 \eta_B`
      Gm_{A,B}`
      m_{FA,FB}]
    \efig
    \and
    \bfig
    \square|amma|/->`<-`->`=/<1000,500>[
      GF \perp_1`
      G \perp_2`
      \perp_1`
      \perp_1;
      Gm_{\perp_1}`
      \eta_{\perp_1}`
      n_{\perp_2}`]
    \efig
    \and
    \bfig
    \square|amma|/->`->`->`<-/<1000,500>[
      FG(A \oplus_2 B)`
      F(GA \oplus_1 GB)`
      A \oplus_2 B`
      FGA \oplus_2 FGB;
      Fn_{A,B}`
      \varepsilon_{A \oplus_2 B}`
      m_{GA,GB}`
      \varepsilon_A \oplus_2 \varepsilon_B]
    \efig
    \and
    \bfig
    \square|amma|/->`=`<-`->/<1000,500>[
      FG\perp_2`
      \perp_2`
      FG\perp_2`
      F\perp_1;
      \varepsilon_{\perp_2}``
      m_{\perp_1}`
      Fn_{\perp_2}]
    \efig
  \end{mathpar}  
\end{definition}
We will be defining, and making use of the why-not exponentials from
linear logic, but these correspond to a symmetric comonoidal monad.
In addition, whenever we have a symmetric comonoidal adjunction, we
immediately obtain a symmetric comonoidal comonad on the left, and a
symmetric comonoidal monad on the right.
\begin{definition}
  \label{def:symm-comonoidal-monad}
  A \textbf{symmetric comonoidal monad} on a symmetric monoidal
  category $\cat{C}$ is a triple $(T,\eta, \mu)$, where
  $(T,\n{})$ is a symmetric comonoidal endofunctor on $\cat{C}$,
  $\eta_A : A \mto TA$ and $\mu_A : T^2A \to TA$ are
  symmetric comonoidal natural transformations, which make the following
  diagrams commute:
  \begin{mathpar}
    \bfig
    \square|ammb|<600,600>[
      T^3 A`
      T^2A`
      T^2A`
      TA;
      \mu_{TA}`
      T\mu_A`
      \mu_A`
      \mu_A]
    \efig
    \and
    \bfig
    \Atrianglepair/=`<-`=`->`<-/<600,600>[
      TA`
      TA`
      T^2 A`
      TA;`
      \mu_A``
      \eta_{TA}`
      T\eta_A]
    \efig
  \end{mathpar}
  The assumption that $\eta$ and $\mu$ are symmetric
  comonoidal natural transformations amount to the following diagrams
  commuting:
  \begin{mathpar}
    \bfig
    \ptriangle|amm|/->`->`<-/<1000,600>[
      A \oplus B`
      TA \oplus TB`
      T(A \oplus B);
      \eta_A \oplus \eta_B`
      \eta_A`
      \n{A,B}]    
    \efig
    \and
    \bfig
    \Vtriangle/->`=`->/<600,600>[
      \perp`
      T\perp`
      \perp;
      \eta_\perp``
      \n{\perp}]
    \efig
  \end{mathpar}
  \begin{mathpar}
    \bfig
    \square|ammm|/->`->``/<1050,600>[
      T^2(A \oplus B)`
      T(TA \oplus TB)`
      T(A \oplus B)`;
      T\n{A,B}`
      \mu_{A \oplus B}``]

    \square(850,0)|ammm|/->``->`/<1050,600>[
      T(TA \oplus TB)`
      T^2 A \oplus T^2 B``
      TA \oplus TB;
      \n{TA,TB}``
      \mu_A \oplus \mu_B`]
    \morphism(-200,0)<2100,0>[T(A \oplus B)`TA \oplus TB;\n{A,B}]
    \efig
    \and
    \bfig
    \square|ammb|/->`->`->`->/<600,600>[
      T^2\perp`
      T\perp`
      T\perp`
      \perp;
      T\n{\perp}`
      \mu_\perp`
      \n{\perp}`
      \n{\perp}]
    \efig
  \end{mathpar}
\end{definition}
\noindent
Finally, the dual concept of a symmetric comonoidal comonad.
\begin{definition}
  \label{def:symm-comonoidal-comonad}
  A \textbf{symmetric comonoidal comonad} on a symmetric monoidal
  category $\cat{C}$ is a triple $(T,\varepsilon, \delta)$, where
  $(T,\m{})$ is a symmetric comonoidal endofunctor on $\cat{C}$,
  $\varepsilon_A : TA \mto A$ and $\delta_A : TA \to T^2 A$ are
  symmetric comonoidal natural transformations, which make the
  following diagrams commute:
  \begin{mathpar}
    \bfig
    \square|amma|<600,600>[
      TA`
      T^2A`
      T^2A`
      T^3A;
      \delta_A`
      \delta_A`
      T\delta_A`
      \delta_{TA}]
    \efig
    \and
    \bfig
    \Atrianglepair/=`->`=`<-`->/<600,600>[
      TA`
      TA`
      T^2 A`
      TA;`
      \delta_A``
      \varepsilon_{TA}`
      T\varepsilon_A]
    \efig
  \end{mathpar}
  The assumption that $\varepsilon$ and $\delta$ are symmetric
  monoidal natural transformations amount to the following diagrams
  commuting:
  \begin{mathpar}
    \bfig
    \qtriangle|mmb|<1000,500>[
      T(A \oplus B)`
      TA \oplus TB`
      A \oplus B;
      \m{A,B}`
      \varepsilon_{A \oplus B}`
      \varepsilon_A \oplus \varepsilon_B]
    \efig
    \and
    \bfig
    \Vtriangle|amm|/->`=`<-/[
      T\perp`
      \perp`
      T\perp;
      \varepsilon_{\perp}``
      \m{\perp}]
    \efig    
  \end{mathpar}
  \begin{mathpar}
    \bfig
    \square|amab|/`->``->/<1050,600>[
      T(A \oplus B)``
      T^2(A \oplus B)`
      T(TA \oplus TB);`
      \delta_{A \oplus B}``
      T\m{A,B}]
    \square(1050,0)|mmmb|/``->`->/<1050,600>[`
      TA \oplus TB`
      T(TA \oplus TB)`
      T^2A \oplus T^2B;``
      \delta_A \oplus \delta_B`
      \m{TA,TB}]
    \morphism(0,600)<2100,0>[T(A \oplus B)`TA \oplus TB;\m{A,B}]
    \efig
    \and
    \bfig
    \square|amma|/->`->`<-`->/<600,600>[
      T\perp`
      \perp`
      T^2 \perp`
      T\perp;
      \m{\perp}`
      \delta_\perp`
      \m{\perp}`
      T\m{\perp}]
    \efig
  \end{mathpar}
\end{definition}

\subsection{Cartesian Closed and Cocartesian Coclosed Categories}
\label{subsec:cartesian_closed_and_cocartesian_coclosed_categories}
The notion of a cartesian closed category is well-known, but for
completeness we define them here.  However, their dual is lesser
known, especially in computer science, and so we given their full
definition.  We also review some know results concerning cocartesian
coclosed categories and categories that are both cartesian closed and
cocartesian coclosed.
\begin{definition}
  \label{def:CC}
  A \textbf{cartesian category} is a category, $(\cat{C}, 1, \times)$,
  with an object, $1$, and a bi-functor, $\times : \cat{C} \times
  \cat{C} \mto \cat{C}$, such that for any object $A$ there is exactly
  one morphism $\diamond : A \to 1$, and for any morphisms $f : C \mto
  A$ and $g : C \mto B$ there is a morphism $\langle f , g \rangle : C
  \to A \times B$ subject to the following diagram:
  \[
  \bfig
  \Atrianglepair/->`->`->`<-`->/[C`A`A\times B`B;
    f`
    \langle f , g \rangle`
    g`
    \pi_1`
    \pi_2]
  \efig
  \]
\end{definition}
A cartesian category models conjunction by the product functor,
$\times : \cat{C} \times \cat{C} \mto \cat{C}$ , and the unit of
conjunction by the terminal object.  As we mention above modeling
implication requires closure, and since it is well-known that any
cartesian category is also a symmetric monoidal category the
definition of closure for a cartesian category is the same as the
definition of closure for a symmetric monoidal category
(Definition~\ref{def:SMCC}).  We denote the internal hom for cartesian
closed categories by $A \to B$.

The dual of a cartesian category is a cocartesian category.  They are
a model of intuitionistic logic with disjunction and its unit.
\begin{definition}
  \label{def:coCC}
  A \textbf{cocartesian category} is a category, $(\cat{C}, 0, +)$,
  with an object, $0$, and a bi-functor,
  $+ : \cat{C} \times \cat{C} \mto \cat{C}$, such that for any object $A$ there is exactly
  one morphism $\Box : 0 \to A$, and for any morphisms $f : A \mto C$ and $g : B \mto C$
  there is a morphism $\lbrack f , g \rbrack : A + B \mto C$
  subject to the following diagram:
  \[
  \bfig
  \Atrianglepair/<-`<-`<-`->`<-/[C`A`A+B`B;
    f`
    \lbrack f, g \rbrack`
    g`
    \iota_1`
    \iota_2]
  \efig
  \]  
\end{definition}
Coclosure, just like closure for cartesian categories, is defined in
the same way that coclosure is defined for symmetric monoidal
categories, because cocartesian categories are also symmetric
monoidal categories.  Thus, a cocartesian category is coclosed if
there is a specified left-adjoint, which we denote $S - T$, to the
coproduct.

There are many examples of cocartesian coclosed categories.
Basically, any interesting cartesian category has an interesting dual,
and hence, induces an interesting cocartesian coclosed category.
The opposite of the category of sets and functions between them is
isomorphic to the category of complete atomic boolean algebras, and
both of which, are examples of cocartesian coclosed categories.  As
we mentioned above bi-linear categories \cite{Cockett:1997a} are
models of bi-linear logic where the left adjoint to the cotensor
models coimplication.  Similarly, cocartesian coclosed categories
model cointuitionistic logic with disjunction and intuitionistic
coimplication

We might now ask if a category can be both cartesian closed and
cocartesian coclosed just as bi-linear categories, but this turns out
to be where the matter meets antimatter in such away that the category
degenerates to a preorder.  That is, every homspace contains at most
one morphism.  We recall this proof here, which is due to Crolard
\cite{Crolard:2001}. We need a couple basic facts about cartesian
closed categories with initial objects.
\begin{lemma}
  \label{lemma:iso-prod-initial}
  In any cartesian category $\cat{C}$, if $0$ is an initial object in
  $\cat{C}$ and $\Hom{C}{A}{0}$ is non-empty, then $A \cong A \times 0$.
\end{lemma}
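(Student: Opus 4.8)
The plan is to derive the stronger fact that the hypothesis $\Hom{C}{A}{0} \neq \emptyset$ forces $A \cong 0$, i.e.\ that the initial object is \emph{strict}; the statement then follows immediately, since $A \cong 0$ gives $A \times 0 \cong 0 \times 0 \cong 0 \cong A$, the middle isomorphism collapsing a product with (a copy of) the initial object.

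First I would fix some $f : A \to 0$, available by hypothesis, and use the universal property of the product to form $\langle \id_A , f \rangle : A \to A \times 0$; composing with the first projection gives $\pi_1 \circ \langle \id_A , f \rangle = \id_A$, so that $A$ is a retract of $A \times 0$. Next I would note that $A \times 0 \cong 0$: closure makes $A \times (-)$ a left adjoint, hence cocontinuous, so it carries the initial object $0$ to an initial object. Transporting the retraction across this isomorphism exhibits $A$ as a retract of $0$, i.e.\ there are morphisms $u : A \to 0$ and $v : 0 \to A$ with $v \circ u = \id_A$.

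Finally I would observe that a retract of an initial object is isomorphic to it: since $0$ is initial, $\Hom{C}{0}{0} = \{\id_0\}$, so the composite $u \circ v : 0 \to 0$ is forced to be $\id_0$, and together with $v \circ u = \id_A$ this makes $u$ an isomorphism with inverse $v$. Hence $A \cong 0$, and therefore $A \cong A \times 0$ as observed above. The only step that uses more than bare finite products is the isomorphism $A \times 0 \cong 0$ — it is precisely here that closure (equivalently, strictness of the initial object) is essential — so that is the point I would treat most carefully; everything else is routine chasing of the universal properties of the product and of $0$.
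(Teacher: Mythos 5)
Your argument is sound in a cartesian \emph{closed} category, but it is a genuinely different --- and much heavier --- route than the paper's, and it quietly imports a hypothesis the lemma does not grant. The paper's one-line proof appeals only to the universal property of the product: from $f : A \to 0$ one forms $\langle \id_A, f\rangle : A \to A \times 0$ and checks that it and $\pi_1$ are mutually inverse. You instead prove that the initial object is strict, i.e.\ $A \cong 0$ --- which is precisely Joyal's lemma (Lemma~\ref{lemma:joyals}), the statement this lemma is a stepping stone \emph{toward} --- and read off $A \cong A \times 0$ as a corollary. Each step you take is individually correct (the retraction $\pi_1 \circ \langle\id_A,f\rangle = \id_A$; the isomorphism $A \times 0 \cong 0$ because $A \times (-)$ is a left adjoint; the observation that a retract of an initial object is isomorphic to it since $\Hom{C}{0}{0} = \{\id_0\}$), but the middle step is exactly the content of Lemma~\ref{lemma:products-of-initial-gives-initial}, where the paper deliberately quarantines the use of closure. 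As a proof of the lemma \emph{as stated} --- for a bare cartesian category with an initial object --- yours therefore does not apply.

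That said, your instinct that finite products alone do not suffice is correct, and the real gap is on the paper's side. The universal property gives $\pi_1 \circ \langle\id_A,f\rangle = \id_A$ for free, but the other composite $\langle\id_A,f\rangle \circ \pi_1 = \langle \pi_1, f \circ \pi_1\rangle$ equals $\id_{A\times 0} = \langle\pi_1,\pi_2\rangle$ only if $f \circ \pi_1 = \pi_2$ as maps $A \times 0 \to 0$, and nothing forces this: in the cartesian category of commutative rings, $0 = \mathbb{Z}$ is initial and admits a map from $A = \mathbb{Z}$, yet $\mathbb{Z} \not\cong \mathbb{Z} \times \mathbb{Z}$. So the lemma needs either the closure hypothesis you use (in which case your proof, or the paper's once $A \times 0 \cong 0$ is known, goes through) or a weakening of the conclusion to ``$A$ is a retract of $A \times 0$'', which is all that Lemma~\ref{lemma:joyals} actually requires.
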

\begin{proof}
  This follows easily from the universial mapping property for products.
\end{proof}

\begin{lemma}
  \label{lemma:products-of-initial-gives-initial}
  In any cartesian closed category $C$, if $0$ is an initial
  object in $\cat{C}$, then so is $0 \times A$ for any object $A$
  of $\cat{C}$.
\end{lemma}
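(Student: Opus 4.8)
The plan is to exploit closure directly, using the fact that every cartesian category is in particular a symmetric monoidal category with tensor given by the product. Thus a cartesian closed $\cat{C}$ falls under Definition~\ref{def:SMCC} with $\otimes = \times$ and internal hom $\to$, so for every object $A$ the functor $- \times A : \cat{C} \mto \cat{C}$ has a specified right adjoint $A \to (-)$. At the level of slogans the lemma is then nothing more than ``left adjoints preserve colimits,'' applied to the initial object; but I would spell out the one-line hom-set computation to keep the argument elementary and matched to the level of Lemma~\ref{lemma:iso-prod-initial}.

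Concretely, I would fix an arbitrary object $B$ of $\cat{C}$ and apply the natural bijection coming from closure to get
\[
\Hom{C}{0 \times A}{B} \;\cong\; \Hom{C}{0}{A \to B}.
\]
Since $0$ is initial, the right-hand set is a singleton, hence so is the left-hand set. As $B$ was arbitrary, $0 \times A$ admits exactly one morphism to every object of $\cat{C}$, which is precisely the statement that $0 \times A$ is an initial object.

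I do not expect a real obstacle. The only points warranting a sentence of care are: (i) justifying that Definition~\ref{def:SMCC} applies, i.e. recalling the standard fact that a cartesian (closed) category is a symmetric monoidal (closed) category with $\otimes = \times$; and (ii) observing that we only use that the closure isomorphism is a bijection of underlying sets — naturality is irrelevant for counting morphisms — so no coherence data needs to be unwound. One could instead simply invoke ``left adjoints preserve initial objects'' and be done, but exhibiting the explicit hom-set bijection makes the proof fully self-contained within the vocabulary already set up in this subsection.
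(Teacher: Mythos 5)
Your proof is correct and follows essentially the same route as the paper's: both apply the closure bijection $\Hom{C}{0 \times A}{B} \cong \Hom{C}{0}{A \to B}$ and use initiality of $0$ to conclude that the left-hand hom-set is a singleton for every $B$. The extra remarks about why Definition~\ref{def:SMCC} applies and why naturality is not needed are fine but not required.
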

\begin{proof}
    We know that the universal morphism for the initial object is
    unique, and hence, the homspace $\Hom{C}{0}{A \Rightarrow B}$ for
    any object $B$ of $\cat{C}$ contains exactly one morphism.  Then
    using the right adjoint to the product functor we know that
    $\Hom{C}{0}{A \Rightarrow B} \cong \Hom{C}{0 \times A}{B}$, and
    hence, there is only one arrow between $0 \times A$ and $B$.
\end{proof}
\noindent
The following lemma is due to Joyal \cite{Lambek:1988}, and is key to
the next theorem.
\begin{lemma}[Joyal]
  \label{lemma:joyals}
  In any cartesian closed category $\cat{C}$, if $0$ is an initial
  object in $\cat{C}$ and $\Hom{C}{A}{0}$ is non-empty, then $A$ is an
  initial object in $\cat{C}$.
\end{lemma}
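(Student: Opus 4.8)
The plan is to show that $A$ is isomorphic to an initial object; since any object isomorphic to an initial object is itself initial, this immediately gives the result. Fix, using the hypothesis, a morphism $f : A \to 0$.

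First I would feed this $f$ into Lemma~\ref{lemma:iso-prod-initial} to obtain an isomorphism $A \cong A \times 0$. Next, because $\cat{C}$ is cartesian closed and $0$ is initial, Lemma~\ref{lemma:products-of-initial-gives-initial} yields that $0 \times A$ is initial; composing with the canonical symmetry isomorphism $A \times 0 \cong 0 \times A$ of the product bifunctor (available since every cartesian category is symmetric monoidal) shows that $A \times 0$ is initial too. Chaining, $A \cong A \times 0 \cong 0 \times A$ exhibits $A$ as isomorphic to an initial object.

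Finally I would record the elementary closure fact: if $A \cong I$ with $I$ initial, then for any object $B$ the unique morphism $I \to B$ transported along the isomorphism gives a morphism $A \to B$, and any two morphisms $A \to B$ would transport back to two morphisms $I \to B$, contradicting initiality of $I$; hence $\Hom{C}{A}{B}$ is a singleton for every $B$, i.e. $A$ is initial. The only real work is this bit of bookkeeping together with checking that the two cited lemmas genuinely apply — Lemma~\ref{lemma:products-of-initial-gives-initial} needs cartesian closure, which we have — so I do not anticipate a substantive obstacle: the proof is essentially the assembly of the two preceding lemmas plus symmetry of the product.
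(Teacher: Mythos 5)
Your proof is correct and follows essentially the same route as the paper's: apply Lemma~\ref{lemma:iso-prod-initial} to get $A \cong A \times 0$, then Lemma~\ref{lemma:products-of-initial-gives-initial} to see that this product is initial, and conclude. Your explicit use of the symmetry $A \times 0 \cong 0 \times A$ is a small but welcome tightening, since Lemma~\ref{lemma:products-of-initial-gives-initial} is literally stated for $0 \times A$ while the paper's proof silently treats $A \times 0$ as initial.
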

\begin{proof}
  Suppose $\cat{C}$ is a cartesian closed category, such that, $0$ is
  an initial object in $\cat{C}$, and $A$ is an arbitrary object in
  $\cat{C}$.  Furthermore, suppose $\Hom{C}{A}{0}$ is non-empty.  By
  the first basic lemma above we know that $A \cong A \times 0$, and
  by the second $A \times 0$ is initial, thus $A$ is initial.
\end{proof}
Finally, the following theorem shows that any category that is both
cartesian closed and cocartesian coclosed is a preorder.
\begin{theorem}[(co)Cartesian (co)Closed Categories are Preorders (Crolard\cite{Crolard:2001})]
  \label{thm:dengerate-to-preorder}
  If $\cat{C}$ is both cartesian closed and cocartesian coclosed, then
  for any two objects $A$ and $B$ of $\cat{C}$, $\Hom{C}{A}{B}$ has at
  most one element.
\end{theorem}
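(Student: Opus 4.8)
The plan is to reduce the statement to Joyal's lemma (Lemma~\ref{lemma:joyals}). Fix objects $A$ and $B$ of $\cat{C}$. Since $\cat{C}$ is cocartesian coclosed, the functor $- + B$ has a specified left adjoint, so there is a coexponent object $A - B$ together with a bijection $\Hom{C}{A - B}{Y} \cong \Hom{C}{A}{Y + B}$ natural in $Y$. The one move that makes the whole argument go through is to instantiate this at $Y = 0$, the initial object supplied by the cocartesian structure: since $0 + B \cong B$ in any cocartesian category, composing bijections gives
\[
\Hom{C}{A - B}{0} \;\cong\; \Hom{C}{A}{0 + B} \;\cong\; \Hom{C}{A}{B}.
\]
In other words, the hom-set we care about has been re-expressed as the set of morphisms \emph{out of} the object $A - B$ \emph{into} the initial object.

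Next I would dispatch the trivial case: if $\Hom{C}{A}{B}$ is empty there is nothing to prove, so assume it is non-empty. Then, by the displayed bijection, $\Hom{C}{A - B}{0}$ is non-empty as well. At this point both hypotheses on $\cat{C}$ come into play: $\cat{C}$ is cartesian closed and $0$ is initial, so Joyal's lemma applies verbatim to the object $A - B$ and we conclude that $A - B$ is itself an initial object of $\cat{C}$.

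The proof then finishes immediately: an initial object admits exactly one morphism to each object, so $\Hom{C}{A - B}{0}$ is a singleton, and transporting back along the displayed bijection shows $\Hom{C}{A}{B}$ is a singleton. Combining this with the empty case gives that $\Hom{C}{A}{B}$ has at most one element for every pair $A$, $B$, as claimed.

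There is no serious obstacle here beyond recognizing the correct instantiation $Y = 0$ of the coexponent adjunction; after that the argument is a short chain of bijections followed by an appeal to Joyal's lemma and the universal property of $0$. The only point worth a moment's care is that we use solely the \emph{existence} of the bijections $\Hom{C}{A - B}{Y} \cong \Hom{C}{A}{Y + B}$ and $0 + B \cong B$, never any coherence between them, so no diagram chase is needed — which is precisely why the collapse is so abrupt, and precisely the phenomenon that fails in the linear setting studied in the rest of the paper.
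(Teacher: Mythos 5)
Your proposal is correct and follows essentially the same route as the paper: rewrite $\Hom{C}{A}{B} \cong \Hom{C}{A}{0+B}$ via the coproduct unit, transpose across the coexponent adjunction to a hom-set into $0$, and invoke Joyal's lemma. You spell out the empty/non-empty case split and the final transport back along the bijection more explicitly than the paper does, but the argument is the same.
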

\begin{proof}
  Suppose $\cat{C}$ is both cartesian closed and cocartesian coclosed,
  and $A$ and $B$ are objects of $\cat{C}$.  Then by using the basic
  fact that the initial object is the unit to the coproduct, and the
  coproducts left adjoint we know the following:
  \[\Hom{C}{A}{B} \cong \Hom{C}{A}{0 + B} \cong \Hom{C}{B - A}{0}\]
  Therefore, by Joyal's theorem above $\Hom{C}{A}{B}$ has at most one
  element.
\end{proof}
\noindent
Notice that the previous result hinges on the fact that there are
initial and terminal objects, and thus, this result does not hold for
bi-linear categories, because the units to the tensor and cotensor are
not initial nor terminal.

The repercussions of this result are that if we do not want to work
with preorders, but do want to work with all of the structure, then we
must separate the two worlds.  Thus, this result can be seen as the
motivation for the current work.  We enforce the separation using
linear logic, but through the power of linear logic this separation is
not over a large distance.

\subsection{A Mixed Linear/Non-Linear Model for Co-Intuitionistic Logic}
\label{subsec:a_mixed_linear/non-linear_model_for_co-intuitionistic_logic}

Benton \cite{Benton:1994} showed that from a LNL model it is possible
to construct a linear category, and vice versa.  Bellin
\cite{Bellin:2012} showed that the dual to linear categories are
sufficient to model co-intuitionistic linear logic. We show that from
the dual to a LNL model we can construct the dual to a linear
category, and vice versa, thus, carrying out the same program for
co-intuitionistic linear logic as Benton did for intuitionistic linear
logic.

Combining a symmetric monoidal coclosed category with a cocartesian
coclosed category via a symmetric comonoidal adjunction defines a
dual LNL model.
\begin{definition}
  \label{def:dual LNL-model}
  A
  \textbf{mixed linear/non-linear model for co-intuitionistic logic (dual LNL model)},
  $\cat{L} : \func{H} \dashv \func{J} : \cat{C}$, consists of the following:
  \begin{itemize}
  \item[i.] a symmetric monoidal coclosed category $(\cat{L},\perp,\oplus,\colimp)$,
  \item[ii.] a cocartesian coclosed category $(\cat{C},0,+,-)$, and
  \item[iv.] a symmetric comonoidal adjunction $\cat{L} : \func{H}
    \dashv \func{J} : \cat{C}$, where $\eta_A : A \mto
    \func{JH}A$ and $\varepsilon_R : \func{HJ}R \mto R$
    are the unit and counit of the adjunction respectively.
  \end{itemize}
\end{definition}
It is well-known that an adjunction $\cat{L} : \func{H} \dashv
\func{J} : \cat{C}$ induces a monad $\func{H};\func{J} : \cat{L} \mto
\cat{L}$, but when the adjunction is symmetric comonoidal we obtain a
symmetric comonoidal monad, in fact, $\func{H};\func{J}$ defines the
linear exponential why-not denoted $\wn A = \func{JH}A$.
By the definition of dual LNL models we know that both $\func{H}$ and
$\func{J}$ are symmetric comonoidal functors, and hence, are equipped
with natural transformations $\h{A,B} : \func{H}(A \oplus B) \mto
\func{H}A + \func{H}B$ and $\j{R,S} : \func{J}(R + S) \mto \func{J}R
\oplus \func{J}S$, and maps $\h{\perp} : \func{H}\perp \mto 0$ and
$\j{0} : \func{J}0 \mto \perp$.  We will make heavy use of these
maps throughout the sequel.

Compare this definition with that of Bellin's dual linear category from
\cite{Bellin:2012}, and we can easily see that the definition of dual
LNL models -- much like LNL models -- is more succinct.
\begin{definition}
  \label{def:dual-linear-cat}
  A \textbf{dual linear category}, $\cat{L}$, consists of the
  following data:
  \begin{enumerate}[label=\roman*.]
  \item A symmetric monoidal coclosed category $(\cat{L}, \oplus, \perp, \colimp)$ with
  \item a symmetric co-monoidal monad $(\wn,\eta, \mu)$ on $\cat{L}$
    such that
  \begin{enumerate}[label=\alph*.]
  \item each free $\wn$-algebra carries naturally the structure of a
    commutative $\oplus$-monoid.  This implies that there are
    distinguished symmetric monoidal natural transformations $\w{A} :
    \perp \mto \wn A$ and $\c{A} : \wn A \oplus \wn A \mto \wn A$
    which form a commutative monoid and are $\wn$-algebra morphisms.

  \item whenever $f : (\wn A, \mu_A) \mto (\wn B, \mu_B)$ is a
    morphism of free $\wn$-algebras, then it is also a monoid
    morphism.
  \end{enumerate}
  \end{enumerate}
\end{definition}

\subsubsection{A Useful Isomorphism}
\label{subsec:a_useful_isomorphism}
One useful property of Benton's LNL model is that the maps associated
with the symmetric monoidal left adjoint in the model are
isomorphisms.  Since dual LNL models are dual we obtain similar
isomorphisms with respect to the right adjoint.
\begin{lemma}[Symmetric Comonoidal Isomorphisms]
  \label{lemma:symmetric_comonoidal_isomorphisms}
  Given any dual LNL model $\cat{L} : \func{H} \dashv \func{J} : \cat{C}$, then there are the following isomorphisms:
  \[
  \begin{array}{lll}
    \func{J}(R + S) \cong \func{J}R \oplus \func{J}S & \text{ and } & \func{J}0 \cong \perp\\
  \end{array}
  \]
  Furthermore, the former is natural in $R$ and $S$.  
\end{lemma}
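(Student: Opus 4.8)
The plan is to dualise the standard fact that the left adjoint of a symmetric monoidal adjunction is strong monoidal. Concretely, I will show directly that the comonoidal coherence maps $\j{R,S}$ and $\j{0}$ of the right adjoint $\func{J}$ are isomorphisms, using only that $\func{H} \dashv \func{J}$ is a symmetric comonoidal adjunction; the cocartesian coclosed structure of $\cat{C}$ plays no role here. The inverses are obtained as mates, through the adjunction, of the evident maps assembled from the counit and from the comonoidal structure of $\func{H}$. For the unit I take $k_{\perp} := \func{J}(\h{\perp}) \circ \eta_{\perp} : \perp \mto \func{J}\func{H}\perp \mto \func{J}0$. For the cotensor I take $k_{R,S}$ to be the transpose of $(\varepsilon_R + \varepsilon_S) \circ \h{\func{J}R,\func{J}S} : \func{H}(\func{J}R \oplus \func{J}S) \mto \func{H}\func{J}R + \func{H}\func{J}S \mto R+S$, that is, $k_{R,S} := \func{J}((\varepsilon_R + \varepsilon_S) \circ \h{\func{J}R,\func{J}S}) \circ \eta_{\func{J}R \oplus \func{J}S}$.

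Next I would verify that these are two-sided inverses by diagram chases from the coherence squares of Definition~\ref{def:coSMCADJ} (which say exactly that $\eta$ and $\varepsilon$ are comonoidal natural transformations), the naturality of $\eta$ and of $\j{}$, and the two triangle identities. One composite in each case is immediate: for instance $k_{R,S} \circ \j{R,S}$ becomes, by naturality of $\eta$ at $\j{R,S}$, the map $\func{J}((\varepsilon_R + \varepsilon_S) \circ \h{\func{J}R,\func{J}S} \circ \func{H}(\j{R,S})) \circ \eta_{\func{J}(R+S)}$; the inner composite collapses to $\varepsilon_{R+S}$ because $\varepsilon$ is comonoidal, and then $\func{J}(\varepsilon_{R+S}) \circ \eta_{\func{J}(R+S)} = \id_{\func{J}(R+S)}$. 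The reverse composite $\j{R,S} \circ k_{R,S}$ is the delicate one: after pushing $\j{R,S}$ past $\func{J}(\varepsilon_R + \varepsilon_S)$ via naturality of $\j{}$ (legitimate because $\varepsilon_R + \varepsilon_S$ is an honest coproduct of morphisms), one must recognise the surviving composite $\j{\func{H}\func{J}R,\func{H}\func{J}S} \circ \func{J}(\h{\func{J}R,\func{J}S})$ as the comonoidal coherence map of the monad $\wn = \func{J}\func{H}$ at $(\func{J}R, \func{J}S)$; the relevant $\eta$-comonoidality square of Definition~\ref{def:coSMCADJ} then rewrites its precomposite with $\eta_{\func{J}R \oplus \func{J}S}$ as $\eta_{\func{J}R} \oplus \eta_{\func{J}S}$, and a pair of triangle identities yields $\id_{\func{J}R} \oplus \id_{\func{J}S}$. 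The unit case runs identically with the $\oplus$-bookkeeping erased.

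The only real obstacle is thus organisational: in the cotensor chase one has to spot that the pertinent instances of the two coherence squares of Definition~\ref{def:coSMCADJ} are precisely the assertions that $\func{J}\func{H}$ carries a comonoidal structure with respect to which $\eta$ is comonoidal, and then thread naturality of $\j{}$ and the triangle identities through without slips. Once $\j{R,S}$ is invertible, naturality of the isomorphism $\func{J}(R+S) \cong \func{J}R \oplus \func{J}S$ in $R$ and $S$ is free, since $\j{R,S}$ is a natural transformation by construction (equivalently, $k_{R,S}$ is manifestly natural, being built from $\varepsilon$, $\h{}$, and $\eta$).
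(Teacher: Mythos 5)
Your proposal is correct and follows essentially the same route as the paper: the candidate inverses $k_{R,S}$ and $k_{\perp}$ are literally the paper's $\jinv{R,S}$ and $\jinv{0}$ (the counit-plus-$\h{}$ composite transposed through $\eta$), and the verification uses exactly the same ingredients --- naturality of $\eta$ and of $\j{}$, the two comonoidality squares of Definition~\ref{def:coSMCADJ}, and the triangle identities. The only cosmetic difference is that you phrase one of the coherence squares as the comonoidal structure of the induced monad $\wn=\func{J}\func{H}$, which is just a relabelling of the same diagram chase carried out in Appendix~\ref{subsec:proof_of_lemma:symmetric_comonoidal_isomorphisms}.
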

\begin{proof}
  Suppose $\cat{L} : \func{H} \dashv \func{J} : \cat{C}$ is a dual LNL
  model.  Then we can define the following family of maps:
  \[
  \begin{array}{lll}
    \jinv{R,S} := \func{J}R \oplus \func{J}S \mto^{\eta} \func{JH}(\func{J}R \oplus \func{J}S) \mto^{\func{J}\h{A,B}} \func{J}(\func{HJ}R + \func{HJ}S) \mto^{\J(\varepsilon_R + \varepsilon_S)} \func{J}(R + S)\\
  \\
  \jinv{0} := \perp \mto^{\eta} \func{JH}\perp \mto^{\func{J}\h{\perp}} \func{J}0
  \end{array}
  \]
  It is easy to see that $\jinv{R,S}$ is natural, because it is
  defined in terms of a composition of natural transformations.  All
  that is left to be shown is that $\jinv{R,s}$ and $\jinv{0}$ are
  mutual inverses with $\j{R,S}$ and $\j{0}$; for the details see
  Appendix~\ref{subsec:proof_of_lemma:symmetric_comonoidal_isomorphisms}.
\end{proof}
\noindent
Just as Benton we also do not have similar isomorphisms with respect
to the functor $\H$.  One fact that we can point out, that Benton did
not make explicit -- because he did not use the notion of symmetric
comonoidal functor -- is that $\jinv{}$ makes $\J$ also a symmetric
monoidal functor.

\begin{corollary}
  \label{corollary:J-SMMF}
  Given any dual LNL model $\cat{L} : \func{H} \dashv \func{J} :
  \cat{C}$, the functor $(\J, \jinv{})$ is symmetric monoidal.
\end{corollary}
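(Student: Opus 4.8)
The plan is to obtain this corollary as an immediate consequence of Lemma~\ref{lemma:symmetric_comonoidal_isomorphisms}. As part of the given dual LNL model, $\J : \cat{C} \mto \cat{L}$ is a symmetric comonoidal functor, so it comes equipped with structure maps $\j{R,S} : \func{J}(R + S) \mto \func{J}R \oplus \func{J}S$ and $\j{0} : \func{J}0 \mto \perp$ satisfying the coherence conditions of Definition~\ref{def:coSMCFUN}. By Lemma~\ref{lemma:symmetric_comonoidal_isomorphisms} these maps are isomorphisms, with inverses $\jinv{R,S}$ and $\jinv{0}$, and $\jinv{R,S}$ is natural in $R$ and $S$. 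Hence $(\J, \jinv{})$ already has the right data to be a symmetric monoidal functor from $(\cat{C},0,+)$ to $(\cat{L},\perp,\oplus)$ in the sense of Definition~\ref{def:SMCFUN}: the required map $\perp \mto \func{J}0$ is $\jinv{0}$, and the required natural transformation $\func{J}R \oplus \func{J}S \mto \func{J}(R + S)$ is $\jinv{R,S}$. All that is left is to verify the associativity hexagon, the two unit squares, and the symmetry square of Definition~\ref{def:SMCFUN}.

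The key observation that makes this verification mechanical is that every arrow occurring in the comonoidal coherence diagrams of Definition~\ref{def:coSMCFUN} for $(\J, \j{})$ is an isomorphism: the maps $\j{}$ and $\j{0}$ are isomorphisms by the lemma; the associators, unitors and symmetries $\alpha$, $\lambda$, $\rho$, $\beta$ of the symmetric monoidal categories $(\cat{C},0,+)$ and $(\cat{L},\perp,\oplus)$ are isomorphisms by Definition~\ref{def:monoidal-category}; and $\J$ applied to any isomorphism is again an isomorphism. Thus each of those coherence diagrams is a commuting diagram built entirely out of isomorphisms, and may be inverted arrow-by-arrow to produce another commuting diagram; a short rewriting — using only functoriality of $\J$, bifunctoriality of $\oplus$, and $\J(f)^{-1} = \J(f^{-1})$ — then identifies the inverted identity with the matching coherence condition of Definition~\ref{def:SMCFUN} for $(\J, \jinv{})$. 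For the left-unit square, for instance, I would start from the comonoidal identity $(\j{0} \oplus \id_{\func{J}A}) \circ \j{0,A} = \lambda^{-1}_{\func{J}A} \circ \func{J}(\lambda_A)$, invert it, and rearrange to $\func{J}(\lambda_A) \circ \jinv{0,A} \circ (\jinv{0} \oplus \id_{\func{J}A}) = \lambda_{\func{J}A}$, which is exactly the left-unit condition for $(\J, \jinv{})$; the right-unit square is handled symmetrically, and the symmetry square the same way, using $\beta^{-1}_{X,Y} = \beta_{Y,X}$.

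I expect the only step with appreciable bookkeeping to be the associativity hexagon: one inverts the comonoidal hexagon of Definition~\ref{def:coSMCFUN}, replacing $\j{R+S,T}$ and $\j{R,S+T}$ by their inverses, $\func{J}(\alpha)$ by $\func{J}(\alpha)^{-1} = \func{J}(\alpha^{-1})$, and $\alpha_{\func{J}R,\func{J}S,\func{J}T}$ by its inverse, and then — after composing through with the genuine $\func{J}(\alpha)$ and $\alpha_{\func{J}R,\func{J}S,\func{J}T}$ to cancel the inverse associators that the naive inversion introduces — checks that the resulting identity is precisely the associativity coherence of Definition~\ref{def:SMCFUN}. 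There is no genuine obstacle here: no new categorical content, only the routine transcription of the diagrams of Definition~\ref{def:coSMCFUN} into those of Definition~\ref{def:SMCFUN} under inversion — essentially the same manipulation already used in the proof of Lemma~\ref{lemma:symmetric_comonoidal_isomorphisms}. One point worth flagging in the write-up is that the inversion is carried out diagram-by-diagram and never passes to opposite categories, so the monoidal structures it witnesses are the intended $(\cat{C},0,+)$ and $(\cat{L},\perp,\oplus)$, not their opposites.
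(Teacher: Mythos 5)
Your proposal is correct and follows essentially the same route as the paper: the paper's proof is precisely the observation that, since $\jinv{}$ is an isomorphism (Lemma~\ref{lemma:symmetric_comonoidal_isomorphisms}), the coherence diagrams of Definition~\ref{def:SMCFUN} for $(\J,\jinv{})$ reduce to (equivalently, are obtained by inverting) the comonoidal coherence diagrams of Definition~\ref{def:coSMCFUN} for $(\J,\j{})$. You merely spell out the arrow-by-arrow inversion that the paper leaves implicit.
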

\begin{proof}
  This holds by straightforwardly reducing the diagrams defining a
  symmetric monoidal functor, Definition~\ref{def:SMCFUN}, to the
  diagrams defining a symmetric comonoidal functor,
  Definition~\ref{def:coSMCFUN}, using the fact that $\jinv{}$ is an
  isomorphism.
\end{proof}

\subsubsection{Dual LNL Model Implies Dual Linear Category}
\label{subsec:dual_lnl_model_implies_dual_category}

The next result shows that any dual LNL model induces a symmetric
comonoidal monad.
\begin{lemma}[Symmetric Comonoidal Monad]
  \label{lemma:symmetric_comonoidal_monad}
  Given a dual LNL model $\cat{L} : \func{H} \dashv \func{J} : \cat{C}$,
  the functor, $\wn = H;J$, defines a symmetric comonoidal monad.
\end{lemma}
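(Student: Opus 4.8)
The plan is to check, one by one, the conditions of Definition~\ref{def:symm-comonoidal-monad} for the triple $(\wn,\eta,\mu)$ with $\wn=\func{JH}$, $\eta_A:A\mto\wn A$ the unit of the adjunction, and $\mu_A:=\func{J}\varepsilon_{\func{H}A}:\wn^2A\mto\wn A$ the multiplication of the ordinary monad induced by $\func{H}\dashv\func{J}$. As comonoidal structure on $\wn$ I would take the one built by composing the structures of $\func{J}$ and $\func{H}$, namely $\n{A,B}:=\j{\func{H}A,\func{H}B}\circ\func{J}\h{A,B}:\wn(A\oplus B)\mto\wn A\oplus\wn B$ and $\n{\perp}:=\j{0}\circ\func{J}\h{\perp}:\wn\perp\mto\perp$.

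First I would note that $(\wn,\n{})$ is a symmetric comonoidal endofunctor. This is the standard fact that a composite of symmetric comonoidal functors is again symmetric comonoidal: the associativity hexagon, the two unit squares, and the symmetry square of Definition~\ref{def:coSMCFUN} for $\wn$ are obtained by applying $\func{J}$ functorially to the corresponding diagrams for $\func{H}$, pasting the results with the corresponding diagrams for $\func{J}$ instantiated at $\func{H}A,\func{H}B,\func{H}C$, and invoking naturality of $\j{}$. Likewise, the three monad laws --- associativity of $\mu$ and the two unit laws for $\mu$ --- are the usual consequences of the triangle identities for $\func{H}\dashv\func{J}$, so nothing specific to this setting is needed for them.

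The substance of the lemma is that $\eta$ and $\mu$ are symmetric comonoidal natural transformations, i.e.\ that the four comonoidality squares and triangles displayed in Definition~\ref{def:symm-comonoidal-monad} commute. For $\eta:\mathrm{Id}_{\cat{L}}\mto\wn$ this is immediate: the two $\eta$-diagrams in the definition of a symmetric comonoidal adjunction (Definition~\ref{def:coSMCADJ}) are exactly the statement that $\eta$ is comonoidal with respect to the composite structure $\n{}$, which is what is required. For $\mu$ I would deduce comonoidality from that of the counit. The two $\varepsilon$-diagrams in Definition~\ref{def:coSMCADJ} say that $\varepsilon:\func{HJ}\mto\mathrm{Id}_{\cat{C}}$ is comonoidal with respect to the composite structure $\h{\func{J}R,\func{J}S}\circ\func{H}\j{R,S}$ on $\func{HJ}$; applying $\func{J}$ to this square at the objects $\func{H}A,\func{H}B$, expanding $\n{}$ on $\wn$ and the composite comonoidal structure on $\wn^2$ by their definitions, and then repeatedly using functoriality of $\func{J}$, naturality of $\j{}$, and the comonoidal coherence of $\func{J}$, should transport the $\varepsilon$-square into the two $\mu$-diagrams.

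The hard part will be exactly this last chase for $\mu$: it is the only step that entangles the counit, the functor $\func{J}$, and both structure maps $\h{}$ and $\j{}$ simultaneously, and it requires some care in identifying the composite comonoidal structure on $\wn^2=(\func{JH})\circ(\func{JH})$ and reconciling it --- via the coherence for composites of symmetric comonoidal functors --- with the expression coming from $\func{J}\varepsilon\func{H}$; everything else is bookkeeping. Alternatively, one can bypass the explicit computations altogether by appealing to the general $2$-categorical fact that an adjunction in the $2$-category of symmetric comonoidal categories, symmetric comonoidal functors, and symmetric comonoidal natural transformations induces a monad in that same $2$-category, and that such a monad is by definition a symmetric comonoidal monad --- the exact dual of the reason $!=\func{FG}$ is a symmetric monoidal comonad for Benton's LNL models. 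I would present the direct verification but point out this abstract reading.
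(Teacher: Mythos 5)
Your proposal is correct and follows essentially the same route as the paper: the same composite comonoidal structure $\j{\func{H}A,\func{H}B}\circ\func{J}\h{A,B}$ on $\wn$, the same $\mu_A=\func{J}\varepsilon_{\func{H}A}$, the monad laws from the adjunction, $\eta$'s comonoidality read off directly from Definition~\ref{def:coSMCADJ}, and $\mu$'s comonoidality transported from that of $\varepsilon$ by applying $\func{J}$ and chasing with naturality of $\j{}$. The closing remark about the $2$-categorical reading is a nice observation the paper does not make explicit, but it is not needed.
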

\begin{proof}
  Suppose $(\func{H},\h{})$ and $(\func{J},\j{})$ are two symmetric
  comonoidal functors, such that, $\cat{L} : \func{H} \dashv \func{J}
  : \cat{C}$ is a dual LNL model.  We can easily show that $\wn A = \J\H
  A$ is symmetric monoidal by defining the following maps:
  \[
  \begin{array}{lll}
    \r{\perp} := \wn \perp \mto/=/ \func{JH}\perp \mto^{\func{J}\h{\perp}} \func{J}0 \mto^{\j{\perp}} \perp\\
    \r{A,B} := \wn (A \oplus B) \mto/=/ \func{JH}(A \oplus B) \mto^{\func{J}\h{A,B}} \func{J}(\func{H}A + \func{H}B) \mto^{\j{\func{H}A,\func{H}B}} \func{JH}A \oplus \func{JH}B \mto/=/ \wn A \oplus \wn B\\
  \end{array}
  \]
  The fact that these maps satisfy the appropriate symmetric
  comonoidal functor diagrams from Definition~\ref{def:coSMCFUN} is
  obvious, because symmetric comonoidal functors are closed under
  composition.  

  We have a dual LNL model, and hence, we have the symmetric comonoidal
  natural transformations $\eta_A : A \mto \J\H A$ and $\varepsilon_R
  : \H\J R \mto R$ which correspond to the unit and counit of the
  adjunction respectively.  Define $\mu_A := \J\varepsilon_{\H A} :
  \J\H\J\H A \mto \J\H A$.  This implies that we have maps $\eta_A : A
  \mto \wn A$ and $\mu_A : \wn\wn A \mto \wn A$, and thus, we can show
  that $(\wn, \eta, \mu)$ is a symmetric comonoidal monad.  All
  the diagrams defining a symmetric comonoidal monad hold by the
  structure given by the adjunction.  For the complete proof see
  Appendix~\ref{subsec:proof_of_lemma:symmetric_comonoidal_monad}.
\end{proof}

The monad from the previous result must be equipped with the
additional structure to model the right weakening and contraction
structural rules.

\begin{lemma}[Right Weakening and Contraction]
  \label{lemma:right_weakening_and_contraction}
  Given a dual LNL model $\cat{L} : \func{H} \dashv \func{J} : \cat{C}$,
  then for any $\wn A$ there are distinguished symmetric comonoidal
  natural transformations $\w{A} : \perp \mto \wn A$ and $\c{A} : \wn
  A \oplus \wn A \mto \wn A$ that form a commutative monoid, and are
  $\wn\text{-algebra}$ morphisms with respect to the canonical
  definitions of the algebras $\wn A$, $\perp$, $\wn A \oplus \wn A$.
\end{lemma}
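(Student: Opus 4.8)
The plan is to dualise Benton's construction of the weakening and contraction maps for the linear exponential $!$ in an LNL model, where one transports along the \emph{strong} symmetric monoidal left adjoint $\func{F}$ the canonical cocommutative comonoid structure that the product puts on $\func{G}A$. Here the non-linear category $(\cat{C},0,+)$ is cocartesian, so by the universal property of the coproduct (Definition~\ref{def:coCC}) every object $R$ of $\cat{C}$ is canonically a \emph{commutative $+$-monoid}, with multiplication the codiagonal $\codiag{R} := [\id_R,\id_R] : R + R \mto R$ and unit the unique map $\Box : 0 \mto R$. Although $\func{J}$ is a priori only symmetric \emph{co}monoidal, Corollary~\ref{corollary:J-SMMF} tells us that $(\func{J},\jinv{})$ is in fact symmetric monoidal (and, by Lemma~\ref{lemma:symmetric_comonoidal_isomorphisms}, strongly so, the $\jinv{}$'s being isomorphisms — this strongness is exactly the dual of Benton's use of $\func{F}$ being strong). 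A symmetric monoidal functor preserves commutative monoids, so applying $\func{J}$ to $(\func{H}A,\codiag{\func{H}A},\Box)$ and using $\wn A = \func{JH}A$ we set
\[
\begin{array}{l}
  \w{A} \;:=\; \bigl(\, \perp \mto^{\jinv{0}} \func{J}0 \mto^{\func{J}\Box} \func{JH}A \,\bigr),\\[1mm]
  \c{A} \;:=\; \bigl(\, \wn A \oplus \wn A \mto^{\jinv{\func{H}A,\func{H}A}} \func{J}(\func{H}A + \func{H}A) \mto^{\func{J}\codiag{\func{H}A}} \func{JH}A \,\bigr).
\end{array}
\]

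First I would check that $(\wn A,\c{A},\w{A})$ is a commutative $\oplus$-monoid: this amounts to pasting the (co)diagonal identities of $\cat{C}$ against the monoidal coherence squares of $(\func{J},\jinv{})$, and is purely formal. Naturality of $\w{}$ and $\c{}$ in $A$ is immediate, since each is a composite of maps natural in $A$: $\jinv{}$ is natural by Lemma~\ref{lemma:symmetric_comonoidal_isomorphisms}, $\Box$ and $\codiag{}$ are natural in the cocartesian category $\cat{C}$, and $\func{H}$, $\func{J}$ are functors.

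For the claim that $\w{}$ and $\c{}$ are \emph{symmetric comonoidal} natural transformations (Definition~\ref{def:coSMCNAT}), I would give the constant functor at $\perp$ its evident comonoidal structure and $\wn(-)\oplus\wn(-)$ the comonoidal structure assembled from the maps $\r{}$ of the comonoidal monad $\wn$ (Lemma~\ref{lemma:symmetric_comonoidal_monad}) together with the symmetry and associativity isomorphisms of $\oplus$; the two required squares then follow by diagram chasing from the comonoidal-functor coherences of $\func{H}$ and $\func{J}$ and the interaction of $\h{}$, $\codiag{}$, $\Box$ with the coprojections. This is routine but tedious, so I would defer the diagrams to the appendix.

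The main obstacle is the last clause: that $\w{A}$ and $\c{A}$ are morphisms of $\wn$-algebras, for the canonical structures $(\wn A,\mu_A)$, $(\perp,\r{\perp})$ (note $(\perp,\r{\perp})$ is an algebra by the unit and associativity coherences of Definition~\ref{def:symm-comonoidal-monad}), and $\wn A\oplus\wn A$ with the cotensor-of-free-algebras structure $(\mu_A\oplus\mu_A)\circ\r{\wn A,\wn A}$. This does not follow from monoidal-functor generalities alone, since it involves the concrete data $\mu_A = \func{J}\varepsilon_{\func{H}A}$. The plan is to reduce it to two observations about the adjunction $\func{H}\dashv\func{J}$: first, $\func{J}$ factors through the Eilenberg--Moore category of $\wn$ — each $\func{J}R$ carries the canonical $\wn$-algebra structure $\func{J}\varepsilon_R:\wn\func{J}R\mto\func{J}R$, and for every $f:R\mto S$ in $\cat{C}$ the map $\func{J}f$ is a $\wn$-algebra morphism by naturality of $\varepsilon$, so $\func{J}\Box$ and $\func{J}\codiag{\func{H}A}$ are algebra morphisms; second, it remains only to see that $\jinv{0}$ and $\jinv{\func{H}A,\func{H}A}$ are themselves $\wn$-algebra morphisms from $(\perp,\r{\perp})$ and from $\wn A\oplus\wn A$ (cotensor-of-free structure) into $\func{J}0$, $\func{J}(\func{H}A+\func{H}A)$ with their canonical structures, after which $\w{A}$ and $\c{A}$ are composites of algebra morphisms. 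The case of $\jinv{0}$ is a short calculation; the delicate one is $\jinv{\func{H}A,\func{H}A}$, which unwinds — via the definition of $\jinv{}$ through $\eta$ and $\h{}$, the triangle identities, naturality of $\varepsilon$, and the compatibility of $\r{}$ with $\mu$ in Definition~\ref{def:symm-comonoidal-monad} — to the statement that $\jinv{R,S}$ is natural in the relevant $\wn$-algebra sense. I expect this verification, together with the comonoidal-naturality diagrams, to be the bulk of the work; everything else is bookkeeping. Note that the lemma, once proved, supplies precisely the first half of part (ii)(a) of the definition of a dual linear category (Definition~\ref{def:dual-linear-cat}).
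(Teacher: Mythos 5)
Your proposal is correct and follows essentially the same route as the paper: the maps $\w{A}$ and $\c{A}$ are defined identically (transporting the canonical commutative $+$-monoid $(\func{H}A,\Box,\codiag{\func{H}A})$ along the symmetric monoidal $(\func{J},\jinv{})$ of Corollary~\ref{corollary:J-SMMF}), the comonoidal-naturality check uses the same auxiliary structures on the constant-$\perp$ and $\wn(-)\oplus\wn(-)$ functors, and the same three algebras appear in the final clause. The only difference is organizational: where you package the $\wn$-algebra-morphism verification as a composite of algebra morphisms via the Eilenberg--Moore factorization of $\func{J}$, the paper verifies the two squares by one expanded diagram chase resting on exactly the same inputs (naturality of $\varepsilon$, the triangle identities, and $\jinv{}$ being an isomorphism).
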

\begin{proof}
  Suppose $(\func{H},\h{})$ and $(\func{J},\j{})$ are two symmetric
  comonoidal functors, such that, $\cat{L} : \func{H} \dashv \func{J}
  : \cat{C}$ is a dual LNL model.  Again, we know $\wn A = H;J : \cat{L}
  \mto \cat{L}$ is a symmetric comonoidal monad by
  Lemma~\ref{lemma:symmetric_comonoidal_monad}.  
  
  We define the following morphisms:
  \[
  \begin{array}{lll}
    \w{A} := \perp \mto^{\jinv{\perp}} \func{J} 0 \mto^{\func{J}\diamond_{\func{H} A}} \func{JH}A \mto/=/ \wn A\\
    \c{A} := \wn A \oplus \wn A \mto/=/ \func{JH}A \oplus \func{JH}A \mto^{\jinv{\func{H}A,\func{H}A}} \func{J}(\func{H}A + \func{H}A) \mto^{\func{J}\codiag{\func{H}A}} \func{JHA} \mto/=/ \wn A
  \end{array}
  \]
  The remainder of the proof is by carefully checking all of the
  required diagrams.  Please see
  Appendix~\ref{subsec:proof_of_lemma:right_weakening_and_contraction}
  for the complete proof.
\end{proof}

\begin{lemma}[$\wn$-Monoid Morphisms]
  \label{lemma:monoid-morphism}
  Suppose $\cat{L} : \func{H} \dashv \func{J} : \cat{C}$ is a dual LNL
  model.  Then if $f : (\wn A, \mu_A) \mto (\wn B, \mu_B)$ is a
  morphism of free $\wn$-algebras, then it is a monoid morphism.
\end{lemma}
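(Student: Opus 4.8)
The plan is to dualize Benton's argument that morphisms between free $!$-coalgebras are automatically comonoid morphisms. The pivotal observation is that, by Lemma~\ref{lemma:symmetric_comonoidal_monad}, the monad multiplication is literally $\mu_A = \func{J}\varepsilon_{\func{H}A}$, and consequently \emph{every} morphism $f$ of free $\wn$-algebras $(\wn A, \mu_A) \mto (\wn B, \mu_B)$ is itself of the form $\func{J}k$ for a suitable morphism $k$ of $\cat{C}$. Concretely, I would set $h := f \circ \eta_A : A \mto \wn B$ and $k := \varepsilon_{\func{H}B} \circ \func{H}h : \func{H}A \mto \func{H}B$ (the transpose of $h$ across $\func{H} \dashv \func{J}$). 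Using the algebra law $f \circ \mu_A = \mu_B \circ \wn f$ together with the monad unit law $\mu_A \circ \wn\eta_A = \id_{\wn A}$ one obtains $f = \mu_B \circ \wn h$, and since $\mu_B = \func{J}\varepsilon_{\func{H}B}$ and $\wn h = \func{J}\func{H}h$ this rewrites to $f = \func{J}(\varepsilon_{\func{H}B} \circ \func{H}h) = \func{J}k$. This short computation, leaning essentially on Lemma~\ref{lemma:symmetric_comonoidal_monad}, is the one step that needs to be done with care.

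Next I would recall the standard fact about the cocartesian category $(\cat{C},0,+)$ of Definition~\ref{def:coCC}: every object $R$ carries a canonical commutative $+$-monoid $(R, \Box_R, \codiag{R})$, with unit the unique map $\Box_R : 0 \mto R$ out of the initial object and multiplication the codiagonal $\codiag{R} = [\id_R, \id_R] : R + R \mto R$; and, crucially, every morphism of $\cat{C}$ is automatically a monoid morphism for this structure. Indeed, uniqueness of maps out of $0$ gives $k \circ \Box_{\func{H}A} = \Box_{\func{H}B}$, and a one-line coproduct calculation gives $k \circ \codiag{\func{H}A} = [k,k] = \codiag{\func{H}B} \circ (k + k)$.

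Then I would invoke Corollary~\ref{corollary:J-SMMF}: the functor $(\func{J}, \jinv{})$ is a symmetric monoidal functor from $(\cat{C},0,+)$ to $(\cat{L},\perp,\oplus)$, with structure maps $\jinv{0} : \perp \mto \func{J}0$ and $\jinv{R,S} : \func{J}R \oplus \func{J}S \mto \func{J}(R+S)$. Unwinding the definitions of $\w{A}$ and $\c{A}$ from Lemma~\ref{lemma:right_weakening_and_contraction} shows that $(\wn A, \w{A}, \c{A}) = (\func{J}\func{H}A,\ \func{J}\Box_{\func{H}A} \circ \jinv{0},\ \func{J}\codiag{\func{H}A} \circ \jinv{\func{H}A,\func{H}A})$ is precisely the image under $(\func{J},\jinv{})$ of the canonical monoid on $\func{H}A$. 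Since a (lax) monoidal functor sends monoid morphisms to monoid morphisms, and $k : \func{H}A \mto \func{H}B$ is a monoid morphism in $\cat{C}$, the morphism $f = \func{J}k$ is a monoid morphism $(\wn A, \w{A}, \c{A}) \mto (\wn B, \w{B}, \c{B})$ in $\cat{L}$, which is the claim.

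If one prefers to avoid the monoidal-functor language, the same proof can be presented as two direct diagram chases, and this is the concrete content of the last step: $f \circ \w{A} = \func{J}(k \circ \Box_{\func{H}A}) \circ \jinv{0} = \func{J}\Box_{\func{H}B} \circ \jinv{0} = \w{B}$ (uniqueness of the initial map), and $f \circ \c{A} = \func{J}(k \circ \codiag{\func{H}A}) \circ \jinv{\func{H}A,\func{H}A} = \func{J}\codiag{\func{H}B} \circ \func{J}(k+k) \circ \jinv{\func{H}A,\func{H}A} = \func{J}\codiag{\func{H}B} \circ \jinv{\func{H}B,\func{H}B} \circ (f \oplus f) = \c{B} \circ (f \oplus f)$, the middle equality using $k \circ \codiag{\func{H}A} = \codiag{\func{H}B} \circ (k+k)$ and the last step before the conclusion using naturality of $\jinv{}$ from Lemma~\ref{lemma:symmetric_comonoidal_isomorphisms}. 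I expect no deep obstacle here; the only mildly fiddly part is the bookkeeping that identifies the abstract "image of the canonical cocartesian monoid" with the paper's explicit formulas for $\w{A}$ and $\c{A}$, and verifying that $f = \func{J}k$.
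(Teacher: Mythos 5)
Your proof is correct, but it takes a genuinely different route from the paper's. The paper appeals to Proposition~25 of Maietti, Maneggia, de~Paiva and Ritter to reduce the claim to showing that the single map $\mu_A : \wn\wn A \mto \wn A$ is a monoid morphism, and then verifies the two diagrams for $\mu_A$ directly (the unit triangle by uniqueness of the initial map, the multiplication square by naturality of $\jinv{}$ and of the codiagonal). You instead handle an \emph{arbitrary} morphism of free $\wn$-algebras head-on: the computation $f = \mu_B \circ \wn(f \circ \eta_A) = \func{J}(\varepsilon_{\func{H}B} \circ \func{H}(f\circ\eta_A))$ is valid (it is the standard identification of free-algebra morphisms with Kleisli morphisms, made concrete by the fact that $\mu_B = \func{J}\varepsilon_{\func{H}B}$ here), every map of $\cat{C}$ is automatically a morphism of the canonical cocartesian monoids $(\,\cdot\,,\Box,\codiag{})$, and $(\w{A},\c{A})$ is by inspection the image of that canonical monoid under the monoidal functor $(\func{J},\jinv{})$ of Corollary~\ref{corollary:J-SMMF}. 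What your approach buys is self-containedness --- no external proposition is needed, and the argument explains structurally \emph{why} the lemma holds (free algebras and their morphisms all come from $\cat{C}$, where everything is a monoid morphism). What the paper's approach buys is brevity at the point of use and consistency with Bellin's original presentation; the diagram-chasing ingredients (uniqueness of initial maps, naturality of $\jinv{}$, the coproduct identity $k\circ\codiag{}=\codiag{}\circ(k+k)$) are essentially the same in both. Your final two displayed chases are correct as written, so either the abstract monoidal-functor phrasing or the explicit version would serve as a complete proof.
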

\begin{proof}
  Suppose $\cat{L} : \func{H} \dashv \func{J} : \cat{C}$ is a dual LNL
  model.  Then we know $\wn A = \J\H A$ is a symmetric comonoidal
  monad by Lemma~\ref{lemma:symmetric_comonoidal_monad}.  Bellin
  \cite{Bellin:2012} remarks that by Maietti, Maneggia de Paiva and
  Ritter's Proposition~25 \cite{Maietti2005}, it suffices to show that
  $\mu_A : \wn\wn A \mto \wn A$ is a monoid morphism.  For the details
  see the complete proof in
  Appendix~\ref{sec:proof_of_lemma:monoid-morphism}.
\end{proof}

\noindent
Finally, we may now conclude the following corollary.
\begin{corollary}
  \label{corollary:dual_lnl_model_implies_dual_category}
  Every dual LNL model is a dual linear category.
\end{corollary}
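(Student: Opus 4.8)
The plan is simply to assemble the three preceding lemmas and observe that together they produce exactly the data called for by Definition~\ref{def:dual-linear-cat}. So I would start from a dual LNL model $\cat{L} : \func{H} \dashv \func{J} : \cat{C}$ and note that clause~(i) of Definition~\ref{def:dual LNL-model} already hands us a symmetric monoidal coclosed category $(\cat{L},\perp,\oplus,\colimp)$, which is precisely clause~(i) of the definition of a dual linear category; no work is needed there.

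For clause~(ii), I would invoke Lemma~\ref{lemma:symmetric_comonoidal_monad} to get that $\wn = \func{H};\func{J}$, together with $\eta$ and $\mu := \func{J}\varepsilon_{\func{H}-}$, is a symmetric comonoidal monad on $\cat{L}$. Then clause~(ii.a) is exactly the content of Lemma~\ref{lemma:right_weakening_and_contraction}: it supplies the symmetric comonoidal natural transformations $\w{A} : \perp \mto \wn A$ and $\c{A} : \wn A \oplus \wn A \mto \wn A$, shows that they make each free $\wn$-algebra a commutative $\oplus$-monoid, and verifies that they are $\wn$-algebra morphisms with respect to the canonical algebra structures. Clause~(ii.b) is Lemma~\ref{lemma:monoid-morphism}, which says that every morphism of free $\wn$-algebras is a monoid morphism. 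Concatenating these observations, a dual LNL model satisfies all the axioms of Definition~\ref{def:dual-linear-cat}, and hence is a dual linear category.

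The only bookkeeping worth flagging is to check that the commutative $\oplus$-monoid structure referenced in Lemma~\ref{lemma:monoid-morphism} is literally the one built in Lemma~\ref{lemma:right_weakening_and_contraction}, and that ``free $\wn$-algebra'' is being used here in the same Eilenberg--Moore sense as in Bellin's \cite{Bellin:2012} definition; both points are immediate. There is no genuine obstacle in this corollary: all of the mathematical substance has been discharged in the three lemmas that precede it, and the corollary is a one-line consequence of collating them against Definition~\ref{def:dual-linear-cat}.
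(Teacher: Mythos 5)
Your proposal is correct and matches the paper exactly: the corollary is stated there as an immediate consequence of Lemmas~\ref{lemma:symmetric_comonoidal_monad}, \ref{lemma:right_weakening_and_contraction}, and \ref{lemma:monoid-morphism}, collated against Definition~\ref{def:dual-linear-cat}, with no further argument given. Your additional bookkeeping remarks are sensible but, as you note, immediate.
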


\subsubsection{Dual Linear Category implies Dual LNL Model}
\label{subsec:dual_category_implies_dual_lnl_model}
This section shows essentially the inverse to the result from the
previous section.  That is, from any dual linear category we may
construct a dual LNL model.  By exploiting the duality between LNL
models and dual LNL models this result follows straightforwardly from
Benton's result. The proof of this result must first find a symmetric
monoid coclosed category, a cocartesian coclosed category, and
finally, a symmetric comonoidal adjunction between them.  Take the
symmetric monoid coclosed category to be an arbitrary dual linear
category $\cat{L}$.  Then we may define the following categories.
\begin{itemize}
\item The Eilenberg-Moore category, $\cat{L}^{\wn}$, has as objects
  all $\wn$-algebras, $(A, h_A : \wn A \mto A)$, and as morphisms all
  $\wn$-algebra morphisms.
\item The Kleisli category, $\cat{L}_{\wn}$, is the full subcategory
  of $\cat{L}^{\wn}$ of all free $\wn$-algebras $(\wn A, \mu_A :
  \wn\wn A \mto \wn A)$.
\end{itemize}
\noindent
The previous three categories are related by a pair of adjunctions:
\begin{diagram}
  \morphism(0,0)|m|/{@{->}@/^1em/}/<800,0>[\cat{L}`\cat{L}^{\wn};F]
  \morphism(0,0)|m|/{@{<-}@/_1em/}/<800,0>[\cat{L}`\cat{L}^{\wn};U]

  \morphism(0,-500)|m|/{@{->}@/^1em/}/<800,0>[\cat{L}`\cat{L}_{\wn};F]
  \morphism(0,-500)|m|/{@{<-}@/_1em/}/<800,0>[\cat{L}`\cat{L}_{\wn};U]

  \square|amma|/`=`->`/<800,-500>[
    \cat{L}`
    \cat{L}_{\wn}`
    \cat{L}`
    \cat{L}^{\wn};``
    i`]
\end{diagram}
The functor $F(A) = (\wn A, \mu_A)$ is the free functor, and the
functor $U(A, h_A) = A$ is the forgetful functor.  Note that we, just
as Benton did, are overloading the symbols $F$ and $U$.  Lastly, the
functor $i : \cat{L}_{\wn} \mto \cat{L}^{\wn}$ is the injection of the
subcategory of free $\wn$-algebras into its parent category.  

We are now going to show that both $\cat{L}^{\wn}$ and $\cat{L}_{\wn}$
are induce two cocartesian coclosed categories.  Then we could take
either of those when constructing a dual LNL model from a dual linear
category.  First, we show $\cat{C}^{\wn}$ is cocartesian.
\begin{lemma}
  \label{lemma:EM-has-coproducts}
  If $\cat{L}$ is a dual linear category, then $\cat{L}^{\wn}$ has finite coproducts.
\end{lemma}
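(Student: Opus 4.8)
The plan is to establish the two constituents of finite coproducts separately: an initial object of $\cat{L}^{\wn}$ and binary coproducts. The guiding principle is to dualize the standard argument -- due to Bierman~\cite{Bierman:1994} and used by Benton~\cite{Benton:1994} -- that the Eilenberg--Moore category of the comonad $!$ of a linear category is cartesian with products computed by $\otimes$; here ``cartesian, products via $\otimes$'' becomes ``cocartesian, coproducts via $\oplus$''. Throughout I use that $\wn = \func{H};\func{J}$ is a symmetric comonoidal monad (Lemma~\ref{lemma:symmetric_comonoidal_monad}), so $\wn$ carries the comonoidal structure maps $\r{\perp} : \wn\perp \mto \perp$ and $\r{A,B} : \wn(A \oplus B) \mto \wn A \oplus \wn B$, and that each free algebra $(\wn A,\mu_A)$ carries a commutative $\oplus$-monoid structure $(\w{A},\c{A})$ whose components are $\wn$-algebra morphisms (Lemma~\ref{lemma:right_weakening_and_contraction}). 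For the initial object I would take $(\perp,\r{\perp})$, with $\r{\perp}$ serving as the $\wn$-algebra structure map; its two algebra laws are exactly the unit and multiplication coherences coming from $\eta$ and $\mu$ being comonoidal natural transformations (Definition~\ref{def:symm-comonoidal-monad}). Given any algebra $(A,h_A)$, the map $\perp \mto A$ in $\cat{L}^{\wn}$ will be the monoid unit $u_A$ of $A$; existence is clear, and uniqueness follows because every $\wn$-algebra morphism out of $(\perp,\r{\perp})$ is a $\oplus$-monoid morphism and hence equals the (unique) monoid unit.

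For binary coproducts of $(A,h_A)$ and $(B,h_B)$, I would equip $A \oplus B$ with the structure map $h_{A\oplus B} := (h_A \oplus h_B) \circ \r{A,B}$; the algebra laws here use naturality of $\r{}$ together with the comonoidal-monad coherences for $\eta$ and $\mu$. The injections are $\iota_1 := (\id_A \oplus u_B) \circ \rho_A^{-1}$ and $\iota_2 := (u_A \oplus \id_B) \circ \lambda_B^{-1}$, where $u_A : \perp \mto A$ and $u_B : \perp \mto B$ are the maps out of the initial algebra; each is a $\wn$-algebra morphism because $u_A,u_B$ are, because $\rho^{-1},\lambda^{-1}$ are (a short chase with the left/right-unit squares of the comonoidal functor $\wn$ from Definition~\ref{def:coSMCFUN}), and because $\oplus$ of $\wn$-algebra morphisms is a $\wn$-algebra morphism (immediate from naturality of $\r{}$). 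For the universal property, given $\wn$-algebra morphisms $f : (A,h_A)\mto(C,h_C)$ and $g : (B,h_B)\mto(C,h_C)$, the copairing is $[f,g] := m_C \circ (f \oplus g)$ with $m_C : C \oplus C \mto C$ the monoid multiplication of $C$; the equations $[f,g]\circ\iota_1 = f$ and $[f,g]\circ\iota_2 = g$ reduce to the monoid unit laws on $C$, and uniqueness of $[f,g]$ is the familiar commutative-monoid computation. In short, once $\wn$-algebras are identified with commutative $\oplus$-monoids, what is being used is just the fact that commutative monoids in a symmetric monoidal category have coproducts given by the tensor, read off in $\cat{L}$ with $\oplus$ playing that role.

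The step I expect to be the real work -- and it underpins both the initial object and the copairing above -- is establishing that \emph{every} $\wn$-algebra $(C,h_C)$, not merely the free ones, carries a canonical commutative $\oplus$-monoid structure $(u_C : \perp \mto C,\; m_C : C \oplus C \mto C)$, and that \emph{every} $\wn$-algebra morphism is a $\oplus$-monoid morphism. This is the exact dual of the well-known fact that the comonad of a linear category factors through commutative comonoids; the proof transports the monoid structure of the free algebra $(\wn C,\mu_C)$ (Lemma~\ref{lemma:right_weakening_and_contraction}) along $h_C$, using Lemma~\ref{lemma:monoid-morphism} to see that the comparison maps involved are monoid morphisms, so that the transported structure is well defined and obeys the monoid laws; the remaining checks are then routine coherence chases. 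As an alternative that avoids re-deriving this fact, one can argue purely by duality at the level of the whole construction: $\catop{L}$ equipped with the opposite comonad $\wn^{\mathrm{op}}$ is a linear category in Bierman/Benton's sense (the unit $\perp$ of $\cat{L}$ becoming the tensor unit of $\catop{L}$, free $\wn$-algebras becoming cofree coalgebras), and since $(\catop{L})^{\wn^{\mathrm{op}}} = (\cat{L}^{\wn})^{\mathrm{op}}$, Benton's result that this category is cartesian delivers finite coproducts in $\cat{L}^{\wn}$ -- at the cost of carefully matching the two formulations of ``linear category''.
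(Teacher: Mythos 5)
Your construction coincides with the paper's proof sketch: the same initial object $(\perp,\r{\perp})$, the same unique map $h_A\circ\w{A}$ (your $u_A$), the same coproduct $(A\oplus B,\ \r{A,B};(h_A\oplus h_B))$ with the same injections and the same codiagonal $h_A\circ\c{A}\circ(\eta_A\oplus\eta_A)$ serving as your $m_A$. Your added observation that the verification hinges on every $\wn$-algebra carrying a transported commutative $\oplus$-monoid structure (with algebra morphisms being monoid morphisms) is a fair elaboration of what the paper dismisses as ``straightforward,'' but the route is essentially identical.
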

\begin{proof}
  We give a proof sketch of this result, because the proof is
  essentially by duality of Benton's corresponding proof for LNL
  models (see Lemma~9, \cite{Benton:1994}). Suppose $\cat{L}$ is a
  dual linear category.  Then we first need to identify the initial object
  which is defined by the $\wn$-algebra $(\perp, \r{\perp} : \wn \perp
  \mto \perp)$.  The unique map between the initial map and any other
  $\wn$-algebra $(A, h_A : \wn A \mto A)$ is defined by $\perp
  \mto^{\w{A}} \wn A \mto^{h_A} A$.  The coproduct of the
  $\wn$-algebras $(A, h_A : \wn A \mto A)$ and $(B, h_B : \wn B \mto
  B)$ is $(A \oplus B, \r{A,B};(h_A \oplus h_B))$.  Injections and the
  codiagonal map are defined as follows:
  \begin{itemize}
  \item Injections:
    \[
    \begin{array}{lll}
      \iota_1 := A \mto^{\rho_A} A \oplus \perp \mto^{\id_A \oplus \w{B}} A \oplus \wn B \mto^{\id \oplus h_b} A \oplus B\\
      \iota_2 := B \mto^{\lambda_A} \perp \oplus B \mto^{\w{A} \oplus \id_B} \wn A \oplus B \mto^{h_A \oplus \id_B} A \oplus B\\
    \end{array}
    \]
  \item Codiagonal map:
    \[
    \codiag{} := A \oplus A \mto^{\eta_A \oplus \eta_A} \wn A \oplus \wn A \mto^{\c{A}} \wn A \mto^{h_A} A
    \]
  \end{itemize}
  Showing that these respect the appropriate diagrams is
  straightforward.
\end{proof}
\noindent
Notice as a direct consequence of the previous result we know the following.
\begin{corollary}
  \label{corollary:Kleisli-has-coproducts}
  The Kleisli category, $\cat{L}_{\wn}$, has finite coproducts.
\end{corollary}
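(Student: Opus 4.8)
The plan is to obtain the finite coproducts of $\cat{L}_{\wn}$ by restricting those of $\cat{L}^{\wn}$ established in Lemma~\ref{lemma:EM-has-coproducts}. Since $\cat{L}_{\wn}$ is, by definition, the full subcategory of $\cat{L}^{\wn}$ spanned by the free $\wn$-algebras $(\wn A,\mu_A)$ and the inclusion $i : \cat{L}_{\wn} \to \cat{L}^{\wn}$ is full and faithful, it suffices to verify that the initial object of $\cat{L}^{\wn}$ and the $\cat{L}^{\wn}$-coproduct of any two free $\wn$-algebras are themselves free; the coproduct universal properties then transport back along $i$ verbatim, and the injections and codiagonal maps are exactly those exhibited in the proof of Lemma~\ref{lemma:EM-has-coproducts}, specialised to free algebras.

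For the initial object, Lemma~\ref{lemma:EM-has-coproducts} gives $(\perp,\r{\perp})$, and Lemma~\ref{lemma:symmetric_comonoidal_isomorphisms} gives $\perp \cong \func{J}0$; the cleanest way to see that this is a free algebra is to use the comparison functor $\Phi : \cat{C} \to \cat{L}^{\wn}$, $\Phi(R) = (\func{J}R,\func{J}\varepsilon_R)$, which satisfies $\Phi \circ \func{H} = F$ (because $\mu_A = \func{J}\varepsilon_{\func{H}A}$) and sends $0 \in \cat{C}$ to $(\perp,\r{\perp})$. For a binary coproduct, take free algebras $FA$ and $FB$; by Lemma~\ref{lemma:EM-has-coproducts} their coproduct is $\bigl(\wn A \oplus \wn B,\ \r{\wn A,\wn B};(\mu_A \oplus \mu_B)\bigr)$, and by Lemma~\ref{lemma:symmetric_comonoidal_isomorphisms} together with Corollary~\ref{corollary:J-SMMF} the underlying object is $\wn A \oplus \wn B = \func{J}\func{H}A \oplus \func{J}\func{H}B \cong \func{J}(\func{H}A + \func{H}B)$, where $\func{H}A + \func{H}B$ denotes the coproduct of $\func{H}A$ and $\func{H}B$ in the cocartesian category $\cat{C}$. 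Transporting the algebra structure along this isomorphism and matching it with $\func{J}\varepsilon_{\func{H}A + \func{H}B}$ identifies $FA + FB$ in $\cat{L}^{\wn}$ with $\Phi(\func{H}A + \func{H}B)$; since $\func{H}$ is a left adjoint it preserves the coproduct of $A$ and $B$ formed in $\cat{L}$, so $\func{H}A + \func{H}B \cong \func{H}(A +_{\cat{L}} B)$ and hence $\Phi(\func{H}A + \func{H}B) \cong F(A +_{\cat{L}} B)$ is again free. (If $\cat{L}$ is not assumed to carry finite coproducts, the same conclusion is reached by checking directly that $\Phi(\func{H}A + \func{H}B)$ represents the functor $D \mapsto \Hom{\cat{L}_{\wn}}{FA}{D} \times \Hom{\cat{L}_{\wn}}{FB}{D}$.)

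The step I expect to be the main obstacle is this last piece of bookkeeping: one must show that the object-level isomorphism $\wn A \oplus \wn B \cong \func{J}(\func{H}A + \func{H}B)$ coming from Lemma~\ref{lemma:symmetric_comonoidal_isomorphisms} is in fact an isomorphism in $\cat{L}^{\wn}$, i.e.\ that it intertwines the algebra map $\r{\wn A,\wn B};(\mu_A \oplus \mu_B)$ with $\func{J}\varepsilon_{\func{H}A + \func{H}B}$. Unwinding the definitions of $\r{}$, $\j{}$ and $\mu$, this reduces to a diagram chase in the comonoidal structure maps of $\func{H}$ and $\func{J}$, the natural bijection $\j{}$ of Lemma~\ref{lemma:symmetric_comonoidal_isomorphisms}, and the counit $\varepsilon$ of the adjunction, and it is precisely the dual of the verification Benton~\cite{Benton:1994} performs in showing that the co-Kleisli category of a linear category is cartesian closed. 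Once that isomorphism is in place there is nothing left to check, since the coproduct diagrams already commute in $\cat{L}^{\wn}$ by Lemma~\ref{lemma:EM-has-coproducts}.
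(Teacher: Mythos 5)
Your overall strategy---compute the coproducts in $\cat{L}^{\wn}$ as in Lemma~\ref{lemma:EM-has-coproducts} and then check that the relevant objects are (isomorphic to) free algebras, so that the universal property restricts along the full inclusion $i$---is the right shape of argument, and since the paper offers nothing beyond ``direct consequence'' you are filling a genuine gap. The problem is that neither freeness check goes through. The comparison functor $\Phi(R)=(\func{J}R,\func{J}\varepsilon_R)$ does satisfy $\Phi\circ\func{H}=F$, but that identity says the free algebras are the objects $\Phi(\func{H}X)$, i.e.\ the image of $\Phi$ restricted to the image of $\func{H}$; it does not make every object in the image of $\Phi$ free. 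For the initial object you would need $0\cong\func{H}X$ for some $X$ of $\cat{L}$, and for the binary case you would need $\func{H}A+\func{H}B\cong\func{H}X$ for some $X$. The paper explicitly warns that there are no isomorphisms with respect to $\func{H}$ analogous to those of Lemma~\ref{lemma:symmetric_comonoidal_isomorphisms}, so $\func{H}\perp\not\cong 0$ in general, and your appeal to ``$\func{H}$ preserves the coproduct of $A$ and $B$ formed in $\cat{L}$'' presupposes that $\cat{L}$ has binary coproducts, which is not part of being a symmetric monoidal coclosed category. The parenthetical fallback does not repair this: showing that $\Phi(\func{H}A+\func{H}B)$ represents $D\mapsto\mathsf{Hom}(FA,D)\times\mathsf{Hom}(FB,D)$ only re-establishes that it is the coproduct in $\cat{L}^{\wn}$; a coproduct in a full subcategory must in addition be an object of that subcategory, and that membership is exactly what is missing.

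The step you single out as the ``main obstacle''---checking that $\wn A\oplus\wn B\cong\func{J}(\func{H}A+\func{H}B)$ intertwines $\r{\wn A,\wn B};(\mu_A\oplus\mu_B)$ with $\func{J}\varepsilon_{\func{H}A+\func{H}B}$---is a necessary but routine diagram chase; the hard part is freeness, and I do not see how to obtain it from the stated hypotheses. This is precisely why the paper goes on, in Lemma~\ref{lemma:subtractable-subcats}(ii), to introduce the category $\cat{L}^{*}_{\wn}$ of \emph{finite coproducts of} free $\wn$-algebras: if the free algebras were already closed under finite coproducts in $\cat{L}^{\wn}$, that category would coincide with $\cat{L}_{\wn}$ and there would be no reason to name it. To get a fully justified statement, either add the hypothesis that $\cat{L}$ itself has finite coproducts---in which case the identity-on-objects left adjoint $\cat{L}\to\cat{L}_{\wn}$ transports them directly and Lemma~\ref{lemma:EM-has-coproducts} is not needed at all---or prove the claim for $\cat{L}^{*}_{\wn}$ rather than for $\cat{L}_{\wn}$.
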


Thus, both $\cat{L}^{\wn}$ and $\cat{L}_{\wn}$ are cocartesian, but we
need a cocartesian coclosed category, and in general these are not
coclosed, and so we follow Benton's lead and show that there are
actually two subcategories of $\cat{L}^{\wn}$ that are coclosed.
\begin{definition}
  \label{def:subtractable}
  We call an object, $A$, of a category, $\cat{L}$,
  \textbf{subtractable} if for any object $B$ of $\cat{L}$, the
  internal cohom $A \colimp B$ exists.
\end{definition}
\noindent
We now have the following results:
\begin{lemma}
  \label{lemma:free-algebras-are-subtractable}
  In $\cat{L}^{\wn}$, all the free $\wn$-algebras are subtractable, and the
  internal cohom is a free $\wn$-algebra.
\end{lemma}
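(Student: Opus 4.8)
The plan is to carry out the dual of Benton's argument that in the Eilenberg--Moore category of the comonad $!$ every free coalgebra is exponentiable and its exponential is again free. Two facts established earlier do essentially all the work. First, $\cat{L}^{\wn}$ is the Eilenberg--Moore category of the monad $\wn$ on $\cat{L}$, so the free functor $F : \cat{L} \mto \cat{L}^{\wn}$, $FA = (\wn A, \mu_A)$, is left adjoint to the forgetful functor $U$, giving a bijection $\Hom{\cat{L}^{\wn}}{FA}{Y} \cong \Hom{\cat{L}}{A}{UY}$ natural in the $\wn$-algebra $Y$. Second, by Lemma~\ref{lemma:EM-has-coproducts} the coproduct in $\cat{L}^{\wn}$ of $(A,h_A)$ and $(B,h_B)$ is $(A \oplus B,\, \r{A,B};(h_A \oplus h_B))$; in particular its underlying object is the cotensor $A \oplus B$, so $U$ sends coproducts in $\cat{L}^{\wn}$ to cotensors of underlying objects on the nose.

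Fix a free $\wn$-algebra $FA = (\wn A, \mu_A)$ and an arbitrary $\wn$-algebra $(B,h_B)$. I take as the candidate internal cohom the free $\wn$-algebra $FA \colimp (B,h_B) := F(A \colimp B) = (\wn(A \colimp B),\, \mu_{A \colimp B})$, where $A \colimp B$ is the internal cohom computed in the symmetric monoidal coclosed category $\cat{L}$ (which exists because $\cat{L}$ is coclosed). Note that this depends only on the underlying object $B$, not on the algebra map $h_B$ --- exactly as Benton's exponential of a free coalgebra depends only on the exponent's underlying object. To see that $F(A \colimp B)$ really is the cohom, it suffices by Yoneda to produce a bijection $\Hom{\cat{L}^{\wn}}{F(A \colimp B)}{Y} \cong \Hom{\cat{L}^{\wn}}{FA}{Y \sqcup (B,h_B)}$ natural in $Y$, where $\sqcup$ denotes the coproduct of $\cat{L}^{\wn}$. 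I obtain it by composing four natural bijections: $\Hom{\cat{L}^{\wn}}{FA}{Y \sqcup (B,h_B)} \cong \Hom{\cat{L}}{A}{U(Y \sqcup (B,h_B))} = \Hom{\cat{L}}{A}{UY \oplus B} \cong \Hom{\cat{L}}{A \colimp B}{UY} \cong \Hom{\cat{L}^{\wn}}{F(A \colimp B)}{Y}$, using the free/forgetful adjunction, then Lemma~\ref{lemma:EM-has-coproducts}, then coclosure of $\cat{L}$, then the free/forgetful adjunction again. As a composite of natural bijections this is natural in $Y$, so $F(A \colimp B)$ is the internal cohom $FA \colimp (B,h_B)$; since it is a free $\wn$-algebra, both assertions of the lemma follow (and, since the $(B,h_B)$-slot is arbitrary, every free $\wn$-algebra is subtractable).

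The only place that needs real care --- hence the main obstacle --- is verifying that the composite hom-set bijection is genuinely the coclosure adjunction for the \emph{coproduct} structure of $\cat{L}^{\wn}$ (the one built in Lemma~\ref{lemma:EM-has-coproducts} from the comonoidal maps $\r{}$, $\w{}$, $\c{}$), rather than a mere numerical coincidence of hom-sets: one must trace the coproduct injections and the codiagonal through the four steps and read off the corresponding cofactoring map, checking that it witnesses the universal property. I would also make explicit that the middle bijection is natural in $Y$ because $UY$ ranges over all of $\cat{L}$ as $Y$ ranges over $\cat{L}^{\wn}$ and $U$ is functorial, and that $F$, being a left adjoint, transports the cohom adjunction of $\cat{L}$ to $\cat{L}^{\wn}$. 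All of this is a routine dualization of the corresponding argument in \cite{Benton:1994}.
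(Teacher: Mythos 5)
Your proof is correct and is essentially the paper's own argument: the paper also takes the candidate cohom to be the free algebra $F(A \colimp B) = (\wn(A\colimp B),\mu_{A\colimp B})$ and establishes the universal property by the same chain of natural bijections through the free/forgetful adjunction, the coclosure of $\cat{L}$, and the coproduct of Lemma~\ref{lemma:EM-has-coproducts} (the paper merely runs the chain in the opposite direction and instantiates the algebra structure on $C \oplus B$ with the coproduct only at the last step, whereas you fix it up front). Your closing caveat about checking that the composite bijection really witnesses the coproduct's universal structure is a point the paper glosses over, but it does not change the argument.
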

\begin{proof}
  The internal cohom is defined as follows:
  \[
  (\wn A, \delta_A) \colimp (B, h_B) := (\wn (A \colimp B), \delta_{A \colimp B})
  \]
  We can capitalize on the adjunctions involving $F$ and $U$ from
  above to lift the internal cohom of $\cat{L}$ into $\cat{L}^{\wn}$:
  \[
  \begin{array}{lll}
    \mathsf{Hom}_{\cat{L}^{\wn}}((\wn (A \colimp B), \delta_{A \colimp B}), (C, h_C))
    & =  & \mathsf{Hom}_{\cat{L}^{\wn}}(\func{F}(A \colimp B), (C, h_C))\\
    & \cong  & \mathsf{Hom}_{\cat{L}}(A \colimp B, \func{U}(C, h_C))\\
    & =  & \mathsf{Hom}_{\cat{L}}(A \colimp B, C)\\
    & \cong  & \mathsf{Hom}_{\cat{L}}(A, C \oplus B)\\
    & =  & \mathsf{Hom}_{\cat{L}}(A, \func{U}(C \oplus B,h_{C \oplus B}))\\
    & \cong  & \mathsf{Hom}_{\cat{L}^{\wn}}(\func{F}A, (C \oplus B,h_{C \oplus B}))\\
    & =  & \mathsf{Hom}_{\cat{L}^{\wn}}((\wn A, \delta_A), (C \oplus B,h_{C \oplus B}))\\    
  \end{array}
  \]
  The previous equation holds for any $h_{C \oplus B}$ making $C \oplus B$ a $\wn$-algebra, in particular, the co-product in
  $\cat{L}^{\wn}$ (Lemma~\ref{lemma:EM-has-coproducts}), and hence, we
  may instantiate the final line of the previous equation with the following:
  \[
  \mathsf{Hom}_{\cat{L}^{\wn}}((\wn A, \delta_A), (C, h_c) + (B,\delta_A))
  \]
  Thus, we obtain our result.
\end{proof}

\begin{lemma}
  \label{lemma:subtractable-subcats}
  We have the following cocartesian coclosed categories:
  \begin{enumerate}[label=\roman*.]
  \item The full subcategory, $\mathsf{Sub}(\cat{L}^{\wn})$, of
    $\cat{L}^{\wn}$ consisting of objects the subtractable
    $\wn$-algebras is cocartesian coclosed, and contains the Kleisli
    category.

  \item The full subcategory, $\cat{L}^*_{\wn}$, of
    $\mathsf{Sub}(\cat{L}^{\wn})$ consisting of finite coproducts of
    free $\wn$-algebras is cocartesian coclosed.
  \end{enumerate}
\end{lemma}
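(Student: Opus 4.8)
The plan is to dualise Benton's treatment of exponentiable $!$-coalgebras and of finite products of cofree coalgebras (cf.\ the corresponding results for LNL models in \cite{Benton:1994}). Throughout I would write $+$ for the coproduct of $\cat{L}^{\wn}$ exhibited in Lemma~\ref{lemma:EM-has-coproducts}, keeping it notationally distinct from the cotensor $\oplus$ of $\cat{L}$. Since $\mathsf{Sub}(\cat{L}^{\wn})$ and $\cat{L}^*_{\wn}$ are full subcategories, any coproduct or coclosure that they inherit is computed exactly as in $\cat{L}^{\wn}$, so the whole task reduces to showing these subcategories are closed under the relevant operations.

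For part~(i), the containment of the Kleisli category is immediate: by Lemma~\ref{lemma:free-algebras-are-subtractable} every free $\wn$-algebra is subtractable, and $\cat{L}_{\wn}$ is by definition the full subcategory of $\cat{L}^{\wn}$ on the free algebras. Next I would check closure under finite coproducts. The initial object $(\perp,\r{\perp})$ is subtractable because $\mathsf{Hom}_{\cat{L}^{\wn}}((\perp,\r{\perp}),D+C)$ is a singleton for every $D$, so $(\perp,\r{\perp})$ itself represents $D \mapsto \mathsf{Hom}_{\cat{L}^{\wn}}((\perp,\r{\perp}),D+C)$, giving $(\perp,\r{\perp}) \colimp C \cong (\perp,\r{\perp})$. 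For binary coproducts, if $A$ and $B$ are subtractable then for any $C$ in $\cat{L}^{\wn}$ the bijections natural in $D$
\[
\mathsf{Hom}_{\cat{L}^{\wn}}(A+B,D+C) \cong \mathsf{Hom}_{\cat{L}^{\wn}}(A,D+C) \times \mathsf{Hom}_{\cat{L}^{\wn}}(B,D+C) \cong \mathsf{Hom}_{\cat{L}^{\wn}}((A \colimp C)+(B \colimp C),D)
\]
exhibit $(A \colimp C)+(B \colimp C)$ as $(A+B) \colimp C$, which therefore exists; hence $A+B$ is subtractable.

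The central step is that $\mathsf{Sub}(\cat{L}^{\wn})$ is closed under taking cohoms, i.e.\ that $A \colimp B$ is subtractable whenever $A$ is. This is the familiar currying isomorphism: for any $C$ in $\cat{L}^{\wn}$, associativity and symmetry of $+$ supply bijections natural in $D$
\[
\mathsf{Hom}_{\cat{L}^{\wn}}(A \colimp (B+C),D) \cong \mathsf{Hom}_{\cat{L}^{\wn}}(A,(D+C)+B) \cong \mathsf{Hom}_{\cat{L}^{\wn}}(A \colimp B,D+C),
\]
so $A \colimp (B+C)$ represents $D \mapsto \mathsf{Hom}_{\cat{L}^{\wn}}(A \colimp B,D+C)$, i.e.\ $(A \colimp B) \colimp C \cong A \colimp (B+C)$, which exists since $A$ is subtractable. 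Putting the three facts together, $\mathsf{Sub}(\cat{L}^{\wn})$ has finite coproducts and, for each $B$, a left adjoint $(-) \colimp B$ to $(-)+B$ landing in $\mathsf{Sub}(\cat{L}^{\wn})$; fullness makes the universal properties restrict automatically, so $\mathsf{Sub}(\cat{L}^{\wn})$ is cocartesian coclosed.

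For part~(ii), closure under finite coproducts is trivial — a coproduct of finite coproducts of free algebras is again one, the nullary case being $(\perp,\r{\perp})$ — and each such object lies in $\mathsf{Sub}(\cat{L}^{\wn})$ by part~(i). For coclosure I would write $A \cong \bigoplus_{i=1}^{m}\wn A_i$ and take $B$ in $\cat{L}^*_{\wn}$, so $B$ is in particular a $\wn$-algebra. Because $(-) \colimp B$ is a left adjoint on $\mathsf{Sub}(\cat{L}^{\wn})$ by part~(i), it preserves coproducts, whence $A \colimp B \cong \bigoplus_{i=1}^{m}(\wn A_i \colimp B)$; and by Lemma~\ref{lemma:free-algebras-are-subtractable} the cohom of a free algebra by an arbitrary $\wn$-algebra is again free, so each $\wn A_i \colimp B$ is a free $\wn$-algebra. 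Thus $A \colimp B$ is a finite coproduct of free algebras and $\cat{L}^*_{\wn}$ is closed under cohom, hence cocartesian coclosed. I do not expect a genuine obstacle here: once Lemmas~\ref{lemma:EM-has-coproducts} and~\ref{lemma:free-algebras-are-subtractable} are in hand the argument is bookkeeping, and the only things to watch are keeping $+$ in $\cat{L}^{\wn}$ distinct from $\oplus$ in $\cat{L}$ and checking at each use of $\colimp$ that the object in its first slot is already known to be subtractable before the adjunction is invoked.
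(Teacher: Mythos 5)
Your proposal is correct; note that the paper itself supplies no proof of this lemma at all, simply asserting it in the spirit of ``follow Benton's lead,'' so your argument fills a genuine gap rather than diverging from an existing one. What you have written is exactly the dualization of Benton's Lemmas~11 and~12 that the paper implicitly relies on: closure of the subtractable algebras under finite coproducts and under $\colimp$ via the representability computations, the currying isomorphism $(A \colimp B) \colimp C \cong A \colimp (B+C)$, and for $\cat{L}^*_{\wn}$ the distribution of $\colimp$ over coproducts combined with Lemma~\ref{lemma:free-algebras-are-subtractable} to see that the cohom lands back among finite coproducts of free algebras. The only caveats you flag (keeping $+$ distinct from $\oplus$, and verifying subtractability of the first argument before invoking the adjunction) are the right ones, and you handle both.
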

\noindent
Let $\cat{C}$ be either of the previous two categories.  Then we must
exhibit a adjunction between $\cat{C}$ and $\cat{L}$, but this is
easily done.
\begin{lemma}
  \label{lemma:dual-LNL-model}
  The adjunction $\cat{L} : \func{F} \vdash \func{U} : \cat{C}$, with
  the free functor, $\func{F}$, and the forgetful functor, $\func{U}$,
  is symmetric comonoidal.
\end{lemma}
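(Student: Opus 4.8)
The plan is to deduce the statement by duality from Benton's corresponding theorem for LNL models, applied to $\catop{L}$, rather than verifying all the coherence diagrams of Definition~\ref{def:coSMCADJ} by hand; I sketch both the duality route and the direct route below.

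First I would assemble the concrete comonoidal data. By Lemma~\ref{lemma:symmetric_comonoidal_monad}, $\wn=\H;\J$ carries a symmetric comonoidal monad structure $(\wn,\eta,\mu)$ with structure maps $\r{\perp}:\wn\perp\mto\perp$ and $\r{A,B}:\wn(A\oplus B)\mto\wn A\oplus\wn B$. The forgetful functor $\func{U}:\cat{C}\mto\cat{L}$ is then \emph{strictly} comonoidal: by Lemma~\ref{lemma:EM-has-coproducts} and Lemma~\ref{lemma:subtractable-subcats} the coproduct of $\wn$-algebras chosen in $\cat{L}^{\wn}$ (and inherited by $\cat{C}$) has underlying object $\func{U}X\oplus\func{U}Y$ and the initial object has underlying object $\perp$, so the comparison maps $\func{U}(X+Y)\mto\func{U}X\oplus\func{U}Y$ and $\func{U}0\mto\perp$ are identities, and the associativity, unit and symmetry isomorphisms of $(\cat{C},0,+)$ are, on underlying objects, precisely those of $(\cat{L},\perp,\oplus)$. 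Dually, $\func{F}(A)=(\wn A,\mu_A)$ acquires comonoidal structure maps $\r{A,B}$ and $\r{\perp}$: these are $\wn$-algebra morphisms into the coproduct algebra exactly because $\mu$ is a comonoidal natural transformation (the relevant squares of Definition~\ref{def:symm-comonoidal-monad}), and the comonoidal coherence diagrams for $\func{F}$ collapse to the comonoidal-monad axioms together with the coherence of $\oplus$.

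With this in hand, the cleanest route is to set $\cat{L}'=\catop{L}$ with its induced symmetric monoidal closed structure $(\perp,\oplus,\colimp^{\mathsf{op}})$ (Lemma~\ref{lemma:dual_of_symmetric_monoidal_categories}, Definition~\ref{def:SMCC}). The symmetric comonoidal monad $(\wn,\eta,\mu)$ becomes a symmetric monoidal comonad $(\wn^{\mathsf{op}},\eta^{\mathsf{op}},\mu^{\mathsf{op}})$ on $\cat{L}'$, with lax monoidal structure maps $\r{A,B}^{\mathsf{op}}$ and $\r{\perp}^{\mathsf{op}}$. By Lemmas~\ref{lemma:right_weakening_and_contraction} and \ref{lemma:monoid-morphism}, conditions (a) and (b) of Definition~\ref{def:dual-linear-cat} dualise to precisely the conditions making $(\cat{L}',\wn^{\mathsf{op}})$ a linear category in Bierman's sense (each cofree $!$-coalgebra a commutative comonoid, with $\w{}$, $\c{}$ as comultiplication and counit, and morphisms of cofree $!$-coalgebras being comonoid morphisms). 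Standard bookkeeping identifies $\cat{L}^{\wn}$ with the co-Eilenberg--Moore category of $\wn^{\mathsf{op}}$ on $\cat{L}'$ and $\cat{L}_{\wn}$ with its co-Kleisli category, so that $\catop{C}$ is one of the two cartesian closed subcategories (the opposites of $\mathsf{Sub}(\cat{L}^{\wn})$ and $\cat{L}^*_{\wn}$) used by Benton~\cite{Benton:1994}. His theorem then yields a symmetric monoidal adjunction between $\cat{L}'$ and $\catop{C}$ built from the cofree and forgetful functors; passing back to opposites shows $\cat{L}:\func{F}\dashv\func{U}:\cat{C}$ is a symmetric comonoidal adjunction, which is the claim.

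I expect the main obstacle to be purely organisational: making the dictionary above precise -- that the opposite of a dual linear category in the sense of Definition~\ref{def:dual-linear-cat} is literally a linear category in Bierman's formulation, and that the subcategory constructions of Lemma~\ref{lemma:subtractable-subcats} are the $(-)^{\mathsf{op}}$ of Benton's -- so that his proof transfers verbatim. Once that is settled the rest is formal. As a fallback one can avoid duality and check Definition~\ref{def:coSMCADJ} head-on: with $\func{U}$ strictly comonoidal and $\func{F}$ comonoidal as above, the four unit/counit squares of a symmetric comonoidal adjunction reduce, after cancelling the identity comparison maps of $\func{U}$, to the comonoidal-monad diagrams for $\wn$ (the $\eta$-squares) and to the very definition of the coproduct $\wn$-algebra from Lemma~\ref{lemma:EM-has-coproducts} (the $\varepsilon$-squares).
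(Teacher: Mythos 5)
Your proposal is correct and takes essentially the same approach as the paper: the paper's proof simply defers to Benton's Lemmas 13--16 ``in the opposite'' and leaves the verification that the unit and counit are comonoidal natural transformations to the reader, which is exactly the duality transfer you describe. Your additional concrete observations --- that $\func{U}$ is strictly comonoidal because the coproduct algebra of Lemma~\ref{lemma:EM-has-coproducts} has underlying object $\func{U}X \oplus \func{U}Y$, and that $\r{A,B}$ and $\r{\perp}$ are $\wn$-algebra morphisms precisely by the comonoidal-monad squares --- supply the details the paper omits.
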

\begin{proof}
  Showing that $\func{F}$ and $\func{U}$ are symmetric comonoidal
  follows similar reasoning to Benton's result, but in the opposite;
  see Lemma~13 and Lemma~14 of \cite{Benton:1994}.  Lastly, showing
  that the unit and the counit of the adjunction are comonoidal
  natural transformations is straightforward, and we leave it to the
  reader. The reasoning is similar to Benton's, but in the opposite;
  see Lemma~15 and Lemma~16 of \cite{Benton:1994}.
\end{proof}

\begin{corollary}
  \label{corollary:dual-categories-implies-LNL-models}
  Any dual linear category gives rise to a dual LNL model.
\end{corollary}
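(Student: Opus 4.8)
The plan is to obtain a dual LNL model by assembling, with essentially no new work, the ingredients built up across Section~\ref{subsec:dual_category_implies_dual_lnl_model}. Start with an arbitrary dual linear category $\cat{L}$. By Definition~\ref{def:dual-linear-cat} this is already a symmetric monoidal coclosed category $(\cat{L},\perp,\oplus,\colimp)$, so $\cat{L}$ itself discharges clause (i) of Definition~\ref{def:dual LNL-model}.

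For clause (ii) I would take the non-linear category $\cat{C}$ to be one of the two cocartesian coclosed subcategories produced in Lemma~\ref{lemma:subtractable-subcats} --- either $\mathsf{Sub}(\cat{L}^{\wn})$, the full subcategory of subtractable $\wn$-algebras, or $\cat{L}^*_{\wn}$, the full subcategory of finite coproducts of free $\wn$-algebras. Both are genuinely cocartesian (Lemma~\ref{lemma:EM-has-coproducts} and Corollary~\ref{corollary:Kleisli-has-coproducts} supply the coproducts, including the initial object $(\perp, \r{\perp})$) and coclosed (Lemma~\ref{lemma:free-algebras-are-subtractable} supplies the internal cohom), so either choice works; $\cat{L}^*_{\wn}$ is the closer analogue of Benton's construction.

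For clause (iv) I would instantiate the free--forgetful adjunction $\cat{L} : \func{F} \dashv \func{U} : \cat{C}$ from the start of the subsection, setting $\func{H} := \func{F}$ and $\func{J} := \func{U}$. Lemma~\ref{lemma:dual-LNL-model} is precisely the statement that $\func{F}$ and $\func{U}$ are symmetric comonoidal functors and that the unit $\eta$ and counit $\varepsilon$ of this adjunction are symmetric comonoidal natural transformations, which is exactly what Definition~\ref{def:coSMCADJ} requires. Collecting (i), (ii) and (iv), the data $(\cat{L},\,\cat{C},\,\func{F}\dashv\func{U})$ satisfies Definition~\ref{def:dual LNL-model}, and the corollary follows.

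There is no real obstacle left at the level of the corollary itself: all of the substantive content --- that the Eilenberg--Moore category has coproducts, that the subtractable (respectively, finite-free-coproduct) full subcategory is coclosed, and that the free--forgetful adjunction meets the comonoidal coherence diagrams --- has already been established in the preceding lemmas, whose own hard part is the coclosure computation underlying Lemma~\ref{lemma:free-algebras-are-subtractable} together with the diagram chases behind Lemma~\ref{lemma:dual-LNL-model}. What remains here is purely the bookkeeping of matching that data against the three clauses of Definition~\ref{def:dual LNL-model}, exactly dual to how Benton recovers an LNL model from a linear category.
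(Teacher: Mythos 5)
Your proposal is correct and matches the paper's (implicit) argument exactly: the paper likewise obtains the corollary by taking $\cat{L}$ itself for clause (i), either of the two cocartesian coclosed subcategories from Lemma~\ref{lemma:subtractable-subcats} for clause (ii), and the free--forgetful adjunction, shown symmetric comonoidal in Lemma~\ref{lemma:dual-LNL-model}, for clause (iv). Nothing further is needed.
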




\subsection{A Mixed Bilinear/Non-Linear Model}
\label{subsec:a_mixed_bi-linear_non-linear_model}

The main goal of our research program is to give a non-trivial
categorical model of bi-intuitionistic logic.  In this section we give
a introduction of the model we have in mind, but leave the details and
the study of the logical and programmatic sides to future work.

The naive approach would be to try and define a LNL-style model of
bi-intuitionistic logic as an adjunction between a bilinear category
and a bi-cartesian bi-closed category, but this results in a few
problems.  First, should the adjunction be monoidal or comonoidal?
Furthermore, we know bi-cartesian bi-closed categories are trivial
(Theorem~\ref{thm:dengerate-to-preorder}), and hence, this model is
not very interesting nor correct.  We must separate the two
worlds using two dual adjunctions, and hence, we arrive at the
following definition.
\begin{definition}
  \label{def:biLNL-model}
  A \textbf{mixed bilinear/non-linear model} consists of the
  following:
  \begin{enumerate}[label=\roman*.]
  \item a bilinear category $(\cat{L},
    \top,\otimes,\limp,\perp,\oplus,\colimp)$,
  \item a cartesian closed category $(\cat{I},1,\times,\to)$,
  \item a cocartesian coclosed category $(\cat{C},0,+,-)$, 
  \item a LNL model $\cat{I} : F \dashv G : \cat{L}$, and
  \item a dual LNL model $\cat{L} : H \dashv J : \cat{C}$.
  \end{enumerate}
\end{definition}
Since $\cat{L}$ is a bilinear category then it is also a linear
category, and a dual linear category.  Thus, the LNL model intuitively
corresponds to an adjunction between $\cat{I}$ and the linear
subcategory of $\cat{L}$, and the dual LNL model corresponds to an
adjunction between the dual linear subcategory of $\cat{L}$ and
$\cat{C}$.  In addition, both intuitionistic logic and
cointuitionistic logic can be embedded into $\cat{L}$ via the linear
modalities of-course, $!A$, and why-not, $\wn A$, using the well-known
Girard embeddings.  This implies that we have a very controlled way of
mixing $\cat{I}$ and $\cat{C}$ within $\cat{L}$, and hence, linear
logic is the key.


\section{Dual LNL Logic}
\label{sec:dual_lnl_logic}
We now turn to developing the syntactic side of dual LNL models called
dual LNL logic (DLNL).  First, we give a sequent calculus
formalization which we will simply refer to as DLNL logic, then a
natural deduction formalization called DND logic, and finally a term
assignment to the natural deduction version.  Each of these systems
will consistently use the same syntax and naming conventions for
formulas, types, and contexts given by the following definition.
\begin{definition}
  \label{def:DLNL-syntax-formulas-ctx}
  The the syntax for formulas, types, and contexts are given as follows:
  \[
  \begin{array}{rllll}
    \text{(non-linear formulas/types)} & \DualLNLLogicnt{R},\DualLNLLogicnt{S},\DualLNLLogicnt{T} ::= \DualLNLLogicsym{0} \mid  \DualLNLLogicnt{S}  +  \DualLNLLogicnt{T}  \mid  \DualLNLLogicnt{S}  -  \DualLNLLogicnt{T}  \mid  \mathsf{H}\, \DualLNLLogicnt{A} \\
    \text{(linear formulas/types)}     & \DualLNLLogicnt{A},\DualLNLLogicnt{B},\DualLNLLogicnt{C} ::=  \perp  \mid  \DualLNLLogicnt{A}  \oplus  \DualLNLLogicnt{B}  \mid  \DualLNLLogicnt{A}  \colimp  \DualLNLLogicnt{B}  \mid  \mathsf{J}\, \DualLNLLogicnt{S} \\
    \text{(non-linear contexts)}       & \Psi,\Theta ::=  \cdot  \mid \DualLNLLogicnt{T} \mid \Psi  \DualLNLLogicsym{,}  \Theta\\
    \text{(linear contexts)}           & \Gamma,\Delta  ::=  \cdot  \mid \DualLNLLogicnt{A} \mid \Gamma  \DualLNLLogicsym{,}  \Delta\\
  \end{array}
  \]
\end{definition}
\noindent
The term assignment will index contexts by terms, but we will maintain
the same naming convention throughout.

\subsection{The Sequent Calculus for Dual LNL Logic}
\label{sec:sequent_calculus}

\begin{figure}
  \begin{mdframed}
    \begin{mathpar}
      \DualLNLLogicdruleCXXid{} \and
      \DualLNLLogicdruleCXXwk{} \and
      \DualLNLLogicdruleCXXcr{} \and
      \DualLNLLogicdruleCXXex{} \and
      \DualLNLLogicdruleCXXfL{} \and
      \DualLNLLogicdruleCXXdL{} \and
      \DualLNLLogicdruleCXXdROne{} \and
      \DualLNLLogicdruleCXXdRTwo{} \and
      \DualLNLLogicdruleCXXsL{} \and
      \DualLNLLogicdruleCXXsR{} \and
      \DualLNLLogicdruleCXXcut{} \and
      \DualLNLLogicdruleCXXhL{}     
    \end{mathpar}
  \end{mdframed}
  \caption{Non-linear fragment of the DLNL logic}
  \label{fig:non-linear-sequent}
\end{figure}

\begin{figure}
  \begin{mdframed}
    \begin{mathpar}
      \DualLNLLogicdruleLXXid{} \and
      \DualLNLLogicdruleLXXwk{} \and
      \DualLNLLogicdruleLXXctr{} \and
      \DualLNLLogicdruleLXXex{} \and
      \DualLNLLogicdruleLXXCex{} \and
      \DualLNLLogicdruleLXXcut{} \and
      \DualLNLLogicdruleLXXCcut{} \and
      \DualLNLLogicdruleLXXflL{} \and
      \DualLNLLogicdruleLXXflR{} \and
      \DualLNLLogicdruleLXXdROne{} \and
      \DualLNLLogicdruleLXXdRTwo{} \and
      \DualLNLLogicdruleLXXpL{} \and
      \DualLNLLogicdruleLXXpR{} \and
      \DualLNLLogicdruleLXXsL{} \and
      \DualLNLLogicdruleLXXsR{} \and
      \DualLNLLogicdruleLXXCsR{} \and
      \DualLNLLogicdruleLXXjL{} \and
      \DualLNLLogicdruleLXXjR{} \and
      \DualLNLLogicdruleLXXhR{}      
    \end{mathpar}
  \end{mdframed}
  \caption{Linear fragment of the DLNL logic}
  \label{fig:linear-fragment-sequent}
\end{figure}

In this section we take the dual of Benton's~\cite{Benton:1994}
sequent calculus for LNL logic to obtain the sequent calculus for dual
LNL logic.  The inference rules for the non-linear fragment can be
found in Figure~\ref{fig:non-linear-sequent} and the linear fragment
in Figure~\ref{fig:linear-fragment-sequent}. The remainder of this
section is devoted to proving cut-elimination.  However, the proof is
simply a dualization of Benton's~\cite{Benton:1994} proof of
cut-elimination for LNL logic.

Just as Benton we use $n$-ary cuts:
\[
\DualLNLLogicdruleCXXmcut{} \quad \DualLNLLogicdruleLXXCmcut{}
\]
where $S^n = S, \ldots, S$ $n$-times. We call $\DLNLP$ the system DLNL with $n$-cuts replacing 
ordinary 1-cuts. Such cuts are admissible in DLNL and cut-elimination for $\DLNLP$ implies 
cut-elimination for DLNL.

We begin with a few standard definitions. The \emph{rank} of a
formula, denoted by $|A|$ or $|S|$, is the number of the logical
symbols in the given formula.  The \emph{cut-rank} of a derivation
$\Pi$, denoted by $c(\Pi)$, is the maximum of the ranks of the cut
formulas in $\Pi$ plus one; if $\Pi$ is cut-free its cut rank is 0.
Finally, the \emph{depth} of a derivation $\Pi$, denoted by $d(\Pi)$,
is the length of the longest path in $\Pi$.  The following three
results establish cut elimination.

\begin{lemma}[Cut Reduction]
  \label{lemma:cut_reduction}
  The following defines the cut reduction procedure:
  \begin{enumerate}
  \item If $\Pi_1$ is a derivation of $ \DualLNLLogicnt{T}  \vdash_{\mathsf{C} }  \Psi  \DualLNLLogicsym{,}   \DualLNLLogicnt{S} ^{\, \DualLNLLogicmv{n} }  $ and  $\Pi_2$ is a derivation of $ \DualLNLLogicnt{S}  \vdash_{\mathsf{C} }  \Psi' $ with 
    $c(\Pi_1), c(\Pi_2) \leq |S|$, then there exists a derivation $\Pi$ of $T\vdash_{\mathsf{C}} \Psi, \Psi'$ with $c(\Pi) \leq |S|$;
  \item If $\Pi_1$ is a derivation of $T \vdash_{\mathsf{L}} \Delta; \Psi, S^n$ and  $\Pi_2$ is a derivation of $S \vdash_{\mathsf{C}}\Psi'$ with 
    $c(\Pi_1), c(\Pi_2) \leq |S|$, then there exists a derivation $\Pi$ of $T\vdash_{\mathsf{L}} \Delta; \Psi, \Psi'$ with $c(\Pi) \leq |S|$;
  \item If $\Pi_1$ is a derivation of $B\vdash_{\mathsf{L}} \Delta; \Psi, A^n$ and  $\Pi_2$ is a derivation of $A \vdash_{\mathsf{L}}\Delta', \Psi'$ with $c(\Pi_1), c(\Pi_2) \leq |S|$,
    then there exists a derivation $\Pi$ of $B\vdash_{\mathsf{C}}\Delta, \Delta', \Psi, \Psi'$ with $c(\Pi) \leq |A|$.
  \end{enumerate}
\end{lemma}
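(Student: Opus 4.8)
The plan is to prove the three statements simultaneously, by induction on $d(\Pi_1)+d(\Pi_2)$, literally dualizing Benton's cut-elimination argument for LNL logic. The induction must be simultaneous because the rules $\mathsf{H}_L$, $\func{H}_R$, $\func{J}_L$, $\func{J}_R$ move (sub)derivations between the linear and non-linear fragments, so reducing a cut in one of the three cases typically appeals to the inductive hypothesis of another. In each case I would perform a case analysis on the last rule of the derivation in which the cut formula occurs as the unique hypothesis, and, when the cut formula is not principal there, also on the last rule of the other derivation.

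First the commuting cases. If the cut formula is not principal in the last rule of the right-hand derivation, I would permute the cut above that rule and apply the inductive hypothesis at strictly smaller $d(\Pi_1)+d(\Pi_2)$; since the permuted cut is on the same formula, the cut-rank bound is preserved. Dually, if the last rule of the left-hand derivation $\Pi_1$ does not act on a copy of the cut formula, push the cut upward through it — note that pushing a non-linear cut through $\mathsf{H}_L$ turns it into an instance of case~(2), which is why the three cases have to be handled together. Working in $\DLNLP$ with $n$-ary non-linear cuts is precisely what lets the cut absorb the right-structural rules C\_contr and C\_weak: the multicut on $S^n$ commutes past contraction by incrementing $n$ and past weakening by decrementing it, without ever creating a cut of rank $|S|+1$; when $n$ reaches $0$ the conclusion follows by weakening $\Pi_1$'s conclusion with the formulas of $\Psi'$. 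Axioms on either side trivialize the cut, as does weakening of the cut formula in the right-hand derivation.

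The substance is the principal cases, where the cut formula is a compound connective introduced by matching rules on the two sides. A cut on $T_1+T_2$ (via the right rules for $+$ in $\Pi_1$ and the left rule in $\Pi_2$) reduces to a cut on $T_1$ or $T_2$; a cut on $T_1-T_2$ reduces to two successive cuts, on $T_1$ and then on $T_2$; the linear connectives $\perp$, $\oplus$, $\colimp$ are treated identically in case~(3). Crucially, a cut on $\mathsf{H}A$ — via $\func{H}_R$ on the left derivation and $\mathsf{H}_L$ on the right — turns a non-linear cut into a cut on the linear formula $A$, and dually a cut on $\mathsf{J}T$ via $\func{J}_R$/$\func{J}_L$ becomes a cut on the non-linear formula $T$; these are the points at which the three statements genuinely interlock. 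Cuts on $0$, and cuts where one premise is an axiom, are disposed of directly or fall under the structural-commutation cases above. In every principal case the residual cuts are on proper subformulas, hence of rank strictly below that of the eliminated formula, so reassembling the derivation yields $c(\Pi)\le|S|$ (resp.\ $c(\Pi)\le|A|$).

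I expect the main obstacle to be the bookkeeping in the cases where the $n$-ary multicut simultaneously meets a fragment-crossing rule and a structural rule, together with the redistribution of the $n$ copies of the cut formula and of the non-linear conclusion context across the binary left-rules such as those for $+$, $-$, $\mathsf{H}$, and $\mathsf{J}$: one must track how $n$ splits among the premises and verify that no auxiliary cut exceeds the permitted rank. No essentially new phenomenon arises beyond Benton's development, but the presence of structural rules on the conclusion side of every sequent, linear as well as non-linear, roughly doubles the number of commuting cases to be checked relative to the intuitionistic setting.
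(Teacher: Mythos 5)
Your proposal matches the paper's proof: the same simultaneous induction on $d(\Pi_1)+d(\Pi_2)$, with non-principal cases handled by permuting the $n$-ary cut upward (the multicut absorbing contraction and weakening), and principal cases reducing to cuts on proper subformulas of strictly smaller rank, including the fragment-crossing $\func{H}$/$\func{J}$ cases where a non-linear cut becomes a linear one and conversely. The paper itself only writes out the $\func{J}$, $\func{H}$, $+$, and $\colimp$ principal cases and defers the remaining permutation cases to Benton, so your sketch operates at essentially the same level of detail and takes the same route.
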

\begin{proof}
  By induction on $d(\Pi_1) + d(\Pi_2)$.  We give one case where the
  last inferences of $\Pi_1$ and $\Pi_2$ are logical inferences;
  please see
  Appendix~\ref{subsec:proof_of_cut_reduction_lemma:cut-reduction} for
  the complete proof.

\ \\
\noindent
$\lsub$ right / $\lsub$ left. We have 
\begin{center}
\begin{tabular}{c}
\AxiomC{$\pi_1$ }
\noLine
\UnaryInfC{$A \vdash_{\mathsf{L}} \Delta_1; \Psi_1, B_1$}
 \AxiomC{$\pi_2$ } 
\noLine
\UnaryInfC{$B_2 \vdash_{\mathsf{L}} \Delta_2; \Psi_2$}
\LeftLabel{$\Pi_1 =$}
\RightLabel{$\DualLNLLogicdruleLXXsRName$}
\BinaryInfC{$A \vdash_{\mathsf{L}} B_1 \lsub B_2,  \Delta_1, \Delta_2; \Psi_1, \Psi_2$}
\AxiomC{$\pi_3$}
\noLine
\UnaryInfC{$B_1 \vdash_{\mathsf{L}} B_2, \Delta ;  \Psi$}
\LeftLabel{$\Pi_2 =$}
\RightLabel{$\DualLNLLogicdruleLXXsLName$}
\UnaryInfC{$B_1 \lsub B_2 \vdash_{\mathsf{L}} \Delta ; \Psi$}
\RightLabel{$\DualLNLLogicdruleLXXcutName$}
\BinaryInfC{$A \vdash_{\mathsf{L}} \Delta_1, \Delta_2, \Delta ; \Psi_1, \Psi_2, \Psi$}
\DisplayProof\\
\\
reduces to $\Pi$ 
\\
\\
\AxiomC{$\pi_1$}
\noLine
\UnaryInfC{$A\vdash_{\mathsf{L}} \Delta_1, B_1; \Psi_1$}
 \AxiomC{$\pi_3$}
\noLine
\UnaryInfC{$B_1 \vdash_{\mathsf{L}} B_2, \Delta ;  \Psi$}
\RightLabel{$\DualLNLLogicdruleLXXcutName$}
\BinaryInfC{$A\vdash_{\mathsf{L}} \Delta_1, \Delta, B_2; \Psi_1,\Psi$}
 \AxiomC{$\pi_2$ } 
\noLine
\UnaryInfC{$B_2 \vdash_{\mathsf{L}} \Delta_2; \Psi_2$}
\RightLabel{$\DualLNLLogicdruleLXXcutName$}
\BinaryInfC{$A \vdash_{\mathsf{L}} \Delta_1, \Delta_2, \Delta; \Psi_1, \Psi_2, \Psi$}
\DisplayProof
\end{tabular}
\end{center}
The resulting derivation $\Pi$ has cut rank $c(\Pi) = max(|B_1|+1, c(\pi_1), c(\pi_2), |B_2|+1, c(\pi_3)) \leq |B_1\lsub B_2|$.  
\end{proof}

\begin{lemma}[Decrease in Cut-Rank]
  \label{lemma:decrease-cut-rank}
  Let $\Pi$ be a $\DLNLP$ proof of a sequent  $S \vdash_{\mathsf{C}} \Psi$ or $A \vdash_{\mathsf{L}} \Delta; \Psi$ with 
  $c(\Pi)>0$. Then there exists a proof $\Pi'$ of the same sequent with $c(\Pi') < c(\Pi)$. 
\end{lemma}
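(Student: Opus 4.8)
The plan is to run the usual Gentzen-style cut-reduction argument, using Lemma~\ref{lemma:cut_reduction} as the engine and keeping the end-sequent fixed throughout. Set $r = c(\Pi) > 0$. By definition of cut-rank, $r = \max + 1$ where $\max$ is the largest rank of a cut-formula occurring in $\Pi$, so there is at least one cut instance in $\Pi$ whose cut-formula has rank exactly $r-1$; call such instances \emph{maximal}. I would argue by a subsidiary induction on the number $N$ of maximal cut instances in $\Pi$ that $\Pi$ can be transformed into a proof of the same sequent with cut-rank $\le r-1$. The base case $N=0$ is vacuous (it cannot occur when $c(\Pi)=r$), but including it makes the induction uniform.

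For the inductive step, among the maximal cut instances I would choose one, say $(c)$, that is \emph{topmost}, i.e.\ no maximal cut lies strictly above either premise of $(c)$. Let $\Pi_1$ and $\Pi_2$ be the subderivations ending in the two premises of $(c)$. By the choice of $(c)$, every cut-formula occurring in $\Pi_1$ or $\Pi_2$ has rank $\le r-2$, hence $c(\Pi_1), c(\Pi_2) \le r-1$. The cut $(c)$ is an instance of one of the three cut rules of $\DLNLP$ -- the non-linear $n$-ary cut $\text{C\_cut}_n$, the non-linear-into-linear $n$-ary cut $\text{LC\_cut}_n$, or the linear cut $\text{LL\_cut}$ -- and its cut-formula (call it $S$ in the first two cases, $A$ in the last) has rank $r-1$. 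Therefore $c(\Pi_1), c(\Pi_2) \le |S|$ (resp.\ $\le |A|$), which is precisely the hypothesis of clause (1), (2), or (3) of Lemma~\ref{lemma:cut_reduction}. Applying the appropriate clause produces a derivation $\Pi^{*}$ of the conclusion sequent of $(c)$ with $c(\Pi^{*}) \le |S|$ (resp.\ $\le |A|$), i.e.\ $c(\Pi^{*}) \le r-1$; in particular $\Pi^{*}$ contains no maximal cut.

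Next I would graft $\Pi^{*}$ into $\Pi$ in place of the subderivation rooted at $(c)$; this is legitimate because $\Pi^{*}$ proves exactly the conclusion of $(c)$. Write $\widehat\Pi$ for the result. The part of $\Pi$ below $(c)$ is untouched, and the subderivation rooted at $(c)$ contained exactly one maximal cut, namely $(c)$ itself (its premise subderivations $\Pi_1,\Pi_2$ had none, by topmostness), and that maximal cut is replaced by a subderivation $\Pi^{*}$ with no maximal cut. Hence $\widehat\Pi$ has exactly $N-1$ maximal cut instances, and $c(\widehat\Pi) \le r$ (every cut-formula in $\widehat\Pi$ still has rank $\le r-1$). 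The subsidiary induction hypothesis applied to $\widehat\Pi$ then yields a proof $\Pi'$ of the same end-sequent with $c(\Pi') \le r-1 < r = c(\Pi)$, as required.

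I do not anticipate any genuine obstacle: all the real work is already carried out in Lemma~\ref{lemma:cut_reduction}, and this lemma is just the bookkeeping wrapper around it. The two points that need a line of care are (a) verifying that a topmost maximal cut exists and that topmostness gives the cut-rank bound on $\Pi_1,\Pi_2$ needed to invoke the matching clause of Lemma~\ref{lemma:cut_reduction}, and (b) noting that the three clauses of that lemma jointly cover every cut rule of $\DLNLP$ -- which they do, since in $\DLNLP$ the only cuts are $\text{C\_cut}_n$, $\text{LC\_cut}_n$, and $\text{LL\_cut}$. As with the rest of this section, the whole argument is the exact dual of Benton's corresponding lemma.
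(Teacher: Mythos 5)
Your proof is correct, and it does establish the lemma, but it runs a different induction from the one in the paper. You induct on the number $N$ of \emph{maximal} cut instances (cuts whose cut-formula has rank $c(\Pi)-1$), locate a \emph{topmost} such cut, observe that topmostness already forces $c(\Pi_1),c(\Pi_2)\le |S|$, apply Lemma~\ref{lemma:cut_reduction} directly, graft, and recurse on $N-1$. The paper instead inducts on the depth $d(\Pi)$ and works from the root: if the last inference is not a cut, or is a cut on a non-maximal formula, it applies the induction hypothesis; if the last inference is a cut on a formula of maximal rank, it \emph{first} applies the induction hypothesis to the two premise subderivations to bring their cut-ranks down to $\le |A|$, and only then invokes Lemma~\ref{lemma:cut_reduction}. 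The trade-off is essentially this: your topmost-first strategy lets you call the reduction lemma without any preliminary massaging of the premises, at the cost of having to argue that a topmost maximal cut exists and that grafting strictly decreases $N$ while not increasing the cut-rank; the paper's depth induction is a single uniform recursion that never needs to search the proof tree for a particular cut, at the cost of an extra appeal to the induction hypothesis inside the maximal-cut case. Two small polish points on your write-up: the base case $N=0$ is not vacuous but trivially true (no maximal cuts means every cut-formula has rank $\le r-2$, so $c(\Pi)\le r-1$ already), and the subsidiary statement should be phrased for derivations of cut-rank $\le r$ rather than exactly $r$ so that the inductive call on $\widehat\Pi$ is licensed; you gesture at both, and neither is a genuine gap.
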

\begin{proof} 
  By induction on $d(\Pi)$. If the last inference is not a cut, then we apply the 
  induction hypothesis. If the last inference is a cut on a formula $A$, but $A$ is not of maximal rank
  among the cut formulas, so that $c(\Pi) > |A|+1$, then we apply the induction hypothesis. Finally,  
  if the last inference is a cut on $A$ and $c(\Pi) = |A| + 1$ we have the following situation:
  \begin{center}
    \AxiomC{$\Pi_1$}
    \noLine
    \UnaryInfC{$B \vdash_{\mathsf{L}} \Delta, A ; \Psi$}
    \AxiomC{$\Pi_2$}
    \noLine
    \UnaryInfC{$A \vdash_{\mathsf{L}} \Delta' ; \Psi'$}
    \LeftLabel{$\Pi =$}
    \RightLabel{$\DualLNLLogicdruleLXXcutName$}
    \BinaryInfC{$B \vdash_{\mathsf{L}} \Delta, \Delta'; \Psi, \Psi'$}
    \DisplayProof
  \end{center}
  Now since $c(\Pi_1), c(\Pi_2) \leq |A| + 1$ then by applying the
  induction hypothesis to the premises of the previous derivation we
  can construct derivations $\Pi'_1$ and $\Pi'_2$ with $c(\Pi'_1) \leq
  |A|$ and $c(\Pi'_2) \leq |A|$. Then by cut reduction we can
  construct a derivation $\Pi'$ proving $B \vdash_{\mathsf{L}} \Delta,
  \Delta'; \Psi, \Psi'$ with $c(\Pi') \leq |A|$ as required.
\end{proof}

\begin{theorem}[Cut Elimination]
  \label{thm:cut_elimination}
  Let $\Pi$ be a proof of a sequent $ \DualLNLLogicnt{S}  \vdash_{\mathsf{C} }  \Psi $ or
  $ \DualLNLLogicnt{A}  \vdash_{\mathsf{L} }  \Delta  ;  \Psi $ such that $c(\Pi)> 0$. There is
  an algorithm which yields a cut free proof of the same sequent.
\end{theorem}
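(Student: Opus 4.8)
The plan is to obtain cut elimination as an immediate consequence of the two preceding lemmas by a single outer induction on the cut-rank $c(\Pi)$. First I would note that it suffices to argue within $\DLNLP$: an ordinary $1$-cut is just the case $n = 1$ of an $n$-ary cut, so every DLNL proof is in particular a $\DLNLP$ proof, and conversely a cut-free $\DLNLP$ proof is literally a cut-free DLNL proof. Hence cut elimination for $\DLNLP$ yields cut elimination for DLNL, and Lemma~\ref{lemma:decrease-cut-rank} applies verbatim to the given $\Pi$.

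For the induction: if $c(\Pi) = 0$ then $\Pi$ is already cut-free and there is nothing to do. If $c(\Pi) > 0$, then Lemma~\ref{lemma:decrease-cut-rank} produces a proof $\Pi'$ of the same sequent with $c(\Pi') < c(\Pi)$, and applying the induction hypothesis to $\Pi'$ gives a cut-free proof of that sequent. Unwinding the recursion, the algorithm is simply: repeatedly invoke the procedure of Lemma~\ref{lemma:decrease-cut-rank}, each call strictly lowering the (natural-number-valued) cut-rank, so after at most $c(\Pi)$ iterations one reaches a proof of cut-rank $0$, i.e.\ a cut-free proof of the original sequent; termination is guaranteed by the strict descent of $c(\Pi)$.

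The substantive work has already been discharged: Lemma~\ref{lemma:cut_reduction} (cut reduction --- permuting a topmost maximal-rank cut upward or replacing it by cuts on lower-rank formulas, by induction on $d(\Pi_1) + d(\Pi_2)$ with a case analysis on the last inferences of the two premise derivations, including the $\func{H}$/$\func{J}$ transfer rules and the $n$-ary cuts), and Lemma~\ref{lemma:decrease-cut-rank} (which uses cut reduction to remove all cuts of maximal rank). Consequently there is no real obstacle at the level of this theorem beyond the bookkeeping above; the only point that genuinely needs care is reading the statement over $\DLNLP$ so that Lemma~\ref{lemma:decrease-cut-rank} is applicable, and then observing that the output, being cut-free, lies in DLNL itself. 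As usual one could additionally extract a (non-elementary) bound on the size of the resulting proof from iterating the cut-reduction step, but that is not needed for the statement.
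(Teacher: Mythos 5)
Your proof is correct and follows the paper's argument exactly: an outer induction on $c(\Pi)$ driven by Lemma~\ref{lemma:decrease-cut-rank}, with the same observation that cut elimination for $\DLNLP$ transfers to DLNL. The paper's own proof is the one-line version of precisely this bookkeeping.
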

\begin{proof}
  By induction on $c(\Pi)$ using the previous lemma. 
\end{proof}


\subsection{Sequent-style Natural Deduction}
\label{sec:sequent-style_natural_deduction}

The inference rules for the non-linear and linear fragments of the
sequent-style natural deduction formalization of DLNL (DND) can be
found in Figure~\ref{fig:non-linear-nd} and Figure~\ref{fig:linear-nd}
respectively. 
\begin{remark}
  \label{rem:additive-contexts}
  In DLNL logic contexts are treated multiplicatively and so are in DND. 
  Non-linear context could also be treated additively. In the case of 
  the minor premises of non-linear disjunction elimination (rule 
  $\mathrm{NLL}_{\_ +_ E}$  of  Figure \ref{fig:non-linear-nd} an additive 
  interpretation is required, namely, both minor premises must have the same 
  right context, to match the categorical interpretation of disjunction as 
  coproduct.  The same holds for the term assignment in the rule 
  $TC_{\_+E}$ of Figure \ref{fig:non-linear-ta}.
  Of course additive contexts can be simulated using weakening and 
  contraction. This is what we do in the case of disjunction elimination.
\end{remark}


\begin{figure}
  \begin{mdframed}
    \begin{mathpar}
      \DualLNLLogicdruleNCXXid{} \and
      \DualLNLLogicdruleNCXXweak{} \and
      \DualLNLLogicdruleNCXXcontr{} \and
      \DualLNLLogicdruleNCXXzE{} \and
      \DualLNLLogicdruleNCXXdIOne{} \and
      \DualLNLLogicdruleNCXXdITwo{} \and
      \DualLNLLogicdruleNCXXdE{} \and
      \DualLNLLogicdruleNCXXsubI{} \and
      \DualLNLLogicdruleNCXXsubE{} \and
      \DualLNLLogicdruleNCXXHE{}      
    \end{mathpar}
  \end{mdframed}
  \caption{Non-linear fragment of DND logic}
  \label{fig:non-linear-nd}
\end{figure}

\begin{figure}
  \begin{mdframed}
    \begin{mathpar}
      \DualLNLLogicdruleNLXXid{} \and
      \DualLNLLogicdruleNLXXweak{} \and
      \DualLNLLogicdruleNLXXcontr{} \and
      \DualLNLLogicdruleNLXXpI{} \and
      \DualLNLLogicdruleNLXXpE{} \and
      \DualLNLLogicdruleNLXXparI{} \and
      \DualLNLLogicdruleNLXXparE{} \and
      \DualLNLLogicdruleNLXXsubI{} \and
      \DualLNLLogicdruleNLXXsubE{} \and
      \DualLNLLogicdruleNLXXJI{} \and
      \DualLNLLogicdruleNLXXJE{} \and
      \DualLNLLogicdruleNLXXHI{} \and
      \DualLNLLogicdruleNLXXHE{}      
    \end{mathpar}
  \end{mdframed}
  \caption{Linear fragment of DND logic}
  \label{fig:linear-nd}
\end{figure}

We now recall a correspondence between DND and DLNL logic.  First, we 
need the admissible rule of cut, i.e., substitution.
\begin{lemma}[Admissible Rules in DND]
  \label{lemma:admissible_rules_in_dnd}
  The following rules are admissible in DND: 
  \begin{mathpar}
    \DualLNLLogicdruleNCXXcut{} \and
    \DualLNLLogicdruleNLXXCcut{} \and
    \DualLNLLogicdruleNLXXcut{}
  \end{mathpar}
\end{lemma}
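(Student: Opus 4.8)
The plan is to give, for each of the three rules, a direct derivation in DND, rather than to mimic the sequent-calculus cut-elimination argument: the point is that in each fragment the structural cut is already obtainable as a short combination of the introduction and elimination rules of Figures~\ref{fig:non-linear-nd} and~\ref{fig:linear-nd}, so no induction on derivations is needed.

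First, for NLC\_cut: given premises $A \vdash_{\mathsf{L}} \Delta; \Psi_1, T$ and $T \vdash_{\mathsf{C}} \Psi_2$, use $\mathsf{J}$-introduction to move $T$ to the linear conclusion side, obtaining $A \vdash_{\mathsf{L}} \Delta, \mathsf{J}T; \Psi_1$, and then $\mathsf{J}$-elimination with the second premise to get $A \vdash_{\mathsf{L}} \Delta; \Psi_1, \Psi_2$. Second, for NL\_cut: from $A \vdash_{\mathsf{L}} \Delta_1, B; \Psi_1$ apply $\perp$-introduction (with the displayed $B$ as the witness it requires) and then $\oplus$-introduction to form $A \vdash_{\mathsf{L}} \Delta_1, B \oplus \perp; \Psi_1$; now apply $\oplus$-elimination with the two minor premises being the given $B \vdash_{\mathsf{L}} \Delta_2; \Psi_2$ and the sequent $\perp \vdash_{\mathsf{L}} \cdot\,;\cdot$ (which follows from NLL\_id by $\perp$-elimination), yielding $A \vdash_{\mathsf{L}} \Delta_1, \Delta_2; \Psi_1, \Psi_2$. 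Third, for NC\_cut: from $S \vdash_{\mathsf{C}} \Psi_1, T$ inject $T$ into $T + 0$ with $(+_{I_1})$ and apply $(+_E)$ with minor premises $T \vdash_{\mathsf{C}} \Psi_2$ (the second premise) and $0 \vdash_{\mathsf{C}} \Psi_2$; the latter is itself derivable from $0$-elimination followed by non-linear weakening. This produces $S \vdash_{\mathsf{C}} \Psi_1, \Psi_2$.

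The step to watch — and the reason Remark~\ref{rem:additive-contexts} is pertinent — is the use of non-linear disjunction elimination $(+_E)$ in the last derivation: its two minor premises must carry the \emph{same} non-linear conclusion context (the additive reading forced by the coproduct semantics of $+$), so one must present the trivial branch as a derivation of $0 \vdash_{\mathsf{C}} \Psi_2$ for exactly that context $\Psi_2$, which is precisely what weakening from $0 \vdash_{\mathsf{C}} \cdot$ supplies. One should also check the side conditions appearing in the linear derivation — that $\perp$-introduction needs a non-empty linear conclusion context (furnished by $B$) and that $\perp$-elimination needs an empty non-linear conclusion context (which is why $\perp \vdash_{\mathsf{L}} \cdot\,;\cdot$ can be formed). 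These verifications are routine, so I do not expect a genuine obstacle. The only real alternative — a structural induction in the style of the Hauptsatz for the sequent calculus (Theorem~\ref{thm:cut_elimination}) — is less comfortable here, because when one cuts a non-linear formula out of a linear judgment and commutes the cut upward one is left needing a linear-fragment rule acting on the non-linear conclusion context; the derivation above avoids this by routing through $\mathsf{J}$.
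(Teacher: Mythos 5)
Your three derivations are correct: for $\text{NLC\_cut}$ the $\func{J}_I/\func{J}_E$ detour works exactly as you describe; for $\text{NLL\_cut}$ the route through $B\oplus\perp$ is sound, since $\text{NLL\_}\!\perp_I$ in the logic (as opposed to the term calculus) carries no witness side condition and $\perp\vdash_{\mathsf{L}}\cdot\,;\cdot$ is obtained from $\text{NLL\_id}$ and $\text{NLL\_}\!\perp_E$ as you say; and for $\text{NC\_cut}$ the injection into $T+0$ followed by $\text{NC\_}+_E$ goes through once the trivial branch is padded by weakening to the \emph{same} context $\Psi_2$, which you correctly flag as the additive constraint of Remark~\ref{rem:additive-contexts}. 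The paper states the lemma without proof, glossing cut as ``i.e., substitution,'' so the intended argument is the usual structural induction that grafts the derivation of the cut formula into the derivation using it; your argument is genuinely different and in fact establishes the stronger property of \emph{derivability} of the three rules, with no induction at all. What the substitution argument buys, and yours does not, is the exact form of the resulting proof term: the later Admissible Typing Rules lemma asserts that the conclusion of $\text{TC\_cut}$ carries literally $[t/y]\Psi_2$, whereas your macro-derivations decorate to $\beta$-redexes (e.g.\ $\mathsf{case}\,(\mathsf{inl}\,t)\,\mathsf{of}\,y.\Psi_2,\,z.(\dots)$) that only \emph{reduce} to that substitution; so for the term-annotated version one still needs either the inductive substitution lemma or an explicit appeal to the reduction rules of Figure~\ref{fig:red-non-linear}. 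For the present, purely logical statement your proof is complete.
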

Using these admissible rules we can construct a proof preserving
translation between DND and DLNL logic.
\begin{lemma}[Translations between DND and DLNL logic]
  \label{lemma:translations}
  There are functions $\mathcal{S}: DND \rightarrow DLNL$ 
  and $\mathcal{N}: DLNL \rightarrow DND$ from natural deduction to sequent 
  calculus derivations. 
\end{lemma}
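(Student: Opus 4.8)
The plan is to construct the two translations $\mathcal{S}\colon DND \to DLNL$ and $\mathcal{N}\colon DLNL \to DND$ separately, each by structural induction on the given derivation, and to verify at each step that the translated derivation proves the same sequent.

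\emph{The translation $\mathcal{N}\colon DLNL \to DND$.} This is the more routine direction and I would do it first. Given a DLNL (sequent-calculus) derivation, I would proceed by induction on its structure. The identity, structural (weakening, contraction, exchange) and right-introduction rules of the sequent calculus map essentially verbatim to the corresponding introduction rules of DND --- for instance $\textrm{C\_}+_{R_1}$ goes to $\textrm{NC\_}+_{I_1}$, $\textrm{LL\_}\colimp_R$ to $\textrm{NLL\_}\colimp_I$, $\func{J}_R$ to $\text{NLL\_}\func{J}_I$, and so on. The interesting cases are the left rules of the sequent calculus, which must be simulated in natural deduction by first introducing the principal formula via an elimination rule applied to an assumption and then cutting. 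Concretely, for a left rule acting on a formula $C$ in the right context of a cointuitionistic sequent, I would start from $C \vdash_{\mathsf{C}} C$ (or its linear analogue), apply the corresponding elimination rule of DND to obtain the premises the left rule needs, and then glue the result onto the main derivation using the admissible cut rules of Lemma~\ref{lemma:admissible_rules_in_dnd}. The cut rules of the sequent calculus ($\text{C\_cut}$, $\text{LC\_cut}$, $\text{LL\_cut}$) translate directly to the admissible cuts $\DualLNLLogicdruleNCXXcutName$, $\text{NLC\_cut}$, $\text{NLL\_cut}$ of DND.

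\emph{The translation $\mathcal{S}\colon DND \to DLNL$.} Again by induction on the structure of the natural-deduction derivation. Introduction rules and the identity and structural rules of DND correspond directly to right rules and structural rules of the sequent calculus. The work is in the elimination rules: an elimination rule in DND must be decomposed in the sequent calculus as a cut against the appropriate left rule. For example, $\text{NC\_}+_E$ (non-linear disjunction elimination) is simulated by taking the translation of its major premise $S \vdash_{\mathsf{C}} \Psi_1, T_1 + T_2$, and cutting it against an instance of the left rule for $+$ built from (the translations of) the two minor premises $T_1 \vdash_{\mathsf{C}} \Psi_2$ and $T_2 \vdash_{\mathsf{C}} \Psi_2$; here I would use that the two minor premises have matching right contexts (Remark~\ref{rem:additive-contexts}), so that the additive disjunction-left rule $\text{NC}$/$\text{C\_}+_L$-style combination, or weakening/contraction, yields a well-formed conclusion. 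Similarly $\text{NLL\_}\func{J}_E$ and $\text{NLL\_}\func{H}_E$ are handled by cutting against $\func{J}_L$ / $\func{H}_L$, and $\text{NLL\_}\colimp_E$ against $\colimp_L$. In each case I would check that the cut formula is exactly the formula introduced by the elimination rule and that the contexts line up, invoking cut-elimination (Theorem~\ref{thm:cut_elimination}) only if one wants the image of $\mathcal{S}$ to land in cut-free proofs --- but the bare statement of the lemma does not require that, so plain cuts suffice.

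\emph{Main obstacle.} The delicate point is bookkeeping of contexts under the multiplicative/additive discrepancy flagged in Remark~\ref{rem:additive-contexts}: the natural-deduction elimination rules for $+$ (and the $\oplus_E$, $\colimp_E$ rules in the linear fragment) are stated with additive treatment of the shared context in the minor premises, whereas the sequent calculus is purely multiplicative. Translating these faithfully requires inserting the right pattern of weakenings and contractions so that the sequent-calculus derivation proves precisely the same sequent, and symmetrically, when going from sequent calculus to natural deduction, recovering the shared context. I expect this --- rather than any conceptual difficulty --- to be where the care is needed; the rest is a straightforward rule-by-rule correspondence, dual to Benton's treatment for LNL logic.
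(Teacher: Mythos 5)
Your proposal is correct and follows essentially the same route as the paper: right rules of the sequent calculus correspond directly to introduction rules, elimination rules are simulated in DLNL by cutting against the matching left rule (as in the paper's worked example for $\text{NC\_}0_E$), and left rules are recovered in DND via the admissible cut rules of Lemma~\ref{lemma:admissible_rules_in_dnd}. Your extra attention to the additive/multiplicative context bookkeeping is consistent with Remark~\ref{rem:additive-contexts}, which notes that the additive treatment is simulated by weakening and contraction.
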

\noindent
Notice that the right rules of the sequent calculus and the introductions of natural deduction
have the same form. Elimination rules are derivable from left rules with \emph{cut} and left 
rules are derivable using the admissible cut rule in DND. For instance, the $\DualLNLLogicdruleNCXXzEName$ rule 
\[ \DualLNLLogicdruleNCXXzE{} \]
is derivable in the sequent calculus as follows: 
\[
\AxiomC{$S\vdash_{\mathsf{C}} 0, \Psi$}
\AxiomC{$0\vdash S_1, \ldots S_n$}
\RightLabel{$\DualLNLLogicdruleCXXcutName{}$}
\BinaryInfC{$S\vdash_{\mathsf{C}} \Psi, S_1, \ldots S_n$}
\AxiomC{$ S_1 \vdash_{\mathsf{C}} \Psi_1$}
\RightLabel{$\DualLNLLogicdruleCXXcutName{}$}
\BinaryInfC{$S\vdash_{\mathsf{C}}  \Psi, \Psi_1, S_2, \ldots, S_n$}
\AxiomC{$\vdots$} 
\BinaryInfC{$S\vdash_{\mathsf{C}}  \Psi, \Psi_1, \ldots \Psi_{n-1}, S_n$}
\AxiomC{$S_n \vdash \Psi_n$} 
\RightLabel{$\DualLNLLogicdruleCXXcutName{}$}
\BinaryInfC{$S\vdash_{\mathsf{C}} \Psi, \Psi_1, \ldots \Psi_{n-1}, \Psi_n$ }
\DisplayProof
\]

\subsection{Term Assignment}
\label{sec:term_assignment}
We now turn to giving a term assignment to DND logic called TND, which
is greatly influenced by Crolard's 
 term assignment for subtractive logic in the paper 
 \emph{A formulae-as-types
  interpretation of subtractive logic} JLC 2004. Crolard based his term assignment on
Parigot's~\cite{Parigot:1992} $\lambda\mu$-calculus.  He then shows
that a type theory of coroutines can be given by subtractive types 
and it is this result we pull inspiration from. 

TND pushes beyond Crolard's work on subtractive logic. He restricts a
classical calculus to provide a constructive version of subtraction
called \emph{safe coroutines}. TND is based on the work of the second
author where he used a variant of Crolard's constructive calculus as a
term assignment to co-intuitionistic logic and to linear
co-intuitionistic logic \cite{Bellin:2012} without using the
$\lambda\mu$-calculus. In this formulation, distinct terms are
assigned to distinct formulas in the context and the reduction of a
term in context may impact other terms in the context.

The syntax of TND terms is defined by the following definition.
\begin{definition}
  \label{def:TND-terms-syntax}
  The syntax for TND terms and typing judgments are given by the following grammar:
  \[
  \begin{array}{l}
    \begin{array}{cllllll}
    \text{(non-linear terms)} & \DualLNLLogicnt{s},\DualLNLLogicnt{t} & ::= & \DualLNLLogicmv{x} \mid  \varepsilon  \mid  \DualLNLLogicnt{t_{{\mathrm{1}}}}  \cdot  \DualLNLLogicnt{t_{{\mathrm{2}}}}  \mid
          \mathsf{false}\, \DualLNLLogicnt{t}  \mid \DualLNLLogicmv{x}  \DualLNLLogicsym{(}  \DualLNLLogicnt{t}  \DualLNLLogicsym{)} \mid  \mathsf{mkc}( \DualLNLLogicnt{t} , \DualLNLLogicmv{x} )  \mid  \mathsf{inl}\, \DualLNLLogicnt{t}  \mid  \mathsf{inr}\, \DualLNLLogicnt{t}  \mid \\
         & & &  \mathsf{case}\, \DualLNLLogicnt{t} \,\mathsf{of}\, \DualLNLLogicmv{x} . \DualLNLLogicnt{t_{{\mathrm{1}}}} , \DualLNLLogicmv{y} . \DualLNLLogicnt{t_{{\mathrm{2}}}}  \mid
          \mathsf{H}\, \DualLNLLogicnt{e}  \mid  \mathsf{let}\,\mathsf{J}\, \DualLNLLogicmv{x}  =  \DualLNLLogicnt{e} \,\mathsf{in}\, \DualLNLLogicnt{t}  \mid  \mathsf{postp}\,( \DualLNLLogicmv{x}  \mapsto  \DualLNLLogicnt{t_{{\mathrm{1}}}} ,  \DualLNLLogicnt{t_{{\mathrm{2}}}} )  \mid \\
         & & &  \mathsf{let}\,\mathsf{H}\, \DualLNLLogicmv{x}  =  \DualLNLLogicnt{t_{{\mathrm{1}}}} \,\mathsf{in}\, \DualLNLLogicnt{t_{{\mathrm{2}}}} \\
         \\
         \text{(linear terms)} & \DualLNLLogicnt{e},\DualLNLLogicnt{u} & ::= & \DualLNLLogicmv{x} \mid  \mathsf{connect}_\perp\,\mathsf{to}\, \DualLNLLogicnt{e}  \mid  \mathsf{postp}_\perp\, \DualLNLLogicnt{e}  \mid  \mathsf{postp}\,( \DualLNLLogicmv{x}  \mapsto  \DualLNLLogicnt{e_{{\mathrm{1}}}} ,  \DualLNLLogicnt{e_{{\mathrm{2}}}} )  \mid  \mathsf{mkc}( \DualLNLLogicnt{e} , \DualLNLLogicmv{x} )  \mid \DualLNLLogicmv{x}  \DualLNLLogicsym{(}  \DualLNLLogicnt{e}  \DualLNLLogicsym{)} \mid \\
         & & &  \DualLNLLogicnt{e_{{\mathrm{1}}}}  \oplus  \DualLNLLogicnt{e_{{\mathrm{2}}}}  \mid  \mathsf{casel}\, \DualLNLLogicnt{e}  \mid  \mathsf{caser}\, \DualLNLLogicnt{e}  \mid  \mathsf{J}\, \DualLNLLogicnt{t} \\         
  \end{array}
  \\\\
  \begin{array}{cll}
    \text{(non-linear judgment)} &  \DualLNLLogicmv{x}  :  \DualLNLLogicnt{R}  \vdash_{\mathsf{C} }  \Psi \\
    \\
    \text{(linear judgment)} &  \DualLNLLogicmv{x}  :  \DualLNLLogicnt{A}  \vdash_{\mathsf{L} }  \Delta ; \Psi \\
  \end{array}
  \end{array}
  \]
  Contexts, $\Delta$ and $\Psi$, are the straightforward extension
  where each type is annotated with a term from the respective
  fragment.
\end{definition}

To aid the reader in understanding the variable structure, which
variable annotations are bound, deployed throughout the TND term
syntax we give the definitions of the free variable functions in the
following definition.
\begin{definition}
The free variable functions, $FV(t)$ and $FV(e)$, for linear and
non-linear terms $t$ and $e$ are defined by mutual recursion as
follows:
\[ \small
\begin{array}{lll}  
  \begin{array}{lll}
    \textbf{linear terms:}\\
    \,\,FV(\DualLNLLogicmv{x})  = \{ \DualLNLLogicmv{x}\}\\
    \,\,FV( \mathsf{connect}_\perp\,\mathsf{to}\, \DualLNLLogicnt{e} ) = FV(e) \\ 
    \,\,FV(\DualLNLLogicmv{x}  \DualLNLLogicsym{(}  \DualLNLLogicnt{e}  \DualLNLLogicsym{)}) = FV(e)\\
    \,\,FV( \mathsf{mkc}( \DualLNLLogicnt{e} , \DualLNLLogicmv{y} ) ) = FV(e)\\
    \,\,FV( \DualLNLLogicnt{e_{{\mathrm{1}}}}  \oplus  \DualLNLLogicnt{e_{{\mathrm{2}}}} ) = FV(e_1) \cup FV(e_2)\\
    \,\,FV( \mathsf{casel}\, \DualLNLLogicnt{e} ) = FV(e)\\
    \,\,FV( \mathsf{caser}\, \DualLNLLogicnt{e} ) = FV(e)\\    
    \,\,FV( \mathsf{J}\, \DualLNLLogicnt{t} ) = FV(t)\\
    \\
    \\
    \\[13px]
  \end{array}
  & \quad &  
  \begin{array}{lll}
    \textbf{non-linear terms:}\\
    \,\,FV(x) = \{ x\}\\
    \,\,FV( \varepsilon ) = \emptyset \\ 
    \,\,FV( \DualLNLLogicnt{t_{{\mathrm{1}}}}  \cdot  \DualLNLLogicnt{t_{{\mathrm{2}}}} )  = FV(t_1)\cup FV(t_2) \\
    \,\,FV( \mathsf{false}\, \DualLNLLogicnt{t} )  = FV(t) \\
    \,\,FV(\DualLNLLogicmv{x}  \DualLNLLogicsym{(}  \DualLNLLogicnt{t}  \DualLNLLogicsym{)}) = FV(t)\\
    \,\,FV( \mathsf{mkc}( \DualLNLLogicnt{t} , \DualLNLLogicmv{y} ) ) = FV(t)\\
    \,\,FV( \mathsf{inl}\, \DualLNLLogicnt{t} ) = FV( \mathsf{inr}\, \DualLNLLogicnt{t} ) = FV(t)\\
    \,\,FV( \mathsf{case}\, \DualLNLLogicnt{t_{{\mathrm{1}}}} \,\mathsf{of}\, \DualLNLLogicmv{x} . \DualLNLLogicnt{t_{{\mathrm{2}}}} , \DualLNLLogicmv{y} . \DualLNLLogicnt{t_{{\mathrm{3}}}} ) = \\
    \,\,\,\,\,\,\,\,\,\,FV(t_1)\cup FV(t_2)\smallsetminus\{x\}\cup FV(t_3)\smallsetminus \{y\}\\
    \,\,FV( \mathsf{let}\,\mathsf{J}\, \DualLNLLogicmv{y}  =  \DualLNLLogicnt{e} \,\mathsf{in}\, \DualLNLLogicnt{t} ) = FV(e)\cup FV(t)\smallsetminus \{y\} \\
    \,\,FV( \mathsf{let}\,\mathsf{H}\, \DualLNLLogicmv{y}  =  \DualLNLLogicnt{t_{{\mathrm{1}}}} \,\mathsf{in}\, \DualLNLLogicnt{t_{{\mathrm{2}}}} ) = FV(\DualLNLLogicnt{t_{{\mathrm{1}}}})\cup FV(\DualLNLLogicnt{t_{{\mathrm{2}}}})\smallsetminus \{y\} \\
    \,\,FV( \mathsf{H}\, \DualLNLLogicnt{e} ) = FV(e)\\
  \end{array}
\end{array}
\]
The free variables of a $p$-term are defined s follows: 
\[
\begin{array}{rll}
  FV(\mathtt{postp}_{\bot}\ e) & = & FV(e)\\
  FV(\mathtt{postp} (x 􏰀\mapsto e_1, e_2)) & = & FV(e_1) \smallsetminus \{x\} \cup FV(e_2)
\end{array}
\] 
and similarly for terms $\mathtt{postp} (x 􏰀\mapsto t_1, t_2)$.
\end{definition}

Terms are then typed by annotating the previous term structure over
DND derivations, and this is accomplished by annotating the DND
inference rules.  The typing rules for the non-linear fragment of TND
can be found in Figure~\ref{fig:non-linear-ta}, and the typing rules
for the linear fragment of TND can be found in
Figure~\ref{fig:linear-ta}.
\begin{figure}
  \begin{mdframed}
    \begin{mathpar}
      \DualLNLLogicdruleTCXXid{} \and
      \DualLNLLogicdruleTCXXweak{} \and
      \DualLNLLogicdruleTCXXcontr{} \and
      \DualLNLLogicdruleTCXXzI{} \and   
      \DualLNLLogicdruleTCXXdIOne{} \and
      \DualLNLLogicdruleTCXXdITwo{} \and
      \DualLNLLogicdruleTCXXdE{} \and
      \DualLNLLogicdruleTCXXsubI{} \and
      \DualLNLLogicdruleTCXXsubE{} \and
      \DualLNLLogicdruleTCXXHE{} \and      
    \end{mathpar}
  \end{mdframed}
  \caption{Non-linear fragment of the term assignment for TND}
  \label{fig:non-linear-ta}
\end{figure}
\begin{figure}
  \begin{mdframed}
    \begin{mathpar}
      \DualLNLLogicdruleTLXXid{} \and
      \DualLNLLogicdruleTLXXweak{} \and 
      \DualLNLLogicdruleTLXXcontr{} \and
      \DualLNLLogicdruleTLXXpI{} \and
      \DualLNLLogicdruleTLXXpE{} \and
      \DualLNLLogicdruleTLXXparI{} \and
      \DualLNLLogicdruleTLXXparE{} \and
      \DualLNLLogicdruleTLXXsubI{} \and
      \DualLNLLogicdruleTLXXsubE{} \and
      \DualLNLLogicdruleTLXXJI{} \and
      \DualLNLLogicdruleTLXXJE{} \and
      \DualLNLLogicdruleTLXXHI{} \and
      \DualLNLLogicdruleTLXXHE{} \and      
    \end{mathpar}
  \end{mdframed}
  \caption{Linear fragment of the term assignment for TND}
  \label{fig:linear-ta}
\end{figure}
\begin{remark}
  Let us call terms of the form $ \mathsf{postp}\,( \DualLNLLogicmv{x}  \mapsto  \DualLNLLogicnt{t_{{\mathrm{1}}}} ,  \DualLNLLogicnt{t_{{\mathrm{2}}}} ) $, $ \mathsf{postp}\,( \DualLNLLogicmv{x}  \mapsto  \DualLNLLogicnt{e_{{\mathrm{1}}}} ,  \DualLNLLogicnt{e_{{\mathrm{2}}}} ) $, and $ \mathsf{postp}_\perp\, \DualLNLLogicnt{e} $ $p$-\emph{terms}. Then say that
  a term $t$ is \emph{$p$-normal} if $t$ does not contain any $p$-term
  as a proper subterm.  In a typed calculus, linear $p$-terms can be typed
  with $\bot$.  Non-linear $p$ terms can be typed with $0$:  in
  presence of the $\DualLNLLogicdruleTCXXzIName$ rule this yieds
instances of the \emph{ex falso} rule. This is what happens in Crolard's 
calculus, where the analogue of the $ \mathsf{postp}\,( \DualLNLLogicmv{x}  \mapsto  \DualLNLLogicnt{t_{{\mathrm{1}}}} ,  \DualLNLLogicnt{t_{{\mathrm{2}}}} ) $, namely, 
$\mathtt{resume}\ t_2\ \mathtt{with}\ x\mapsto t_1$, always goes with a 
\emph{weakening} operation. The term $ \varepsilon $ is the identity of the
contraction binary operator $ \DualLNLLogicnt{t_{{\mathrm{1}}}}  \cdot  \DualLNLLogicnt{t_{{\mathrm{2}}}} $.

\noindent
However when within a non-$p$-normal term an expression of the form 
$ \mathsf{postp}\,( \DualLNLLogicmv{x}  \mapsto  \DualLNLLogicnt{t_{{\mathrm{1}}}} ,  \DualLNLLogicnt{t_{{\mathrm{2}}}} ) $ is eliminated as a $\beta$-redex, there is a choice
of the syntax for the contextual reduction. In absence of a more detailed 
analysis of the matter, we prefer to leave the typing of $p$ terms implicit 
in the syntax, to enforce the requirement of $p$-normality and to use the 
rule of weakening in place of the  $\DualLNLLogicdruleTCXXzIName$ 
rule in this context. 
\end{remark}

The typing rules depend on the extension of let and case expressions
to typing contexts.  We use the following notation for \emph{parallel
  composition} of typing contexts:
\[
\Delta = e_1: A_1\,\|\, \cdots \,\|\, e_n: A_n\qquad 
\]
This operation should be regarded as associative, commutative and
having the empty context as its identity. 
The extension of let expressions to contexts is given as follows:
\begin{center}
  \begin{tabular}{rcl}
    $\mathtt{let}\; p\; = t\; \mathtt{in}\ \cdot$ &\ =\ & $\cdot$\\
    $\mathtt{let}\; p\; = \DualLNLLogicnt{t_{{\mathrm{1}}}}\; \mathtt{in}\ (\DualLNLLogicnt{t_{{\mathrm{2}}}}: A)$ &\ =\ & $\mathtt{let}\; p\; = \DualLNLLogicnt{t_{{\mathrm{1}}}}\,\mathsf{in}\,\DualLNLLogicnt{t_{{\mathrm{2}}}} : A$\\
    $\mathtt{let}\; p\; = t\; \mathtt{in}\ (\Psi_{{\mathrm{1}}}\,\|\,\Psi_{{\mathrm{2}}})$ &\ =\ & 
    $(\mathtt{let}\; p\; = t\; \mathtt{in}\ \Psi_{{\mathrm{1}}})\,\|\, (\mathtt{let}\; p\; = t\ \mathtt{in}\ \Psi_{{\mathrm{2}}})$
  \end{tabular}
\end{center}
where $p =  \mathsf{H}\, \DualLNLLogicmv{y} $ or $p =  \mathsf{J}\, \DualLNLLogicmv{y} $.
Case expressions are handled similarly.

Similarly to DND logic we have the following admissible rules.
\begin{lemma}[Admissible Typing Rules]
  The term assignment for the admissible rules of the calculus is as follows:
  \begin{mathpar}
    \DualLNLLogicdruleTCXXcut{} \and     
    \DualLNLLogicdruleTLXXCcut{} \and
    \DualLNLLogicdruleTLXXcut{}      
  \end{mathpar}
\end{lemma}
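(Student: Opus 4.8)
The plan is to prove this lemma exactly as the corresponding admissible-cut lemmas for DND logic (Lemma~\ref{lemma:admissible_rules_in_dnd}) were proved, but now tracking term annotations. Since the underlying rules $\DualLNLLogicdruleNCXXcutName$, $\DualLNLLogicdruleNLXXCcutName$, and $\DualLNLLogicdruleNLXXcutName$ are already known to be admissible in DND, the only new content is to exhibit, for each instance of these admissible rules, the term that annotates the conclusion as a function of the terms annotating the premises, and to verify that the resulting annotated derivation is a legal TND derivation. The natural candidates are the substitution operations: for $\DualLNLLogicdruleTCXXcutName$ the conclusion is annotated by $[t/y]\Psi_2$ (capture-avoiding substitution of the non-linear term $t$ for the non-linear variable $y$ throughout the context $\Psi_2$); for $\DualLNLLogicdruleTLXXCcutName$ likewise by $[t/y]\Psi_2$; and for $\DualLNLLogicdruleTLXXcutName$ the conclusion is annotated by $[e/y]\Delta_2$ and $[e/y]\Psi_2$, substituting the linear term $e$ for the linear variable $y$.

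First I would make precise the substitution operation on term-annotated contexts, extending the substitution on terms (which must be defined by mutual recursion on linear and non-linear terms, respecting the $\func{H}$/$\func{J}$ boundary and commuting with $\lett$/$\mathsf{case}$ extension to contexts exactly as described just before the lemma). Second, I would argue by induction on the DND derivation of the left premise (the one containing the cut variable $y$), following the proof of admissibility of cut in DND from Lemma~\ref{lemma:admissible_rules_in_dnd}: in each case, the DND proof tells us how the cut is pushed upward or eliminated, and I would check that the term annotations produced by that transformation are precisely the substitutions claimed. The base case is $\DualLNLLogicdruleTCXXidName$ (resp. $\DualLNLLogicdruleTLXXidName$), where $\Psi_2$ or $\Delta_2$ is just $y : T$ and $[t/y](y{:}T) = t{:}T$, matching the right premise. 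The commuting cases for the logical rules follow because substitution commutes with all term constructors; the principal cases match the $\beta$-style equalities built into the term syntax (e.g., the way $\mathsf{mkc}$ and $\mathsf{postp}$ interact in $\DualLNLLogicdruleTLXXsubIName$/$\DualLNLLogicdruleTLXXsubEName$).

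The main obstacle I anticipate is the elimination rules with context-valued minor premises, especially $\DualLNLLogicdruleTCXXdEName$, $\DualLNLLogicdruleNLXXparEName$/$\DualLNLLogicdruleTLXXparEName$, $\DualLNLLogicdruleTLXXJEName$, and $\DualLNLLogicdruleTCXXHEName$: here the cut variable $y$ may occur inside a $\mathsf{case}\,\ldots\,\mathsf{of}\,\ldots$ or $\lett$ context, and one must verify that substituting $t$ (or $e$) commutes with the context-extension operation for $\lett$ and $\mathsf{case}$ — i.e., that $[t/y](\lett p = t_1 \inn \Psi) = \lett p = [t/y]t_1 \inn [t/y]\Psi$ modulo the side condition $y \notin \{p\}$, which is guaranteed by $\alpha$-renaming. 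I would also need the auxiliary fact, provable by a straightforward induction using the free-variable functions, that substitution preserves typing when the substituted term has the right type in the ambient single-hypothesis context; this plays the role of the usual substitution lemma and is where the "only a single hypothesis" peculiarity of cointuitionistic logic actually simplifies matters, since there is no context-splitting to manage on the hypothesis side. Once these bookkeeping lemmas are in place, the verification of each rule is routine, and the statement follows.
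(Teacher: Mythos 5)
The paper states this lemma without any proof (just as it leaves Lemma~\ref{lemma:admissible_rules_in_dnd} unproved), so there is no argument of the authors' to compare yours against; you are filling a gap rather than rederiving a printed proof. Your approach --- read the three cuts as substitution lemmas, define capture-avoiding substitution on term-annotated contexts, and induct on the derivation of the premise whose hypothesis is the cut variable $y$, splicing in the other premise at the identity axiom --- is the standard and correct one, and the conclusions you assign ($\Psi_1, [t/y]\Psi_2$, resp.\ $[e/y]\Delta_2; \Psi_1,[e/y]\Psi_2$) match the rules exactly. Two small corrections. First, you repeatedly call the premise containing $y$ the ``left'' premise and say the base case ``matches the right premise''; it is the other way around: $y$ is the hypothesis of the \emph{right} premise $y:T\vdash_{\mathsf{C}}\Psi_2$, and in the identity base case $[t/y](y{:}T)=t{:}T$ reproduces the \emph{left} premise. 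The induction itself is unaffected, but the labels should be fixed. Second, your list of delicate cases omits the one that is genuinely non-routine in this calculus: the non-standard binding of $y$ in $\mathsf{mkc}(t,y)$ and $\mathsf{postp}(y\mapsto t_1,t_2)$, whose bound occurrences appear as indexed expressions $y(t)$ scattered through the conclusion context (rules $\DualLNLLogicdruleTCXXsubIName$, $\DualLNLLogicdruleTCXXsubEName$ and their linear analogues already contain the substitution $[y(t)/y]\Psi_2$ in their conclusions). Your substitution-commutes-with-constructors step must be checked against this binder and its $\alpha$-conversion (the paper defers to \cite{Bellin:2012} for that), including the composition of the cut substitution with the $[y(t)/y]$ substitutions built into those rules; this is bookkeeping, but it is the place where a naive ``substitution commutes with everything'' claim could fail, so it deserves an explicit case rather than being folded into ``routine''.
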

\noindent
We generalize the rule of contraction on the non-linear
side to contexts. Let $m_1$ and $m_2$ be multisets of terms, then we
denote by $m_1 \cdot m_2$ the sum of multisets; if multisets are
represented as lists, then the sum is representable as the appending
of the lists. We denote singleton multisets, $\{\DualLNLLogicnt{t}\}$, by the term that inhabits
it, e.g. $\DualLNLLogicnt{t}$.  We extend this to contexts, $\Psi_1 \cdot \Psi_2$, recursively as
follows:
\begin{center}
\begin{tabular}{rcl}
$(\cdot) \cdot (\cdot)$ &\ =\ & $(\cdot)$\\
$(t_1: S)\cdot (t_2:S)$ &\ =\ & $t_1\cdot t_2 : S$\\
  $(\Psi_1\,\|\,\Psi_3)\cdot(\Psi_2\,\|\,\Psi_4)$ &\ =\ & $(\Psi_1\cdot\Psi_2)\,\|\,(\Psi_3\cdot\Psi_4)$  
\end{tabular}
\end{center}
\noindent
where $|\Psi_1|=|\Psi_3|$ and $|\Psi_2|=|\Psi_4|$. Whenever we write $\Psi_1 \cdot \Psi_2$ 
we assume that  $|\Psi_1|=|\Psi_2|$.

At this point we are now ready to turn to computing in TND by
specifying the reduction relation.  This definition is perhaps the
most interesting aspect of the theory, because reducing one term may
affect others.

\textbf{$\beta$-Reduction in TND.} As we discussed above
cointuitionistic logic can be interpreted as a theory of coroutines that
manipulate local context.  Thus, reducing one term in a typing context
could affect other terms in the context.  This implies that the
definition of the reduction relation for TND must account for more
than a single term. We accomplish this by defining the reduction
relation of terms in context, $ \DualLNLLogicmv{x}  :  \DualLNLLogicnt{S}  \vdash_{\mathsf{C} }  \Psi_{{\mathrm{1}}}  \DualLNLLogicsym{,}  \DualLNLLogicnt{t}  \DualLNLLogicsym{:}  \DualLNLLogicnt{T}  \DualLNLLogicsym{,}  \Psi_{{\mathrm{2}}} $ and $ \DualLNLLogicmv{x}  :  \DualLNLLogicnt{A}  \vdash_{\mathsf{L} }  \Delta_{{\mathrm{1}}}  \DualLNLLogicsym{,}  \DualLNLLogicnt{e}  \DualLNLLogicsym{:}  \DualLNLLogicnt{B}  \DualLNLLogicsym{,}  \Delta_{{\mathrm{2}}} ; \Psi $, so that the manipulation of the context
is made explicit.
\begin{figure}
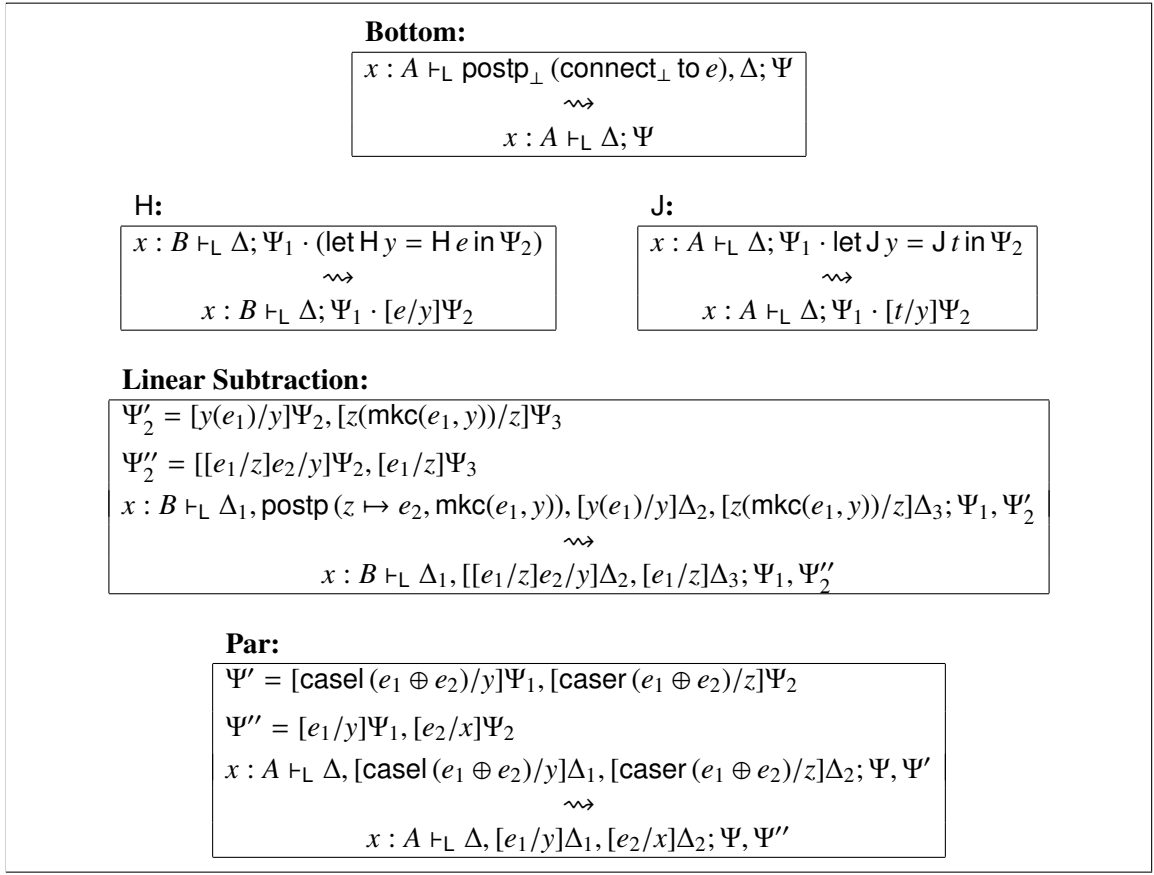

  \begin{mdframed}     
  \begin{center}
    \begin{math}
      \begin{array}{c}
        \begin{array}{|c|}
          \multicolumn{1}{l}{\textbf{Bottom:}}\\
          \hline
           \DualLNLLogicmv{x}  :  \DualLNLLogicnt{A}  \vdash_{\mathsf{L} }   \mathsf{postp}_\perp\, \DualLNLLogicsym{(}   \mathsf{connect}_\perp\,\mathsf{to}\, \DualLNLLogicnt{e}   \DualLNLLogicsym{)}   \DualLNLLogicsym{,}  \Delta ; \Psi \\
          \rightsquigarrow\\
           \DualLNLLogicmv{x}  :  \DualLNLLogicnt{A}  \vdash_{\mathsf{L} }  \Delta ; \Psi \\
          \hline
        \end{array}\\
        \\
        \begin{array}{lll}
          \begin{array}{|c|}
            \multicolumn{1}{l}{\textbf{$\mathsf{H}$:}}\\
            \hline
             \DualLNLLogicmv{x}  :  \DualLNLLogicnt{B}  \vdash_{\mathsf{L} }  \Delta ; \Psi_{{\mathrm{1}}}  \cdot  \DualLNLLogicsym{(}   \mathsf{let}\,\mathsf{H}\, \DualLNLLogicmv{y}  =   \mathsf{H}\, \DualLNLLogicnt{e}  \,\mathsf{in}\, \Psi_{{\mathrm{2}}}   \DualLNLLogicsym{)} \\
            \rightsquigarrow\\
             \DualLNLLogicmv{x}  :  \DualLNLLogicnt{B}  \vdash_{\mathsf{L} }  \Delta ; \Psi_{{\mathrm{1}}}  \cdot  \DualLNLLogicsym{[}  \DualLNLLogicnt{e}  \DualLNLLogicsym{/}  \DualLNLLogicmv{y}  \DualLNLLogicsym{]}  \Psi_{{\mathrm{2}}} \\
            \hline
          \end{array}
          & \quad &
          \begin{array}{|c|}
            \multicolumn{1}{l}{\textbf{$\mathsf{J}$:}}\\
            \hline
             \DualLNLLogicmv{x}  :  \DualLNLLogicnt{A}  \vdash_{\mathsf{L} }  \Delta ; \Psi_{{\mathrm{1}}}  \cdot   \mathsf{let}\,\mathsf{J}\, \DualLNLLogicmv{y}  =   \mathsf{J}\, \DualLNLLogicnt{t}  \,\mathsf{in}\, \Psi_{{\mathrm{2}}}  \\
            \rightsquigarrow\\
             \DualLNLLogicmv{x}  :  \DualLNLLogicnt{A}  \vdash_{\mathsf{L} }  \Delta ; \Psi_{{\mathrm{1}}}  \cdot  \DualLNLLogicsym{[}  \DualLNLLogicnt{t}  \DualLNLLogicsym{/}  \DualLNLLogicmv{y}  \DualLNLLogicsym{]}  \Psi_{{\mathrm{2}}} \\
            \hline
          \end{array}
        \end{array}\\
        \\
        \begin{array}{|c|}
          \multicolumn{1}{l}{\textbf{Linear Subtraction:}}\\
          \hline
          \multicolumn{1}{|l|}{\Psi'_{{\mathrm{2}}} = \DualLNLLogicsym{[}  \DualLNLLogicmv{y}  \DualLNLLogicsym{(}  \DualLNLLogicnt{e_{{\mathrm{1}}}}  \DualLNLLogicsym{)}  \DualLNLLogicsym{/}  \DualLNLLogicmv{y}  \DualLNLLogicsym{]}  \Psi_{{\mathrm{2}}}  \DualLNLLogicsym{,}  \DualLNLLogicsym{[}  \DualLNLLogicmv{z}  \DualLNLLogicsym{(}   \mathsf{mkc}( \DualLNLLogicnt{e_{{\mathrm{1}}}} , \DualLNLLogicmv{y} )   \DualLNLLogicsym{)}  \DualLNLLogicsym{/}  \DualLNLLogicmv{z}  \DualLNLLogicsym{]}  \Psi_{{\mathrm{3}}}}\\[5px]
          \multicolumn{1}{|l|}{\Psi''_{{\mathrm{2}}} = \DualLNLLogicsym{[}  \DualLNLLogicsym{[}  \DualLNLLogicnt{e_{{\mathrm{1}}}}  \DualLNLLogicsym{/}  \DualLNLLogicmv{z}  \DualLNLLogicsym{]}  \DualLNLLogicnt{e_{{\mathrm{2}}}}  \DualLNLLogicsym{/}  \DualLNLLogicmv{y}  \DualLNLLogicsym{]}  \Psi_{{\mathrm{2}}}  \DualLNLLogicsym{,}  \DualLNLLogicsym{[}  \DualLNLLogicnt{e_{{\mathrm{1}}}}  \DualLNLLogicsym{/}  \DualLNLLogicmv{z}  \DualLNLLogicsym{]}  \Psi_{{\mathrm{3}}}}\\
          \\[-10px]
           \DualLNLLogicmv{x}  :  \DualLNLLogicnt{B}  \vdash_{\mathsf{L} }  \Delta_{{\mathrm{1}}}  \DualLNLLogicsym{,}   \mathsf{postp}\,( \DualLNLLogicmv{z}  \mapsto  \DualLNLLogicnt{e_{{\mathrm{2}}}} ,   \mathsf{mkc}( \DualLNLLogicnt{e_{{\mathrm{1}}}} , \DualLNLLogicmv{y} )  )   \DualLNLLogicsym{,}  \DualLNLLogicsym{[}  \DualLNLLogicmv{y}  \DualLNLLogicsym{(}  \DualLNLLogicnt{e_{{\mathrm{1}}}}  \DualLNLLogicsym{)}  \DualLNLLogicsym{/}  \DualLNLLogicmv{y}  \DualLNLLogicsym{]}  \Delta_{{\mathrm{2}}}  \DualLNLLogicsym{,}  \DualLNLLogicsym{[}  \DualLNLLogicmv{z}  \DualLNLLogicsym{(}   \mathsf{mkc}( \DualLNLLogicnt{e_{{\mathrm{1}}}} , \DualLNLLogicmv{y} )   \DualLNLLogicsym{)}  \DualLNLLogicsym{/}  \DualLNLLogicmv{z}  \DualLNLLogicsym{]}  \Delta_{{\mathrm{3}}} ; \Psi_{{\mathrm{1}}}  \DualLNLLogicsym{,}  \Psi'_{{\mathrm{2}}} \\
          \rightsquigarrow\\
           \DualLNLLogicmv{x}  :  \DualLNLLogicnt{B}  \vdash_{\mathsf{L} }  \Delta_{{\mathrm{1}}}  \DualLNLLogicsym{,}  \DualLNLLogicsym{[}  \DualLNLLogicsym{[}  \DualLNLLogicnt{e_{{\mathrm{1}}}}  \DualLNLLogicsym{/}  \DualLNLLogicmv{z}  \DualLNLLogicsym{]}  \DualLNLLogicnt{e_{{\mathrm{2}}}}  \DualLNLLogicsym{/}  \DualLNLLogicmv{y}  \DualLNLLogicsym{]}  \Delta_{{\mathrm{2}}}  \DualLNLLogicsym{,}  \DualLNLLogicsym{[}  \DualLNLLogicnt{e_{{\mathrm{1}}}}  \DualLNLLogicsym{/}  \DualLNLLogicmv{z}  \DualLNLLogicsym{]}  \Delta_{{\mathrm{3}}} ; \Psi_{{\mathrm{1}}}  \DualLNLLogicsym{,}  \Psi''_{{\mathrm{2}}} \\
          \hline
        \end{array}\\
        \\
        \begin{array}{|c|}
          \multicolumn{1}{l}{\textbf{Par:}}\\
          \hline
          \multicolumn{1}{|l|}{\Psi' = \DualLNLLogicsym{[}   \mathsf{casel}\, \DualLNLLogicsym{(}   \DualLNLLogicnt{e_{{\mathrm{1}}}}  \oplus  \DualLNLLogicnt{e_{{\mathrm{2}}}}   \DualLNLLogicsym{)}   \DualLNLLogicsym{/}  \DualLNLLogicmv{y}  \DualLNLLogicsym{]}  \Psi_{{\mathrm{1}}}  \DualLNLLogicsym{,}  \DualLNLLogicsym{[}   \mathsf{caser}\, \DualLNLLogicsym{(}   \DualLNLLogicnt{e_{{\mathrm{1}}}}  \oplus  \DualLNLLogicnt{e_{{\mathrm{2}}}}   \DualLNLLogicsym{)}   \DualLNLLogicsym{/}  \DualLNLLogicmv{z}  \DualLNLLogicsym{]}  \Psi_{{\mathrm{2}}}}\\[5px]
          \multicolumn{1}{|l|}{\Psi'' = \DualLNLLogicsym{[}  \DualLNLLogicnt{e_{{\mathrm{1}}}}  \DualLNLLogicsym{/}  \DualLNLLogicmv{y}  \DualLNLLogicsym{]}  \Psi_{{\mathrm{1}}}  \DualLNLLogicsym{,}  \DualLNLLogicsym{[}  \DualLNLLogicnt{e_{{\mathrm{2}}}}  \DualLNLLogicsym{/}  \DualLNLLogicmv{x}  \DualLNLLogicsym{]}  \Psi_{{\mathrm{2}}}}\\
          \\[-10px]
           \DualLNLLogicmv{x}  :  \DualLNLLogicnt{A}  \vdash_{\mathsf{L} }  \Delta  \DualLNLLogicsym{,}  \DualLNLLogicsym{[}   \mathsf{casel}\, \DualLNLLogicsym{(}   \DualLNLLogicnt{e_{{\mathrm{1}}}}  \oplus  \DualLNLLogicnt{e_{{\mathrm{2}}}}   \DualLNLLogicsym{)}   \DualLNLLogicsym{/}  \DualLNLLogicmv{y}  \DualLNLLogicsym{]}  \Delta_{{\mathrm{1}}}  \DualLNLLogicsym{,}  \DualLNLLogicsym{[}   \mathsf{caser}\, \DualLNLLogicsym{(}   \DualLNLLogicnt{e_{{\mathrm{1}}}}  \oplus  \DualLNLLogicnt{e_{{\mathrm{2}}}}   \DualLNLLogicsym{)}   \DualLNLLogicsym{/}  \DualLNLLogicmv{z}  \DualLNLLogicsym{]}  \Delta_{{\mathrm{2}}} ; \Psi  \DualLNLLogicsym{,}  \Psi' \\
          \rightsquigarrow\\
           \DualLNLLogicmv{x}  :  \DualLNLLogicnt{A}  \vdash_{\mathsf{L} }  \Delta  \DualLNLLogicsym{,}  \DualLNLLogicsym{[}  \DualLNLLogicnt{e_{{\mathrm{1}}}}  \DualLNLLogicsym{/}  \DualLNLLogicmv{y}  \DualLNLLogicsym{]}  \Delta_{{\mathrm{1}}}  \DualLNLLogicsym{,}  \DualLNLLogicsym{[}  \DualLNLLogicnt{e_{{\mathrm{2}}}}  \DualLNLLogicsym{/}  \DualLNLLogicmv{x}  \DualLNLLogicsym{]}  \Delta_{{\mathrm{2}}} ; \Psi  \DualLNLLogicsym{,}  \Psi'' \\
          \hline
        \end{array}
      \end{array}
    \end{math}
  \end{center}
  \end{mdframed}
  \caption{Reductions for Linear Terms}
  \label{fig:red-linear-terms}
\end{figure}
\begin{figure}
  \begin{mdframed}    
  \begin{center}
  \begin{math}
    \begin{array}{c}
      \begin{array}{|c|}
        \multicolumn{1}{l}{\textbf{Subtraction:}}\\
        \hline
             \DualLNLLogicmv{x}  :  \DualLNLLogicnt{S}  \vdash_{\mathsf{C} }  \Psi_{{\mathrm{1}}}  \DualLNLLogicsym{,}    \mathsf{postp}\,( \DualLNLLogicmv{z}  \mapsto  \DualLNLLogicnt{t_{{\mathrm{2}}}} ,   \mathsf{mkc}( \DualLNLLogicnt{t_{{\mathrm{1}}}} , \DualLNLLogicmv{y} )  )    \DualLNLLogicsym{,}  \DualLNLLogicsym{[}  \DualLNLLogicmv{y}  \DualLNLLogicsym{(}  \DualLNLLogicnt{t_{{\mathrm{1}}}}  \DualLNLLogicsym{)}  \DualLNLLogicsym{/}  \DualLNLLogicmv{y}  \DualLNLLogicsym{]}  \Psi_{{\mathrm{2}}}  \DualLNLLogicsym{,}  \DualLNLLogicsym{[}   \mathsf{mkc}( \DualLNLLogicnt{t_{{\mathrm{1}}}} , \DualLNLLogicmv{y} )   \DualLNLLogicsym{/}  \DualLNLLogicmv{z}  \DualLNLLogicsym{]}  \Psi_{{\mathrm{3}}} \\
            \rightsquigarrow\\
                 \DualLNLLogicmv{x}  :  \DualLNLLogicnt{S}  \vdash_{\mathsf{C} }  \Psi_{{\mathrm{1}}}  \DualLNLLogicsym{,}  \DualLNLLogicsym{[}  \DualLNLLogicsym{[}  \DualLNLLogicnt{t_{{\mathrm{1}}}}  \DualLNLLogicsym{/}  \DualLNLLogicmv{z}  \DualLNLLogicsym{]}  \DualLNLLogicnt{t_{{\mathrm{2}}}}  \DualLNLLogicsym{/}  \DualLNLLogicmv{y}  \DualLNLLogicsym{]}  \Psi_{{\mathrm{2}}}  \DualLNLLogicsym{,}  \DualLNLLogicsym{[}  \DualLNLLogicnt{t_{{\mathrm{1}}}}  \DualLNLLogicsym{/}  \DualLNLLogicmv{x}  \DualLNLLogicsym{]}  \Psi_{{\mathrm{3}}} \\
                \hline
      \end{array}\\
      \\
      \begin{array}{lll}
        \begin{array}{|c|}
          \multicolumn{1}{l}{\textbf{Coproduct Left:}}\\
          \hline
               \DualLNLLogicmv{x}  :  \DualLNLLogicnt{S}  \vdash_{\mathsf{C} }  \Psi_{{\mathrm{1}}}  \DualLNLLogicsym{,}   \mathsf{case}\, \DualLNLLogicsym{(}   \mathsf{inl}\, \DualLNLLogicnt{t_{{\mathrm{1}}}}   \DualLNLLogicsym{)} \,\mathsf{of}\, \DualLNLLogicmv{y} . \Psi_{{\mathrm{2}}} ,  \DualLNLLogicmv{z} . \Psi_{{\mathrm{3}}}  \\
              \rightsquigarrow\\
                   \DualLNLLogicmv{x}  :  \DualLNLLogicnt{S}  \vdash_{\mathsf{C} }  \Psi_{{\mathrm{1}}}  \DualLNLLogicsym{,}  \DualLNLLogicsym{[}  \DualLNLLogicnt{t_{{\mathrm{1}}}}  \DualLNLLogicsym{/}  \DualLNLLogicmv{y}  \DualLNLLogicsym{]}  \Psi_{{\mathrm{2}}} \\
                  \hline
        \end{array}
        & \quad &
        \begin{array}{|c|}
          \multicolumn{1}{l}{\textbf{Coproduct Right:}}\\
          \hline
               \DualLNLLogicmv{x}  :  \DualLNLLogicnt{S}  \vdash_{\mathsf{C} }  \Psi_{{\mathrm{1}}}  \DualLNLLogicsym{,}   \mathsf{case}\, \DualLNLLogicsym{(}   \mathsf{inr}\, \DualLNLLogicnt{t_{{\mathrm{1}}}}   \DualLNLLogicsym{)} \,\mathsf{of}\, \DualLNLLogicmv{y} . \Psi_{{\mathrm{2}}} ,  \DualLNLLogicmv{z} . \Psi_{{\mathrm{3}}}  \\
              \rightsquigarrow\\
                   \DualLNLLogicmv{x}  :  \DualLNLLogicnt{S}  \vdash_{\mathsf{C} }  \Psi_{{\mathrm{1}}}  \DualLNLLogicsym{,}  \DualLNLLogicsym{[}  \DualLNLLogicnt{t_{{\mathrm{1}}}}  \DualLNLLogicsym{/}  \DualLNLLogicmv{z}  \DualLNLLogicsym{]}  \Psi_{{\mathrm{3}}} \\
                  \hline
        \end{array}
      \end{array}\\
      \\
      \begin{array}{|c|}
        \multicolumn{1}{l}{\textbf{$\mathsf{H}$:}}\\
        \hline        
         \DualLNLLogicmv{x}  :   \mathsf{H}\, \DualLNLLogicnt{B}   \vdash_{\mathsf{C} }  \DualLNLLogicsym{(}   \mathsf{let}\,\mathsf{H}\, \DualLNLLogicmv{x}  =  \DualLNLLogicmv{y} \,\mathsf{in}\, \Psi_{{\mathrm{1}}}   \DualLNLLogicsym{)}  \cdot   \mathsf{let}\,\mathsf{H}\, \DualLNLLogicmv{z}  =  \DualLNLLogicsym{(}   \mathsf{let}\,\mathsf{H}\, \DualLNLLogicmv{x}  =  \DualLNLLogicmv{y} \,\mathsf{in}\,  \mathsf{H}\, \DualLNLLogicnt{e}    \DualLNLLogicsym{)} \,\mathsf{in}\, \Psi_{{\mathrm{2}}}  \\
        \rightsquigarrow\\
         \DualLNLLogicmv{x}  :   \mathsf{H}\, \DualLNLLogicnt{B}   \vdash_{\mathsf{C} }  \DualLNLLogicsym{(}   \mathsf{let}\,\mathsf{H}\, \DualLNLLogicmv{x}  =  \DualLNLLogicmv{y} \,\mathsf{in}\, \Psi_{{\mathrm{1}}}   \DualLNLLogicsym{)}  \cdot  \DualLNLLogicsym{(}   \mathsf{let}\,\mathsf{H}\, \DualLNLLogicmv{x}  =  \DualLNLLogicmv{y} \,\mathsf{in}\, \DualLNLLogicsym{[}  \DualLNLLogicnt{e}  \DualLNLLogicsym{/}  \DualLNLLogicmv{z}  \DualLNLLogicsym{]}  \Psi_{{\mathrm{2}}}   \DualLNLLogicsym{)} \\
        \hline
      \end{array}\\
      \\
      \begin{array}{|c|}
        \multicolumn{1}{l}{\textbf{\textbf{Contraction with $\DualLNLLogicdruleTCXXdEName{}$}:}}\\
        \hline
         \DualLNLLogicmv{x}  :  \DualLNLLogicnt{S}  \vdash_{\mathsf{C} }  \Psi_{{\mathrm{1}}}  \DualLNLLogicsym{,}   \mathsf{case}\, \DualLNLLogicsym{(}   \DualLNLLogicnt{t_{{\mathrm{1}}}}  \cdot  \DualLNLLogicnt{t_{{\mathrm{2}}}}   \DualLNLLogicsym{)} \,\mathsf{of}\, \DualLNLLogicmv{y} . \Psi_{{\mathrm{2}}} ,  \DualLNLLogicmv{z} . \Psi_{{\mathrm{3}}}  \\
        \rightsquigarrow\\
         \DualLNLLogicmv{x}  :  \DualLNLLogicnt{S}  \vdash_{\mathsf{C} }  \Psi_{{\mathrm{1}}}  \DualLNLLogicsym{,}  \DualLNLLogicsym{(}   \mathsf{case}\, \DualLNLLogicnt{t_{{\mathrm{1}}}} \,\mathsf{of}\, \DualLNLLogicmv{y} . \Psi_{{\mathrm{2}}} ,  \DualLNLLogicmv{z} . \Psi_{{\mathrm{3}}}   \DualLNLLogicsym{)}  \cdot  \DualLNLLogicsym{(}   \mathsf{case}\, \DualLNLLogicnt{t_{{\mathrm{2}}}} \,\mathsf{of}\, \DualLNLLogicmv{y} . \Psi_{{\mathrm{2}}} ,  \DualLNLLogicmv{z} . \Psi_{{\mathrm{3}}}   \DualLNLLogicsym{)} \\
        \hline
      \end{array}\\
      \\
      \begin{array}{cc}
        \begin{array}{|c|}
        \multicolumn{1}{l}{\textbf{\textbf{Contraction with $\DualLNLLogicdruleTCXXdIOneName{}$}:}}\\
        \hline
         \DualLNLLogicmv{x}  :  \DualLNLLogicnt{S}  \vdash_{\mathsf{C} }   \mathsf{inl}\, \DualLNLLogicsym{(}   \DualLNLLogicnt{t_{{\mathrm{1}}}}  \cdot  \DualLNLLogicnt{t_{{\mathrm{2}}}}   \DualLNLLogicsym{)}   \DualLNLLogicsym{:}   \DualLNLLogicnt{S_{{\mathrm{1}}}}  +  \DualLNLLogicnt{S_{{\mathrm{2}}}}   \DualLNLLogicsym{,}  \Psi \\
        \rightsquigarrow\\
         \DualLNLLogicmv{x}  :  \DualLNLLogicnt{S}  \vdash_{\mathsf{C} }   \DualLNLLogicsym{(}   \mathsf{inl}\, \DualLNLLogicnt{t_{{\mathrm{1}}}}   \DualLNLLogicsym{)}  \cdot  \DualLNLLogicsym{(}   \mathsf{inl}\, \DualLNLLogicnt{t_{{\mathrm{2}}}}   \DualLNLLogicsym{)}   \DualLNLLogicsym{:}   \DualLNLLogicnt{S_{{\mathrm{1}}}}  +  \DualLNLLogicnt{S_{{\mathrm{2}}}}   \DualLNLLogicsym{,}  \Psi \\
        \hline
        \end{array}
        &
        \begin{array}{|c|}
        \multicolumn{1}{l}{\textbf{\textbf{Contraction with $\DualLNLLogicdruleTCXXdITwoName{}$}:}}\\
        \hline
         \DualLNLLogicmv{x}  :  \DualLNLLogicnt{S}  \vdash_{\mathsf{C} }   \mathsf{inr}\, \DualLNLLogicsym{(}   \DualLNLLogicnt{t_{{\mathrm{1}}}}  \cdot  \DualLNLLogicnt{t_{{\mathrm{2}}}}   \DualLNLLogicsym{)}   \DualLNLLogicsym{:}   \DualLNLLogicnt{S_{{\mathrm{1}}}}  +  \DualLNLLogicnt{S_{{\mathrm{2}}}}   \DualLNLLogicsym{,}  \Psi \\
        \rightsquigarrow\\
         \DualLNLLogicmv{x}  :  \DualLNLLogicnt{S}  \vdash_{\mathsf{C} }   \DualLNLLogicsym{(}   \mathsf{inr}\, \DualLNLLogicnt{t_{{\mathrm{1}}}}   \DualLNLLogicsym{)}  \cdot  \DualLNLLogicsym{(}   \mathsf{inr}\, \DualLNLLogicnt{t_{{\mathrm{2}}}}   \DualLNLLogicsym{)}   \DualLNLLogicsym{:}   \DualLNLLogicnt{S_{{\mathrm{1}}}}  +  \DualLNLLogicnt{S_{{\mathrm{2}}}}   \DualLNLLogicsym{,}  \Psi \\
        \hline
      \end{array}
      \end{array}\\
      \\
      \begin{array}{|c|}
        \multicolumn{1}{l}{\textbf{\textbf{Contraction with $\DualLNLLogicdruleTCXXsubIName{}$}:}}\\
        \hline
         \DualLNLLogicmv{x}  :  \DualLNLLogicnt{S}  \vdash_{\mathsf{C} }  \Psi_{{\mathrm{1}}}  \DualLNLLogicsym{,}   \mathsf{mkc}(  \DualLNLLogicnt{t_{{\mathrm{1}}}}  \cdot  \DualLNLLogicnt{t_{{\mathrm{2}}}}  , \DualLNLLogicmv{y} )   \DualLNLLogicsym{:}   \DualLNLLogicnt{T_{{\mathrm{1}}}}  -  \DualLNLLogicnt{T_{{\mathrm{2}}}}   \DualLNLLogicsym{,}  \DualLNLLogicsym{[}  \DualLNLLogicmv{y}  \DualLNLLogicsym{(}   \DualLNLLogicnt{t_{{\mathrm{1}}}}  \cdot  \DualLNLLogicnt{t_{{\mathrm{2}}}}   \DualLNLLogicsym{)}  \DualLNLLogicsym{/}  \DualLNLLogicmv{y}  \DualLNLLogicsym{]}  \Psi_{{\mathrm{2}}} \\
        \rightsquigarrow\\
         \DualLNLLogicmv{x}  :  \DualLNLLogicnt{S}  \vdash_{\mathsf{C} }  \Psi_{{\mathrm{1}}}  \DualLNLLogicsym{,}  \DualLNLLogicsym{(}    \mathsf{mkc}( \DualLNLLogicnt{t_{{\mathrm{1}}}} , \DualLNLLogicmv{y} )   \cdot   \mathsf{mkc}( \DualLNLLogicnt{t_{{\mathrm{2}}}} , \DualLNLLogicmv{y} )    \DualLNLLogicsym{)}  \DualLNLLogicsym{:}   \DualLNLLogicnt{T_{{\mathrm{1}}}}  -  \DualLNLLogicnt{T_{{\mathrm{2}}}}   \DualLNLLogicsym{,}  \DualLNLLogicsym{(}  \DualLNLLogicsym{[}  \DualLNLLogicmv{y}  \DualLNLLogicsym{(}  \DualLNLLogicnt{t_{{\mathrm{1}}}}  \DualLNLLogicsym{)}  \DualLNLLogicsym{/}  \DualLNLLogicmv{y}  \DualLNLLogicsym{]}  \Psi_{{\mathrm{2}}}  \cdot  \DualLNLLogicsym{[}  \DualLNLLogicmv{y}  \DualLNLLogicsym{(}  \DualLNLLogicnt{t_{{\mathrm{2}}}}  \DualLNLLogicsym{)}  \DualLNLLogicsym{/}  \DualLNLLogicmv{y}  \DualLNLLogicsym{]}  \Psi_{{\mathrm{2}}}  \DualLNLLogicsym{)} \\
        \hline
      \end{array}\\
      \\
      \begin{array}{|c|}
        \multicolumn{1}{l}{\textbf{\textbf{Contraction with $\DualLNLLogicdruleTCXXsubEName{}$}:}}\\
        \hline
         \DualLNLLogicmv{x}  :  \DualLNLLogicnt{S}  \vdash_{\mathsf{C} }  \Psi_{{\mathrm{1}}}  \DualLNLLogicsym{,}    \mathsf{postp}\,( \DualLNLLogicmv{z}  \mapsto  \DualLNLLogicnt{s} ,   \DualLNLLogicnt{t_{{\mathrm{1}}}}  \cdot  \DualLNLLogicnt{t_{{\mathrm{2}}}}  )    \DualLNLLogicsym{,}  \DualLNLLogicsym{[}  \DualLNLLogicmv{y}  \DualLNLLogicsym{(}   \DualLNLLogicnt{t_{{\mathrm{1}}}}  \cdot  \DualLNLLogicnt{t_{{\mathrm{2}}}}   \DualLNLLogicsym{)}  \DualLNLLogicsym{/}  \DualLNLLogicmv{y}  \DualLNLLogicsym{]}  \Psi_{{\mathrm{2}}} \\
        \rightsquigarrow\\
         \DualLNLLogicmv{x}  :  \DualLNLLogicnt{S}  \vdash_{\mathsf{C} }  \DualLNLLogicsym{(}  \Psi_{{\mathrm{1}}}  \DualLNLLogicsym{,}    \mathsf{postp}\,( \DualLNLLogicmv{z}  \mapsto  \DualLNLLogicnt{s} ,  \DualLNLLogicnt{t_{{\mathrm{1}}}} )    \DualLNLLogicsym{,}  \DualLNLLogicsym{[}  \DualLNLLogicmv{y}  \DualLNLLogicsym{(}  \DualLNLLogicnt{t_{{\mathrm{1}}}}  \DualLNLLogicsym{)}  \DualLNLLogicsym{/}  \DualLNLLogicmv{y}  \DualLNLLogicsym{]}  \Psi_{{\mathrm{2}}}  \DualLNLLogicsym{)}  \cdot  \DualLNLLogicsym{(}  \Psi_{{\mathrm{1}}}  \DualLNLLogicsym{,}    \mathsf{postp}\,( \DualLNLLogicmv{z}  \mapsto  \DualLNLLogicnt{s} ,  \DualLNLLogicnt{t_{{\mathrm{2}}}} )    \DualLNLLogicsym{,}  \DualLNLLogicsym{[}  \DualLNLLogicmv{y}  \DualLNLLogicsym{(}  \DualLNLLogicnt{t_{{\mathrm{2}}}}  \DualLNLLogicsym{)}  \DualLNLLogicsym{/}  \DualLNLLogicmv{y}  \DualLNLLogicsym{]}  \Psi_{{\mathrm{2}}}  \DualLNLLogicsym{)} \\
        \hline
      \end{array}\\
      \\
      \begin{array}{|c|}
        \multicolumn{1}{l}{\textbf{\textbf{Contraction with $\DualLNLLogicdruleTCXXHEName{}$}:}}\\
        \hline
         \DualLNLLogicmv{x}  :  \DualLNLLogicnt{S}  \vdash_{\mathsf{C} }  \Psi_{{\mathrm{1}}}  \DualLNLLogicsym{,}   \mathsf{let}\,\mathsf{H}\, \DualLNLLogicmv{y}  =   \DualLNLLogicnt{t_{{\mathrm{1}}}}  \cdot  \DualLNLLogicnt{t_{{\mathrm{2}}}}  \,\mathsf{in}\, \Psi_{{\mathrm{2}}}  \\
        \rightsquigarrow\\
         \DualLNLLogicmv{x}  :  \DualLNLLogicnt{S}  \vdash_{\mathsf{C} }  \Psi_{{\mathrm{1}}}  \DualLNLLogicsym{,}  \DualLNLLogicsym{(}   \mathsf{let}\,\mathsf{H}\, \DualLNLLogicmv{y}  =  \DualLNLLogicnt{t_{{\mathrm{1}}}} \,\mathsf{in}\, \Psi_{{\mathrm{2}}}   \DualLNLLogicsym{)}  \cdot  \DualLNLLogicsym{(}   \mathsf{let}\,\mathsf{H}\, \DualLNLLogicmv{y}  =  \DualLNLLogicnt{t_{{\mathrm{2}}}} \,\mathsf{in}\, \Psi_{{\mathrm{2}}}   \DualLNLLogicsym{)} \\
        \hline
      \end{array}\\      
    \end{array}
  \end{math}
  \end{center}
  \end{mdframed}
  \caption{Reductions for Non-linear Terms}
  \label{fig:red-non-linear}
\end{figure}
\begin{figure}
  \begin{mdframed}
    \begin{center}
    \begin{math}
      \begin{array}{c}    
        \begin{array}{|c|}
          \multicolumn{1}{l}{\textbf{\textbf{Weakening with $\DualLNLLogicdruleTCXXdEName{}$}:}}\\
          \hline
            \DualLNLLogicmv{x}  :  \DualLNLLogicnt{S}  \vdash_{\mathsf{C} }  \Psi_{{\mathrm{1}}}  \DualLNLLogicsym{,}   \mathsf{case}\, \DualLNLLogicsym{(}   \varepsilon   \DualLNLLogicsym{)} \,\mathsf{of}\, \DualLNLLogicmv{y} . \Psi_{{\mathrm{2}}} ,  \DualLNLLogicmv{z} . \Psi_{{\mathrm{3}}}  \\
           \rightsquigarrow\\
            \DualLNLLogicmv{x}  :  \DualLNLLogicnt{S}  \vdash_{\mathsf{C} }  \Psi_{{\mathrm{1}}}  \DualLNLLogicsym{,}   \varepsilon   \DualLNLLogicsym{:}  \DualLNLLogicnt{S_{{\mathrm{1}}}}  \DualLNLLogicsym{,} \, ... \, \DualLNLLogicsym{,}   \varepsilon   \DualLNLLogicsym{:}  \DualLNLLogicnt{S_{\DualLNLLogicmv{i}}} \\
           \text{where } \DualLNLLogicsym{\mbox{$\mid$}}  \Psi_{{\mathrm{2}}}  \DualLNLLogicsym{\mbox{$\mid$}}  \DualLNLLogicsym{=}  \DualLNLLogicsym{\mbox{$\mid$}}  \Psi_{{\mathrm{3}}}  \DualLNLLogicsym{\mbox{$\mid$}} \text{ and } |\Psi_{{\mathrm{2}}}| = \DualLNLLogicnt{S_{{\mathrm{1}}}}  \DualLNLLogicsym{,} \, ... \, \DualLNLLogicsym{,}  \DualLNLLogicnt{S_{\DualLNLLogicmv{i}}}\\    
          \hline
        \end{array}\\
        \\
        \begin{array}{cc}
          \begin{array}{|c|}
          \multicolumn{1}{l}{\textbf{\textbf{Weakening with $\DualLNLLogicdruleTCXXdIOneName{}$}:}}\\
          \hline
            \DualLNLLogicmv{x}  :  \DualLNLLogicnt{S}  \vdash_{\mathsf{C} }  \Psi  \DualLNLLogicsym{,}   \mathsf{inl}\,  \varepsilon    \DualLNLLogicsym{:}   \DualLNLLogicnt{S_{{\mathrm{1}}}}  +  \DualLNLLogicnt{S_{{\mathrm{2}}}}  \\
           \rightsquigarrow\\
            \DualLNLLogicmv{x}  :  \DualLNLLogicnt{S}  \vdash_{\mathsf{C} }  \Psi  \DualLNLLogicsym{,}   \varepsilon   \DualLNLLogicsym{:}   \DualLNLLogicnt{S_{{\mathrm{1}}}}  +  \DualLNLLogicnt{S_{{\mathrm{2}}}}  \\               
          \hline
        \end{array}
        &
        \begin{array}{|c|}
          \multicolumn{1}{l}{\textbf{\textbf{Weakening with $\DualLNLLogicdruleTCXXdITwoName{}$}:}}\\
          \hline
            \DualLNLLogicmv{x}  :  \DualLNLLogicnt{S}  \vdash_{\mathsf{C} }  \Psi  \DualLNLLogicsym{,}   \mathsf{inr}\,  \varepsilon    \DualLNLLogicsym{:}   \DualLNLLogicnt{S_{{\mathrm{1}}}}  +  \DualLNLLogicnt{S_{{\mathrm{2}}}}  \\
           \rightsquigarrow\\
            \DualLNLLogicmv{x}  :  \DualLNLLogicnt{S}  \vdash_{\mathsf{C} }  \Psi  \DualLNLLogicsym{,}   \varepsilon   \DualLNLLogicsym{:}   \DualLNLLogicnt{S_{{\mathrm{1}}}}  +  \DualLNLLogicnt{S_{{\mathrm{2}}}}  \\               
          \hline
        \end{array}
        \end{array}\\
        \\
        \begin{array}{cc}
          \begin{array}{|c|}
          \multicolumn{1}{l}{\textbf{\textbf{Weakening with $\DualLNLLogicdruleTCXXsubEName{}$}:}}\\
          \hline
            \DualLNLLogicmv{x}  :  \DualLNLLogicnt{S}  \vdash_{\mathsf{C} }  \Psi_{{\mathrm{1}}}  \DualLNLLogicsym{,}    \mathsf{postp}\,( \DualLNLLogicmv{z}  \mapsto  \DualLNLLogicnt{s} ,   \varepsilon  )    \DualLNLLogicsym{,}  \DualLNLLogicsym{[}  \DualLNLLogicmv{y}  \DualLNLLogicsym{(}   \varepsilon   \DualLNLLogicsym{)}  \DualLNLLogicsym{/}  \DualLNLLogicmv{y}  \DualLNLLogicsym{]}  \Psi_{{\mathrm{2}}} \\
           \rightsquigarrow\\
            \DualLNLLogicmv{x}  :  \DualLNLLogicnt{S}  \vdash_{\mathsf{C} }  \Psi_{{\mathrm{1}}}  \DualLNLLogicsym{,}  \DualLNLLogicsym{[}   \varepsilon   \DualLNLLogicsym{/}  \DualLNLLogicmv{y}  \DualLNLLogicsym{]}  \Psi_{{\mathrm{2}}} \\               
          \hline
        \end{array}
        &
        \begin{array}{|c|}
          \multicolumn{1}{l}{\textbf{\textbf{Weakening with $\DualLNLLogicdruleTCXXsubIName{}$}:}}\\
          \hline
            \DualLNLLogicmv{x}  :  \DualLNLLogicnt{S}  \vdash_{\mathsf{C} }  \Psi_{{\mathrm{1}}}  \DualLNLLogicsym{,}   \mathsf{mkc}(  \varepsilon  , \DualLNLLogicmv{y} )   \DualLNLLogicsym{:}   \DualLNLLogicnt{T_{{\mathrm{1}}}}  -  \DualLNLLogicnt{T_{{\mathrm{2}}}}   \DualLNLLogicsym{,}  \DualLNLLogicsym{[}  \DualLNLLogicmv{y}  \DualLNLLogicsym{(}   \varepsilon   \DualLNLLogicsym{)}  \DualLNLLogicsym{/}  \DualLNLLogicmv{y}  \DualLNLLogicsym{]}  \Psi_{{\mathrm{2}}} \\
           \rightsquigarrow\\
            \DualLNLLogicmv{x}  :  \DualLNLLogicnt{S}  \vdash_{\mathsf{C} }  \Psi_{{\mathrm{1}}}  \DualLNLLogicsym{,}   \varepsilon   \DualLNLLogicsym{:}   \DualLNLLogicnt{T_{{\mathrm{1}}}}  -  \DualLNLLogicnt{T_{{\mathrm{2}}}}   \DualLNLLogicsym{,}  \DualLNLLogicsym{[}   \varepsilon   \DualLNLLogicsym{/}  \DualLNLLogicmv{y}  \DualLNLLogicsym{]}  \Psi_{{\mathrm{2}}} \\               
          \hline
        \end{array}
        \end{array}\\
        \\
        \begin{array}{|c|}
          \multicolumn{1}{l}{\textbf{\textbf{Weakening with $\DualLNLLogicdruleTCXXHEName{}$}:}}\\
          \hline
            \DualLNLLogicmv{x}  :  \DualLNLLogicnt{S}  \vdash_{\mathsf{C} }  \Psi_{{\mathrm{1}}}  \DualLNLLogicsym{,}   \mathsf{let}\,\mathsf{H}\, \DualLNLLogicmv{y}  =   \varepsilon  \,\mathsf{in}\, \Psi_{{\mathrm{2}}}  \\
           \rightsquigarrow\\
            \DualLNLLogicmv{x}  :  \DualLNLLogicnt{S}  \vdash_{\mathsf{C} }  \Psi_{{\mathrm{1}}}  \DualLNLLogicsym{,}  \DualLNLLogicsym{[}   \varepsilon   \DualLNLLogicsym{/}  \DualLNLLogicmv{y}  \DualLNLLogicsym{]}  \Psi_{{\mathrm{2}}} \\               
          \hline
        \end{array}
      \end{array}
    \end{math}
  \end{center}
  \end{mdframed}
  \caption{Reductions for Non-linear Terms Continued}
  \label{fig:red-non-linear-cont}
\end{figure}

The reduction rules for the linear and non-linear fragments can be
found in Figure~\ref{fig:red-linear-terms} and
Figure~\ref{fig:red-non-linear} respectively.  We denote the judgments
for reduction by $ \DualLNLLogicmv{x}  :  \DualLNLLogicnt{S}  \vdash_{\mathsf{C} }  \Psi_{{\mathrm{1}}}  \rightsquigarrow  \DualLNLLogicmv{x}  :  \DualLNLLogicnt{S}  \vdash_{\mathsf{C} }  \Psi_{{\mathrm{2}}} $ and $ \DualLNLLogicmv{x}  :  \DualLNLLogicnt{A}  \vdash_{\mathsf{L} }  \Delta_{{\mathrm{1}}} ; \Psi_{{\mathrm{1}}}  \rightsquigarrow  \DualLNLLogicmv{x}  :  \DualLNLLogicnt{A}  \vdash_{\mathsf{L} }  \Delta_{{\mathrm{2}}} ; \Psi_{{\mathrm{2}}} $.  In the interest of readability we do not show full
derivations, but it should be noted that it is assumed that every term
mentioned in a reduction rule is typable with the expected type given
where it occurs in the judgment.  Furthermore, the reduction relation
depends on a few standard definitions and non-standard binding operations.

The non-standard binding operations concern the variable $y$ in $\mathtt{mkc}(t,y)$ and in
$\mathtt{postp}(y\mapsto t, s)$ and the related expressions $y(t)$ and $y(s)$, respectively, 
occurring in the non-linear context; similar operations occur in the linear case.
Consider term assignment to the rule subtraction introduction  $\mathrm{TC}\_-_I$
in Figure \ref{fig:non-linear-ta}. The variable $y$ is the unique free variable occurring in the 
sequent $y:T_2 \vdash_C \Psi_2$, the minor premise of the inference. In the conclusion 
$x:S \vdash_{\Psi_1}, \mathtt{mkc}(t,y): T_1 - T_2, [y(t)/y]\Psi_2$ the variable $y$ is bound 
in $\mathtt{mkc}(t,y)$; moreover, the occurrences of the free variable $y$ have been
substituted simultaneously in the context $\Psi$ by the expression $y(t)$ which denotes 
a bound varianble, indexed with $t$. Similar explanations apply to the term assignment 
for subtraction elimination, and to the corresponding linear rules in Figure \ref{fig:linear-ta}. 

An analogue of the capture of a free variable by a binder in the $\lambda$-calculus, 
is an occurrence of a bound variable $y(t)$ whose binder is ambiguous, for instance 
in a context where there were two occurrences of $\mathtt{mkc}(t,y)$, as a result of 
a contraction/cut reduction in a derivation. Such a context may be the conclusion of 
the following derivation, if $x_1 = x_2$, $y_1 = y_2$; here $t_1 = \mathtt{false}\; x_1$, 
$t_2 = \mathtt{false}\; x_2$:
{\small
 \[
 \AxiomC{$z:0 \vdash_C z:0$}
 \AxiomC{$x_1:S \vdash_C x_1:S$}
 \AxiomC{$y_1:T \vdash_C y_1:T$}
 \BinaryInfC{$x_1:S \vdash_C \mathtt{mkc}(x_1,y_1): S - T, y_1(x_1): T$}
\AxiomC{$x_2:A \vdash_C x_2:A$}
\AxiomC{$y_2:B \vdash_C y_2:B$}
 \BinaryInfC{$x_2:S \vdash_C \mathtt{mkc}(x_2,y_2): S - T, y_2(x_2): T$}
\TrinaryInfC{$z:0 \vdash_C, \mathtt{mkc}(t_1,y_1): S - T, \mathtt{mkc}(t_2,y_2): S - T, y_1(t_1): T, y_2(t_2): T$}
\DisplayProof
\]
}

A formal notion of $\alpha$ conversion has been proposed 
 for this notion of binding in untyped linear contexts in  \cite{Bellin:2012}. 
Here (capture-avoiding)  substitution,  denoted by 
$\DualLNLLogicsym{[}  \DualLNLLogicnt{t_{{\mathrm{1}}}}  \DualLNLLogicsym{/}  \DualLNLLogicmv{x}  \DualLNLLogicsym{]}  \DualLNLLogicnt{t_{{\mathrm{2}}}}$, $\DualLNLLogicsym{[}  \DualLNLLogicnt{e}  \DualLNLLogicsym{/x}  \DualLNLLogicsym{]}  \DualLNLLogicnt{t}$, $\DualLNLLogicsym{[}  \DualLNLLogicnt{t}  \DualLNLLogicsym{/}  \DualLNLLogicmv{x}  \DualLNLLogicsym{]}  \DualLNLLogicnt{e}$, and $\DualLNLLogicsym{[}  \DualLNLLogicnt{e_{{\mathrm{1}}}}  \DualLNLLogicsym{/}  \DualLNLLogicmv{x}  \DualLNLLogicsym{]}  \DualLNLLogicnt{e_{{\mathrm{2}}}}$, 
is defined in the usual way.  We extend capture-avoiding substitution to multisets in
the following way:
\begin{itemize}
\item $[t_1\cdot\ldots\cdot t_n/ z]s = [t_1/z]s\cdot\ldots\cdot [t_n/ z]s$
\item $[t_1\cdot\ldots\cdot t_n/ z]p = [t_1/z]p\,\|\, \ldots\,\|\, [t_n/ z]p,$ where $p$ is a $p$-term
\end{itemize}
The extension of the other flavors of substitution to multisets are
similar.  Standard extension of substitution to contexts was also
necessary.

Finally, there are several commuting conversions that are required
for reduction, for example, the following is one:
\[
\AxiomC{$\hskip1.6in  \DualLNLLogicmv{y}  :  \DualLNLLogicnt{T_{{\mathrm{2}}}}  \vdash_{\mathsf{C} }  \Psi_{{\mathrm{2}}}  \DualLNLLogicsym{,}  \DualLNLLogicnt{t_{{\mathrm{1}}}}  \DualLNLLogicsym{:}   \DualLNLLogicnt{T_{{\mathrm{4}}}}  +  \DualLNLLogicnt{T_{{\mathrm{5}}}}  $}
\noLine
\UnaryInfC{$ \DualLNLLogicmv{x}  :  \DualLNLLogicnt{S}  \vdash_{\mathsf{C} }  \Psi_{{\mathrm{1}}}  \DualLNLLogicsym{,}  \DualLNLLogicnt{t}  \DualLNLLogicsym{:}   \DualLNLLogicnt{T_{{\mathrm{2}}}}  +  \DualLNLLogicnt{T_{{\mathrm{3}}}}  \quad  \DualLNLLogicmv{z}  :  \DualLNLLogicnt{T_{{\mathrm{3}}}}  \vdash_{\mathsf{C} }  \Psi_{{\mathrm{3}}}  \DualLNLLogicsym{,}  \DualLNLLogicnt{t_{{\mathrm{2}}}}  \DualLNLLogicsym{:}   \DualLNLLogicnt{T_{{\mathrm{4}}}}  +  \DualLNLLogicnt{T_{{\mathrm{5}}}}  $}
\UnaryInfC{$ \DualLNLLogicmv{x}  :  \DualLNLLogicnt{S}  \vdash_{\mathsf{C} }  \Psi_{{\mathrm{1}}}  \DualLNLLogicsym{,}   \mathsf{case}\, \DualLNLLogicnt{t} \,\mathsf{of}\, \DualLNLLogicmv{y} . \DualLNLLogicnt{t_{{\mathrm{1}}}} , \DualLNLLogicmv{z} . \DualLNLLogicnt{t_{{\mathrm{2}}}}   \DualLNLLogicsym{:}   \DualLNLLogicnt{T_{{\mathrm{4}}}}  +  \DualLNLLogicnt{T_{{\mathrm{5}}}}  $}
\AxiomC{$ \DualLNLLogicmv{v_{{\mathrm{1}}}}  :  \DualLNLLogicnt{T_{{\mathrm{4}}}}  \vdash_{\mathsf{C} }  \Psi_{{\mathrm{4}}}  \quad  \DualLNLLogicmv{v_{{\mathrm{2}}}}  :  \DualLNLLogicnt{T_{{\mathrm{5}}}}  \vdash_{\mathsf{C} }  \Psi_{{\mathrm{5}}} $}
\BinaryInfC{$ \DualLNLLogicmv{x}  :  \DualLNLLogicnt{S}  \vdash_{\mathsf{C} }  \Psi_{{\mathrm{1}}}  \DualLNLLogicsym{,}    \mathsf{case}\, \DualLNLLogicsym{(}   \mathsf{case}\, \DualLNLLogicnt{t} \,\mathsf{of}\, \DualLNLLogicmv{y} . \DualLNLLogicnt{t_{{\mathrm{1}}}} , \DualLNLLogicmv{z} . \DualLNLLogicnt{t_{{\mathrm{2}}}}   \DualLNLLogicsym{)} \,\mathsf{of}\, \DualLNLLogicmv{v_{{\mathrm{1}}}} . \Psi_{{\mathrm{4}}} ,  \DualLNLLogicmv{v_{{\mathrm{2}}}} . \Psi_{{\mathrm{5}}}   $}
\DisplayProof
\]
commutes to
\[
\inferrule* [right=] {
   \DualLNLLogicmv{x}  :  \DualLNLLogicnt{S}  \vdash_{\mathsf{C} }  \Psi_{{\mathrm{1}}}  \DualLNLLogicsym{,}  \DualLNLLogicnt{t}  \DualLNLLogicsym{:}   \DualLNLLogicnt{T_{{\mathrm{2}}}}  +  \DualLNLLogicnt{T_{{\mathrm{3}}}}   \\ \Pi_1 \\ \Pi_2
}{ \DualLNLLogicmv{x}  :  \DualLNLLogicnt{S}  \vdash_{\mathsf{C} }  \Psi_{{\mathrm{1}}}  \DualLNLLogicsym{,}   \mathsf{case}\, \DualLNLLogicnt{t} \,\mathsf{of}\, \DualLNLLogicmv{y_{{\mathrm{2}}}} . \DualLNLLogicsym{(}  \Psi_{{\mathrm{2}}}  \DualLNLLogicsym{,}   \mathsf{case}\, \DualLNLLogicnt{t_{{\mathrm{1}}}} \,\mathsf{of}\, \DualLNLLogicmv{v_{{\mathrm{1}}}} . \Psi_{{\mathrm{4}}} ,  \DualLNLLogicmv{v_{{\mathrm{2}}}} . \Psi_{{\mathrm{5}}}   \DualLNLLogicsym{)} ,  \DualLNLLogicmv{y_{{\mathrm{3}}}} . \DualLNLLogicsym{(}  \Psi_{{\mathrm{3}}}  \DualLNLLogicsym{,}   \mathsf{case}\, \DualLNLLogicnt{t_{{\mathrm{2}}}} \,\mathsf{of}\, \DualLNLLogicmv{v_{{\mathrm{1}}}} . \Psi_{{\mathrm{4}}} ,  \DualLNLLogicmv{v_{{\mathrm{2}}}} . \Psi_{{\mathrm{5}}}   \DualLNLLogicsym{)}  }
\]
where
\[
\begin{array}{lll}
  \Pi_1:\\
  & \inferrule* [right=] {
     \DualLNLLogicmv{y_{{\mathrm{2}}}}  :  \DualLNLLogicnt{T_{{\mathrm{2}}}}  \vdash_{\mathsf{C} }  \Psi_{{\mathrm{2}}}  \DualLNLLogicsym{,}  \DualLNLLogicnt{t_{{\mathrm{1}}}}  \DualLNLLogicsym{:}   \DualLNLLogicnt{T_{{\mathrm{4}}}}  +  \DualLNLLogicnt{T_{{\mathrm{5}}}}   \\  \DualLNLLogicmv{v_{{\mathrm{1}}}}  :  \DualLNLLogicnt{T_{{\mathrm{4}}}}  \vdash_{\mathsf{C} }  \Psi_{{\mathrm{4}}}  \\  \DualLNLLogicmv{v_{{\mathrm{2}}}}  :  \DualLNLLogicnt{T_{{\mathrm{5}}}}  \vdash_{\mathsf{C} }  \Psi_{{\mathrm{5}}} 
  }{ \DualLNLLogicmv{y_{{\mathrm{2}}}}  :  \DualLNLLogicnt{T_{{\mathrm{2}}}}  \vdash_{\mathsf{C} }  \Psi_{{\mathrm{2}}}  \DualLNLLogicsym{,}   \mathsf{case}\, \DualLNLLogicnt{t_{{\mathrm{1}}}} \,\mathsf{of}\, \DualLNLLogicmv{v_{{\mathrm{1}}}} . \Psi_{{\mathrm{4}}} ,  \DualLNLLogicmv{v_{{\mathrm{2}}}} . \Psi_{{\mathrm{5}}}  }\\
  \\
  \Pi_2:\\
  & \inferrule* [right=] {
     \DualLNLLogicmv{y_{{\mathrm{3}}}}  :  \DualLNLLogicnt{T_{{\mathrm{3}}}}  \vdash_{\mathsf{C} }  \Psi_{{\mathrm{3}}}  \DualLNLLogicsym{,}  \DualLNLLogicnt{t_{{\mathrm{2}}}}  \DualLNLLogicsym{:}   \DualLNLLogicnt{T_{{\mathrm{4}}}}  +  \DualLNLLogicnt{T_{{\mathrm{5}}}}   \\  \DualLNLLogicmv{v_{{\mathrm{1}}}}  :  \DualLNLLogicnt{T_{{\mathrm{4}}}}  \vdash_{\mathsf{C} }  \Psi_{{\mathrm{4}}}  \\  \DualLNLLogicmv{v_{{\mathrm{2}}}}  :  \DualLNLLogicnt{T_{{\mathrm{5}}}}  \vdash_{\mathsf{C} }  \Psi_{{\mathrm{5}}} 
  }{ \DualLNLLogicmv{y_{{\mathrm{3}}}}  :  \DualLNLLogicnt{T_{{\mathrm{3}}}}  \vdash_{\mathsf{C} }  \Psi_{{\mathrm{3}}}  \DualLNLLogicsym{,}   \mathsf{case}\, \DualLNLLogicnt{t_{{\mathrm{2}}}} \,\mathsf{of}\, \DualLNLLogicmv{v_{{\mathrm{1}}}} . \Psi_{{\mathrm{4}}} ,  \DualLNLLogicmv{v_{{\mathrm{2}}}} . \Psi_{{\mathrm{5}}}  }
\end{array}
\]
If $\DualLNLLogicnt{t_{{\mathrm{1}}}} =  \mathsf{inl}\, \DualLNLLogicnt{s_{{\mathrm{1}}}} $ and $\DualLNLLogicnt{t_{{\mathrm{2}}}} =  \mathsf{inr}\, \DualLNLLogicnt{s_{{\mathrm{2}}}} $ then after commutation 
\[
 \DualLNLLogicmv{y_{{\mathrm{2}}}}  :  \DualLNLLogicnt{T_{{\mathrm{2}}}}  \vdash_{\mathsf{C} }  \Psi_{{\mathrm{2}}}  \DualLNLLogicsym{,}   \mathsf{case}\, \DualLNLLogicsym{(}   \mathsf{inl}\, \DualLNLLogicnt{s_{{\mathrm{1}}}}   \DualLNLLogicsym{)} \,\mathsf{of}\, \DualLNLLogicmv{v_{{\mathrm{1}}}} . \Psi_{{\mathrm{4}}} ,  \DualLNLLogicmv{v_{{\mathrm{2}}}} . \Psi_{{\mathrm{5}}}   \rightsquigarrow_{\beta}
 \DualLNLLogicmv{y_{{\mathrm{2}}}}  :  \DualLNLLogicnt{T_{{\mathrm{2}}}}  \vdash_{\mathsf{C} }  \Psi_{{\mathrm{2}}}  \DualLNLLogicsym{,}  \DualLNLLogicsym{[}  \DualLNLLogicnt{s_{{\mathrm{1}}}}  \DualLNLLogicsym{/}  \DualLNLLogicmv{v_{{\mathrm{1}}}}  \DualLNLLogicsym{]}  \Psi_{{\mathrm{4}}}  
\]
and
\[
 \DualLNLLogicmv{y_{{\mathrm{3}}}}  :  \DualLNLLogicnt{T_{{\mathrm{3}}}}  \vdash_{\mathsf{C} }  \Psi_{{\mathrm{3}}}  \DualLNLLogicsym{,}   \mathsf{case}\, \DualLNLLogicsym{(}   \mathsf{inr}\, \DualLNLLogicnt{s_{{\mathrm{2}}}}   \DualLNLLogicsym{)} \,\mathsf{of}\, \DualLNLLogicmv{v_{{\mathrm{1}}}} . \Psi_{{\mathrm{4}}} ,  \DualLNLLogicmv{v_{{\mathrm{2}}}} . \Psi_{{\mathrm{5}}}   \rightsquigarrow_{\beta}
 \DualLNLLogicmv{y_{{\mathrm{2}}}}  :  \DualLNLLogicnt{T_{{\mathrm{2}}}}  \vdash_{\mathsf{C} }  \Psi_{{\mathrm{3}}}  \DualLNLLogicsym{,}  \DualLNLLogicsym{[}  \DualLNLLogicnt{s_{{\mathrm{2}}}}  \DualLNLLogicsym{/}  \DualLNLLogicmv{v_{{\mathrm{2}}}}  \DualLNLLogicsym{]}  \Psi_{{\mathrm{5}}} 
\]
There are other commuting conversions as well, but as one can see, due
to the complexities introduced in reduction arising from the fact that
multiple terms in the context are affected during reduction results in
the commuting conversions from being very compact.  The remainder of
the commuting conversions can be found in
Appendix~\ref{sec:commuting_conversions}.  In the next section we give
the interpretation of TND into the categorical model.

\subsection{Categorical interpretation of rules}
\label{sec:categorical_interpretation_of_rules}

We now turn to the interpretation of Dual LNL Logic into our
categorical model given in Section~\ref{sec:adjoint_model}.  We
structure the proof similarly to Bierman~\cite{Bierman:1994}, but the
proof itself follows similarly to Benton's~\cite{Benton:1994} proof
for LNL Logic.

Given a {\em signature} $\mathsf{Sg}$, consisting of a collection of types $\sigma_i$, where $\sigma_i = A\ \hbox{or}\ S$, 
and a collection of {\em sorted function symbols} $f_j : \sigma_1, \ldots, \sigma_n \rightarrow \tau$ and given 
a Symmetric Monoidal Category (SMC) $(\mathbb{C}, \bullet, 1, \alpha, \lambda, \rho, \gamma)$,  a {\em structure} 
$\mathcal{M}$ for $\mathsf{Sg}$ is an assignment of an object $[\![\sigma]\!]$ of $\mathcal{L}$ for each type $\sigma$ and of 
a morphism $[\![f]\!] : [\![\sigma_1]\!]\bullet\ldots\bullet[\![\sigma_n]\!]\rightarrow [\![\tau]\!]$ for each function
$f : \sigma_1, \ldots, \sigma_n \rightarrow \tau$ of $\mathsf{Sg}$.\footnote{ In this subsection only we use the symbol $\bullet$
  and 1 for the monoidal binary operation and its unit in the categorical structure, distinguished from the $\oplus$ and
  $\bot$ symbols in the formal language. We shall show that the interpretation of $\oplus$ is isomorphic to the operation
  $\bullet$, so we shall be able to identify them ( and similarly for $\bot$ and 1).}  The types of terms in context $\Delta = [e_1: A_1, \ldots, e_n: A_n]$ or $\Delta = [t_1: T_1, \ldots, t_n: T_n]$ are 
interpreted into the SMC as $[\![\sigma_1, \sigma_2, \ldots, \sigma_n]\!]$ =   
$(\ldots ([\![\sigma_1]\!]\bullet[\![\sigma_2]\!])\ldots ) \bullet [\![\sigma_n]\!]$; left associativity is also 
intended for concatenations of type sequences $\Gamma, \Delta$. Thus, we need the ``book-keeping'' functions 
$\mathtt{Split}(\Gamma, \Delta): [\![\Gamma, \Delta]\!] \rightarrow [\![\Gamma]\!]\bullet [\![\Delta]\!]$ and 
$\mathtt{Join}(\Gamma, \Delta): [\![\Gamma]\!] \bullet  [\![\Delta]\!] \rightarrow [\![\Gamma, \Delta]\!]$ inductively defined 
using the associativity laws $\alpha$ and its inverse $\alpha^{-1}$ (cfr Bierman 1994, given also in Bellin 2015).  

The semantics of terms in context is then specified by induction on terms: 
\begin{center}
  \begin{tabular}{c}
    $[\![x: A\vdash_{\mathsf{L}} x: A]\!] =_{df} id_{[\![A\!]}$\\
    \\
    $[\![x: A\vdash_{\mathsf{L}}  f(e_1,\ldots, e_n): B]\!] =_{df} [\![x: A\vdash_{\mathsf{L}} e_1: A_1]\!]\bullet
    \ldots\bullet [\![x:\sigma\vdash_{\mathsf{L}} e_n: A_n]\!]; [\![ f ]\!]$
  \end{tabular}
\end{center}   
and similarly with non-linear types. 
Following this one then proves by induction on the type derivation that substitution in the term calculus 
corresponds to composition in the category (\cite{Bierman:1994}, Lemma 13). 

In the mixed sequents $x:A \vdash_{\mathsf{L}} \Delta ; \Psi, t:T$ of TND non-linear terms are interpreted through the functor $J :
\mathcal{C} \rightarrow \mathcal{L}$. Thus, we have the following:
\[
   \DualLNLLogicmv{x}  :  \DualLNLLogicnt{A}  \vdash_{\mathsf{L} }  \Delta ; \Psi  \DualLNLLogicsym{,}  \DualLNLLogicnt{t}  \DualLNLLogicsym{:}  \DualLNLLogicnt{T}   =  \DualLNLLogicmv{x}  :  \DualLNLLogicnt{A}  \vdash_{\mathsf{L} }  \Delta  \DualLNLLogicsym{,}    \mathsf{J}\, \Psi    \DualLNLLogicsym{,}   \mathsf{J}\, \DualLNLLogicnt{t}   \DualLNLLogicsym{:}   \mathsf{J}\, \DualLNLLogicnt{T}  ;  \cdot  
\]  
Let $\mathcal{M}$ be a structure for a signature $\mathsf{Sg}$ in a
SMC $\mathcal{L}$. Equations in context will be denoted by $ \DualLNLLogicmv{x}  :  \DualLNLLogicnt{A}  \vdash_{\mathsf{L} }  \Gamma  ,  \DualLNLLogicnt{e_{{\mathrm{1}}}}  =  \DualLNLLogicnt{e_{{\mathrm{2}}}}  :  \DualLNLLogicnt{B}  ;  \Psi $ and $ \DualLNLLogicmv{x}  :  \DualLNLLogicnt{S}  \vdash_{\mathsf{C} }  \Psi ,  \DualLNLLogicnt{t_{{\mathrm{1}}}}  =  \DualLNLLogicnt{t_{{\mathrm{2}}}}  :  \DualLNLLogicnt{T} $, and
are both defined to be the reflexive, symmetric, and transitive
closure of the reduction relations defined by the rules in
Figure~\ref{fig:red-linear-terms} and Figure~\ref{fig:red-non-linear}
respectively.  Given such an equation:
$$
 \DualLNLLogicmv{x}  :  \DualLNLLogicnt{A}  \vdash_{\mathsf{L} }  \Gamma  ,  \DualLNLLogicnt{e_{{\mathrm{1}}}}  =  \DualLNLLogicnt{e_{{\mathrm{2}}}}  :  \DualLNLLogicnt{B}  ;  \Psi 
$$
we say that the structure \emph{satisfies} the equation if it assigns the same morphisms to  
$ \DualLNLLogicmv{x}  :  \DualLNLLogicnt{A}  \vdash_{\mathsf{L} }  \Gamma  \DualLNLLogicsym{,}  \DualLNLLogicnt{e_{{\mathrm{1}}}}  \DualLNLLogicsym{:}  \DualLNLLogicnt{B} ; \Psi $.  and to $ \DualLNLLogicmv{x}  :  \DualLNLLogicnt{A}  \vdash_{\mathsf{L} }  \Gamma  \DualLNLLogicsym{,}  \DualLNLLogicnt{e_{{\mathrm{2}}}}  \DualLNLLogicsym{:}  \DualLNLLogicnt{B} ; \Psi $.
Similarly, $\mathcal{M}$ satisfies $ \DualLNLLogicmv{x}  :  \DualLNLLogicnt{S}  \vdash_{\mathsf{C} }  \Psi ,  \DualLNLLogicnt{t_{{\mathrm{1}}}}  =  \DualLNLLogicnt{t_{{\mathrm{2}}}}  :  \DualLNLLogicnt{T} $ if it assigns the same morphism
to $ \DualLNLLogicmv{x}  :  \DualLNLLogicnt{S}  \vdash_{\mathsf{C} }  \Psi  \DualLNLLogicsym{,}  \DualLNLLogicnt{t_{{\mathrm{1}}}}  \DualLNLLogicsym{:}  \DualLNLLogicnt{T} $ and to $ \DualLNLLogicmv{x}  :  \DualLNLLogicnt{S}  \vdash_{\mathsf{C} }  \Psi  \DualLNLLogicsym{,}  \DualLNLLogicnt{t_{{\mathrm{2}}}}  \DualLNLLogicsym{:}  \DualLNLLogicnt{T} $.
Then given an algebraic theory 
$\mathsf{Th} = (\mathsf{Sg}, \mathsf{Ax})$, a structure $\mathcal{M}$ for $\mathsf{Sg}$ is a {\em model} for $\mathsf{Th}$ if it satisfies all the 
axioms in $\mathsf{Ax}$. 

We now go through some cases of the rules in TND to specify their
categorical interpretation so as to satisfy the equations in context
and to prove consistency of TND, and hence, DLNL logic in the model.
We do not give every case, but the ones we do not give are similar to
the ones given here. We analyze the linear connectives, giving an
argument for co-ILL that is analogue to Bierman's for ILL. We conclude
that as expected:
\begin{itemize}
\item the cotensor \emph{par} can be identified with the bifunctor
  $\bullet$ of the structure;
\item linear subtraction $\lsub$ is the left adjoint to the bifunctor
  $\bullet$;
\item the unit $\bot$ can be identified with $1$.
\end{itemize}

\subsubsection{Linear Disjunction}\label{lindisj} 

The introduction rule for Par is of the form 
\[
\inferrule* [right=$\DualLNLLogicdruleTLXXparIName{}$] {
   \DualLNLLogicmv{x}  :  \DualLNLLogicnt{A}  \vdash_{\mathsf{L} }  \Delta  \DualLNLLogicsym{,}  \DualLNLLogicnt{e_{{\mathrm{1}}}}  \DualLNLLogicsym{:}  \DualLNLLogicnt{B}  \DualLNLLogicsym{,}  \DualLNLLogicnt{e_{{\mathrm{2}}}}  \DualLNLLogicsym{:}  \DualLNLLogicnt{C} ; \Psi 
}{ \DualLNLLogicmv{x}  :  \DualLNLLogicnt{A}  \vdash_{\mathsf{L} }  \Delta  \DualLNLLogicsym{,}   \DualLNLLogicnt{e_{{\mathrm{1}}}}  \oplus  \DualLNLLogicnt{e_{{\mathrm{2}}}}   \DualLNLLogicsym{:}   \DualLNLLogicnt{B}  \oplus  \DualLNLLogicnt{C}  ; \Psi }
\]
This suggests an operation on Hom-sets of the form:%
\footnote{Notice that given a sequent $x: A \vdash_{\mathsf{L}} \Delta; \Psi$ 
where $\Delta = e_1: A_1, \ldots,e_n:A_n$ and $\Psi = t_1: T_1, \ldots, t_m: T_m$ 
we write $\mathcal{L}(A, \Delta\bullet J\Psi)$ for the Hom-set 
\[
\mathcal{L}([\![A]\!], [\![A_1]\!]\bullet\ldots\bullet[\![A_n]\!]\bullet J[\![T_1]\!]\bullet\ldots\bullet J[\![T_m]\!]).
\]}
$$
\Phi_{A, \Delta J\Psi}: \mathcal{L}(A, \Delta \bullet (B \bullet C) \bullet J\Psi) \rightarrow 
\mathcal{L}(A, \Delta\bullet B\oplus C \bullet J\Psi)
$$
{\em natural in} $\Delta$, $A$ and $J\Psi$. Given 
$e: A\rightarrow \Delta \bullet (B\bullet C)\bullet J\Psi,$ $a: A' \rightarrow A$
 $h: \Delta\rightarrow\Delta'$, and  $p: J\Psi\rightarrow J\Psi'$, naturality yields:
$$
\Phi_{A', \Delta',J\Psi'}(a; e; h\bullet (id_B\bullet id_{\mathsf{C}}) \bullet p) = a; \Phi_{A, \Delta, J\Psi}(e); h\bullet id_{B\oplus C}\bullet p
$$
In particular, suppose we have $d: A \rightarrow \Delta\bullet (B\bullet C) \bullet  \mathsf{J}\, \Psi $,
and let $e = id_{\Delta} \bullet (id_B \bullet id_{\mathsf{C}}) \bullet \id_{ \mathsf{J}\, \Psi }$,
$h = id_{\Delta}$, and $p = id_{J\Psi}$. Then we have
$\Phi_{A, \Delta,  \mathsf{J}\, \Psi }(d) = d; \Phi_{(\Delta \bullet (B \bullet C) \bullet  \mathsf{J}\, \Psi ),\Delta, \mathsf{J}\, \Psi } (id_{\Delta}\bullet (id_B \bullet id_{\mathsf{C}}) \bullet id_{J\Psi})$.
By functorality of $\bullet$ we have $id_{B}\bullet id_{C} = id_{B\bullet C}$. Hence, writing  
$\bigoplus$ for $\Phi_{(\Delta \bullet (B \bullet C) \bullet  \mathsf{J}\, \Psi ),\Delta, \mathsf{J}\, \Psi } (id_{\Delta}\bullet id_{B \bullet C} \bullet id_{J\Psi})$ we have 
$\Phi_{A, \Delta, \mathsf{J}\, \Psi }(d) = d; \bigoplus$. Finally, given the morphism $\psi_{\Delta,B,C,P} : ((\Delta \bullet B) \bullet C) \bullet  \mathsf{J}\, \Psi  \rightarrow \Delta \bullet (B \bullet C) \bullet  \mathsf{J}\, \Psi $, which is natural in all arguments and is definable using $\mathsf{Split}$ and $\mathsf{Join}$, we define:
$$
\interp{ \DualLNLLogicmv{x}  :  \DualLNLLogicnt{A}  \vdash_{\mathsf{L} }  \Delta  \DualLNLLogicsym{,}   \DualLNLLogicnt{e_{{\mathrm{1}}}}  \oplus  \DualLNLLogicnt{e_{{\mathrm{2}}}}   \DualLNLLogicsym{:}   \DualLNLLogicnt{B}  \oplus  \DualLNLLogicnt{C}   \DualLNLLogicsym{,}    \mathsf{J}\, \Psi   ;  \cdot  } =_{df} \interp{ \DualLNLLogicmv{x}  :  \DualLNLLogicnt{A}  \vdash_{\mathsf{L} }  \Delta  \DualLNLLogicsym{,}  \DualLNLLogicnt{e_{{\mathrm{1}}}}  \DualLNLLogicsym{:}  \DualLNLLogicnt{B}  \DualLNLLogicsym{,}  \DualLNLLogicnt{e_{{\mathrm{2}}}}  \DualLNLLogicsym{:}  \DualLNLLogicnt{C}  \DualLNLLogicsym{,}    \mathsf{J}\, \Psi   ;  \cdot  };\psi;\bigoplus.
$$

\ \\
\noindent
The Par elimination rule has the form 
\begin{center} 
  
  \begin{math}
    $$\mprset{flushleft}
    \inferrule* [right=$\DualLNLLogicdruleTLXXparEName{}$] {
       \DualLNLLogicmv{z}  :  \DualLNLLogicnt{A}  \vdash_{\mathsf{L} }  \Delta_{{\mathrm{1}}}  \DualLNLLogicsym{,}  \DualLNLLogicnt{e}  \DualLNLLogicsym{:}   \DualLNLLogicnt{B}  \oplus  \DualLNLLogicnt{C}  ; \Psi_{{\mathrm{1}}} 
      \\
       \DualLNLLogicmv{x}  :  \DualLNLLogicnt{B}  \vdash_{\mathsf{L} }  \Delta_{{\mathrm{2}}} ; \Psi_{{\mathrm{2}}} 
      \\
       \DualLNLLogicmv{y}  :  \DualLNLLogicnt{C}  \vdash_{\mathsf{L} }  \Delta_{{\mathrm{3}}} ; \Psi_{{\mathrm{3}}} 
    }{ \DualLNLLogicmv{z}  :  \DualLNLLogicnt{A}  \vdash_{\mathsf{L} }  \Delta_{{\mathrm{1}}}  \DualLNLLogicsym{,}  \DualLNLLogicsym{[}   \mathsf{casel}\, \DualLNLLogicsym{(}  \DualLNLLogicnt{e}  \DualLNLLogicsym{)}   \DualLNLLogicsym{/}  \DualLNLLogicmv{x}  \DualLNLLogicsym{]}  \Delta_{{\mathrm{2}}}  \DualLNLLogicsym{,}  \DualLNLLogicsym{[}   \mathsf{caser}\, \DualLNLLogicsym{(}  \DualLNLLogicnt{e}  \DualLNLLogicsym{)}   \DualLNLLogicsym{/}  \DualLNLLogicmv{y}  \DualLNLLogicsym{]}  \Delta_{{\mathrm{3}}} ; \Psi_{{\mathrm{1}}}  \DualLNLLogicsym{,}  \DualLNLLogicsym{[}   \mathsf{casel}\, \DualLNLLogicsym{(}  \DualLNLLogicnt{e}  \DualLNLLogicsym{)}   \DualLNLLogicsym{/}  \DualLNLLogicmv{x}  \DualLNLLogicsym{]}  \Psi_{{\mathrm{2}}}  \DualLNLLogicsym{,}  \DualLNLLogicsym{[}   \mathsf{caser}\, \DualLNLLogicsym{(}  \DualLNLLogicnt{e}  \DualLNLLogicsym{)}   \DualLNLLogicsym{/}  \DualLNLLogicmv{y}  \DualLNLLogicsym{]}  \Psi_{{\mathrm{3}}} }  
  \end{math}
\end{center}
This suggests an operation on Hom-sets of the form 
$$
\Psi_{A,\Delta,J\Psi}: \mathcal{L}(A, B\oplus C \bullet \Delta_1\bullet J\Psi_1)\times
\mathcal{L}(B, \Delta_2\bullet J\Psi_2)\times\mathcal{L}(C, \Delta_3\bullet J\Psi_3)\rightarrow 
\mathcal{L}(A, \Delta\,\bullet\, J\Psi)
$$
{\em natural in} $A,\Delta, J\Psi$ where we write
$\Delta = \Delta_1 \bullet \Delta_2 \bullet \Delta_3$ and $J\Psi = J\Psi_1 \bullet J\Psi_2 \bullet J\Psi_3$.
Given the following morphisms:
\begin{center}
  \begin{math}
    \begin{array}{lll}
      \begin{array}{lll}
        g:A\rightarrow B\oplus C \bullet \Delta_1\bullet J\Psi_1\\
        e: B\rightarrow\Delta_2\bullet J\Psi_2\\
        f: C\rightarrow \Delta_3\bullet J\Psi_3\\
        a: A'\rightarrow A\\
      \end{array}
      &
      \begin{array}{lll}
        d_1:\Delta_1\rightarrow \Delta_1'\\
        d_2: \Delta_2 \rightarrow \Delta'_2\\
        d_3: \Delta_3 \rightarrow \Delta_3'\\
        \\
      \end{array}
      &
      \begin{array}{lll}
        p_1: J\Psi_1\rightarrow J\Psi_1'\\
        p_2: J\Psi_2 \rightarrow J\Psi_2'\\
        p_3: J\Psi_3 \rightarrow J\Psi_3'\\
        \\
      \end{array}
    \end{array}
  \end{math}
\end{center}
naturality yields:
\begin{center}
\begin{tabular}{rl}
$\Psi_{A',\Delta',\Gamma',J\Psi'}\bigl((a;g; id_{B\oplus C} \bullet d_1 \bullet p_1 ), (e;d_2\bullet p_2),$ & 
$(f;d_3\bullet p_3)\bigr)$ =\\ 
$a; \Psi_{A,\Delta,J\Psi}(g,e,f);$ & 
$d_1 \bullet d_2\bullet d_3\bullet p_1\bullet p_2\bullet p_3; \mathtt{Join}(\Delta',J\Psi').$
\end{tabular}
\end{center}
In particular, set $e = id_B, f = id_{\mathsf{C}}, a = id_A, d_i = id_{\Delta_i}$, and $p_i = id_{J\Psi_i}$, and we get 
$$
\Psi_{A,\Delta J\Psi}(g, e, f) = \Psi_{A,\Delta, J\Psi}(g, id_B, id_{\mathsf{C}}); 
id_{\Delta}\bullet c\bullet d; \mathtt{Join}(\Delta, J\Psi) 
$$
where the operation $\mathtt{Join}$ implements the required associativity.
Writing $(x)^{\ast}$ for $\Psi_{D,\Delta}(x, id_B, id_{\mathsf{C}})$ we define
\begin{center}
\begin{tabular}{l}
$[\![z:A\vdash_{\mathsf{L}} \Delta_1, [\mathtt{casel}\ e/x]\Delta_2, [\mathtt{caser}\ e/y] \Delta_3; \Psi_1, [\mathtt{casel}\ e/x] \Psi_2, [\mathtt{caser}\ e/y] \Psi_3]\!]=_{df}$\\
$[\![  \DualLNLLogicmv{z}  :  \DualLNLLogicnt{A}  \vdash_{\mathsf{L} }  \Delta_{{\mathrm{1}}}  \DualLNLLogicsym{,}  \DualLNLLogicnt{e}  \DualLNLLogicsym{:}   \DualLNLLogicnt{B}  \oplus  \DualLNLLogicnt{C}  ; \Psi_{{\mathrm{1}}}  ]\!]^*; 
(id_{\Delta_1}\bullet [\![  \DualLNLLogicmv{x}  :  \DualLNLLogicnt{B}  \vdash_{\mathsf{L} }  \Delta_{{\mathrm{2}}} ; \Psi_{{\mathrm{2}}}  ]\!]\bullet[\![  \DualLNLLogicmv{y}  :  \DualLNLLogicnt{C}  \vdash_{\mathsf{L} }  \Delta_{{\mathrm{3}}} ; \Psi_{{\mathrm{3}}}   ]\!]);
\mathtt{Join}(\Delta,J\Psi)$.\\
\end{tabular}
\end{center}

We now turn to the equations in context. Consider the following case:
\begin{center} \footnotesize
  \begin{math} 
    $$\mprset{flushleft}
    \inferrule* [right=$\mathbf \oplus\text{-}\beta$] {
      { \setlength{\tabcolsep}{15px}
        \begin{tabular}{lll}          
          & & $e\equiv\mathtt{casel}(e_1 \oplus e_2)$\\          
          & & $e' \equiv \mathtt{caser}(e_1 \oplus e_2)$\\
          $|\Delta_1| = |\Delta'_1|$ & $| \Psi_1| = |\Psi'_1|$ & $y:A_2 \vdash_{\mathsf{L}} \Delta_3; J\Psi_3 =\Delta'_3; J\Psi'_3$\\
          $|\Delta_2|=|\Delta'_2|$   & $|\Psi_2|=|\Psi'_2|$    & $x : A_1 \vdash_{\mathsf{L}} \Delta_2; J\Psi_2 = \Delta'_2; J\Psi'_2$\\
          $|\Delta_3|=|\Delta_3'|$   & $|\Psi_3|=|\Psi'_3|$    & $z:B \vdash_{\mathsf{L}} e_1 :A_1,e_2 :A_2, \Delta_1; J\Psi_1 = e'_1 :A_1,e'_2 :A_2,\Delta'_1;J\Psi'_1$
      \end{tabular}}
    }{z:B\vdash_{\mathsf{L}} \Delta_1,[e/x]\Delta_2,[e'/x]\Delta_3;\Psi_1,[e/x]\Psi_2,[e'/x]\Psi_3 = \Delta'_1, [e'_1/x]\Delta_2,[e'_2/x]\Delta′_3;\Psi′_1,[e'_1/x]\Psi'_2,[e'_2/x]\Psi'_3}
  \end{math}
\end{center}
Let 
$$
q: B\rightarrow A_1\bullet A_2 \bullet \Delta_1\bullet J\Psi,\qquad m:A_1\rightarrow\Delta_2\bullet J\Psi_2 \quad
\hbox{ and }\quad  
n: A_2\rightarrow\Delta_3\bullet J\Psi_3.
$$
Then to satisfy the above equations in context we need that the following diagram commutes:
\begin{center}
$\xymatrix@R=.4in{B\ar[r]^(.25){q} & \Delta_1\bullet J\Psi_1\bullet (A_1\bullet A_2) \ar[d]_{\oplus} \ar[rr]^(.4){id_{\Delta_1}\bullet m\bullet n} &
&\Delta_1\bullet J\Psi_1\bullet\Delta_2\bullet J\Psi_2\bullet \Delta_3\bullet J\Psi_3\\
&\Delta_1\bullet  J\Psi_1\bullet A\oplus B \ar[rr]_{\ast} & &\Delta_1\bullet  J\Psi_1\bullet A\bullet B \ar[u]_{id_{\Delta_1}\bullet m\bullet n}\\}$
\end{center}   
We make the assumption that the above decomposition is unique. 
Moreover, supposing $\Delta_1$ to be empty and $m = id_A$, $n = id_B$, $q = id_A\bullet id_B = id_{A\bullet B}$ 
we obtain $(id_A\bullet id_B; \mathbf{\bigoplus})^{\ast} = id_A\bullet id_B$ and similarly 
$(id_{A \oplus B})^*; \mathbf{\bigoplus} = id_{A \oplus B}$; hence we may conclude that there is a natural isomorphism 
\begin{center}
\AxiomC{$D \rightarrow \Gamma\bullet A\bullet B$}
\doubleLine
\UnaryInfC{$D \rightarrow \Gamma\bullet A\oplus B$}
\DisplayProof
\end{center}
so we can identify $\bullet$ and $\oplus$. Finally we see that the following $\eta$ equation in context 
is also satisfied:  
\begin{equation}\label{eta:par}
\quad\framebox{
\AxiomC{$\oplus - \eta$ rule}
\noLine
\UnaryInfC{$|\Delta| = |\Delta'| \quad|\Psi| = |\Psi'| \quad z:B \vdash_{\mathsf{L}} \Delta; \Psi = \Delta'; \Psi'$}
\UnaryInfC{$z:B \vdash_{\mathsf{L}} (\mathtt{casel\, e}\oplus\mathtt{caser}\, e): A_1\oplus A_2, \Delta; \Psi = e:A_1\oplus A_2, \Delta'; \Psi'$}
\noLine
\UnaryInfC{\strut}
\DisplayProof
}
\end{equation}

\subsubsection{Linear subtraction}\label{linsubtr}
\noindent
\ref{linsubtr}.1. {\em Subtraction introduction.} The introduction rule for subtraction has the form:
\begin{center}
  \begin{math}
    $$\mprset{flushleft}
    \inferrule* [right=$\DualLNLLogicdruleTLXXsubIName{}$] {
       \DualLNLLogicmv{x}  :  \DualLNLLogicnt{A}  \vdash_{\mathsf{L} }  \Delta_{{\mathrm{1}}}  \DualLNLLogicsym{,}  \DualLNLLogicnt{e}  \DualLNLLogicsym{:}  \DualLNLLogicnt{B} ; \Psi_{{\mathrm{1}}} 
      \\
       \DualLNLLogicmv{y}  :  \DualLNLLogicnt{C}  \vdash_{\mathsf{L} }  \Delta_{{\mathrm{2}}} ; \Psi_{{\mathrm{2}}} 
      \\
      \DualLNLLogicsym{\mbox{$\mid$}}  \Psi_{{\mathrm{1}}}  \DualLNLLogicsym{\mbox{$\mid$}}  \DualLNLLogicsym{=}  \DualLNLLogicsym{\mbox{$\mid$}}  \Psi_{{\mathrm{2}}}  \DualLNLLogicsym{\mbox{$\mid$}}
    }{ \DualLNLLogicmv{x}  :  \DualLNLLogicnt{A}  \vdash_{\mathsf{L} }  \Delta_{{\mathrm{1}}}  \DualLNLLogicsym{,}   \mathsf{mkc}( \DualLNLLogicnt{e} , \DualLNLLogicmv{y} )   \DualLNLLogicsym{:}   \DualLNLLogicnt{B}  \colimp  \DualLNLLogicnt{C}   \DualLNLLogicsym{,}  \DualLNLLogicsym{[}  \DualLNLLogicmv{y}  \DualLNLLogicsym{(}  \DualLNLLogicnt{e}  \DualLNLLogicsym{)}  \DualLNLLogicsym{/}  \DualLNLLogicmv{y}  \DualLNLLogicsym{]}  \Delta_{{\mathrm{2}}} ; \Psi_{{\mathrm{1}}}  \DualLNLLogicsym{,}  \DualLNLLogicsym{[}  \DualLNLLogicmv{y}  \DualLNLLogicsym{(}  \DualLNLLogicnt{e}  \DualLNLLogicsym{)}  \DualLNLLogicsym{/}  \DualLNLLogicmv{y}  \DualLNLLogicsym{]}  \Psi_{{\mathrm{2}}} }
  \end{math}
\end{center}
This suggests a natural transformation with components:
$$
\Phi_{A, \Delta,J\Psi}: \mathcal{L}(A, \Delta_1\bullet B\bullet J\Psi_1)\times\mathcal{L}(C, \Delta_2\bullet J\Psi_2) \rightarrow 
\mathcal{L}(A, \Delta_1\bullet (B\lsub C)\bullet \Delta_2\bullet J\Psi_1\bullet J\Psi_2)
$$
natural in $A, \Delta_1, \Delta_2, J \Psi_1, J\Psi_2$.
Taking morphisms 
$$e: A\rightarrow \Delta_1\bullet B \bullet J\Psi_1, \quad  f:C\rightarrow \Delta_2 \bullet J\Psi_2$$ 
and also $a: A'\rightarrow A$, $d_1:\Delta_1\rightarrow \Delta_1'$, 
$d_2: \Delta_2\rightarrow \Delta_2'$, $p_1: J\Psi_1\rightarrow J\Psi_1'$, $p_2: J\Psi_2\rightarrow J\Psi_2'$, 
by naturality we have
\begin{center}
\begin{tabular}{c}
$\Phi_{A', \Delta_1',\Delta'_2, J\Psi'_1J\Psi'_2}\left((a; e; d_1\bullet id_B\bullet p_1), (f;d_2;p_2)\right) =\qquad\qquad $\\
$\qquad\qquad = a; \Phi_{A,\Delta, J\Psi}(e, f); d_1\bullet d_2\bullet id_{B\lsub C}; 
\mathtt{Join}(\Delta_1',\Delta_2', B\lsub C, J\Psi_1', J\Psi_2')$
\end{tabular}
\end{center}
In particular, taking $a=id_A$, $d_1=id_{\Delta_1}$, $p_1 = id_{J\Psi_1}, p_2 = id_{J\Psi_2}$ but 
$d_2: C\rightarrow\Delta_2\cdot J\Psi_2$ and $f= id_{\mathsf{C}}$ we have:
\begin{center}
\begin{tabular}{c}
$\Phi_{A, \Delta_1, \Delta_2, J\Psi_1, J\Psi_2}(e, d_2) = \Phi_{A,\Delta_1} (e, id_{\mathsf{C}}); 
id_{\Delta_1}\bullet d_2 \bullet id_{A\lsub B}\bullet id_{J\Psi_1}\bullet id{J\Psi_2}; $\\
$\hskip2in\mathtt{Join}(\Delta_1, \Delta_2, A\lsub B, J\Psi_1, J\Psi_2)$\\
\end{tabular}
\end{center}
Writing $\mathbf{MKC}^C_{A,\Delta_1,J\Psi_1}(e)$ for $ \Phi_{A,\Delta_1 J\Psi_1} (e, id_{\mathsf{C}})$, 
$\Phi_{A, \Delta, J\Psi}(e, d_2)$ can be expressed as the composition 
$$
\mathbf{MKC}^C_{A,\Delta_1, J\Psi_1}(e); id_{\Delta_1}\bullet d_2\bullet id_{B\lsub C}
$$ 
where $\mathbf{MKC}^C_{A,\Delta_1, J\Psi_1}$ is a natural transformation with components 
$$
\mathcal{L}(A, \Delta_1\bullet B\bullet J\Psi)\times\mathcal{L}(C, C) \rightarrow 
\mathcal{L}(A, \Delta_1\bullet C \bullet C\lsub C)
$$
so we make the definition 
\begin{center}
\begin{tabular}{l}
$[\![x:A \vdash_{\mathsf{L}} \Delta_1, \mathtt{mkc}(e,y):\, B\lsub C, [y(e)/y+\Delta_2; \Psi_1\cdot [y(e)/y] \Psi_2]\!] =_{df}$\\
\quad $\mathbf{MKC}^C_{A,\Delta_1,J\Psi_1}[\![x:A\vdash_{\mathsf{L}}\Delta_1, e_1:B]\!];
id_{\Delta}\bullet [\![y:C\vdash_{\mathsf{L}}\Delta_2; \Psi_2]\!]\bullet id_{B\lsub C}; $\\
$\hskip2in \mathtt{Join}(\Delta_1,\Delta_2, B\lsub C, J\Psi_1, J\Psi_2)$\\
\end{tabular}
\end{center}
Notice that $\mathbf{MKC}^C_{A,\Delta_1, J\Psi_1}$ corresponds to the one-premise form of the 
subtraction introduction rule
\begin{center} 
\AxiomC{$x:A\vdash_{\mathsf{L}} \Delta_1, e: B; \Psi_1$}
\RightLabel{$\DualLNLLogicdruleTLXXsubIName{}$}
\UnaryInfC{$x:A\vdash_{\mathsf{L}} \Delta_1, \mathtt{mkc}(e,y) : B \lsub C, y(e): C; \Psi_1$}
\DisplayProof
\vspace{3ex}
\end{center}
which is equivalent in terms of provability to the more general form
considered here \cite{Crolard:2004}.

The subtraction elimination rule has the form:
\begin{center}
  \begin{math}
    $$\mprset{flushleft}
    \inferrule* [right=$\DualLNLLogicdruleTLXXsubEName{}$] {
       \DualLNLLogicmv{x}  :  \DualLNLLogicnt{A}  \vdash_{\mathsf{L} }  \Delta_{{\mathrm{1}}}  \DualLNLLogicsym{,}  \DualLNLLogicnt{e_{{\mathrm{1}}}}  \DualLNLLogicsym{:}   \DualLNLLogicnt{B}  \colimp  \DualLNLLogicnt{C}  ; \Psi_{{\mathrm{1}}} 
      \\
       \DualLNLLogicmv{y}  :  \DualLNLLogicnt{B}  \vdash_{\mathsf{L} }  \DualLNLLogicnt{e_{{\mathrm{2}}}}  \DualLNLLogicsym{:}  \DualLNLLogicnt{C}  \DualLNLLogicsym{,}  \Delta_{{\mathrm{2}}} ; \Psi_{{\mathrm{2}}} 
      \\
      \DualLNLLogicsym{\mbox{$\mid$}}  \Psi_{{\mathrm{1}}}  \DualLNLLogicsym{\mbox{$\mid$}}  \DualLNLLogicsym{=}  \DualLNLLogicsym{\mbox{$\mid$}}  \Psi_{{\mathrm{2}}}  \DualLNLLogicsym{\mbox{$\mid$}}            
    }{ \DualLNLLogicmv{x}  :  \DualLNLLogicnt{A}  \vdash_{\mathsf{L} }   \mathsf{postp}\,( \DualLNLLogicmv{y}  \mapsto  \DualLNLLogicnt{e_{{\mathrm{2}}}} ,  \DualLNLLogicnt{e_{{\mathrm{1}}}} )   \DualLNLLogicsym{,}  \Delta_{{\mathrm{1}}}  \DualLNLLogicsym{,}  \DualLNLLogicsym{[}  \DualLNLLogicmv{y}  \DualLNLLogicsym{(}  \DualLNLLogicnt{e_{{\mathrm{1}}}}  \DualLNLLogicsym{)}  \DualLNLLogicsym{/}  \DualLNLLogicmv{y}  \DualLNLLogicsym{]}  \Delta_{{\mathrm{2}}} ; \Psi_{{\mathrm{1}}}  \DualLNLLogicsym{,}  \DualLNLLogicsym{[}  \DualLNLLogicmv{y}  \DualLNLLogicsym{(}  \DualLNLLogicnt{e_{{\mathrm{1}}}}  \DualLNLLogicsym{)}  \DualLNLLogicsym{/}  \DualLNLLogicmv{y}  \DualLNLLogicsym{]}  \Psi_{{\mathrm{2}}} }
  \end{math}
\end{center}
This suggests a natural transformation with components
$$
\Psi_{A,\Delta_1, \Delta_2, J\Psi_1, J\Psi_2}: \mathcal{L}(A, \Delta_1\bullet (B\lsub C)\bullet J\Psi_1)\times
\mathcal{L}(B, C\bullet\Delta_2\bullet J\Psi_2) \rightarrow 
\mathcal{L}(A,  \Delta_1\bullet \Delta_2\bullet J\Psi_1\bullet J\Psi_2)
$$
natural in $A, \Delta_1, \Delta_2, J\Psi_1, J\Psi_2$. Here $\mathtt{postp}(y\mapsto e_2, e_1)$ is given 
type $1$ and an application of left identity $\lambda_{1,\Delta_2}$ is assumed implicitly. 

Given 
$$e: A\rightarrow \Delta_1\bullet (B\lsub C) \bullet J\Psi_1,\qquad  
f: B\rightarrow C\bullet \Delta_2\bullet J\Psi_2
$$ and also $a: A'\rightarrow A$, $d_1: \Delta_1\rightarrow\Delta_1'$, 
$d_2:\Delta_2\rightarrow\Delta_2'$, $p_1: J\Psi_1 \rightarrow J\Psi'_1$ $p_2: J\Psi_2 \rightarrow J\Psi'_2$
naturality yields 
\begin{center}
\begin{tabular}{l}
$\Psi_{A', \Delta_1',\Delta_2', J\Psi_1', J\Psi_2'}\left((a; e; d_1\bullet id_{B\lsub C}\bullet p_1), (f; id_{\mathsf{C}}\bullet d_2\bullet p_2)\right) = $\\
$a; \Psi_{A, \Delta_1,\Delta_2,J\Psi_1}(e, f);\lambda_{1,\Delta_1} \bullet d_1\bullet d_2\bullet p_1\bullet p_2; \mathtt{Join}(\Delta_1',\Delta_2', J\Psi_1,J\Psi_2)$
\end{tabular}
\end{center}
In particular, taking $a: A\rightarrow \Delta_1\bullet(B\lsub C)$, 
$e = id_{\Delta_1\bullet(B\lsub C)}$, $d_1= id_{\Delta_1}$, $d_2: id_{\Delta_2}$, $p_1 = id_{J\Psi_1}$, $p_2 = id{J\Psi_2}$ we obtain
$$
\Psi_{A, \Delta_1,\Delta_2}(a, f) = a; \Psi_{A, \Delta_1,\Delta_2}(id_{\Delta_1\bullet(C\lsub D)\bullet id_{J\Psi_1}}, f); \mathtt{Join}(\Delta_1,\Delta_2, J\Psi_1, J\Psi_2)
$$
Writing $\mathbf{POSTP}(f)$ for $\Psi_{A, \Delta_1,\Delta_2,J\Psi_1, J\Psi_2}(id_{\Delta_1\bullet(B\lsub C)\bullet J\Psi_1}, f)$
we define 
\begin{center}
\begin{tabular}{l}
  $[\![x:A \vdash_{\mathsf{L}} \Delta_1, \mathtt{postp}(y\mapsto e_2, e_1), [y(e_1)/y]\Delta_2; \Psi_1, [y(e_1)/y], \Psi_2]\!] =_{df}$\\
\quad $[\![x:A\vdash_{\mathsf{L}} \Delta_1, e_1: B\lsub C]\!]; id_{\Delta_1}\bullet
\mathbf{POSTP}[\![y:B \vdash_{\mathsf{L}} e_2: C,\Delta_2; \Psi_2]\!]; \mathtt{Join}(\Delta_1,\Delta_2,J\Psi_1,J\Psi_2)$ 
\end{tabular}
\end{center}

\ \\
\noindent
\ref{linsubtr}.3. {\em Equations in context.}  We have equations in context of the form 
\begin{center}
  \framebox{ \footnotesize
      \begin{math}
    $$\mprset{flushleft}
    \inferrule* [right={\scriptsize $\mathbf \lsub - \beta$}] {
      { \setlength{\tabcolsep}{15px}
        \begin{tabular}{lll}
                                     & $e_z\equiv z(\mathtt{mkc}(e_1,y))$\\
          $|\Delta_1| = |\Delta'_1|$ & $e_p\equiv \mathtt{postp}(z\mapsto e_2, \mathtt{mkc}(e_1,y))$\\
          $|\Delta_2|=|\Delta'_2|$   & $x : B \vdash_{\mathsf{L}} e_1:A_1, \Delta_1; \Psi_1 = e'_1: A_1,\Delta'_1; \Psi'_1$\\
          $| \Psi_1| = |\Psi'_1|$    & $y:A_2 \vdash_{\mathsf{L}} \Delta_2; \Psi_2 =\Delta'_2; \Psi'_2$\\
          $|\Psi_2|=|\Psi'_2|$       & $z:A_1 \vdash_{\mathsf{L}} e_2 :A_2, \Delta_3; \Psi_3 =e'_2 :A_2,\Delta'_3;\Psi'_3$\\
        \end{tabular}
      }
    }{x: B\vdash_{\mathsf{L}} \Delta_1, e_p, [y(e_1)/y]\Delta_2, [e_z/z]\Delta_3; \Psi_1, [y(e_1)/y]\Psi_2, [e_z/z]\Psi_3 = \\
      \,\,\,\,\,\,\,\,\,\,\,\,\,\,\,\,\,\,\,\,\,\,\,\,\,\,\,
      \Delta'_1, [ [e'_1/z]e'_2/y]\Delta_2, [e'_1/z]\Delta'_3; \Psi'_1, [ [e'_1/z]e'_2/y]\Psi'_2, [e'_1/z]\Psi'_3}
      \end{math}
      }
\end{center}

We repeat the derivations of the redex and of the reductum.

\centerline{\bf Redex:}
{\small
\begin{center}
\AxiomC{$x:B \vdash_{\mathsf{L}} e_1:A_1, \Delta_1; \Psi_1\qquad y:A_2\vdash_{\mathsf{L}}\Delta_2; \Psi_2$}
\UnaryInfC{$x:B \vdash_{\mathsf{L}} \mathtt{mkc}(e_1, y): A_1\lsub A_2, \Delta_1, [y(e_1)/y]\Delta_2; \Psi_1,  [y(e_1)/y]\Psi_2$}
\AxiomC{$z:A_1\vdash_{\mathsf{L}} e_2:A_2, \Delta_3; \Psi_3$}
\BinaryInfC{$x:B\vdash_{\mathsf{L}} \Delta_1, 
\overbrace{\mathtt{postp}(z\mapsto e_2, \mathtt{mkc}(e_1,y))}^{e_p}, 
 [y(e_1)/y]\Delta_2,  [\overbrace{z(\mathtt{mkc}(e_1,y))}^{e_z}/z]\Delta_3 ;$}
\noLine
\UnaryInfC{$\hskip2in \Psi_1,  [y(e_1)/y]\Psi_2, [z(\mathtt{mkc}(e_1,y)/z)]:\Psi_3 $}
\DisplayProof
\end{center}}
\centerline{\bf Reductum:}
{\small
\begin{center}
\AxiomC{$x:B \vdash_{\mathsf{L}} e'_1:A_1, \Delta'_1; \Psi'_1\qquad  z:A_1\vdash_{\mathsf{L}} e'_2:A_2, \Delta'_3; \Psi'_3$}
\UnaryInfC{$x:B \vdash_{\mathsf{L}} \Delta'_1, [e'_1/z]\Delta'_3,       [e'_1/z]e'_2 : A_2; \Psi'_1,  [e'_1/z]\Psi'_3$}
\AxiomC{$y:A_2\vdash_{\mathsf{L}} \Delta_2; \Psi_2$}
\BinaryInfC{$x:B \vdash_{\mathsf{L}}\Delta'_1, [ [e'_1/z]e'_2/y]\Delta'_2,\ [e'_1/z]\Delta'_3; \Psi'_1,[ [e'_1/z]e'_2]\Psi'_2,\ [e'_1/z]\Psi'_3$}
\DisplayProof
\end{center}}

Given morphisms $n: B\rightarrow \Delta_1\bullet A_1$ and $m: A_1\rightarrow \Delta_3\bullet A_2$, for these equations to be satisfied we need the following diagram to commute (omitting non-linear terms):
\begin{center}
$\xymatrix@R=.4in{B\ar[d]_{\mathbf{MKC}^{A_2}(n)} \ar[rr]^n &\qquad & 
\Delta_1\bullet A_1\ar[d]^{id_{\Delta_1}\bullet m}\\
\Delta_1\bullet(A_1\lsub A_2)\bullet A_2\ar[rr]_{\mathbf{POSTP}(m)\bullet id_{A_2}} &&\Delta_1\bullet\Delta_3\bullet A_2\\
}$
\end{center}
in particular, taking $n = id_{A_1}$ we have 
\begin{center}
$\xymatrix@R=.4in{A_1 \ar[d]_{\mathbf{MKC}^A_2(id_{A_1})}\ar[r]^m&\Delta_3\bullet A_2\\
(A_1\lsub A_2)\bullet A_2 \ar[ur]|{\mathbf{POSTP}(m)\bullet id_{A_2}} &\\
}$
\end{center}
Assuming the above decomposition to be unique, we can show that the $\eta$ equation in context
is also satisfied:     
\begin{equation}\label{eta:subtr}
\framebox{\small 
\AxiomC{$|\Delta| = |\Delta'| \quad |\Psi| = |\Psi'|\quad z;B \vdash_{\mathsf{L}} \Delta, \Psi = \Delta', \Psi'$}
\UnaryInfC{$z:B \vdash_{\mathsf{L}} \mathtt{postp}(x\mapsto y,e), \mathtt{mkc}(x(e),y): A_1\lsub A_2, \Delta, \Psi = 
e: A_1\lsub A_2, \Delta'; \Psi'$}
\DisplayProof
}
\end{equation}
and conclude that there is a natural isomorphism between the maps 
\begin{center}
\AxiomC{$A \rightarrow \Delta\bullet B$}
\doubleLine
\UnaryInfC{$A\lsub B\rightarrow \Delta$}
\DisplayProof
\end{center}
i.e., that $\lsub$ is the left adjoint to the bifunctor $\bullet$.

\subsubsection{Functors}\label{functors}

Recall that a model of Linear-Non Linear co-intuitionistic logic consists of  a symmetric comonoidal adjunction 
$\mathcal{L} : H \dashv  J : \mathcal{C}$ where $\mathcal{L} = (\mathcal{L}, \bot, \oplus, \limp)$ 
is a symmetric monoidal coclosed category and $\mathcal{C} = (\mathcal{C}, 0, +, -)$
 is a cocartesian coclosed category.

We use the same symbols for the functors
$J: \mathcal{C} \rightarrow \mathcal{L}$ and $\mathcal{H}: \mathcal{L} \rightarrow  \mathcal{C}$ 
in the models and for the operators that represent them in the language.

\vspace{1ex}

\ref{functors}.1 {\em rules for $J: \mathcal{C} \rightarrow \mathcal{L}$.} 
\begin{equation}
\framebox{
\AxiomC{\it $TL\_\func{J}_I$}
\noLine
\UnaryInfC{$x:A \vdash_{\mathsf{L}} ,\Delta; t:T, \Psi$}
\UnaryInfC{$x:A\vdash_{\mathsf{L}} \Delta, Jt: JT; \Psi$}
\DisplayProof
}
\end{equation}
If $\Delta = \overline{R}: |\Delta|$ and $\Psi = \overline{S}: |\Psi|$, then the categorical interpetation of the rule is 
an application of $\alpha^{-1}$: 
\begin{center}
\AxiomC{$\xymatrix@R=.4in{A\ar[rr]^{\overline{R} \bullet Jt \bullet J\overline{S}} & &\Delta\bullet JT \bullet J\Psi}$}
\UnaryInfC{$\xymatrix@R=.4in{A\ar[rr]^{(\overline{R} \bullet Jt )\bullet J\overline{S}} & &(\Delta\bullet JT) \bullet J\Psi}$}
\DisplayProof
\end{center}

\begin{equation}
\framebox{
\AxiomC{\it $TL\_\func{J}_E$ elimination}
\noLine
\UnaryInfC{$x:A \vdash_{\mathsf{L}} \Delta, e: JT;\Psi_1\qquad  y:T \vdash_{\mathsf{C}} \Psi_2\ \hbox{where}\ |\Psi_1|=|\Psi_2|$}
\UnaryInfC{$x:A\vdash_{\mathsf{L}} \Delta; \Psi_1 \cdot \mathtt{let}\ Jy = e\  \mathtt{in}\ \Psi_2$}
\DisplayProof
}
\end{equation}
If $\Delta = \overline{R}: |\Delta|$,  $\Psi_1 = \overline{R'}: |\Psi_1|$, $\Psi_2 = \overline{S}:|\Psi_2|$, 
then the categorical interpretation of the rule is given by an operation of the form 
\[
\mathcal{L}(A, \Delta\bullet JT\bullet J\Psi_1) \times \mathcal{C}(T, \Psi_2)\rightarrow 
\mathcal{L}(A, \Delta\bullet J\Psi_1\bullet J\Psi_2) 
\]
given by the following compositions
\begin{center}
\AxiomC{$\xymatrix@R=.4in{A \ar[rr]^{\overline{R}\bullet e \bullet J\overline{R}'} && \Delta\bullet JT\bullet J\Psi_1}$}
\AxiomC{$\xymatrix@R=.4in{T \ar[r]^{\overline{S}} & \Psi_2} \quad \hbox{in} \  \mathcal{C}$}
\UnaryInfC{$\xymatrix@R=.4in{JT \ar[r]^{\overline{JS}}& J\Psi_2} \quad \hbox{in} \  \mathcal{L}$}
\BinaryInfC{$\xymatrix@R=.1in@C=.6cm{A \ar[rr]^{\overline{R}\bullet e \bullet J(\overline{R}')\quad} &
& \Delta\bullet J(T) \bullet J(\Psi_1) 
\ar[rrr]^{id_{\Delta}\bullet J(\overline{S}) \bullet id_{J(\Psi_1)}}&&&\Delta\bullet J(\Psi_1)\bullet J(\Psi_2)
\ar[rr]^{id_{\Delta} \bullet j^{-1}_{\Psi_1\Psi_2}} &&\Delta\bullet J(\Psi_1+\Psi_2)}$}
\noLine
\UnaryInfC{\hskip3in since $|\Psi_1| = |\Psi_2|, \quad \xymatrix@R=.1in@C=.6cm{ 
\ar[rr]^{id_{\Delta}\bullet\nabla_{\Psi_1}} &&\Delta \bullet J(\Psi_1)}$}
\DisplayProof
\end{center}

\vspace{3ex}

\ref{functors}.2 {\em rules for $H: \mathcal{L} \rightarrow \mathcal{C}$.} 
\begin{equation}
\framebox{
\AxiomC{\it $TC\_\func{H}_I$}
\noLine
\UnaryInfC{$x:A\vdash_{\mathsf{L}} \overline{R}: \Delta, e: B; \Psi$}
\UnaryInfC{$x:A\vdash_{\mathsf{L}} \overline{R}: \Delta;He:HB, \Psi$}
\DisplayProof
}
\end{equation}
Let $\Delta = \overline{R}: |\Delta|$ and  $\Psi = \overline{S}:|\Psi|$: then 
\begin{center}
\AxiomC{$\xymatrix@R=.1in@C.5cm{A \ar[rr]^{\overline{R}\oplus e\oplus J(\overline{S})\quad}&&
\Delta \bullet B \bullet  J(\Psi)}$}
\RightLabel{using $\eta_B: B \rightarrow JHB$}
\UnaryInfC{$\xymatrix@R=.1in@C.5cm{A \ar[rrr]^{\overline{R}\oplus JH(e) \bullet J(\overline{S})\quad}&&
& \Delta\bullet JH(B)\bullet J(\Psi)}$}
\DisplayProof
\end{center}

\vspace{3ex}
 
\begin{equation}
\framebox{
\AxiomC{\it $TC\_\func{H}_{E}$}
\noLine
\UnaryInfC{$x:B \vdash_{\mathsf{L}} \Delta ;  t:HA, \Psi_1\qquad y:A\vdash_{\mathsf{L}} ·; \Psi_2\qquad 
\mathrm{where}\ |\Psi_1| = |\Psi_2|$}
\UnaryInfC{$x:B \vdash_{\mathsf{L}} \Delta; \Psi_1, \mathtt{let}\ Hy=t\ \mathtt{in}\ \Psi_2$}
\DisplayProof
}
\end{equation}

The categorical interpretation of $H$ elim$_1$ is as follows:
Let $\Psi_1 = \overline{R}: |\Psi_1|$ and $\Psi_2 = \overline{S}: |\Psi_2|$. Then we have the following compositions: 
\begin{center}
\AxiomC{$\strut\quad$}
\noLine
\UnaryInfC{$\xymatrix@R=.1in@C.5cm{S \ar[rr]^{t + \overline{R}\quad}&& H(A) + \Psi}$}
\AxiomC{$\xymatrix@R=.1in@C.5cm{A \ar[rr]^{J(\overline{S})} && J(\Psi)}\quad \hbox{in}\ \mathcal{L}$}
\UnaryInfC{$\xymatrix@R=.1in@C.5cm{HA \ar[rr]^{HJ(\overline{S})} && HJ(\Psi) 
\ar[r]^{\quad\epsilon_{\Psi}} &\Psi}\quad \hbox{in}\ \mathcal{C}$}
\BinaryInfC{$\xymatrix@R=.1in@C.5cm{S \ar[rr]^{t + \overline{R}\quad}&& H(A) + \Psi
\ar[rrr]^{HJ(\overline{S}) + id_{\Psi}} &&& \Psi + \Psi \ar[r]^{\quad\nabla_{\Psi}}& \Psi }$}
\DisplayProof\\
\end{center}



\section{Related and Future Work}
\label{sec:related_work}
The most comprehensive treatment of ILL is in Gavin Bierman's thesis
\cite{Bierman:1994}.  There one finds the Proof Theory (Chapter 2),
i.e, the sequent calculus with cut-eliminaton, natural deduction and
axiomatic versions of ILL. Then (Chapter 3) a term assignment to the
natural deduction and to the sequent calculus versions are presented
with $\beta$-reductions and commutative conversions, and strong
normalization and confluence are proved for the resulting calculus. A
painstaking analysis of the rules of the labeled calculus leads to the
construction of a categorical model of ILL, a \emph{linear category},
in particular of the exponential part, a main contribution of Bierman
and of the Cambridge school of the 1990s with respect to previous
models by Seely and Lafont.  Bellin \cite{Bellin:2012} presents a
categorical model of co-intuitionistic linear logic based on a
dualization of Bierman \cite{Bierman:1994} construction for ILL.

Benton's work \cite{Benton:1994} on LNL logic presents the categorical
model for Linear-Non-Linear Intuitionistic logic LNL.  Chapter 2 shows
how to obtain a LNL model from a Linear Category and viceversa.
Versions of the sequent calculus for LNL are considered and
cut-elimination is proved for one such version. Then Natural Deduction
is given with term assignment and the categorical interpretation of a
fragment of the natural deduction system. Then $\beta$-reductions and
commuting conversions are presented.  The present work follows
Benton's paper aiming at a (non-trivial) dualization of it.

Bi-intuitionistic logic was introduced by C.Rauszer
\cite{Rauszer:1974} with an algebraic and Kripke semantics
\cite{Rauszer:1980} and a Gentzen style sequent calculus
\cite{Rauszer:1974a}.  Co-intuitionistic logic requires a multiple
conclusion system, because of the cotensor in the linear case and of
contraction right in the non-linear one.  This raises the problem of
the relations between intuitionistic implication and disjunction, and,
dually, between subtraction and conjunction.  In the case of the logic
FILL that extends ILL with the cotensor (\emph{par}) applying Maheara
and Dragalin's restriction that only one formula occurs in the
succedent of the premise of an implication right, yields a calculus
that does not satisfies cut-elimination, as noticed by Schellinx
\cite{Schellinx:1991}. Similarly, in the logic BILL
(\emph{Bi-Intuitionistic Linear Logic}) requiring that only one
formula occurs in the antecedent of the premise of a subtraction left
yields a system that does not satisfy cut-elimination.

\begin{center}
\begin{tabular}{ccc}
\AxiomC{$\Gamma, A \vdash B$}
\RightLabel{$\limp$ R}
\UnaryInfC{$\Gamma \vdash A \limp B$}
\DisplayProof & \hskip1in\strut& 
\AxiomC{$ A \vdash B, \Delta$}
\RightLabel{$\lsub$ E}
\UnaryInfC{$A \lsub B \vdash \Delta$}
\DisplayProof 
\end{tabular}
\end{center}
As a simple counterexample, consider the sequent $p \Rightarrow q, r
\rightarrow ((p - q) \wedge r)$ given by Pinto and Uustalu around 2003
\cite{Pinto-Uustalu:2010}, which is provable with cut but not cut-free
with Dragalin's restrictions.

Hyland and de Paiva introduced a sequent calculus for FILL labeled
with terms
\begin{center}
\begin{tabular}{c}
\AxiomC{$\overline{y}:\Gamma, x:A \vdash t:B, \overline{u}:\Delta$}
\RightLabel{$\limp$ R}
\UnaryInfC{$\overline{y}: \Gamma \vdash \lambda x:T A \limp B, \overline{u}:\Delta$}
\DisplayProof
\end{tabular} 
\end{center}
where $x: A$ occurs in $t:B$ if and only if there is an ``essential
dependency'' of $B$ from $A$.  The restriction on the $\limp$ I is
that $x$ does not occur in the terms $\overline{u}:\Delta$.  The
original term assignment did not guarantee cut-elimination, as noticed
by Bierman \cite{Bierman:1996}; the assignment to \emph{par left}
($\oplus$ L) had to be fine tuned, as indicated by Bellin
\cite{Bellin:1997}.


A detailed presentation of the term calculus for FILL with a full
proof of cut elimination by Eades and de Paiva is in
\cite{EadesP:2016}, where the correctness for a categorical semantics
for FILL is also proved.  Another correct formalization of FILL, a
sequent calculus with a relational annotation, was given by Bra\"uner
and de Paiva \cite{BraunerDePaiva:1997}, with a proof of
cut-elimination.  The second author \cite{Bellin:1997} gave a system
of proof nets for FILL which sequentialize in the sequent calculus
with term assignment; the essential fact here is that \emph{$x:A$
  occurs in $t:B$ if and only if there is a ``directed chain'' between
  $A$ and $B$ in the proof structure.}  Here cut elimination is proved
by reduction to cut-elimination for proof nets.

A system of two-sided proof nets (in the style of natural deduction)
was given by Cockett and Seely \cite{Cockett:1997}.  For
Bi-Intuitionistic Linear Logic, they gave also a system of proof nets,
corresponding to a sequent calculus without annotations and
restrictions that therefore collapses into classical MLL.  Recently,
Clouston, Dawson, Goré and Tiu \cite{CloustonDGT:2013} gave an 
annotation-free formalization for BILL, alternative to sequent calculi, 
in the form of deep-inference and display calculi for BILL. This calculus 
enjoys cut-elimination and is relevant to the categorical semantics
bi-intuitionistic linear logic. Annotation-free formalizations of 
Bi-Intuitionistic Logic use the display calculus \cite{Gore:2000}, nested sequents
\cite{GorePT:2008} and deep inference \cite{Postniece:2009}.

Tristan Crolard \cite{Crolard:2001,Crolard:2004} made an in-depth
study of Rauszer's logic. In \cite{Crolard:2001} he showed that models
of Rauszer logic (called ``subtractive logic'') based on bi-cartesian
closed categories (with co-exponents) collapse to preorders.  He also
studied models of subtractive logic and showed that its first order
theory is constant-domain logic, thus it is not a conservative
extension of intuitionistic logic.

Crolard \cite{Crolard:2004} develops the type theory for subtractive
logic, extending a system of multiple conclusion classical natural
deduction with a connective of subtraction and then decorating proofs
with a system of annotations of dependencies that allows us to
identify ``constructive proofs'': these are derivations where only the
premise of an implication introduction depends on the discharged
assumption and only the premise of a subtraction elimination depends
on the discharged conclusion. Therefore Crolard's sequent calculus
with annotations is not affected by the counterexamples to
cut-eliminations.

The type theory is Parigot $\lambda\mu$-calculus extended with
operators for sums, products and subtraction, where the operators for
subtraction introduction and elimination are understood as a calculus
of co-routines.  A constructive system of co-routines is then obtained
by imposing restrictions on terms corresponding to the restrictions on
constructive proofs.  

In a series of papers the second author gave a ``pragmatic''
interpretation of bi-intuitionism, where intuitionistic and
co-intuitionistic logic are interpreted as logics of the acts of
assertion and making a hypothesis, respectively, the interactions
between the two sides depending on negations, see \cite{Bellin:2014}.
Here the separation between intuitionistic and co-intuitionistic logic
and their models is given a linguistic motivation. Writing $\vdash p$
for the type of assertions that $p$ is true and using intuitionistic
connectives with the BHK interpretation, one gives a ``pragmatic
interpretatiion'' of ILL, where an expression $A$ is \emph{justified}
or \emph{unjustified} \cite{dalla1995pragmatic}. Similarly, writing
$\mathcal{H}\; p$ for the type of hypotheses that $p$ is true, and
using co-intuitionistic connectives, one builds a co-intuitionistic
language, for which an analogue ``pragmatic interpretation'' has been
attempted.  Both languages may be given a modal interpretattion in S4,
with $(\vdash p)^M = \Box p$ and $(\mathcal{H}\; p)^M = \diamondsuit
p$. Notice that here there is a semantic duality between an assertion
$\vdash p$ and a hypothesis $\mathcal{H}\; \neg p$, as $\Box p$ and
$\diamondsuit \neg p$ are contradictory. Similarly there is a semantic
duality between $\mathcal{H}\; p$ and $\vdash \neg p$, but not between
$\vdash p$ and the hypothesis $\mathcal{H}\, p$. A useful direction of
research in the proof theory of bi-intuitionism may be the
investigation the relations between co-intuitionistic proofs and
intuitionistic refutations.

It is in this context that a term assignment for co-intuitionistic  logic 
has been developed, starting from Crolard's definition but independently
of the $\lambda\mu$-framework. This calculus was used
here as a term assignment of Dual LNL logic.

Trafford \cite{trafford2016structuring} defines an interpretation of
co-intuitionistic logic into a topos-theoretic model to represent both
proofs, in an elementary topoi, and refutations, in a complement
topoi.  He then shows that classical logic can be simulated in his
model.  Earlier Estrada-Gonz\'alez \cite{estrada2010complement} gave a
sequent calculus for BINT based on complement topoi.

Finally, to achieve the project outlined in the introduction of putting
together intuitionistic and co-intuitionistic adjoint logic in the
environment of BILL the definition of a suitable syntax for BILL will
play a key role.


\bibliographystyle{plainurl}



\appendix
\section{Commuting Conversions}
\label{sec:commuting_conversions}
\begin{enumerate}
\item[] {\bf Non linear rules. }
\item \emph{disjunction intro} TC$_{+_{I_1} }$ and TC$_{+_{I_2} }$ commute upwards with every inference and the terms obtained are the same. 
\item \emph{disjunction  elim} TC$_{+_E }$ commutes upwrds with inferences in the derivation of the major premise, 
  the terms assigned to the resulting subderivations are equated. For instance
  \begin{center} \scriptsize
    \begin{math} 
      $$\mprset{flushleft}
      \inferrule* [right=] {
        $$\mprset{flushleft}
        \inferrule* [right=] {
           \DualLNLLogicmv{y}  :  \DualLNLLogicnt{T_{{\mathrm{2}}}}  \vdash_{\mathsf{C} }  \Psi_{{\mathrm{2}}}  \DualLNLLogicsym{,}  \DualLNLLogicnt{t_{{\mathrm{1}}}}  \DualLNLLogicsym{:}   \DualLNLLogicnt{T_{{\mathrm{4}}}}  +  \DualLNLLogicnt{T_{{\mathrm{5}}}}  
          \\\\                        
               \DualLNLLogicmv{z}  :  \DualLNLLogicnt{T_{{\mathrm{3}}}}  \vdash_{\mathsf{C} }  \Psi_{{\mathrm{3}}}  \DualLNLLogicsym{,}  \DualLNLLogicnt{t_{{\mathrm{2}}}}  \DualLNLLogicsym{:}   \DualLNLLogicnt{T_{{\mathrm{4}}}}  +  \DualLNLLogicnt{T_{{\mathrm{5}}}}   \\  \DualLNLLogicmv{x}  :  \DualLNLLogicnt{S}  \vdash_{\mathsf{C} }  \Psi_{{\mathrm{1}}}  \DualLNLLogicsym{,}  \DualLNLLogicnt{t}  \DualLNLLogicsym{:}   \DualLNLLogicnt{T_{{\mathrm{2}}}}  +  \DualLNLLogicnt{T_{{\mathrm{3}}}}   \\ \DualLNLLogicsym{\mbox{$\mid$}}  \Psi_{{\mathrm{2}}}  \DualLNLLogicsym{\mbox{$\mid$}}  \DualLNLLogicsym{=}  \DualLNLLogicsym{\mbox{$\mid$}}  \Psi_{{\mathrm{3}}}  \DualLNLLogicsym{\mbox{$\mid$}}
        }{ \DualLNLLogicmv{x}  :  \DualLNLLogicnt{S}  \vdash_{\mathsf{C} }  \Psi_{{\mathrm{1}}}  \DualLNLLogicsym{,}    \mathsf{case}\, \DualLNLLogicnt{t} \,\mathsf{of}\, \DualLNLLogicmv{y} . \Psi_{{\mathrm{2}}} ,  \DualLNLLogicmv{z} . \Psi_{{\mathrm{3}}}    \DualLNLLogicsym{,}   \mathsf{case}\, \DualLNLLogicnt{t} \,\mathsf{of}\, \DualLNLLogicmv{y} . \DualLNLLogicnt{t_{{\mathrm{1}}}} , \DualLNLLogicmv{z} . \DualLNLLogicnt{t_{{\mathrm{2}}}}   \DualLNLLogicsym{:}   \DualLNLLogicnt{T_{{\mathrm{4}}}}  +  \DualLNLLogicnt{T_{{\mathrm{5}}}}  }
        \\
           \DualLNLLogicmv{v_{{\mathrm{1}}}}  :  \DualLNLLogicnt{T_{{\mathrm{4}}}}  \vdash_{\mathsf{C} }  \Psi_{{\mathrm{4}}}  \\
           \DualLNLLogicmv{v_{{\mathrm{2}}}}  :  \DualLNLLogicnt{T_{{\mathrm{5}}}}  \vdash_{\mathsf{C} }  \Psi_{{\mathrm{5}}}  \\
          \\
            \DualLNLLogicsym{\mbox{$\mid$}}  \Psi_{{\mathrm{4}}}  \DualLNLLogicsym{\mbox{$\mid$}}  \DualLNLLogicsym{=}  \DualLNLLogicsym{\mbox{$\mid$}}  \Psi_{{\mathrm{5}}}  \DualLNLLogicsym{\mbox{$\mid$}}
      }{ \DualLNLLogicmv{x}  :  \DualLNLLogicnt{S}  \vdash_{\mathsf{C} }  \Psi_{{\mathrm{1}}}  \DualLNLLogicsym{,}    \mathsf{case}\, \DualLNLLogicnt{t} \,\mathsf{of}\, \DualLNLLogicmv{y} . \Psi_{{\mathrm{2}}} ,  \DualLNLLogicmv{z} . \Psi_{{\mathrm{3}}}    \DualLNLLogicsym{,}   \mathsf{case}\, \DualLNLLogicsym{(}   \mathsf{case}\, \DualLNLLogicnt{t} \,\mathsf{of}\, \DualLNLLogicmv{y} . \DualLNLLogicnt{t_{{\mathrm{1}}}} , \DualLNLLogicmv{z} . \DualLNLLogicnt{t_{{\mathrm{2}}}}   \DualLNLLogicsym{)} \,\mathsf{of}\, \DualLNLLogicmv{v_{{\mathrm{1}}}} . \Psi_{{\mathrm{4}}} ,  \DualLNLLogicmv{v_{{\mathrm{2}}}} . \Psi_{{\mathrm{5}}}  }
    \end{math}
  \end{center}
  commutes to
  \begin{center} \scriptsize
    \begin{math} 
      $$\mprset{flushleft}
      \inferrule* [right=] {
         \DualLNLLogicmv{x}  :  \DualLNLLogicnt{S}  \vdash_{\mathsf{C} }  \Psi_{{\mathrm{1}}}  \DualLNLLogicsym{,}  \DualLNLLogicnt{t}  \DualLNLLogicsym{:}   \DualLNLLogicnt{T_{{\mathrm{2}}}}  +  \DualLNLLogicnt{T_{{\mathrm{3}}}}  
        \\
        $$\mprset{flushleft}
        \inferrule* [right=] {
          \DualLNLLogicsym{\mbox{$\mid$}}  \Psi_{{\mathrm{4}}}  \DualLNLLogicsym{\mbox{$\mid$}}  \DualLNLLogicsym{=}  \DualLNLLogicsym{\mbox{$\mid$}}  \Psi_{{\mathrm{5}}}  \DualLNLLogicsym{\mbox{$\mid$}}  \\\\  \DualLNLLogicmv{v_{{\mathrm{1}}}}  :  \DualLNLLogicnt{T_{{\mathrm{4}}}}  \vdash_{\mathsf{C} }  \Psi_{{\mathrm{4}}}  \\\\  \DualLNLLogicmv{v_{{\mathrm{2}}}}  :  \DualLNLLogicnt{T_{{\mathrm{5}}}}  \vdash_{\mathsf{C} }  \Psi_{{\mathrm{5}}}  \\  \DualLNLLogicmv{y}  :  \DualLNLLogicnt{T_{{\mathrm{2}}}}  \vdash_{\mathsf{C} }  \Psi_{{\mathrm{2}}}  \DualLNLLogicsym{,}  \DualLNLLogicnt{t_{{\mathrm{1}}}}  \DualLNLLogicsym{:}   \DualLNLLogicnt{T_{{\mathrm{4}}}}  +  \DualLNLLogicnt{T_{{\mathrm{5}}}}  
        }{ \DualLNLLogicmv{y}  :  \DualLNLLogicnt{T_{{\mathrm{2}}}}  \vdash_{\mathsf{C} }  \Psi_{{\mathrm{2}}}  \DualLNLLogicsym{,}   \mathsf{case}\, \DualLNLLogicnt{t_{{\mathrm{1}}}} \,\mathsf{of}\, \DualLNLLogicmv{v_{{\mathrm{1}}}} . \Psi_{{\mathrm{4}}} ,  \DualLNLLogicmv{v_{{\mathrm{2}}}} . \Psi_{{\mathrm{5}}}  }
        \\
        $$\mprset{flushleft}
        \inferrule* [right=] {
          \DualLNLLogicsym{\mbox{$\mid$}}  \Psi_{{\mathrm{4}}}  \DualLNLLogicsym{\mbox{$\mid$}}  \DualLNLLogicsym{=}  \DualLNLLogicsym{\mbox{$\mid$}}  \Psi_{{\mathrm{5}}}  \DualLNLLogicsym{\mbox{$\mid$}}  \\\\  \DualLNLLogicmv{v_{{\mathrm{1}}}}  :  \DualLNLLogicnt{T_{{\mathrm{4}}}}  \vdash_{\mathsf{C} }  \Psi_{{\mathrm{4}}}  \\\\  \DualLNLLogicmv{v_{{\mathrm{2}}}}  :  \DualLNLLogicnt{T_{{\mathrm{5}}}}  \vdash_{\mathsf{C} }  \Psi_{{\mathrm{5}}}  \\  \DualLNLLogicmv{z}  :  \DualLNLLogicnt{T_{{\mathrm{3}}}}  \vdash_{\mathsf{C} }  \Psi_{{\mathrm{3}}}  \DualLNLLogicsym{,}  \DualLNLLogicnt{t_{{\mathrm{2}}}}  \DualLNLLogicsym{:}   \DualLNLLogicnt{T_{{\mathrm{4}}}}  +  \DualLNLLogicnt{T_{{\mathrm{5}}}}  
        }{ \DualLNLLogicmv{z}  :  \DualLNLLogicnt{T_{{\mathrm{3}}}}  \vdash_{\mathsf{C} }  \Psi_{{\mathrm{3}}}  \DualLNLLogicsym{,}   \mathsf{case}\, \DualLNLLogicnt{t_{{\mathrm{2}}}} \,\mathsf{of}\, \DualLNLLogicmv{v_{{\mathrm{1}}}} . \Psi_{{\mathrm{4}}} ,  \DualLNLLogicmv{v_{{\mathrm{2}}}} . \Psi_{{\mathrm{5}}}  }
      }{ \DualLNLLogicmv{x}  :  \DualLNLLogicnt{S}  \vdash_{\mathsf{C} }  \Psi_{{\mathrm{1}}}  \DualLNLLogicsym{,}   \mathsf{case}\, \DualLNLLogicnt{t} \,\mathsf{of}\, \DualLNLLogicmv{y} . \DualLNLLogicsym{(}  \Psi_{{\mathrm{2}}}  \DualLNLLogicsym{,}   \mathsf{case}\, \DualLNLLogicnt{t_{{\mathrm{1}}}} \,\mathsf{of}\, \DualLNLLogicmv{v_{{\mathrm{1}}}} . \Psi_{{\mathrm{4}}} ,  \DualLNLLogicmv{v_{{\mathrm{2}}}} . \Psi_{{\mathrm{5}}}   \DualLNLLogicsym{)} ,  \DualLNLLogicmv{z} . \DualLNLLogicsym{(}  \Psi_{{\mathrm{3}}}  \DualLNLLogicsym{,}   \mathsf{case}\, \DualLNLLogicnt{t_{{\mathrm{2}}}} \,\mathsf{of}\, \DualLNLLogicmv{v_{{\mathrm{1}}}} . \Psi_{{\mathrm{4}}} ,  \DualLNLLogicmv{v_{{\mathrm{2}}}} . \Psi_{{\mathrm{5}}}   \DualLNLLogicsym{)}  }
    \end{math}
  \end{center}   
  \begin{remark}If $\DualLNLLogicnt{t_{{\mathrm{1}}}} =  \mathsf{inl}\, \DualLNLLogicnt{s_{{\mathrm{1}}}} $ and $\DualLNLLogicnt{u_{{\mathrm{2}}}} =  \mathsf{inr}\, \DualLNLLogicnt{s_{{\mathrm{2}}}} $
    then after commutation 
    \[
       \DualLNLLogicmv{y}  :  \DualLNLLogicnt{T_{{\mathrm{2}}}}  \vdash_{\mathsf{C} }  \Psi_{{\mathrm{2}}}  \DualLNLLogicsym{,}   \mathsf{case}\,  \mathsf{inl}\, \DualLNLLogicnt{s_{{\mathrm{1}}}}  \,\mathsf{of}\, \DualLNLLogicmv{v_{{\mathrm{1}}}} . \Psi_{{\mathrm{4}}} ,  \DualLNLLogicmv{v_{{\mathrm{2}}}} . \Psi_{{\mathrm{5}}}   \rightsquigarrow_{\beta}  \DualLNLLogicmv{y}  :  \DualLNLLogicnt{T_{{\mathrm{2}}}}  \vdash_{\mathsf{C} }  \Psi_{{\mathrm{2}}}  \DualLNLLogicsym{,}  \DualLNLLogicsym{[}  \DualLNLLogicnt{s_{{\mathrm{1}}}}  \DualLNLLogicsym{/}  \DualLNLLogicmv{v_{{\mathrm{1}}}}  \DualLNLLogicsym{]}  \Psi_{{\mathrm{4}}}  
      \]
      \[
         \DualLNLLogicmv{y}  :  \DualLNLLogicnt{T_{{\mathrm{3}}}}  \vdash_{\mathsf{C} }  \Psi_{{\mathrm{3}}}  \DualLNLLogicsym{,}   \mathsf{case}\,  \mathsf{inr}\, \DualLNLLogicnt{s_{{\mathrm{2}}}}  \,\mathsf{of}\, \DualLNLLogicmv{v_{{\mathrm{1}}}} . \Psi_{{\mathrm{4}}} ,  \DualLNLLogicmv{v_{{\mathrm{2}}}} . \Psi_{{\mathrm{5}}}   \rightsquigarrow_{\beta}  \DualLNLLogicmv{y}  :  \DualLNLLogicnt{T_{{\mathrm{2}}}}  \vdash_{\mathsf{C} }  \Psi_{{\mathrm{3}}}  \DualLNLLogicsym{,}  \DualLNLLogicsym{[}  \DualLNLLogicnt{s_{{\mathrm{1}}}}  \DualLNLLogicsym{/}  \DualLNLLogicmv{v_{{\mathrm{2}}}}  \DualLNLLogicsym{]}  \Psi_{{\mathrm{5}}} 
        \]
  \end{remark}
\item Subtraction introduction TC$_{-_I}$ commutes upwards  with inferences in both branches with any inference $\mathcal{I}$:
  \begin{center} \footnotesize
    \begin{tabular}{ccc}
      \AxiomC{$ \DualLNLLogicmv{x}  :  \DualLNLLogicnt{S}  \vdash_{\mathsf{C} }  \DualLNLLogicnt{t'_{{\mathrm{1}}}}  \DualLNLLogicsym{:}  \DualLNLLogicnt{T_{{\mathrm{1}}}}  \DualLNLLogicsym{,}  \Psi'_{{\mathrm{1}}} $}
      \RightLabel{$\mathcal{I}$}
      \UnaryInfC{$ \DualLNLLogicmv{x}  :  \DualLNLLogicnt{S}  \vdash_{\mathsf{C} }  \DualLNLLogicnt{t_{{\mathrm{1}}}}  \DualLNLLogicsym{:}  \DualLNLLogicnt{T_{{\mathrm{1}}}}  \DualLNLLogicsym{,}  \Psi_{{\mathrm{1}}} $}
      \AxiomC{$ \DualLNLLogicmv{y}  :  \DualLNLLogicnt{T_{{\mathrm{2}}}}  \vdash_{\mathsf{C} }  \Psi_{{\mathrm{2}}} $}
      \RightLabel{TC$_{-_I}$}
      \BinaryInfC{$ \DualLNLLogicmv{x}  :  \DualLNLLogicnt{S}  \vdash_{\mathsf{C} }  \Psi_{{\mathrm{1}}}  \DualLNLLogicsym{,}   \mathsf{mkc}( \DualLNLLogicnt{t} , \DualLNLLogicmv{y} )   \DualLNLLogicsym{:}   \DualLNLLogicnt{T_{{\mathrm{1}}}}  -  \DualLNLLogicnt{T_{{\mathrm{2}}}}   \DualLNLLogicsym{,}  \DualLNLLogicsym{[}  \DualLNLLogicmv{y}  \DualLNLLogicsym{(}  \DualLNLLogicnt{t}  \DualLNLLogicsym{)}  \DualLNLLogicsym{/}  \DualLNLLogicmv{y}  \DualLNLLogicsym{]}  \Psi_{{\mathrm{2}}} $}
      \DisplayProof 
      &\quad 
      &
      \AxiomC{$ \DualLNLLogicmv{x}  :  \DualLNLLogicnt{S}  \vdash_{\mathsf{C} }  \DualLNLLogicnt{t_{{\mathrm{1}}}}  \DualLNLLogicsym{:}  \DualLNLLogicnt{T_{{\mathrm{1}}}}  \DualLNLLogicsym{,}  \Psi_{{\mathrm{1}}} $}
      \AxiomC{$ \DualLNLLogicmv{y}  :  \DualLNLLogicnt{T_{{\mathrm{2}}}}  \vdash_{\mathsf{C} }  \Psi'_{{\mathrm{2}}} $}
      \RightLabel{$\mathcal{I}$}
      \UnaryInfC{$ \DualLNLLogicmv{y}  :  \DualLNLLogicnt{T_{{\mathrm{2}}}}  \vdash_{\mathsf{C} }  \Psi_{{\mathrm{2}}} $}
      \RightLabel{TC$_{-_I}$}
      \BinaryInfC{$ \DualLNLLogicmv{x}  :  \DualLNLLogicnt{S}  \vdash_{\mathsf{C} }  \Psi_{{\mathrm{1}}}  \DualLNLLogicsym{,}   \mathsf{mkc}( \DualLNLLogicnt{t_{{\mathrm{1}}}} , \DualLNLLogicmv{y} )   \DualLNLLogicsym{:}   \DualLNLLogicnt{T_{{\mathrm{1}}}}  -  \DualLNLLogicnt{T_{{\mathrm{2}}}}   \DualLNLLogicsym{,}  \DualLNLLogicsym{[}  \DualLNLLogicmv{y}  \DualLNLLogicsym{(}  \DualLNLLogicnt{t}  \DualLNLLogicsym{)}  \DualLNLLogicsym{/}  \DualLNLLogicmv{y}  \DualLNLLogicsym{]}  \Psi_{{\mathrm{2}}} $}
      \DisplayProof \\
      \\
      {\normalsize commutes to} & & {\normalsize commutes to} \\
      \\
      \AxiomC{$ \DualLNLLogicmv{x}  :  \DualLNLLogicnt{S}  \vdash_{\mathsf{C} }  \DualLNLLogicnt{t_{{\mathrm{1}}}}  \DualLNLLogicsym{:}  \DualLNLLogicnt{T_{{\mathrm{1}}}}  \DualLNLLogicsym{,}  \Psi'_{{\mathrm{1}}} $}
      \AxiomC{$ \DualLNLLogicmv{y}  :  \DualLNLLogicnt{T_{{\mathrm{2}}}}  \vdash_{\mathsf{C} }  \Psi_{{\mathrm{2}}} $}
      \RightLabel{TC$_{-_I}$}
      \BinaryInfC{$ \DualLNLLogicmv{x}  :  \DualLNLLogicnt{S}  \vdash_{\mathsf{C} }  \Psi'_{{\mathrm{1}}}  \DualLNLLogicsym{,}   \mathsf{mkc}( \DualLNLLogicnt{t_{{\mathrm{1}}}} , \DualLNLLogicmv{y} )   \DualLNLLogicsym{:}   \DualLNLLogicnt{T_{{\mathrm{1}}}}  -  \DualLNLLogicnt{T_{{\mathrm{2}}}}   \DualLNLLogicsym{,}  \DualLNLLogicsym{[}  \DualLNLLogicmv{y}  \DualLNLLogicsym{(}  \DualLNLLogicnt{t'}  \DualLNLLogicsym{)}  \DualLNLLogicsym{/}  \DualLNLLogicmv{y}  \DualLNLLogicsym{]}  \Psi_{{\mathrm{2}}} $}
      \RightLabel{$\mathcal{I}$}
      \UnaryInfC{$ \DualLNLLogicmv{x}  :  \DualLNLLogicnt{S}  \vdash_{\mathsf{C} }  \Psi_{{\mathrm{1}}}  \DualLNLLogicsym{,}   \mathsf{mkc}( \DualLNLLogicnt{t_{{\mathrm{1}}}} , \DualLNLLogicmv{y} )   \DualLNLLogicsym{:}   \DualLNLLogicnt{T_{{\mathrm{1}}}}  -  \DualLNLLogicnt{T_{{\mathrm{2}}}}   \DualLNLLogicsym{,}  \DualLNLLogicsym{[}  \DualLNLLogicmv{y}  \DualLNLLogicsym{(}  \DualLNLLogicnt{t}  \DualLNLLogicsym{)}  \DualLNLLogicsym{/}  \DualLNLLogicmv{y}  \DualLNLLogicsym{]}  \Psi_{{\mathrm{2}}} $}
      \DisplayProof& &
      \AxiomC{$ \DualLNLLogicmv{x}  :  \DualLNLLogicnt{S}  \vdash_{\mathsf{C} }  \DualLNLLogicnt{t_{{\mathrm{1}}}}  \DualLNLLogicsym{:}  \DualLNLLogicnt{T_{{\mathrm{1}}}}  \DualLNLLogicsym{,}  \Psi_{{\mathrm{1}}} $}
      \AxiomC{$ \DualLNLLogicmv{y}  :  \DualLNLLogicnt{T_{{\mathrm{2}}}}  \vdash_{\mathsf{C} }  \Psi'_{{\mathrm{2}}} $}
      \RightLabel{TC$_{-_I}$}
      \BinaryInfC{$ \DualLNLLogicmv{x}  :  \DualLNLLogicnt{S}  \vdash_{\mathsf{C} }  \Psi_{{\mathrm{1}}}  \DualLNLLogicsym{,}   \mathsf{mkc}( \DualLNLLogicnt{t_{{\mathrm{1}}}} , \DualLNLLogicmv{y} )   \DualLNLLogicsym{:}   \DualLNLLogicnt{T_{{\mathrm{1}}}}  -  \DualLNLLogicnt{T_{{\mathrm{2}}}}   \DualLNLLogicsym{,}  \Psi'_{{\mathrm{2}}} $}
      \RightLabel{TC$_{-_I}$}
      \RightLabel{$\mathcal{I}$}
      \UnaryInfC{$ \DualLNLLogicmv{x}  :  \DualLNLLogicnt{S}  \vdash_{\mathsf{C} }  \Psi_{{\mathrm{1}}}  \DualLNLLogicsym{,}   \mathsf{mkc}( \DualLNLLogicnt{t_{{\mathrm{1}}}} , \DualLNLLogicmv{y} )   \DualLNLLogicsym{:}   \DualLNLLogicnt{T_{{\mathrm{1}}}}  -  \DualLNLLogicnt{T_{{\mathrm{2}}}}   \DualLNLLogicsym{,}  \Psi_{{\mathrm{2}}} $}
      \DisplayProof
    \end{tabular}
  \end{center}
\item Subtraction elimination TC$_{-_E}$ commutes upwards. For instance, 
  \begin{center}
    \begin{tabular}{c}
      \AxiomC{$ \DualLNLLogicmv{x}  :  \DualLNLLogicnt{S}  \vdash_{\mathsf{C} }  \DualLNLLogicmv{w}  \DualLNLLogicsym{:}  \DualLNLLogicnt{S_{{\mathrm{1}}}}  \DualLNLLogicsym{,}  \Psi_{{\mathrm{1}}}  \qquad  \DualLNLLogicmv{z}  :  \DualLNLLogicnt{S_{{\mathrm{2}}}}  \vdash_{\mathsf{C} }  \Psi_{{\mathrm{2}}}  \DualLNLLogicsym{,}  \DualLNLLogicnt{t_{{\mathrm{1}}}}  \DualLNLLogicsym{:}   \DualLNLLogicnt{T_{{\mathrm{1}}}}  -  \DualLNLLogicnt{T_{{\mathrm{2}}}}  $}
      \UnaryInfC{$ \DualLNLLogicmv{x}  :  \DualLNLLogicnt{S}  \vdash_{\mathsf{C} }  \Psi_{{\mathrm{1}}}  \DualLNLLogicsym{,}   \mathsf{mkc}( \DualLNLLogicmv{w} , \DualLNLLogicmv{z} )   \DualLNLLogicsym{:}   \DualLNLLogicnt{S_{{\mathrm{1}}}}  -  \DualLNLLogicnt{S_{{\mathrm{2}}}}   \DualLNLLogicsym{,}  \DualLNLLogicsym{[}  \DualLNLLogicmv{z}  \DualLNLLogicsym{(}  \DualLNLLogicmv{w}  \DualLNLLogicsym{)}  \DualLNLLogicsym{/}  \DualLNLLogicmv{z}  \DualLNLLogicsym{]}  \Psi_{{\mathrm{2}}}  \DualLNLLogicsym{,}  \DualLNLLogicsym{[}  \DualLNLLogicmv{z}  \DualLNLLogicsym{(}  \DualLNLLogicmv{w}  \DualLNLLogicsym{)}  \DualLNLLogicsym{/}  \DualLNLLogicmv{z}  \DualLNLLogicsym{]}  \DualLNLLogicnt{t_{{\mathrm{1}}}}  \DualLNLLogicsym{:}   \DualLNLLogicnt{T_{{\mathrm{1}}}}  -  \DualLNLLogicnt{T_{{\mathrm{2}}}}  $}
      \AxiomC{$ \DualLNLLogicmv{y}  :  \DualLNLLogicnt{T_{{\mathrm{1}}}}  \vdash_{\mathsf{C} }  \DualLNLLogicnt{t}  \DualLNLLogicsym{:}  \DualLNLLogicnt{T_{{\mathrm{2}}}}  \DualLNLLogicsym{,}  \Psi_{{\mathrm{3}}} $}
      \BinaryInfC{$ \DualLNLLogicmv{x}  :  \DualLNLLogicnt{S}  \vdash_{\mathsf{C} }  \Psi_{{\mathrm{1}}}  \DualLNLLogicsym{,}   \mathsf{mkc}( \DualLNLLogicmv{w} , \DualLNLLogicmv{z} )   \DualLNLLogicsym{:}   \DualLNLLogicnt{S_{{\mathrm{1}}}}  -  \DualLNLLogicnt{S_{{\mathrm{2}}}}   \DualLNLLogicsym{,}  \DualLNLLogicsym{[}  \DualLNLLogicmv{z}  \DualLNLLogicsym{(}  \DualLNLLogicmv{w}  \DualLNLLogicsym{)}  \DualLNLLogicsym{/}  \DualLNLLogicmv{z}  \DualLNLLogicsym{]}  \Psi_{{\mathrm{2}}}  \DualLNLLogicsym{,}    \mathsf{postp}\,( \DualLNLLogicmv{y}  \mapsto  \DualLNLLogicnt{t} ,  \DualLNLLogicsym{[}  \DualLNLLogicmv{z}  \DualLNLLogicsym{(}  \DualLNLLogicmv{w}  \DualLNLLogicsym{)}  \DualLNLLogicsym{/}  \DualLNLLogicmv{z}  \DualLNLLogicsym{]}  \DualLNLLogicnt{t_{{\mathrm{1}}}} )    \DualLNLLogicsym{,}  \DualLNLLogicsym{[}  \DualLNLLogicmv{y}  \DualLNLLogicsym{(}  \DualLNLLogicsym{[}  \DualLNLLogicmv{z}  \DualLNLLogicsym{(}  \DualLNLLogicmv{w}  \DualLNLLogicsym{)}  \DualLNLLogicsym{/}  \DualLNLLogicmv{z}  \DualLNLLogicsym{]}  \DualLNLLogicnt{t_{{\mathrm{1}}}}  \DualLNLLogicsym{)}  \DualLNLLogicsym{/}  \DualLNLLogicmv{y}  \DualLNLLogicsym{]}  \Psi_{{\mathrm{3}}} $}
      \DisplayProof\\
      \\
      commutes to \\
      \\
      \AxiomC{$ \DualLNLLogicmv{x}  :  \DualLNLLogicnt{S}  \vdash_{\mathsf{C} }  \DualLNLLogicmv{w}  \DualLNLLogicsym{:}  \DualLNLLogicnt{S_{{\mathrm{1}}}}  \DualLNLLogicsym{,}  \Psi_{{\mathrm{1}}} $}
      \AxiomC{$ \DualLNLLogicmv{z}  :  \DualLNLLogicnt{S_{{\mathrm{2}}}}  \vdash_{\mathsf{C} }  \Psi_{{\mathrm{2}}}  \DualLNLLogicsym{,}  \DualLNLLogicnt{t_{{\mathrm{1}}}}  \DualLNLLogicsym{:}   \DualLNLLogicnt{T_{{\mathrm{1}}}}  -  \DualLNLLogicnt{T_{{\mathrm{2}}}}   \qquad  \DualLNLLogicmv{y}  :  \DualLNLLogicnt{T_{{\mathrm{1}}}}  \vdash_{\mathsf{C} }  \DualLNLLogicnt{t}  \DualLNLLogicsym{:}  \DualLNLLogicnt{T_{{\mathrm{2}}}}  \DualLNLLogicsym{,}  \Psi_{{\mathrm{3}}} $}
      \UnaryInfC{$ \DualLNLLogicmv{z}  :  \DualLNLLogicnt{S_{{\mathrm{2}}}}  \vdash_{\mathsf{C} }  \Psi_{{\mathrm{2}}}  \DualLNLLogicsym{,}    \mathsf{postp}\,( \DualLNLLogicmv{y}  \mapsto  \DualLNLLogicnt{t} ,  \DualLNLLogicnt{t_{{\mathrm{1}}}} )    \DualLNLLogicsym{,}  \DualLNLLogicsym{[}  \DualLNLLogicmv{y}  \DualLNLLogicsym{(}  \DualLNLLogicnt{t_{{\mathrm{1}}}}  \DualLNLLogicsym{)}  \DualLNLLogicsym{/}  \DualLNLLogicmv{y}  \DualLNLLogicsym{]}  \Psi_{{\mathrm{3}}} $}
      \BinaryInfC{$ \DualLNLLogicmv{x}  :  \DualLNLLogicnt{S}  \vdash_{\mathsf{C} }  \Psi_{{\mathrm{1}}}  \DualLNLLogicsym{,}   \mathsf{mkc}( \DualLNLLogicmv{w} , \DualLNLLogicmv{z} )   \DualLNLLogicsym{:}   \DualLNLLogicnt{S_{{\mathrm{1}}}}  -  \DualLNLLogicnt{S_{{\mathrm{2}}}}   \DualLNLLogicsym{,}  \DualLNLLogicsym{[}  \DualLNLLogicmv{z}  \DualLNLLogicsym{(}  \DualLNLLogicmv{w}  \DualLNLLogicsym{)}  \DualLNLLogicsym{/}  \DualLNLLogicmv{z}  \DualLNLLogicsym{]}  \Psi_{{\mathrm{2}}}  \DualLNLLogicsym{,}    \mathsf{postp}\,( \DualLNLLogicmv{y}  \mapsto  \DualLNLLogicnt{t} ,  \DualLNLLogicsym{[}  \DualLNLLogicmv{z}  \DualLNLLogicsym{(}  \DualLNLLogicmv{w}  \DualLNLLogicsym{)}  \DualLNLLogicsym{/}  \DualLNLLogicmv{z}  \DualLNLLogicsym{]}  \DualLNLLogicnt{t_{{\mathrm{1}}}} )    \DualLNLLogicsym{,}  \DualLNLLogicsym{[}  \DualLNLLogicmv{z}  \DualLNLLogicsym{(}  \DualLNLLogicmv{w}  \DualLNLLogicsym{)}  \DualLNLLogicsym{/}  \DualLNLLogicmv{z}  \DualLNLLogicsym{]}  \DualLNLLogicsym{[}  \DualLNLLogicmv{y}  \DualLNLLogicsym{(}  \DualLNLLogicnt{t_{{\mathrm{1}}}}  \DualLNLLogicsym{)}  \DualLNLLogicsym{/}  \DualLNLLogicmv{y}  \DualLNLLogicsym{]}  \Psi_{{\mathrm{3}}} $}
      \DisplayProof\\
    \end{tabular}
  \end{center}
  where 
  \begin{equation}
    \DualLNLLogicsym{[}  \DualLNLLogicmv{z}  \DualLNLLogicsym{(}  \DualLNLLogicmv{w}  \DualLNLLogicsym{)}  \DualLNLLogicsym{/}  \DualLNLLogicmv{z}  \DualLNLLogicsym{]}  \DualLNLLogicsym{[}  \DualLNLLogicmv{y}  \DualLNLLogicsym{(}  \DualLNLLogicnt{t_{{\mathrm{1}}}}  \DualLNLLogicsym{)}  \DualLNLLogicsym{/}  \DualLNLLogicmv{y}  \DualLNLLogicsym{]}  \Psi_{{\mathrm{3}}} =  \DualLNLLogicsym{[}  \DualLNLLogicmv{y}  \DualLNLLogicsym{(}  \DualLNLLogicsym{[}  \DualLNLLogicmv{z}  \DualLNLLogicsym{(}  \DualLNLLogicmv{w}  \DualLNLLogicsym{)}  \DualLNLLogicsym{/}  \DualLNLLogicmv{z}  \DualLNLLogicsym{]}  \DualLNLLogicnt{t_{{\mathrm{1}}}}  \DualLNLLogicsym{)}  \DualLNLLogicsym{/}  \DualLNLLogicmv{y}  \DualLNLLogicsym{]}  \Psi_{{\mathrm{3}}}
  \end{equation}
  since $\DualLNLLogicmv{z} \notin \Psi_{{\mathrm{2}}}$.
\item[] {\bf Linear rules.}
\item The $\bot$ introduction rule TILL$_{\bot I}$ rule commutes with any inference, as $\mathtt{connect}_{\bot}
  \mathtt{to} e$ can be ``rewired'' to any term in the context. 
\item The commutations of the rules for linear subtraction TILL$_{\lsub I}$ and TILL$_{\lsub E}$ are similar to those for non-linear subtraction. 
\item Linear disjunction (\emph{par}) introduction (TLL$_{\oplus I}$) commutes with any inference. 
  Linear disjunction elimination  (TLL$_{\oplus E}$) also commutes upwards. For example (writing a proof without 
  non-linear parts for simplicity) we have the following:
  {\small
    \begin{center}
      \begin{tabular}{c}
        \AxiomC{$ \DualLNLLogicmv{x}  :  \DualLNLLogicnt{A}  \vdash_{\mathsf{L} }  \Delta  \DualLNLLogicsym{,}  \DualLNLLogicnt{e}  \DualLNLLogicsym{:}  \DualLNLLogicsym{(}   \DualLNLLogicnt{B_{{\mathrm{1}}}}  \oplus  \DualLNLLogicnt{B_{{\mathrm{2}}}}   \DualLNLLogicsym{)} $}
        \AxiomC{$ \DualLNLLogicmv{y}  :  \DualLNLLogicnt{B_{{\mathrm{1}}}}  \vdash_{\mathsf{L} }  \Delta_{{\mathrm{1}}}  \DualLNLLogicsym{,}  \DualLNLLogicnt{e_{{\mathrm{1}}}}  \DualLNLLogicsym{:}   \DualLNLLogicnt{C_{{\mathrm{1}}}}  \oplus  \DualLNLLogicnt{C_{{\mathrm{2}}}}   \qquad  \DualLNLLogicmv{v}  :  \DualLNLLogicnt{C_{{\mathrm{1}}}}  \vdash_{\mathsf{L} }  \Delta_{{\mathrm{3}}} \qquad  \DualLNLLogicmv{w}  :  \DualLNLLogicnt{C_{{\mathrm{2}}}}  \vdash_{\mathsf{L} }  \Delta_{{\mathrm{4}}} $}
        \RightLabel{$\oplus$ E}
        \UnaryInfC{$ \DualLNLLogicmv{y}  :  \DualLNLLogicnt{B_{{\mathrm{1}}}}  \vdash_{\mathsf{L} }  \Delta_{{\mathrm{1}}}  \DualLNLLogicsym{,}  \DualLNLLogicsym{[}   \mathsf{casel}\, \DualLNLLogicnt{e_{{\mathrm{1}}}}   \DualLNLLogicsym{/}  \DualLNLLogicmv{v}  \DualLNLLogicsym{]}  \Delta_{{\mathrm{3}}}  \DualLNLLogicsym{,}  \DualLNLLogicsym{[}   \mathsf{caser}\, \DualLNLLogicnt{e_{{\mathrm{1}}}}   \DualLNLLogicsym{/}  \DualLNLLogicmv{w}  \DualLNLLogicsym{]}  \Delta_{{\mathrm{4}}} $}
        \noLine
        \UnaryInfC{$ \DualLNLLogicmv{z}  :  \DualLNLLogicnt{B_{{\mathrm{2}}}}  \vdash_{\mathsf{L} }  \Delta_{{\mathrm{2}}}  \hskip1.7in$}
        \RightLabel{$\oplus$ E}
        \BinaryInfC{$ \DualLNLLogicmv{x}  :  \DualLNLLogicnt{A}  \vdash_{\mathsf{L} }  \Delta  \DualLNLLogicsym{,}  \DualLNLLogicsym{[}   \mathsf{casel}\, \DualLNLLogicnt{e}   \DualLNLLogicsym{/}  \DualLNLLogicmv{y}  \DualLNLLogicsym{]}  \Delta_{{\mathrm{1}}}  \DualLNLLogicsym{,}  \DualLNLLogicsym{[}   \mathsf{casel}\, \DualLNLLogicnt{e}   \DualLNLLogicsym{/}  \DualLNLLogicmv{y}  \DualLNLLogicsym{]}  \DualLNLLogicsym{[}   \mathsf{casel}\, \DualLNLLogicnt{e_{{\mathrm{1}}}}   \DualLNLLogicsym{/}  \DualLNLLogicmv{v}  \DualLNLLogicsym{]}  \Delta_{{\mathrm{3}}}  \DualLNLLogicsym{,}  \DualLNLLogicsym{[}   \mathsf{casel}\, \DualLNLLogicnt{e}   \DualLNLLogicsym{/}  \DualLNLLogicmv{y}  \DualLNLLogicsym{]}  \DualLNLLogicsym{[}   \mathsf{caser}\, \DualLNLLogicnt{e_{{\mathrm{1}}}}   \DualLNLLogicsym{/}  \DualLNLLogicmv{w}  \DualLNLLogicsym{]}  \Delta_{{\mathrm{4}}}  \DualLNLLogicsym{,}  \DualLNLLogicsym{[}   \mathsf{caser}\, \DualLNLLogicnt{e}   \DualLNLLogicsym{/}  \DualLNLLogicmv{z}  \DualLNLLogicsym{]}  \Delta_{{\mathrm{2}}}  $}
        \DisplayProof\\
        \\
        commutes to\\
        \\
        \AxiomC{$ \DualLNLLogicmv{x}  :  \DualLNLLogicnt{A}  \vdash_{\mathsf{L} }  \Delta  \DualLNLLogicsym{,}  \DualLNLLogicnt{e}  \DualLNLLogicsym{:}  \DualLNLLogicsym{(}   \DualLNLLogicnt{B_{{\mathrm{1}}}}  \oplus  \DualLNLLogicnt{B_{{\mathrm{2}}}}   \DualLNLLogicsym{)}  \qquad  \DualLNLLogicmv{y}  :  \DualLNLLogicnt{B_{{\mathrm{1}}}}  \vdash_{\mathsf{L} }  \Delta_{{\mathrm{1}}}  \DualLNLLogicsym{,}  \DualLNLLogicnt{e_{{\mathrm{1}}}}  \DualLNLLogicsym{:}   \DualLNLLogicnt{C_{{\mathrm{1}}}}  \oplus  \DualLNLLogicnt{C_{{\mathrm{2}}}}   \quad  \DualLNLLogicmv{z}  :  \DualLNLLogicnt{B_{{\mathrm{2}}}}  \vdash_{\mathsf{L} }  \Delta_{{\mathrm{2}}} $}
        \RightLabel{$\oplus$ E}
        \UnaryInfC{$ \DualLNLLogicmv{x}  :  \DualLNLLogicnt{A}  \vdash_{\mathsf{L} }  \Delta  \DualLNLLogicsym{,}  \DualLNLLogicsym{[}   \mathsf{casel}\, \DualLNLLogicnt{e}   \DualLNLLogicsym{/}  \DualLNLLogicmv{y}  \DualLNLLogicsym{]}  \Delta_{{\mathrm{1}}}  \DualLNLLogicsym{,}  \DualLNLLogicsym{[}   \mathsf{casel}\, \DualLNLLogicnt{e}   \DualLNLLogicsym{/}  \DualLNLLogicmv{y}  \DualLNLLogicsym{]}  \DualLNLLogicnt{e_{{\mathrm{1}}}}  \DualLNLLogicsym{:}   \DualLNLLogicnt{C_{{\mathrm{1}}}}  \oplus  \DualLNLLogicnt{C_{{\mathrm{2}}}}   \DualLNLLogicsym{,}  \DualLNLLogicsym{[}   \mathsf{caser}\, \DualLNLLogicnt{e}   \DualLNLLogicsym{/}  \DualLNLLogicmv{z}  \DualLNLLogicsym{]}  \Delta_{{\mathrm{2}}} $}
        \noLine
        \UnaryInfC{$\hskip3in   \DualLNLLogicmv{v}  :  \DualLNLLogicnt{C_{{\mathrm{1}}}}  \vdash_{\mathsf{L} }  \Delta_{{\mathrm{3}}}  \qquad  \DualLNLLogicmv{w}  :  \DualLNLLogicnt{C_{{\mathrm{2}}}}  \vdash_{\mathsf{L} }  \Delta_{{\mathrm{4}}} $}
        \RightLabel{$\oplus$ E}
        \UnaryInfC{$ \DualLNLLogicmv{x}  :  \DualLNLLogicnt{A}  \vdash_{\mathsf{L} }  \Delta  \DualLNLLogicsym{,}  \DualLNLLogicsym{[}   \mathsf{casel}\, \DualLNLLogicnt{e}   \DualLNLLogicsym{/}  \DualLNLLogicmv{y}  \DualLNLLogicsym{]}  \Delta_{{\mathrm{1}}}  \DualLNLLogicsym{,}  \DualLNLLogicsym{[}   \mathsf{casel}\, \DualLNLLogicsym{[}   \mathsf{casel}\, \DualLNLLogicnt{e}   \DualLNLLogicsym{/}  \DualLNLLogicmv{y}  \DualLNLLogicsym{]}  \DualLNLLogicnt{e_{{\mathrm{1}}}}   \DualLNLLogicsym{/}  \DualLNLLogicmv{v}  \DualLNLLogicsym{]}  \Delta_{{\mathrm{3}}}  \DualLNLLogicsym{,}  \DualLNLLogicsym{[}   \mathsf{caser}\, \DualLNLLogicsym{[}   \mathsf{casel}\, \DualLNLLogicnt{e}   \DualLNLLogicsym{/}  \DualLNLLogicmv{y}  \DualLNLLogicsym{]}  \DualLNLLogicnt{e_{{\mathrm{1}}}}   \DualLNLLogicsym{/}  \DualLNLLogicmv{w}  \DualLNLLogicsym{]}  \Delta_{{\mathrm{4}}}  \DualLNLLogicsym{,}  \DualLNLLogicsym{[}   \mathsf{caser}\, \DualLNLLogicnt{e}   \DualLNLLogicsym{/}  \DualLNLLogicmv{z}  \DualLNLLogicsym{]}  \Delta_{{\mathrm{2}}} $}
        \DisplayProof
      \end{tabular}
  \end{center} }
  Now 
  \begin{equation}
    \DualLNLLogicsym{[}   \mathsf{casel}\, \DualLNLLogicnt{e}   \DualLNLLogicsym{/}  \DualLNLLogicmv{y}  \DualLNLLogicsym{]}  \DualLNLLogicsym{[}   \mathsf{casel}\, \DualLNLLogicnt{e_{{\mathrm{1}}}}   \DualLNLLogicsym{/}  \DualLNLLogicmv{v}  \DualLNLLogicsym{]}  \Delta_{{\mathrm{3}}} \quad = \quad  \DualLNLLogicsym{[}   \mathsf{casel}\, \DualLNLLogicsym{[}   \mathsf{casel}\, \DualLNLLogicnt{e}   \DualLNLLogicsym{/}  \DualLNLLogicmv{y}  \DualLNLLogicsym{]}  \DualLNLLogicnt{e_{{\mathrm{1}}}}   \DualLNLLogicsym{/}  \DualLNLLogicmv{v}  \DualLNLLogicsym{]}  \Delta_{{\mathrm{3}}}
  \end{equation}
  because $y$ does not occur in $\Delta_{{\mathrm{3}}}$, only in $\DualLNLLogicnt{e_{{\mathrm{1}}}}$ and 
  \[
    \DualLNLLogicsym{[}   \mathsf{casel}\, \DualLNLLogicnt{e}   \DualLNLLogicsym{/}  \DualLNLLogicmv{y}  \DualLNLLogicsym{]}  \DualLNLLogicsym{[}   \mathsf{casel}\, \DualLNLLogicnt{e_{{\mathrm{1}}}}   \DualLNLLogicsym{/}  \DualLNLLogicmv{w}  \DualLNLLogicsym{]}  \Delta_{{\mathrm{4}}} \quad = \quad \DualLNLLogicsym{[}   \mathsf{caser}\, \DualLNLLogicsym{[}   \mathsf{casel}\, \DualLNLLogicnt{e}   \DualLNLLogicsym{/}  \DualLNLLogicmv{y}  \DualLNLLogicsym{]}  \DualLNLLogicnt{e_{{\mathrm{1}}}}   \DualLNLLogicsym{/}  \DualLNLLogicmv{w}  \DualLNLLogicsym{]}  \Delta_{{\mathrm{4}}}
    \]
    because $y$ does not occur in $\Delta_{{\mathrm{4}}}$, only in $\DualLNLLogicnt{e_{{\mathrm{1}}}}$.

\end{enumerate}


\section{Proofs}
\label{sec:proofs}

\subsection{Proof of Lemma~\ref{lemma:symmetric_comonoidal_isomorphisms}}
\label{subsec:proof_of_lemma:symmetric_comonoidal_isomorphisms}
We show that both of the  maps:
\[
\begin{array}{lll}
  \jinv{R,S} := \func{J}R \oplus \func{J}S \mto^{\eta} \func{JH}(\func{J}R \oplus \func{J}S) \mto^{\func{J}\h{A,B}} \func{J}(\func{HJ}R + \func{HJ}S) \mto^{\J(\varepsilon_R + \varepsilon_S)} \func{J}(R + S)\\
  \\
  \jinv{0} := \perp \mto^{\eta} \func{JH}\perp \mto^{\func{J}\h{\perp}} \func{J}0
\end{array}
\]
are mutual inverses with $\j{R,S} : \func{J}(R + S) \mto \func{J}R
\oplus \func{J}S$ and $\j{0} : \perp \mto \func{J}0$ respectively.

\begin{itemize}
\item[Case.] The following diagram implies that $\jinv{R,S};\j{R,S} = \id$:
  \begin{diagram}
    \square|ammm|/->`->`->`<-/<950,500>[
      \func{J}R \oplus \func{J}S`
      \func{JH}(\func{J}R \oplus \func{J}S)`
      \func{JHJ}R \oplus \func{JHJ}S`
      \func{J}(\func{HJ}R + \func{HJ}S);
      \eta`
      \eta \oplus \eta`
      \func{J}\h{}`
      \j{}
    ]
    \dtriangle(-950,0)|amm|/=``<-/<950,500>[
      \func{J}R \oplus \func{J}S`
      \func{J}R \oplus \func{J}S`
      \func{JHJ}R \oplus \func{JHJ}S;``
      \func{J}\varepsilon \oplus \func{J}\varepsilon]

    \qtriangle(-950,-500)/`<-`->/<1900,500>[
      \func{J}R \oplus \func{J}S`
      \func{J}(\func{HJ}R + \func{HJ}S)`
      \func{J}(R + S);`
      \j{}`
      \func{J}(\varepsilon + \varepsilon)]        
  \end{diagram}
  The two top diagrams both commute because $\eta$ and $\varepsilon$
  are the unit and counit of the adjunction respectively, and the
  bottom diagram commutes by naturality of $\j{}$.
  
\item[Case.] The following diagram implies that $\j{R,S};\jinv{R,S} = \id$:
  \begin{diagram}
    \square|ammm|/->`->`->`->/<950,500>[
      \func{J}(R + S)`
      \func{J}R \oplus \func{J}S`
      \func{JHJ}(R + S)`
      \func{JH}(\func{J}R \oplus \func{J}S);
      \j{}`
      \eta`
      \eta`
      \func{JH}\j{}
    ]
    \dtriangle(-950,0)|amm|/=``<-/<950,500>[
      \func{J}(R + S)`
      \func{J}(R + S)`
      \func{JHJ}(R + S);``
      \func{J}\varepsilon]

    \qtriangle(-950,-500)/`<-`->/<1900,500>[
      \func{J}(R + S)`
      \func{JH}(\func{J}R \oplus \func{J}S)`
      \func{J}(\func{HJ}R + \func{HJ}S);`
      \func{J}(\varepsilon + \varepsilon)`
      \func{J}\h{}]
  \end{diagram}
  The top left and bottom diagrams both commute because $\eta$ and $\varepsilon$
  are the unit and counit of the adjunction respectively, and the
  top right diagram commutes by naturality of $\eta$.
  
\item[Case.] The following diagram implies that $\jinv{0};\j{0} = \id$:
  \begin{diagram}
    \square|amma|/->`=`->`<-/<950,500>[
      \perp`
      \func{JH}\perp`
      \perp`
      \func{J}0;
      \eta``
      \func{J}\h{\perp}`
      \j{0}]
  \end{diagram}
  This diagram holds because $\eta$ is the unit of the adjunction.

\item[Case.] The following diagram implies that $\j{0};\jinv{0} = \id$:        
  \begin{diagram}
    \Atriangle|aaa|/->`->`<-/<950,500>[
      \func{JHJ}0`
      \func{J}0`
      \func{JH}\perp;
      \func{J}\varepsilon`
      \func{JH}\j{0}`
      \func{J}\h{\perp}]

    \Dtriangle|aaa|/=`->`/<950,500>[
      \func{J}0`
      \func{JHJ}0`
      \func{J}0;`
      \eta`]

    \square/->``->`/<1900,1000>[
      \func{J}0`
      \perp`
      \func{J}0`
      \func{JH}\perp;
      \j{0}``
      \eta`]
  \end{diagram}
  The top-left and bottom diagrams commute because $\eta$ and
  $\varepsilon$ are the unit and counit of the adjunction
  respectively, and the top-right digram commutes by naturality of
  $\eta$.
\end{itemize}

\subsection{Proof of Lemma~\ref{lemma:symmetric_comonoidal_monad}}
\label{subsec:proof_of_lemma:symmetric_comonoidal_monad}
Since $\wn$ is the composition of two symmetric comonoidal functors we know it is also symmetric comonoidal, and hence, the following diagrams all hold:
\begin{mathpar}
  \bfig
  \vSquares|ammmmma|/->`->`->``->`->`->/[
    \wn ((A \oplus B) \oplus C)`
    \wn (A \oplus B) \oplus \wn C`
    \wn (A \oplus (B \oplus C))`
    (\wn A \oplus \wn B) \oplus \wn C`
    \wn A \oplus \wn (B \oplus C))`
    \wn A \oplus (\wn B \oplus \wn C);
    \r{A \oplus B,C}`
    \wn \alpha_{A,B,C}`
    \r{A,B} \oplus \id_{\wn C}``
    \r{A,B \oplus C}`
    \alpha_{\wn A,\wn B,\wn C}`
    \id_{\wn A} \oplus \r{B,C}]    
  \efig
\end{mathpar}
\begin{mathpar}
  \bfig
  \square|amma|/->`->`->`->/<1000,500>[
    \wn (\perp \oplus A)`
    \wn \perp \oplus \wn A`
    \wn A`
    \perp \oplus \wn A;
    \r{\perp,A}`
    \wn {\lambda}_{A}`
    \r{\perp} \oplus \id_{\wn A}`
      {\lambda^{-1}}_{\wn A}]
  \efig
  \and
  \bfig
  \square|amma|/->`->`->`->/<1000,500>[
    \wn (A \oplus \perp)`
    \wn A \oplus \wn \perp`
    \wn A`
    \wn A \oplus \perp;
    \r{A,\perp}`
    \wn {\rho}_{A}`
    \id_{\wn A} \oplus \r{\perp}`
       {\rho^{-1}}_{\wn A}]
  \efig
\end{mathpar}

\begin{diagram}
  \square|amma|/->`->`->`->/<1000,500>[
    \wn (A \oplus B)`
    \wn A \oplus \wn B`
    \wn (B \oplus A)`
    \wn B \oplus \wn A;
    \r{A,B}`
    \wn {\beta}_{A,B}`
        {\beta}_{\wn A,\wn B}`
        \r{B,A}]
\end{diagram}
Next we show that $(\wn,\eta,\mu)$ defines a monad where
$\eta_A : A \mto ?A$ is the unit of the adjunction, and
$\mu_A = \func{J}\varepsilon_{\func{H}\,A} : \wn\wn A \mto \wn A$.  It
suffices to show that every diagram of
Definition~\ref{def:symm-comonoidal-monad} holds.
\begin{itemize}
\item[Case.]
  $$\bfig
  \square|ammb|<600,600>[
    \wn^3 A`
    \wn^2 A`
    \wn^2 A`
    \wn A;
    \mu_{\wn A}`
    \wn\mu_A`
    \mu_A`
    \mu_A]
  \efig$$
  It suffices to show that the following diagram commutes:
  $$\bfig
  \square|ammb|<600,600>[
    \func{J}(\func{H}(\wn^2 A))`
    \func{J}(\func{H}\,\wn A)`
    \func{J}(\func{H}\,\wn A)`
    \func{J}(\func{H}\,A);
    \func{J}\varepsilon_{\func{H}\,\wn A}`
    \func{J}(\func{H}\,\mu_A)`
    \func{J}\varepsilon_{\func{H}\,A}`
    \func{J}\varepsilon_{\func{H}\,A}]
  \efig$$
  But this diagram is equivalent to the following:
  $$\bfig
  \square|ammb|<600,600>[
    \func{H}\func{JHJH} A`
    \func{H}\,\func{JH} A`
    \func{H}\,\func{JH} A`
    \func{H}\,A;
    \varepsilon_{\func{H}\,\func{JH} A}`
    \func{H}\,\func{J}\varepsilon_{\func{H}\,A}`
    \varepsilon_{\func{H}\,A}`
    \varepsilon_{\func{H}\,A}]
  \efig$$
  The previous diagram commutes by naturality of $\varepsilon$.

\item[Case.]
  $$\bfig
  \Atrianglepair/=`<-`=`->`<-/<600,600>[
    \wn A`
    \wn A`
    \wn^2 A`
    \wn A;`
    \mu_A``
    \eta_{\wn A}`
    \wn \eta_A]
  \efig$$
  It suffices to show that the following diagrams commutes:
  $$\bfig
  \Atrianglepair/=`<-`=`->`<-/<600,600>[
    JH A`
    JH A`
    JHJH A`
    JH A;`
    J\varepsilon_{HA}``
    \eta_{JH A}`
    JH \eta_A]
  \efig$$
  Both of these diagrams commute because $\eta$ and $\varepsilon$
  are the unit and counit of an adjunction.
\end{itemize}

It remains to be shown that $\eta$ and $\mu$ are both
symmetric comonoidal natural transformations, but this easily follows
from the fact that we know $\eta$ is by assumption, and that $\mu$
is because it is defined in terms of $\varepsilon$ which is a
symmetric comonoidal natural transformation.  Thus, all of the
following diagrams commute:
\begin{mathpar}
  \bfig
  \ptriangle|amm|/->`->`<-/<1000,600>[
    A \oplus B`
    \wn A \oplus \wn B`
    \wn (A \oplus B);
    \eta_A \oplus \eta_B`
    \eta_A`
    \r{A,B}]    
  \efig
  \and
  \bfig
  \Vtriangle/->`=`->/<600,600>[
    \perp`
    \wn\perp`
    \perp;
    \eta_\perp``
    \r{\perp}]
  \efig
\end{mathpar}
\begin{mathpar}
  \bfig
  \square|ammm|/->`->``/<1050,600>[
    \wn^2(A \oplus B)`
    \wn (\wn A \oplus \wn B)`
    \wn (A \oplus B)`;
    \wn\r{A,B}`
    \mu_{A \oplus B}``]

  \square(850,0)|ammm|/->``->`/<1050,600>[
    \wn (\wn A \oplus \wn B)`
    \wn^2 A \oplus \wn^2 B``
    \wn A \oplus \wn B;
    \r{\wn A,\wn B}``
    \mu_A \oplus \mu_B`]
  \morphism(-200,0)<2100,0>[\wn(A \oplus B)`\wn A \oplus \wn B;\r{A,B}]
  \efig
  \and
  \bfig
  \square|ammb|/->`->`->`->/<600,600>[
    \wn^2\perp`
    \wn\perp`
    \wn\perp`
    \perp;
    \wn\r{\perp}`
    \mu_\perp`
    \r{\perp}`
    \r{\perp}]
  \efig
\end{mathpar}

\subsection{Proof of Lemma~\ref{lemma:right_weakening_and_contraction}}
\label{subsec:proof_of_lemma:right_weakening_and_contraction}
Suppose $(\func{H},\h{})$ and $(\func{J},\j{})$ are two symmetric
comonoidal functors, such that, $\cat{L} : \func{H} \dashv \func{J}
: \cat{C}$ is a dual LNL model.  Again, we know $\wn A = H;J : \cat{L}
\mto \cat{L}$ is a symmetric comonoidal monad by
Lemma~\ref{lemma:symmetric_comonoidal_monad}.  

We define the following morphisms:
\[
\begin{array}{lll}
  \w{A} := \perp \mto^{\jinv{0}} \func{J} 0 \mto^{\func{J}\diamond_{\func{H} A}} \func{JH}A \mto/=/ \wn A\\
  \c{A} := \wn A \oplus \wn A \mto/=/ \func{JH}A \oplus \func{JH}A \mto^{\jinv{\func{H}A,\func{H}A}} \func{J}(\func{H}A + \func{H}A) \mto^{\func{J}\codiag{\func{H}A}} \func{JHA} \mto/=/ \wn A
\end{array}
\]

Next we show that both of these are symmetric comonoidal natural
transformations, but for which functors?  Define $\func{W}(A) =
\perp$ and $\func{C}(A) = \wn A \oplus \wn A$ on objects of
$\cat{L}$, and $\func{W}(f : A \mto B) = \id_\perp$ and $\func{C}(f
: A \mto B) = \wn f \oplus \wn f$ on morphisms.  So we must show
that $\w{} : \func{W} \mto \wn$ and $\c{} : \func{C} \mto \wn$ are
symmetric comonoidal natural transformations.  We first show that
$\w{}$ is and then we show that $\c{}$ is.  Throughout the proof we
drop subscripts on natural transformations for readability.
\begin{itemize}
\item[Case.] To show $\w{}$ is a natural transformation we must show
  the following diagram commutes for any morphism $f : A \mto B$:
  \[
  \bfig
  \square[W(A)`\wn A`W(B)`\wn B;\w{A}`W(f)`\wn f`\w{B}]
  \efig
  \]
  This diagram is equivalent to the following:
  \[
  \bfig
  \square[\perp`\wn A`\perp`\wn B;\w{A}`\id_{\perp}`\wn f`\w{B}]
  \efig
  \]
  It further expands to the following:
  \[
  \bfig
  \hSquares/->`->`->``->`->`->/[\perp`\func{J}0`\func{JH}A`\perp`\func{J}0`\func{JH}B;\jinv{0}`\func{J}(\diamond_{\func{H}A})`\id_\perp``\func{JH}f`\jinv{0}`\func{J}(\diamond_{\func{H}B})]
  \efig
  \]
  This diagram commutes, because
  $\func{J}(\diamond_{\func{H}A});\func{J}f =
  \func{J}(\diamond_{\func{H}A};f) =
  \func{J}(\diamond_{\func{H}B})$, by the uniqueness of the initial
  map.

\item[Case.] The functor $\func{W}$ is comonoidal itself.  To see this we
  must exhibit a map
  \[\s{\perp} := \id_\perp : \func{W}\perp \mto \perp\]
  and a natural transformation
  \[\s{A,B} := \rho^{-1}_\perp : \func{W}(A \oplus B) \mto \func{W}A \oplus \func{W}B\]
  subject to the coherence conditions in
  Definition~\ref{def:coSMCFUN}.  Clearly, the second map is a natural
  transformation, but we leave showing they respect the coherence
  conditions to the reader.  Now we can show that $\w{}$ is indeed
  symmetric comonoidal.
  \begin{itemize}
  \item[Case.] \ \\
    \begin{diagram}
      \square|amma|<1000,500>[
        \func{W}(A \oplus B)`
        \func{W}A \oplus \func{W}B`
        \wn (A \oplus B)`
        \wn A \oplus \wn B;
        \s{A,B}`
        \w{A \oplus B}`
        \w{A} \oplus \w{B}`
        \r{A,B}]
    \end{diagram}
    Expanding the objects of the previous diagram results in the
    following:
    \begin{diagram}
      \square|amma|<1000,500>[
        \perp`
        \perp \oplus \perp`
        \wn (A \oplus B)`
        \wn A \oplus \wn B;
        \s{A,B}`
        \w{A \oplus B}`
        \w{A} \oplus \w{B}`
        \r{A,B}]
    \end{diagram}
    This diagram commutes, because the following fully expanded
    diagram commutes:
    \begin{diagram}
      \square|amma|/<-`->`->`->/<950,500>[
        \J 0`
        \J (0 + 0)`
        \J\H (A \oplus B)`
        \J (\H A + \H B);
        \J\rho`
        \J\diamond`
        \J (\diamond + \diamond)`
        \J\h{}]

      \square(950, 0)|amma|/->``->`->/<950,500>[
        \J (0 + 0)`
        \J 0 + \J 0`
        \J (\H A + \H B)`
        \J\H A \oplus \J\H B;
        \j{}``
        \J\diamond \oplus \J\diamond`
        \j{}]

      \square(0,500)/->`->``/<1900,1500>[
        \perp`
        \perp \oplus \perp`
        \J 0`;
        \rho^{-1}`
        \jinv{0}``]

      \dtriangle(950,1300)|mma|<950,700>[
        \perp \oplus \perp`
        \J 0 \oplus \perp`
        \J 0 \oplus \J 0;
        \jinv{0} \oplus \id`
        \jinv{0} \oplus \jinv{0}`
        \id \oplus \jinv{0}]

      \ptriangle(950,800)|amm|/`=`->/<950,500>[
        \J 0 \oplus \perp`
        \J 0 \oplus \J 0`
        \J 0 \oplus \perp;``
        \id \oplus \j{0}]

      \morphism(1900,1300)|m|/=/<0,-800>[
        \J 0 \oplus \J 0`
        \J 0 \oplus \J 0;]

      \morphism(0,500)|m|<950,800>[
        \J 0`
        \J0 \oplus \perp;
        \rho^{-1}]

      \place(475,250)[(1)]
      \place(1425,250)[(2)]
      \place(950,650)[(3)]
      \place(1180,1100)[(4)]
      \place(1620,1550)[(5)]
      \place(475,1550)[(6)]
    \end{diagram}
    Diagram 1 commutes because $0$ is the initial object, diagram 2
    commutes by naturality of $\j{}$, diagram 3 commutes because
    $\J$ is a symmetric comonoidal functor, diagram 4 commutes
    because $\j{0}$ is an isomorphism
    (Lemma~\ref{lemma:symmetric_comonoidal_isomorphisms}), diagram 5
    commutes by functorality of $\J$, and diagram 6 commutes by
    naturality of $\rho$.
    
  \item[Case.]\ \\
    \begin{diagram}
      \Vtriangle/<-`<-`<-/[
        \perp`
        \wn \perp`
        \func{W}\perp;
        \r{\perp}`
        \s{\perp}`
        \w{\perp}]
    \end{diagram}
    Expanding the objects in the previous diagram results in the
    following:
    \begin{diagram}
      \Vtriangle/<-`=`<-/[
        \perp`
        \wn \perp`
        \perp;
        \r{\perp}``
        \w{\perp}]
    \end{diagram}
    This diagram commutes because the following one does:
    \begin{diagram}
      \dtriangle|ama|/=`<-`->/<950,500>[
        \J 0`
        \J 0`
        \J\H \perp;`
        \J\h{\perp}`
        \J\diamond]
      \square(-950,0)|aaaa|/`=``->/<950,500>[
        \perp``
        \perp`
        \J 0;```
        \jinv{0}]
      \morphism(-950,500)/<-/<1900,0>[
        \perp`
        \J 0;
        \j{0}]        
    \end{diagram}
    The diagram on the left commutes because $\j{0}$ is an
    isomorphism
    (Lemma~\ref{lemma:symmetric_comonoidal_isomorphisms}), and the
    diagram on the right commutes because $0$ is the initial object.

  \end{itemize}

\item[Case.] Now we show that $\c{A} : \wn A \oplus \wn A \mto \wn
  A$ is a natural transformation.  This requires the following
  diagram to commute (for any $f : A \mto B$):
  \[
  \bfig
  \square[
    \func{C}A`
    \wn A`
    \func{C}B`
    \wn B;
    \c{A}`
    \func{C}f`
    \wn f`
    \c{B}]
  \efig
  \]
  This expands to the following diagram:
  \[
  \bfig
  \square[
    \wn A \oplus \wn A`
    \wn A`
    \wn B \oplus \wn B`
    \wn B;
    \c{A}`
    \wn f \oplus \wn f`
    \wn f`
    \c{B}]
  \efig
  \]
  This diagram commutes because the following diagram does:
  \begin{diagram}
    \hSquares[
      \func{JH}A \oplus \func{JH}A`
      \func{J}(\func{H}A + \func{H}A)`
      \func{JH}A`
      \func{JH}B \oplus \func{JH}B`
      \func{J}(\func{H}B + \func{H}B)`
      \func{JH}B;
      \jinv{\func{H}A, \func{H}A}`
      \func{J}\bigtriangledown_{\func{H}A}`
      \func{JH}f \oplus \func{JH}f`
      \func{J}(\func{H}f + \func{H}f)`
      \func{JH}f`
      \jinv{\func{H}B, \func{H}B}`
      \func{J}\bigtriangledown_{\func{H}B}]
  \end{diagram}
  The left square commutes by naturality of $\jinv{}$, and the right square commutes by naturality of the codiagonal
  $\bigtriangledown_{A} : A + A \mto A$.

\item[Case.] The functor $\func{C} : \cat{L} \mto \cat{L}$ is indeed
  symmetric comonoidal where the required maps are defined as follows:
  \[
  \small
  \begin{array}{lll}
    \quad\quad\quad\t_\perp := \wn \perp \oplus \wn \perp \mto/=/ \J\H\perp \oplus \J\H\perp \mto^{\jinv{}} \J(\H\perp + \H\perp) \mto^{\J\codiag{}} \J\H\perp \mto^{\J\h{\perp}} \J 0 \mto^{\j{0}} \perp\\
    \\
    \quad\quad\quad\t_{A,B} := \wn (A \oplus B) \oplus \wn (A \oplus B) \mto^{\r{A,B} \oplus \r{A,B}} (\wn A \oplus \wn B) \oplus (\wn A \oplus \wn B) \mto^{\iso} (\wn A \oplus \wn A) \oplus (\wn B \oplus \wn B)
  \end{array}
  \]
  where $\mathsf{iso}$ is a natural isomorphism that can easily be
  defined using the symmetric monoidal structure of
  $\cat{L}$. Clearly, $\t$ is indeed a natural transformation, but
  we leave checking that the required diagrams in
  Definition~\ref{def:coSMCFUN} commute to the reader.  We can now
  show that $\c{A} : \wn A \oplus \wn A \mto \wn A$ is symmetric
  comonoidal.  The following diagrams from
  Definition~\ref{def:coSMCNAT} must commute:
  \begin{itemize}
  \item[Case.] \ \\
    \begin{diagram}
      \square<1000,500>[
        \func{C}(A \oplus B)`
        \func{C}A \oplus \func{C}B`
        \wn (A \oplus B)`
        \wn A \oplus \wn B;
        \t_{A,B}`
        \c{A \oplus B}`
        \c{A} \oplus \c{B}`
        \r{A,B}]
    \end{diagram}
    Expanding the objects in the previous diagram results in the following:
    \begin{diagram}
      \square<2000,500>[
        \wn (A \oplus B) \oplus \wn (A \oplus B)`
        (\wn A \oplus \wn A) \oplus (\wn B \oplus \wn B)`
        \wn (A \oplus B)`
        \wn A \oplus \wn B;
        \t_{A,B}`
        \c{A \oplus B}`
        \c{A} \oplus \c{B}`
        \r{A,B}]
    \end{diagram}
    This diagram commutes, because the following fully expanded one
    does:
    \begin{center}
      \rotatebox{90}{$
        \bfig
        \square|amma|<1500,500>[
          \J(\H (A \oplus B) + \H (A \oplus B))`
          \J((\H A + \H B) + (\H A + \H B))`
          \J\H (A \oplus B)`
          \J (\H A + \H B);
          \J(\h{} + \h{})`
          \J\codiag{}`
          \J\codiag{}`
          \J\h{}]

        \square(0,500)|amma|<1500,500>[
          \J\H(A \oplus B) \oplus \J\H(A \oplus B)`
          \J(\H A + \H B) \oplus \J(\H A + \H B)`
          \J(\H(A \oplus B) + \H(A \oplus B))`
          \J((\H A + \H B) + (\H A + \H B));
          \J\h{} \oplus \J\h{}`
          \jinv{}`
          \jinv{}`
          \J(\h{} + \h{})]

        \square(1500,0)|amma|/->`->`->`=/<1500,500>[
          \J((\H A + \H B) + (\H A + \H B))`
          \J((\H A + \H A) + (\H B + \H B))`
          \J (\H A + \H B)`
          \J (\H A + \H B);
          \J\iso`
          \J\codiag{}`
          \J(\codiag{} + \codiag{})`]

        \square(3000,0)|amma|<1500,500>[
          \J((\H A + \H A) + (\H B + \H B))`
          \J(\H A + \H A) \oplus \J(\H B + \H B)`
          \J (\H A + \H B)`
          \J\H A \oplus \J\H B;
          \j{}`
          \J(\codiag{} + \codiag{})`
          \J\codiag{} \oplus \J\codiag{}`
          \j{}]

        \dtriangle(3000,500)|ama|/<-`->`->/<1500,500>[
          (\J\H A \oplus \J\H A) \oplus (\J\H B \oplus \J\H B)`
          \J((\H A + \H A) + (\H B + \H B))`
          \J(\H A + \H A) \oplus \J(\H B + \H B);
          \j{};(\j{} \oplus \j{})`
          \jinv{} \oplus \jinv{}`
          \j{}]          

        \morphism(1500,1000)<1500,0>[
          \J(\H A + \H B) \oplus \J(\H A + \H B)`
          (\J\H A \oplus \J\H B) \oplus (\J\H A \oplus \J\H B);
          \j{} \oplus \j{}]

        \morphism(3000,1000)<1500,0>[
          (\J\H A \oplus \J\H B) \oplus (\J\H A \oplus \J\H B)`
          (\J\H A \oplus \J\H A) \oplus (\J\H B \oplus \J\H B);
          \iso]

        \place(750,250)[(1)]
        \place(750,750)[(2)]
        \place(2250,250)[(3)]
        \place(2550,750)[(4)]
        \place(3750,250)[(5)]
        \place(4125,700)[(6)]
        \efig
        $}
    \end{center}
    Diagram 1 commutes by naturality of $\codiag{}$, diagram 2
    commutes by naturality of $\jinv{}$, diagram 3 commutes by
    straightforward reasoning on coproducts, diagram 4 commutes by
    straightforward reasoning on the symmetric monoidal structure of
    $\J$ after expanding the definition of the two isomorphisms --
    here $\J\iso$ is the corresponding isomorphisms on coproducts --
    diagram 5 commutes by naturality of $\j{}$, and diagram 6
    commutes because $\j{}$ is an isomorphism
    (Lemma~\ref{lemma:symmetric_comonoidal_isomorphisms}).
    
  \item[Case.] \ \\
    \begin{diagram}
      \Vtriangle/<-`<-`<-/[
        \perp`
        \wn \perp`
        \func{C} \perp;
        \r{\perp}`
        \t{\perp}`
        \c{\perp}]
    \end{diagram}
    Expanding the objects of this diagram results in the following:
    \begin{diagram}
      \square/<-`<-`<-`=/<950,500>[
        \perp`
        \wn \perp`
        \wn \perp \oplus \wn \perp`
        \wn \perp \oplus \wn \perp;
        \r{\perp}`
        \t{\perp}`
        \c{\perp}`]          
    \end{diagram}
    Simply unfolding the morphisms in the previous diagram reveals the following:
    \begin{diagram}
      \square/`<-`<-`=/<950,500>[
        \J(\H\perp + \H\perp)`
        \J(\H\perp + \H\perp)`
        \J\H\perp \oplus \J\H\perp`
        \J\H\perp \oplus \J\H\perp;`
        \jinv{}`
        \jinv{}`]

      \square(0,500)/`<-`<-`/<950,500>[
        \J\H\perp`
        \J\H\perp`
        \J(\H\perp + \H\perp)`
        \J(\H\perp + \H\perp);
        `
        \J\codiag{}`
        \J\codiag{}`]

      \square(0,1000)/`<-`<-`/<950,500>[
        \J0`
        \J0`
        \J\H\perp`
        \J\H\perp;
        `
        \J\h{\perp}`
        \J\h{\perp}`]

      \square(0,1500)/=`<-`<-`/<950,500>[
        \perp`
        \perp`
        \J0`
        \J0;
        `
        \j{}`
        \j{}`]
    \end{diagram}
    Clearly, this diagram commutes.
  \end{itemize}
\end{itemize}
At this point we have shown that $\w{A} : \perp \mto \wn A$ and
$\c{A} : \wn A \oplus \wn A \mto \wn A$ are symmetric comonoidal
naturality transformations.  Now we show that for any $\wn A$ the
triple $(\wn A,\w{A},\c{A})$ forms a commutative monoid.  This means
that the following diagrams must commute:
\begin{itemize}
\item[Case.]\ \\
  \begin{diagram}
    \hSquares|aammmmm|/->`->```->``/[
      (\wn A \oplus \wn A) \oplus \wn A`
      \wn A \oplus (\wn A \oplus \wn A)`
      \wn A \oplus \wn A```
      \wn A;
      \alpha_{\wn A,\wn A,\wn A}`
      \id_{\wn A} \oplus \c{A}```
      \c{A}``]
    \btriangle|maa|/->``->/<2407,500>[(\wn A \oplus \wn A) \oplus \wn A`
      \wn A \oplus \wn A`
      \wn A;
      \c{A} \oplus \id_{A}``
      \c{A}]
  \end{diagram}
  The previous diagram commutes, because the following one does (we
  omit subscripts for readability):
  \begin{diagram}
    \scriptsize
    \square|amma|/->`->``->/<950,500>[
      (\func{JH}A \oplus \func{JH}A) \oplus \func{JH}A`
      \func{JH}A \oplus (\func{JH}A \oplus \func{JH}A)`
      \func{J}(\func{H}A + \func{H}A) \oplus \func{JH}A`
      \func{J}((\func{H}A + \func{H}A) + \func{H}A);
      \alpha`
      \jinv{} \oplus \id``
      \jinv{}]

    \square(950,0)|amma|/->``->`->/<950,500>[
      \func{JH}A \oplus (\func{JH}A \oplus \func{JH}A)`
      \func{JH}A \oplus \func{J}(\func{H}A + \func{H}A)`
      \func{J}((\func{H}A + \func{H}A) + \func{H}A)`
      \func{J}(\func{H}A + (\func{H}A + \func{H}A));
      \id \oplus \jinv{}``
      \jinv{}`
      \func{J}\alpha]

    \square(1900,0)|amma|/->``->`->/<950,500>[
      \func{JH}A \oplus \func{J}(\func{H}A + \func{H}A)`
      \func{JH}A \oplus \func{JH}A`
      \func{J}(\func{H}A + (\func{H}A + \func{H}A))`
      \func{J}(\func{H}A + \func{H}A);
      \id \oplus \func{J}\codiag{}``
      \jinv{}`
      \func{J}(\id + \codiag{})]

    \square(0,-500)|amma|/`->`->`->/<950,500>[
      \func{J}(\func{H}A + \func{H}A) \oplus \func{JH}A`
      \func{J}((\func{H}A + \func{H}A) + \func{H}A)`
      \func{JH}A \oplus \func{JH}A`
      \func{J}(\func{H}A + \func{H}A);`
      \func{J}\codiag{} \oplus \id`
      \func{J}(\codiag{} + \id)`
      \jinv{}]

    \dtriangle(950,-500)|ama|/`->`->/<1900,500>[
      \func{J}(\func{H}A + \func{H}A)`
      \func{J}(\func{H}A + \func{H}A)`
      \func{JH}A;`
      \func{J}\codiag{}`
      \func{J}\codiag{}]

    \place(950,250)[(1)]
    \place(2375,250)[(2)]
    \place(475,-250)[(3)]
    \place(1900,-250)[(4)]
  \end{diagram}
  Diagram 1 commutes because $\func{J}$ is a symmetric monoidal
  functor (Corollary~\ref{corollary:J-SMMF}), diagrams 2 and 3
  commute by naturality of $\jinv{}$, and diagram 4 commutes because
  $(\func{H}A, \diamond, \codiag{})$ is a commutative monoid in
  $\cat{C}$, but we leave the proof of this to the reader.

\item[Case.]\ \\
  \begin{diagram}
    \btriangle|maa|/->`->`->/<1000,600>[
      \wn A \oplus \perp`
      \wn A \oplus \wn A`
      \wn A;
      \id_{\wn A} \oplus \w{A}`
      \rho_{\wn A}`
      \c{A}]
  \end{diagram}
  The previous diagram commutes, because the following one does:
  \begin{diagram}
    \square|amma|/->`->`->`->/<950,500>[
      \func{JH}A \oplus \func{J}0`
      \func{J}(\func{H}A + 0)`
      \func{JH}A \oplus \func{JH}A`
      \func{J}(\func{H}A + \func{H}A);
      \jinv{}`
      \id \oplus \func{J}\diamond`
      \func{J}(\id \oplus \diamond)`
      \jinv{}]

    \square(950,0)|amma|/->``=`->/<950,500>[
      \func{J}(\func{H}A + 0)`
      \func{JH}A`
      \func{J}(\func{H}A + \func{H}A)`
      \func{JH}A;
      \func{J}\rho```
      \func{J}\codiag{}]

    \square(0,500)|amma|/->`->`=`/<1900,500>[
      \func{JH}A \oplus \perp`
      \func{JH}A`
      \func{JH}A \oplus \func{J}0`
      \func{JH}A;
      \rho`
      \id \oplus \jinv{0}``]

    \place(950,750)[(1)]
    \place(475,250)[(2)]
    \place(1425,250)[(3)]      
  \end{diagram}
  Diagram 1 commutes because $\func{J}$ is a symmetric monoidal
  functor (Corollary~\ref{corollary:J-SMMF}), diagram 2 commutes by
  naturality of $\jinv{}$, and diagram 3 commutes because
  $(\func{H}A, \diamond, \codiag{})$ is a commutative monoid in
  $\cat{C}$, but we leave the proof of this to the reader.
  
\item[Case.]\ \\
  \begin{diagram}
    \btriangle|maa|/->`->`->/<1000,600>[
      \wn A \oplus \wn A`
      \wn A \oplus \wn A`
      \wn A;
      \beta_{\wn A,\wn A}`
      \c{A}`
      \c{A}]
  \end{diagram}
  This diagram commutes, because the following one does:
  \begin{diagram}
    \hSquares/->`->`->`->`=`->`->/[
      \func{JH}A \oplus \func{JH}A`
      \func{J}(\func{H}A + \func{H}A)`
      \func{JH}A`
      \func{JH}A \oplus \func{JH}A`
      \func{J}(\func{H}A + \func{H}A)`
      \func{JH}A;
      \jinv{}`
      \func{J}\codiag{}`
      \beta`
      \func{J}\beta``
      \jinv{}`
      \func{J}\codiag{}]
  \end{diagram}
  The left diagram commutes by naturality of $\jinv{}$, and the right
  diagram commutes because $(\func{H}A, \diamond, \codiag{})$ is a
  commutative monoid in $\cat{C}$, but we leave the proof of this to
  the reader.
\end{itemize}

Finally, we must show that $\w{A} : \perp \mto \wn A$ and $\c{A} :
\wn A \oplus \wn A \mto \wn A$ are $\wn\text{-algebra}$ morphisms.
The algebras in play here are $(\wn A,\mu : \wn\wn A \mto \wn A)$,
$(\perp, \r{\perp} : \wn \perp \mto \perp)$, and $(\wn A \oplus \wn
A, u_A : \wn (\wn A \oplus \wn A) \mto \wn A \oplus \wn A)$, where
$u_A := \wn (\wn A \oplus \wn A) \mto^{\r{\wn A,\wn A}} ?^2 A \oplus
?^2 A \mto^{\mu_A \oplus \mu_A} \wn A \oplus \wn A$.  It suffices to
show that the following diagrams commute:
\begin{itemize}
\item[Case.]\ \\
  \begin{diagram}
    \square<950,500>[
      \wn \perp`
      \perp`
      \wn\wn A`
      \wn A;
      \r{\perp}`
      \wn\w{}`
      \w{}`
      \mu]
  \end{diagram}
  This diagram commutes, because the following fully expanded one does:
  \begin{diagram}
    \square|mmma|<2500,500>[
      \J\H\J 0`
      \J 0`
      \J\H\J\H A`
      \J\H A;
      \J\varepsilon_0`
      \J\H\J\diamond`
      \J\diamond`
      \J\varepsilon]

    \square(0,500)|amma|/->`->``/<1250,1000>[
      \J\H\perp`
      \J 0`
      \J\H\J 0`;
      \J\h{\perp}`
      \J\H\jinv{0}``]
    \square(1250,500)|amma|/->``->`/<1250,1000>[
      \J 0`
      \perp``
      \J 0;
      \j{0}``
      \jinv{0}`]

    \Dtriangle(0,500)|mmm|/`->`=/<1250,500>[
      \J\H \perp`
      \J\H\J 0`
      \J\H\J 0;`
      \J\H\jinv{0}`]

    \morphism(1250,1000)|m|<700,0>[
      \J\H\J 0`
      \J\H \perp;
      \J\H\j{0}]

    \morphism(1950,1000)|m|<550,-500>[
      \J\H \perp`
      \J 0;
      \J\h{\perp}]

    \place(1250,250)[(1)]
    \place(1500,750)[(2)]
    \place(380,1000)[(3)]
    \place(2000,1250)[(4)]
  \end{diagram}
  Diagram 1 commutes by naturality of $\varepsilon$, diagram 2
  commutes because $\varepsilon$ is the counit of the symmetric
  comonoidal adjunction, diagram 3 clearly commutes, and diagram 4
  commutes because $\j{0}$ is an isomorphism
  (Lemma~\ref{lemma:symmetric_comonoidal_isomorphisms}).
  
\item[Case.]\ \\
  \begin{diagram}
    \square|amma|<950,500>[
      \wn (\wn A \oplus \wn A)`
      \wn A \oplus \wn A`
      \wn\wn A`
      \wn A;
      u`
      \wn\c{}`
      \c{}`
      \mu]
  \end{diagram}
  This diagram commutes because the following fully expanded one does:
  \begin{center}
    \tiny
    \rotatebox{90}{$\bfig
      \square|amma|/->`->``/<1250,1500>[
        \J\H\J(\H A + \H A)`
        \J\H(\J\H A \oplus \J\H A)`
        \J\H\J\H A`;
        \J\H\j{}`
        \J\H\J\codiag{}``]
      \square(1250,0)|amma|/->``->`/<1250,1500>[
        \J\H(\J\H A \oplus \J\H A)`
        \J(\H\J\H A + \H\J\H A)``
        \J\H\J\H A;
        \J\h{}``
        \J\codiag{}`]
      \morphism/=/<2500,0>[\J\H\J\H A`\J\H\J\H A;]

      \Atriangle/<-`<-`=/<1250,500>[
        \J\H A`
        \J\H\J\H A`
        \J\H\J\H A;
        \J\varepsilon`
        \J\varepsilon`]

      \Vtriangle(0,1000)/`->`->/<1250,500>[
        \J\H\J(\H A + \H A)`
        \J(\H\J\H A + \H\J\H A)`
        \J(\H A + \H A);`
        \J\varepsilon`
        \J(\varepsilon + \varepsilon)]

      \morphism(1250,1000)|m|<0,-500>[
        \J(\H A + \H A)`
        \J\H A;
        \J\codiag{}]

      \square(0,1500)|amma|/->`->``/<1250,500>[
        \J\H(\J\H A \oplus \J\H A)`
        \J(\H\J\H A + \H\J\H A)`
        \J\H\J(\H A + \H A)`;
        \J\h{}`
        \J\H\jinv{}``]        

      \square(1250,1500)|amma|/->``->`/<1250,500>[
        \J(\H\J\H A + \H\J\H A)`
        \J\H\J\H A \oplus \J\H\J\H A`
        \J\H(\J\H A \oplus \J\H A)`
        \J(\H\J\H A + \H\J\H A);
        \j{}`
        `
        \jinv{}`]

      \square(2500,0)|amma|<1250,1500>[
        \J(\H\J\H A + \H\J\H A)`
        \J(\H A + \H A)`
        \J\H\J\H A`
        \J\H A;
        \J(\varepsilon + \varepsilon)`
        \J\codiag{}`
        \J\codiag{}`
        \J\varepsilon]

      \square(2500,1500)|amma|<1250,500>[
        \J\H\J\H A \oplus \J\H\J\H A`
        \J\H A \oplus \J\H A`
        \J(\H\J\H A + \H\J\H A)`
        \J(\H A + \H A);
        \J\varepsilon \oplus \J\varepsilon`
        \jinv{}`
        \jinv{}`
        \J(\varepsilon + \varepsilon)]

      \place(1250,250)[(1)]
      \place(625,750)[(2)]
      \place(1875,750)[(3)]
      \place(1250,1250)[(4)]
      \place(1250,1750)[(5)]
      \place(3125,1750)[(6)]
      \place(3125,750)[(7)]
      \efig$}
  \end{center}
\end{itemize}
Diagram 1 clearly commutes, diagram 2 commutes by naturality of
$\varepsilon$, diagram 3 commutes by naturality of $\codiag{}$,
diagram 4 commutes because $\varepsilon$ is the counit of the
symmetric comonoidal adjunction, diagram 5 commutes because $\j{}$
is an isomorphism
(Lemma~\ref{lemma:symmetric_comonoidal_isomorphisms}), diagram 6
commutes by naturality of $\jinv{}$, and diagram 7 is the same
diagram as 3, but this diagram is redundant for readability.

\subsection{Proof of Lemma~\ref{lemma:monoid-morphism}}
\label{sec:proof_of_lemma:monoid-morphism}
Suppose $\cat{L} : \func{H} \dashv \func{J} : \cat{C}$ is a dual LNL
model.  Then we know $\wn A = \J\H A$ is a symmetric comonoidal
monad by Lemma~\ref{lemma:symmetric_comonoidal_monad}.  Bellin
\cite{Bellin:2012} remarks that by Maietti, Maneggia de Paiva and
Ritter's Proposition~25 \cite{Maietti2005}, it suffices to show that
$\mu_A : \wn\wn A \mto \wn A$ is a monoid morphism.  Thus, the
following diagrams must commute:
\begin{itemize}
\item[Case.]\ \\
  \begin{diagram}
    \square|amma|<950,500>[
      \wn\wn A \oplus \wn\wn A`
      \wn\wn A`
      \wn A \oplus \wn A`
      \wn A;
      \c{\wn A}`
      \mu_A \oplus \mu_A`
      \mu_A`
      \c{A}]
  \end{diagram}
  This diagram commutes because the following fully expanded one
  does:
  \begin{diagram}
    \square|amma|<1000,500>[
      \J\H\J\H A \oplus \J\H\J\H A`
      \J(\H\J\H A + \H\J\H A)`
      \J\H A \oplus \J\H A`
      \J(\H A + \H A);
      \jinv{}`
      \J\varepsilon \oplus \J\varepsilon`
      \J(\varepsilon + \varepsilon)`
      \jinv{}]

    \square(1000,0)|amma|<1000,500>[
      \J(\H\J\H A + \H\J\H A)`
      \J\H\J\H A`
      \J(\H A + \H A)`
      \J\H A;
      \J\codiag{}`
      \J(\varepsilon + \varepsilon)`
      \J\varepsilon`
      \J\codiag{}]      
  \end{diagram}
  The left square commutes by naturality of $\jinv{}$ and the right
  square commutes by naturality of the codiagonal.
  
\item[Case.]\ \\
  \begin{diagram}
    \Atriangle|aaa|[
      \perp`
      \wn\wn A`
      \wn A;
      \w{\wn A}`
      \w{A}`
      \mu_A]
  \end{diagram}
  This diagram commutes because the following fully expanded one
  does:
  \begin{diagram}
    \square|amma|/=`->`->`=/<1000,500>[
      \perp`
      \perp`
      \J 0`
      \J 0;`
      \jinv{0}`
      \jinv{0}`]

    \square(0,-500)|amma|/=`->`->`->/<1000,500>[
      \J 0`
      \J 0`
      \J\H\J\H A`
      \J\H A;`
      \J\diamond`
      \J\diamond`
      \J\varepsilon]
  \end{diagram}
  The top square trivially commutes, and the bottom square commutes
  by uniqueness of the initial map.
\end{itemize}

\subsection{Proof of Cut Reduction (Lemma~\ref{lemma:cut_reduction})}
\label{subsec:proof_of_cut_reduction_lemma:cut-reduction}
By induction on $d(\Pi_1) + d(\Pi_2)$.  We consider only the case
where the last inferences of $\Pi_1$ and $\Pi_2$ are logical
inferences. The other cases are handled mainly by permutation of
inferences and use of the inductive hypothesis; we refer to Benton's
text for them.  Throughout the proof we will add an asterisk to the
name of an inference rule to indicate that the rule may be applied
zero or more times.

\vspace{1ex}

 \noindent
\emph{J right / J left}. We have 
\begin{center}
\AxiomC{$\pi_1$}
\noLine
\UnaryInfC{$A \vdash_{\mathsf{L}} \Delta, J T^n ; T, \Psi$}
\LeftLabel{$\Pi_1 =$} 
\RightLabel{$\DualLNLLogicdruleLXXjRName$}
\UnaryInfC{$A \vdash_{\mathsf{L}} \Delta, J T^{n+1}; \Psi$}
\DisplayProof
\qquad 
\AxiomC{$\pi_2$}
\noLine
\UnaryInfC{$T \vdash_{\mathsf{C}} \Psi'$}
\LeftLabel{$\Pi_2 =$} 
\RightLabel{$\DualLNLLogicdruleLXXjLName$}
\UnaryInfC{$JT \vdash_{\mathsf{L}} ; \Delta, J T^{n+1}, \Psi'$}
\DisplayProof
\end{center}
By the inductive hypothesis appled to $\Pi_2$ and $\pi_1$ there exists a proof $\Pi'$ of $A \vdash_{\mathsf{L}} \Delta; T, \Psi, \Psi'$
with $c(\Pi') \leq |J T| = |T | + 1$.  Then the following derivation 
\begin{center}
\AxiomC{$\Pi'$}
\noLine
\UnaryInfC{$A \vdash_{\mathsf{L}} \Delta ; T, \Psi, \Psi'$}
\AxiomC{$\pi_2$}
\noLine
\UnaryInfC{$T \vdash_{\mathsf{C}} \Psi'$}
\LeftLabel{$\Pi_1 =$} 
\RightLabel{$\DualLNLLogicdruleLXXCcutName$}
\BinaryInfC{$A \vdash_{\mathsf{L}} \Delta, \Psi, \Psi', \Psi'$}
\doubleLine
\RightLabel{$\DualLNLLogicdruleCXXcrName^*$}
\UnaryInfC{$A \vdash_{\mathsf{L}} \Delta, \Psi, \Psi'$}
\DisplayProof
\end{center}
has cut rank $\mathit{max}( |T|+1, c(\Pi'), c(\pi_2)) = |T|+1 = |J T|$. 

\vspace{1ex}

 \noindent
\emph{H right / H left}. We have 
\begin{center}
\AxiomC{$\pi_1$}
\noLine
\UnaryInfC{$B \vdash_{\mathsf{L}} \Delta, A; H A^n ; \Psi$}
\LeftLabel{$\Pi_1 =$} 
\RightLabel{$\DualLNLLogicdruleLXXhRName$}
\UnaryInfC{$B \vdash_{\mathsf{L}} \Delta; H A^{n+1}, \Psi$}
\DisplayProof
\qquad 
\AxiomC{$\pi_2$}
\noLine
\UnaryInfC{$A \vdash_{\mathsf{L}} ; \Psi'$}
\LeftLabel{$\Pi_2 =$} 
\RightLabel{$\DualLNLLogicdruleCXXhLName$}
\UnaryInfC{$H A \vdash_{\mathsf{C}} ; \Psi'$}
\DisplayProof
\end{center}
By the inductive hypothesis applied to $\Pi_2$ and $\pi_1$ there exists a proof $\Pi'$ of $B \vdash_{\mathsf{L}} \Delta; A, \Psi, \Psi'$
with $c(\Pi') \leq |H A| = |A | + 1$.  Then the following derivation 
\begin{center}
\AxiomC{$\Pi'$}
\noLine
\UnaryInfC{$B \vdash_{\mathsf{L}} \Delta ; A, \Psi, \Psi'$}
\AxiomC{$\pi_2$}
\noLine
\UnaryInfC{$A \vdash_{\mathsf{L}} ; \Psi'$}
\LeftLabel{$\Pi_1 =$} 
\RightLabel{$\DualLNLLogicdruleLXXcutName$}
\BinaryInfC{$B \vdash_{\mathsf{L}} \Delta, \Psi, \Psi', \Psi'$}
\doubleLine
\RightLabel{$\DualLNLLogicdruleCXXcrName^*$}
\UnaryInfC{$B \vdash_{\mathsf{L}} \Delta, \Psi, \Psi'$}
\DisplayProof
\end{center}
has cut rank $\mathit{max}( |A|+1, c(\Pi'), c(\pi_2)) = |A|+1 = |H A|$. 

\vspace{1ex}

 \noindent
\emph{+ right$_1$ / + left}.  We have 
\begin{center}
\AxiomC{$\pi_1$}
\noLine
\UnaryInfC{$S \vdash_{\mathsf{C}} T_1, (T_1 + T_2)^{n}, \Psi$}
\LeftLabel{$\Pi_1 =$} 
\RightLabel{$\DualLNLLogicdruleCXXdROneName$}
\UnaryInfC{$S \vdash_{\mathsf{C}} (T_1 + T_2)^{n+1}, \Psi$}
\DisplayProof
\qquad 
\AxiomC{$\pi_2$}
\noLine
\UnaryInfC{$T_1 \vdash_{\mathsf{C}} \Psi_1$}
\AxiomC{$\pi_3$}
\noLine
\UnaryInfC{$T_2 \vdash_{\mathsf{C}} \Psi_2$}
\LeftLabel{$\Pi_2 =$} 
\RightLabel{$\DualLNLLogicdruleCXXdLName$}
\BinaryInfC{$T_1+ T_2 \vdash_{\mathsf{C}} ; \Psi_1, \Psi_2$}
\DisplayProof
\end{center}

\noindent
If $n = 0$, then the reduction is as follows:
\begin{center}
\begin{tabular}{c}
\AxiomC{$\pi_1$}
\noLine
\UnaryInfC{$S \vdash_{\mathsf{C}} T_1, \Psi$}
\LeftLabel{$\Pi_1 =$} 
\RightLabel{$\DualLNLLogicdruleCXXdROneName$}
\UnaryInfC{$S \vdash_{\mathsf{C}} T_1 + T_2, \Psi$}
\AxiomC{$\pi_2$}
\noLine
\UnaryInfC{$T_1 \vdash_{\mathsf{C}} \Psi_1$}
\AxiomC{$\pi_3$}
\noLine
\UnaryInfC{$T_2 \vdash_{\mathsf{C}} \Psi_2$}
\LeftLabel{$\Pi_2 =$} 
\RightLabel{$\DualLNLLogicdruleCXXdLName$}
\BinaryInfC{$T_1+ T_2 \vdash_{\mathsf{C}}  \Psi_1, \Psi_2$}
\RightLabel{$\DualLNLLogicdruleCXXcutName$}
\BinaryInfC{$S\vdash_{\mathsf{C}} \Psi, \Psi_1, \Psi_2$}
\DisplayProof\\
\\
 reduces to \\ 
\\
\AxiomC{$\pi_1$}
\noLine
\UnaryInfC{$S \vdash_{\mathsf{C}} T_1, \Psi$}
\AxiomC{$\pi_2$}
\noLine
\UnaryInfC{$T_1 \vdash_{\mathsf{C}} \Psi_1$}
\RightLabel{$\DualLNLLogicdruleCXXcutName$}
\BinaryInfC{$S \vdash_{\mathsf{C}} \Psi, \Psi_1$}
\doubleLine
\LeftLabel{$\Pi$}
\RightLabel{$\DualLNLLogicdruleCXXwkName^*$}
\UnaryInfC{$S\vdash_{\mathsf{C}} \Psi, \Psi_1, \Psi_2$}
\DisplayProof
\end{tabular}
\end{center}

\noindent
Here $c(\Pi) = max(|T_1 + 1|, c(\pi_1), c(\pi_2)) \leq |T_1 + T_2|$.

\vspace{1ex}
 
\noindent
If $n > 0$, then by the inductive hypothesis applied to $\Pi_2$ and $\pi_1$ there exists a proof $\Pi'$ of 
$S \vdash_{\mathsf{C}} T_1, \Psi, \Psi_1, \Psi_2$ with $c(\Pi') \leq |T_1+ T_2| = |T_1|+|T_2 | + 1$.  Then the following derivation 
\begin{center}
\AxiomC{$\Pi' $}
\noLine
\UnaryInfC{$S \vdash_{\mathsf{C}} T_1, \Psi, \Psi_1, \Psi_2$}
\AxiomC{$\pi_2$}
\noLine
\UnaryInfC{$T_1 \vdash_{\mathsf{C}}  \Psi_1$}
\LeftLabel{$\Pi =$} 
\RightLabel{$\DualLNLLogicdruleCXXcutName$}
\BinaryInfC{$S \vdash_{\mathsf{C}} \Psi, \Psi_1, \Psi_1, \Psi_2$}
\doubleLine
\RightLabel{$\DualLNLLogicdruleCXXcrName^*$}
\UnaryInfC{$S \vdash_{\mathsf{C}} \Psi, \Psi_1, \Psi_2$}
\DisplayProof
\end{center}
has cut rank $\mathit{max}( |T_1|+1, c(\Pi'), c(\pi_2)) \leq |T_1 + T_2|$. 

\vspace{1ex}

\noindent
$\lsub$ right / $\lsub$ left. We have 
\begin{center}
\begin{tabular}{c}
\AxiomC{$\pi_1$ }
\noLine
\UnaryInfC{$A \vdash_{\mathsf{L}} \Delta_1; \Psi_1, B_1$}
 \AxiomC{$\pi_2$ } 
\noLine
\UnaryInfC{$B_2 \vdash_{\mathsf{L}} \Delta_2; \Psi_2$}
\LeftLabel{$\Pi_1 =$}
\RightLabel{$\DualLNLLogicdruleLXXsRName$}
\BinaryInfC{$A \vdash_{\mathsf{L}} B_1 \lsub B_2,  \Delta_1, \Delta_2; \Psi_1, \Psi_2$}
\AxiomC{$\pi_3$}
\noLine
\UnaryInfC{$B_1 \vdash_{\mathsf{L}} B_2, \Delta ;  \Psi$}
\LeftLabel{$\Pi_2 =$}
\RightLabel{$\DualLNLLogicdruleLXXsLName$}
\UnaryInfC{$B_1 \lsub B_2 \vdash_{\mathsf{L}} \Delta ; \Psi$}
\RightLabel{$\DualLNLLogicdruleLXXcutName$}
\BinaryInfC{$A \vdash_{\mathsf{L}} \Delta_1, \Delta_2, \Delta ; \Psi_1, \Psi_2, \Psi$}
\DisplayProof\\
\\
reduces to $\Pi$ 
\\
\\
\AxiomC{$\pi_1$}
\noLine
\UnaryInfC{$A\vdash_{\mathsf{L}} \Delta_1, B_1; \Psi_1$}
 \AxiomC{$\pi_3$}
\noLine
\UnaryInfC{$B_1 \vdash_{\mathsf{L}} B_2, \Delta ;  \Psi$}
\RightLabel{$\DualLNLLogicdruleLXXcutName$}
\BinaryInfC{$A\vdash_{\mathsf{L}} \Delta_1, \Delta, B_2; \Psi_1,\Psi$}
 \AxiomC{$\pi_2$ } 
\noLine
\UnaryInfC{$B_2 \vdash_{\mathsf{L}} \Delta_2; \Psi_2$}
\RightLabel{$\DualLNLLogicdruleLXXcutName$}
\BinaryInfC{$A \vdash_{\mathsf{L}} \Delta_1, \Delta_2, \Delta; \Psi_1, \Psi_2, \Psi$}
\DisplayProof
\end{tabular}
\end{center}
The resulting derivation $\Pi$ has cut rank $c(\Pi) = max(|B_1|+1, c(\pi_1), c(\pi_2), |B_2|+1, c(\pi_3)) \leq |B_1\lsub B_2|$.  





\end{document}